\definecolor{plum}{rgb}{0.36078, 0.20784, 0.4}
\definecolor{chameleon}{rgb}{0.30588, 0.60392, 0.023529}
\definecolor{cornflower}{rgb}{0.12549, 0.29020, 0.52941}
\definecolor{scarlet}{rgb}{0.8, 0, 0}
\definecolor{brick}{rgb}{0.64314, 0, 0}
\definecolor{sunrise}{rgb}{0.80784, 0.36078, 0}
\definecolor{lightblue}{rgb}{0.15,0.35,0.75}
\definecolor{carolina}{RGB}{153, 186, 221}
\definecolor{darkblue}{rgb}{0.05,0.25,0.65}
\definecolor{greenii}{RGB}{20,140,10}
\newcommand*{\Scale}[2][4]{\scalebox{#1}{$#2$}}%
\newcolumntype{L}[1]{>{\raggedright\let\newline\\\arraybackslash\hspace{0pt}}m{#1}}
\newcolumntype{C}[1]{>{\centering\let\newline\\\arraybackslash\hspace{0pt}}m{#1}}
\newcolumntype{R}[1]{>{\raggedleft\let\newline\\\arraybackslash\hspace{0pt}}m{#1}}
\newif\if@sup
\newtoks\@sups
\def\append@sup#1{\edef\act{\noexpand\@sups={\the\@sups #1}}\act}%
\def\reset@sup{\@supfalse\@sups={}}%
\def\mk@scripts#1#2{\if #2/ \if@sup ^{\the\@sups}\fi \else%
  \ifx #1_ \if@sup ^{\the\@sups}\reset@sup \fi {}_{#2}%
  \else \append@sup#2 \@suptrue \fi%
  \expandafter\mk@scripts\fi}
\def\tensor#1#2{\reset@sup#1\mk@scripts#2_/}
\def\multiscripts#1#2#3{\reset@sup{}\mk@scripts#1_/#2%
  \reset@sup\mk@scripts#3_/}
\newbox\slashbox \setbox\slashbox=\hbox{$/$}
\def\itex@pslash#1{\setbox\@tempboxa=\hbox{$#1$}
  \@tempdima=0.5\wd\slashbox \advance\@tempdima 0.5\wd\@tempboxa
  \copy\slashbox \kern-\@tempdima \box\@tempboxa}
\def\slash{\protect\itex@pslash}
\def\clap#1{\hbox to 0pt{\hss#1\hss}}
\def\mathllap{\mathpalette\mathllapinternal}
\def\mathrlap{\mathpalette\mathrlapinternal}
\def\mathclap{\mathpalette\mathclapinternal}
\def\mathllapinternal#1#2{\llap{$\mathsurround=0pt#1{#2}$}}
\def\mathrlapinternal#1#2{\rlap{$\mathsurround=0pt#1{#2}$}}
\def\mathclapinternal#1#2{\clap{$\mathsurround=0pt#1{#2}$}}
\let\oldroot\root
\def\root#1#2{\oldroot #1 \of{#2}}
\renewcommand{\sqrt}[2][]{\oldroot #1 \of{#2}}
\DeclareSymbolFont{symbolsC}{U}{txsyc}{m}{n}
\DeclareSymbolFont{stmry}{U}{stmry}{m}{n}
\DeclareFontFamily{OMX}{MnSymbolE}{}
\DeclareSymbolFont{mnomx}{OMX}{MnSymbolE}{m}{n}
\DeclareFontShape{OMX}{MnSymbolE}{m}{n}{
    <-6>  MnSymbolE5
   <6-7>  MnSymbolE6
   <7-8>  MnSymbolE7
   <8-9>  MnSymbolE8
   <9-10> MnSymbolE9
  <10-12> MnSymbolE10
  <12->   MnSymbolE12}{}
\def\Decl@Mn@Delim#1#2#3#4{%
  \if\relax\noexpand#1%
    \let#1\undefined
  \fi
  \DeclareMathDelimiter{#1}{#2}{#3}{#4}{#3}{#4}}
\def\Decl@Mn@Open#1#2#3{\Decl@Mn@Delim{#1}{\mathopen}{#2}{#3}}
\def\Decl@Mn@Close#1#2#3{\Decl@Mn@Delim{#1}{\mathclose}{#2}{#3}}
\Decl@Mn@Open{\llangle}{mnomx}{'164}
\Decl@Mn@Close{\rrangle}{mnomx}{'171}
\Decl@Mn@Open{\lmoustache}{mnomx}{'245}
\Decl@Mn@Close{\rmoustache}{mnomx}{'244}
\DeclareRobustCommand\widecheck[1]{{\mathpalette\@widecheck{#1}}}
\def\@widecheck#1#2{%
    \setbox\z@\hbox{\m@th$#1#2$}%
    \setbox\tw@\hbox{\m@th$#1%
       \widehat{%
          \vrule\@width\z@\@height\ht\z@
          \vrule\@height\z@\@width\wd\z@}$}%
    \dp\tw@-\ht\z@
    \@tempdima\ht\z@ \advance\@tempdima2\ht\tw@ \divide\@tempdima\thr@@
    \setbox\tw@\hbox{%
       \raise\@tempdima\hbox{\scalebox{1}[-1]{\lower\@tempdima\box
\tw@}}}%
    {\ooalign{\box\tw@ \cr \box\z@}}}
\def\udots{\mathinner{\mkern2mu\raise\p@\hbox{.}
\mkern2mu\raise4\p@\hbox{.}\mkern1mu
\raise7\p@\vbox{\kern7\p@\hbox{.}}\mkern1mu}}
\newcommand{\gt}{>}
\newcommand{\lt}{<}
\newcommand{\Z}{\ensuremath{\mathbb Z}}
\renewcommand{\(}{\begin{equation}}
\renewcommand{\)}{\end{equation}}
\newcommand{\bea}{\begin{eqnarray*}}
\newcommand{\eea}{\end{eqnarray*}}
\theoremstyle{italics}
\newtheorem{theorem}{Theorem}[section]
\newtheorem{prop}[theorem]{Proposition}
\newtheorem{cor}[theorem]{Corollary}
\newtheorem{folklore}[theorem]{Folklore}
\theoremstyle{definition}
\newtheorem{defn}[theorem]{Definition}
\newtheorem{example}[theorem]{Example}
\newtheorem{remark}[theorem]{Remark}
\newtheorem{note[theorem]}{Note}
\begin{document}

\title{Equivariant Cohomotopy
 implies orientifold tadpole cancellation}

\author{ Hisham Sati, \; Urs Schreiber}

\maketitle

\begin{abstract}
There are fundamental open problems in the
precise global nature of RR-field tadpole cancellation
conditions in string theory.
Moreover, the non-perturbative lift as M5/MO5-anomaly cancellation in M-theory
had been based on indirect plausibility arguments,
lacking a microscopic underpinning in M-brane charge quantization.
We provide a framework for answering these
questions, crucial not only for mathematical consistency but also
for phenomenological accuracy of string theory,
by formulating the M-theory C-field on flat M-orientifolds in the
generalized cohomology theory called \emph{Equivariant Cohomotopy}.
This builds on our previous results for smooth but curved spacetimes, showing
in that setting that charge quantization in twisted Cohomotopy
rigorously implies a list of expected anomaly cancellation conditions.
Here we further expand this list by proving that
brane charge quantization in unstable equivariant Cohomotopy implies
the anomaly cancellation conditions for M-branes and D-branes on
flat orbi-orientifolds.
For this we
{\bf (a)} use an unstable refinement of
the equivariant Hopf-tom Dieck theorem to derive
local/twisted tadpole cancellation,
and in addition {\bf (b)} the lift to super-differential cohomology
to establish global/untwisted tadpole cancellation.
Throughout, we use
{\bf (c)} the unstable Pontrjagin-Thom theorem
to identify the brane/O-plane configurations
encoded in equivariant Cohomotopy
and {\bf (d)} the Boardman homomorphism
to equivariant K-theory to identify
Chan-Paton representations of D-brane charge.
We find  that unstable equivariant Cohomotopy,
but not its image in equivariant K-theory,
distinguishes D-brane charge
from the finite set of types of O-plane charges.
\end{abstract}

\tableofcontents

\newpage

\section{Introduction}
\label{Intro}

\vspace{-1mm}
Organizing and formalizing results in the string theory literature,
we start by noticing the following curious systematics, to be elaborated
upon throughout the paper.

\vspace{2mm}

\noindent {\bf Toroidal orientifolds with ADE-singularities -- A curious pattern.}
Consider type II superstring vacua compactified on
fluxless toroidal orientifolds
(\cite{Sagnotti88}\cite[p. 12]{DaiLeighPolchinski89}\cite{Mukhi97}\cite[3]{dBDHKMMS02};
see also \cite[5.3.4, 10.1.3]{IbanezUranga12}\cite[15.3]{BLT13})
with ADE-type singularities (\cite{AspinwallMorrison97}\cite{Intriligator97}; see \cite{ADE}),
hence on orbifold quotients $\mathbb{T}^{\mathbf{4}_{\mathbb{H}}} \!\sslash\! G$ (e.g. \cite[13]{Ratcliffe06})  of
4-tori by crystallographic point groups \eqref{CrystallographicGroups}.
These are finite subgroups
$G \subset \mathrm{SU}(2) \simeq \mathrm{Sp}(1)$
of the group unit quaternions acting by left multiplication on the
space $\mathbb{H} \simeq_{\mathbb{R}} \mathbb{R}^4$ of all quaternions \eqref{TheQuaternionicRepresentation}.

\medskip
The consistency condition on such compactifications
known as (Ramond-Ramond) \emph{RR-field tadpole anomaly cancellation}
(\cite[Sec. 3]{GimonPolchinski96}\cite[Sec. 9.3]{Witten12}; see \cite[4.4]{IbanezUranga12}\cite[9.4]{BLT13}),
essentially says that the joint D-brane and O-plane charge in such compact
orientifolds has to vanish, albeit with some subtle fine print.
Explicitly, we observe that
a case-by-case analysis of the string worldsheet
superconformal field theory shows (\hyperlink{Table1}{\it Table 1}) that,
for single wrapping number,
RR-field tadpole anomaly cancellation
is the following condition
on the $G$-representation of D-brane charge
and the $G$-set of O-planes:
\begin{enumerate}[{\bf (i)}]
  \vspace{-.2cm}
  \item
    {\bf Local/twisted tadpole cancellation:}
    D-brane charge is a combination of a regular representation $\mathbf{k}_{\mathrm{reg}}$
    and the trivial one  $\mathbf{1}_{\mathrm{triv}}$, with coefficients
     the number of integral and fractional branes, respectively:

         \vspace{-4mm}
              $$\mathbf{c}_{{}_{\mathrm{Dbra}}}
    = N_{\mathrm{brane} \atop \mathrm{integ}} \cdot \mathbf{k}_{\mathrm{reg}}
     + N_{\mathrm{brane} \atop \mathrm{frac}} \cdot \mathbf{1}_{\mathrm{triv}}\;.
     $$

      \vspace{-.4cm}
  \item {\bf Global/untwisted tadpole cancellation:}
    The dimension of D-brane charge is
    the cardinality of the $G$-set of O-planes:

    \vspace{-8mm}
    $$
     \mathrm{dim}(\mathbf{c}_{{}_{\mathrm{Dbra}}}) =
      \mathrm{card}( \mathbf{c}_{{}_{\mathrm{Opla}}}) \;.
      $$
\end{enumerate}

\vspace{-.1cm}

\noindent In particular, $\mathbf{c}_{{}_{\mathrm{Dbra}}}$ comes from,
and $\mathbf{c}_{{}_{\mathrm{Opla}}}$ gives rise to,
a permutation representation, in
the image of $\beta$ \eqref{BoardmanH} \cite{SS19b}:

\vspace{-7mm}
\begin{equation}
\hspace{-3mm}
  \xymatrix@R=-4pt@C=40pt{
    \mathbf{c}_{{}_{\mathrm{Dbra}}}
    \ar@{}[r]|{\in}
    &
    \mathrm{RO}(G)
    \ar@{}[r]|{\simeq}
    &
    \mathrm{KO}_G^0
    &
    \mathbb{S}_G^0
    \ar[l]_-{\beta}
    \ar@{}[r]|-{\simeq}
    &
    A(G)
    \ar@{<-}[r]
    &
    G \mathrm{Set}_{/\sim}
    \ar@{}[r]|-{\ni}
    &
    \mathbf{c}_{{}_{\mathrm{Opla}}}
    \\
    \mathclap{
    \mbox{
      \tiny
      \color{darkblue} \bf
      \begin{tabular}{c}
        D-brane charge
        \\
        on toroidal orientifold
      \end{tabular}
    }
    }
    &
    \mathclap{
    \mbox{
      \tiny
      \color{darkblue} \bf
      \begin{tabular}{c}
        representation
        \\
        ring
      \end{tabular}
    }
    }
    &
    \mathclap{
    \mbox{
      \tiny
      \color{darkblue} \bf
      \begin{tabular}{c}
        equivariant
        \\
        K-theory
      \end{tabular}
    }
    }
    &
    \mathclap{
    \mbox{
      \tiny
      \color{darkblue} \bf
      \begin{tabular}{c}
        equivariant
        \\
        stable Cohomotopy
      \end{tabular}
    }
    }
    &
    \mathclap{
    \mbox{
      \tiny
      \color{darkblue} \bf
      \begin{tabular}{c}
        Burnside
        \\
        ring
      \end{tabular}
    }
    }
    &
    \mathclap{
    \mbox{
      \tiny
      \color{darkblue} \bf
      \begin{tabular}{c}
        $G$-sets
        \\
        ($G$-permutations)
      \end{tabular}
    }
    }
    &
    \mathclap{
    \mbox{
      \tiny
      \color{darkblue} \bf
      \begin{tabular}{c}
        O-plane charge
        \\
        on toroidal orientifold
      \end{tabular}
    }
    }
  }
\end{equation}

\hypertarget{Table1}{}

\vspace{-1mm}
\hspace{-.7cm}
{\small
\begin{tabular}{|c|c|c|c|}
  \hline
  \begin{tabular}{c}
    \bf Single D-brane species
    \\
    \bf on toroidal orientifold
  \end{tabular}
  &
  \begin{tabular}{c}
    \bf Local/twisted
    \\
    \bf tadpole cancellation
    \\
    \bf condition
  \end{tabular}
  &
  \begin{tabular}{c}
    \bf Global/untwisted
    \\
    \bf tadpole cancellation
    \\
    \bf condition
  \end{tabular}
  &
  \bf Comments
  \\
  \hline
  \hline
  \begin{tabular}{l}
    Branes on
    \\
    $\mathbb{T}^{\mathbf{4}_{\mathbb{H}}} \!\sslash\! G^{\mathrm{ADE}}$
  \end{tabular}
  &
  \raisebox{17pt}{
  $
  \raisebox{-17pt}{
  $
    \begin{array}{rl}
      \mathbf{c}_{{}_{\mathrm{Dbra}}}
      =
      &
      \phantom{+} N_{\mathrm{brane} \atop \mathrm{int}}
        \cdot \mathbf{k}_{\mathrm{reg}}
      \\
      &
      + N_{\mathrm{brane} \atop \mathrm{frac}}
        \cdot \mathbf{1}_{\mathrm{triv}}
    \end{array}
  $
  }
  $
  }
  &
  $
  \begin{array}{l}
    \mathrm{dim}\big( \mathbf{c}_{\mathrm{tot}} \big)
    =
    \mathrm{card}\big( \mathbf{c}_{{}_{\mathrm{Opla}}} \big)
  \end{array}
  $
  &
  \begin{tabular}{l}
    The general pattern
    \\
    of the following
    \\
    case-by-case results
  \end{tabular}
  \\
  \hline
  \hline
  \begin{tabular}{l}
    D5/D9-branes
    \\
    on  $\mathbb{T}^{\mathbf{4}_{\mathbb{H}}} \!\sslash\! \mathbb{Z}_2$
  \end{tabular}
  &
  \begin{tabular}{l}
    $\mathbf{c}_{{}_{\mathrm{Dbra}}} = N \cdot \mathbf{2}_{\mathrm{reg}}$
    \\
    (\cite[(19)]{BST99})
  \end{tabular}
  &
  \begin{tabular}{l}
    $\mathbf{c}_{{}_{\mathrm{Dbra}}} = 16 \cdot \mathbf{2}_{\mathrm{reg}}$
    \\
    (\cite[(18)]{BST99})
  \end{tabular}
  &
  \multirow{2}{*}{
  \begin{tabular}{l}
    Following
    \\
    \cite{GimonPolchinski96}
    \\
    \cite{GC96}
  \end{tabular}
  }
  \\
  \cline{1-3}
  \begin{tabular}{l}
    D5/D9-branes
    \\
    on  $\mathbb{T}^{\mathbf{4}_{\mathbb{H}}} \!\sslash\! \mathbb{Z}_4$
  \end{tabular}
  &
  \begin{tabular}{l}
    $\mathbf{c}_{{}_{\mathrm{Dbra}}} = N \cdot \mathbf{4}_{\mathrm{reg}}$
    \\
    (\cite[(19)]{BST99})
  \end{tabular}
  &
  \begin{tabular}{l}
    $\mathbf{c}_{{}_{\mathrm{Dbra}}} = 8 \cdot \mathbf{4}_{\mathrm{reg}}$
    \\
    (\cite[(18)]{BST99})
  \end{tabular}
  &
  \\
  \hline
  \begin{tabular}{l}
    D4-branes
    \\
    on  $\mathbb{T}^{\mathbf{4}_{\mathbb{H}}} \!\sslash\! \mathbb{Z}_k$
  \end{tabular}
  &
  \begin{tabular}{l}
    $\mathbf{c}_{{}_{\mathrm{Dbra}}} = N \cdot \mathbf{k}_{\mathrm{reg}}$
    \\
    (\cite[4.2.1]{AFIRU00a})
  \end{tabular}
  &
  &
  \begin{tabular}{l}
    Re-derived via M5-branes
    \\
    below in \cref{M5MO5AnomalyCancellation}
  \end{tabular}
  \\
  \hline
  \begin{tabular}{l}
    D4-branes
    \\
    on  $\mathbb{T}^{\mathbf{4}_{\mathbb{H}}} \!\sslash\! \mathbb{Z}_3$
  \end{tabular}
  &
  \begin{tabular}{l}
    $\mathbf{c}_{{}_{\mathrm{Dbra}}} = N \cdot \mathbf{3}_{\mathrm{reg}}$
    \\
    \cite[(7.2)]{AFIRU00b}
  \end{tabular}
  &
  \begin{tabular}{l}
    $\mathbf{c}_{{}_{\mathrm{Dbra}}} = 4 \cdot \mathbf{3}_{\mathrm{reg}} + 4 \cdot \mathbf{1}_{\mathrm{triv}}$
    \\
    (\cite[(14)-(17)]{KataokaShimojo02},
    \\
  \end{tabular}
  &
  \begin{tabular}{l}
    The special case of $k = 3$
    \\
    (review in \cite[4]{Marchesano03})
  \end{tabular}
 \\
  \hline
  \begin{tabular}{l}
    D8-branes
    \\
    on  $\mathbb{T}^{{\mathbf{4}_{\mathbb{H}}}} \!\sslash\! \mathbb{Z}_3$
  \end{tabular}
  &
  \begin{tabular}{l}
    $\mathbf{c}_{{}_{\mathrm{Dbra}}} = N \cdot \mathbf{3}_{\mathrm{reg}}$
    \\
    \\
  \end{tabular}
  &
  \begin{tabular}{l}
    $\mathbf{c}_{{}_{\mathrm{Dbra}}} = 4 \cdot \mathbf{3}_{\mathrm{reg}} + 4 \cdot \mathbf{1}_{\mathrm{triv}}$
    \\
    (\cite[4]{Honecker01},
    \cite[(29)]{Honecker02})
  \end{tabular}
  &
  \begin{tabular}{l}
    Equivalent by T-duality
    \\
    to previous case
    \\
    (\cite[p.1 ]{Honecker01}, \cite[6]{Honecker02})
  \end{tabular}
  \\
  \hline
  \begin{tabular}{l}
    D3-branes
    \\
    on  $\mathbb{T}^{\mathbf{4}_{\mathbb{H  }}} \!\sslash\! \mathbb{Z}_k$
  \end{tabular}
  &
  \begin{tabular}{l}
    $\mathbf{c}_{{}_{\mathrm{Dbra}}} = N \cdot \mathbf{k}_{\mathrm{reg}} $
    \\
    (\cite[(25)]{FHKU01})
  \end{tabular}
  &
  &
  \\
  \hline
  \begin{tabular}{l}
    D7-branes
    \\
    on  $\mathbb{T}^{\mathbf{4}_{\mathbb{H  }}} \!\sslash\! \mathbb{Z}_k$
  \end{tabular}
  &
  \begin{tabular}{l}
    $\mathbf{c}_{{}_{\mathrm{Dbra}}} = N \cdot \mathbf{k}_{\mathrm{reg}} $
    \\
    (\cite[(5), (6)]{FHKU01})
  \end{tabular}
  &
  &
  \\
  \hline
  \begin{tabular}{l}
    D6-branes
    \\
    on  $\mathbb{T}^{6} \!\sslash\! \mathbb{Z}_4$
  \end{tabular}
  &
  &
  \begin{tabular}{l}
    $\mathbf{c}_{{}_{\mathrm{Dbra}}} = 8 \cdot \mathbf{k}_{\mathrm{reg}} $
    \\
    (\cite[(25)]{IKS99})
  \end{tabular}
  &
  \\
  \hline
\end{tabular}
}

\vspace{.15cm}

\noindent {\bf \footnotesize Table 1 -- Tadpole cancellation conditions
between D-branes and O-planes on toroidal ADE-orientifolds}
{\footnotesize as derived from case-by-case
analysis in
perturbative string theory.
The geometric content is shown
in \hyperlink{FigureA}{\it Figure A}.
The re-derivation from \hyperlink{HypothesisH}{\it Hypothesis H}
is in \cref{EquivariantCohomotopyChargeOfM5AtMO5}.}

\medskip
The D-brane species in \hyperlink{Table1}{\it Table 1}
with the most direct lift to M-theory are the D4-branes,
lifting to M5-branes under double dimensional reduction
(\cite[6]{APPS97a}\cite[6]{APPS97a}\cite{LPSS11});
see \hyperlink{Table7}{\it Table 7}.
With an actual formulation of M-theory lacking,
indirect plausibility arguments have been advanced
\cite{DasguptaMukhi95}\cite[3.3]{Witten95b}\cite[2.1]{Hori98}
that for M5-branes on M-theoretic orientifolds of the form
$\mathbb{T}^{ \mathbf{5}_{\mathrm{sgn}} }  \!\sslash\! \mathbb{Z}_2$,
anomaly cancellation implies \hyperlink{Table2}{\it Table 2}:

\vspace{-2mm}
{\small
\begin{center}
\hypertarget{Table2}{}
\begin{tabular}{|c|c|c|c|}
  \hline
  \begin{tabular}{c}
    \bf Single M-brane species
    \\
    \bf on toroidal orientifold
  \end{tabular}
  &
  \begin{tabular}{c}
    \bf Local/twisted
    \\
    \bf tadpole cancellation
    \\
    \bf condition
  \end{tabular}
  &
  \begin{tabular}{c}
    \bf Global/untwisted
    \\
    \bf tadpole cancellation
    \\
    \bf condition
  \end{tabular}
  &
  \bf Comments
  \\
  \hline
  \hline
  \multirow{2}{*}{
  \begin{tabular}{c}
    M5-branes
    \\
    on $\mathbb{T}^{\mathbf{5}_{\mathrm{sgn}}}
    \!\sslash\! \mathbb{Z}_2 $
  \end{tabular}
  }
  &
  \begin{tabular}{c}
    $
    \mathclap{\phantom{\vert^{\vert^{\vert}}}}
    \phantom{AAA}
    \mathbf{c}_{{}_{\mathrm{Mbra}}}
    = N \cdot \mathbf{2}_{\mathrm{reg}}
    \phantom{AAA}
    \mathclap{\phantom{\vert_{\vert_{\vert}}}}
    $
  \end{tabular}
  &
  \begin{tabular}{c}
    $
    \phantom{AAA}
    \mathbf{c}_{{}_{\mathrm{Mbra}}}
    = 16 \cdot \mathbf{2}_{\mathrm{reg}}
    \phantom{AAA}
    $
  \end{tabular}
  &
  \multirow{2}{*}{
    $\phantom{A}$
    \begin{tabular}{c}
      \phantom{a}
      \\
      plausibility arguments
    \end{tabular}
    $\phantom{A}$
  }
  \\
  &
  \multicolumn{2}{c|}{
    $\mathclap{\phantom{\vert^{\vert^{\vert}}}}$
    (\cite{DasguptaMukhi95} \cite[3.3]{Witten95b} \cite[2.1]{Hori98})
    $\mathclap{\phantom{\vert_{\vert_{\vert}}}}$
  }
  &
  \\
  \hline
\end{tabular}
\end{center}
}
\vspace{-.15cm}

\noindent {\bf \footnotesize Table 2 -- M5/MO5 anomaly cancellation in M-theory}
{\footnotesize according to Folklore \ref{AnomalyCancellationOnMTheoreticOrientifolds}.
While it has remained open in which cohomology theory the
charge $\mathbf{c}_{\mathrm{Mbra}}$ is quantized, the geometric
picture is again that illustrated in \hyperlink{FigureA}{\it Figure A}.}

\medskip

We highlight in \hyperlink{FigureA}{\it Figure A} the geometric interpretation of these
tadpole cancellation conditions from \hyperlink{Table1}{\it Table 1} and
\hyperlink{Table2}{\it Table 2}. The left side of  \hyperlink{FigureA}{\it Figure A}
shows a 2-dimensional slice through the toroidal orbifold
$\mathbb{T}^{\mathbf{4}_{\mathbb{H}}} \!\sslash\! \mathbb{Z}_4
= (\mathbb{R}^4 / \mathbb{Z}^4) \!\sslash\! \mathbb{Z}_4$
with transversal branes/O-plane charges appearing as points.
The O-plane charges (shown as open circles) are stuck one-to-one to the
fixed points of the point reflection subgroup $\mathbb{Z}_2 \hookrightarrow \mathbb{Z}_4$ (see also \hyperlink{TableRT}{\it Table RT})
and, in the example shown,
are permuted by the full orbifold group action of $\mathbb{Z}_4$ according to the permutation
representation $2 \cdot \mathbf{1}_{\mathrm{triv}} + 1 \cdot \mathbf{2}_{\mathrm{perm}}$.
The local/twisted tadpole cancellation condition says that the branes (shown as filled circles) appear
 in the vicinity of the O-planes with all their distinct mirror images under the full group action,
thus contributing Chan-Paton fields in the regular representation
$\mathbf{4}_{\mathrm{reg}}$. The global/untwisted tadpole cancellation
condition
says that the total charge of branes minus O-planes,
hence the net charge if all branes/O-planes could freely move
and pairwise annihilate, vanishes:

\begin{center}
\hypertarget{FigureA}{}
\begin{tikzpicture}[scale=0.8, decoration=snake]

  \begin{scope}[shift={(0,-.4)}]

  \node at (1.4,5.3)
    {$
      \overbrace{
        \phantom{------------------}
      }
    $};

  \node at (1.4,7)
    {
      \tiny
      \color{darkblue} \bf
      \begin{tabular}{c}
        local/twisted
        \\
        tadpole cancellation
      \end{tabular}
    };

  \node (EquivariantCocycle) at (1.4,5.3+.8)
    {\tiny $
      \mathllap{ \mathbf{c}_{\mathrm{tot}} = \; }
      4 \cdot
      \big(
      \,
      \overset{
        \mathbf{c}_{{}_{\mathrm{Dbra}}}
      }{
      \overbrace{
      1 \cdot \mathbf{4}_{\mathrm{reg}}
      }}
      -
      \overset{
        \beta\big( \mathbf{c}_{{}_{\mathrm{Opla}}}\big)
      }{
      \overbrace{
      (
        2 \cdot \mathbf{1}_{\mathrm{triv}}
        +
        1 \cdot \mathbf{2}_{\mathrm{perm}}
      )
      }
      }
      \;
      \big)
    $};

  \node at (1.4+8,7)
    {
      \tiny
      \color{darkblue} \bf
      \begin{tabular}{c}
        global/untwisted
        \\
        tadpole cancellation
      \end{tabular}
    };

  \node at (1.4+8,5.3)
    {$
      \overbrace{
        \phantom{------------------}
      }
    $};

  \node (PlainCocycle) at (1.4+8,5.3+.8)
    {\tiny
     \raisebox{-.6cm}{
     $
     \begin{aligned}
      &
      4 \cdot
      \big(
      \,
      1 \cdot 4
      -
      (
        2 \cdot 1 + 1 \cdot 2
      )
      \,
      \big)
      \\
      & = 0
      \end{aligned}
    $}};

  \draw[|->]
    (EquivariantCocycle)
    to node[above]{\tiny $\mathrm{dim}$ }
    (PlainCocycle);

  \end{scope}

  \begin{scope}
  \clip (-1.8,-1.8) rectangle (4.8,4.8);
  \draw[step=3, dotted] (-3,-3) grid (6,6);
  \draw[dashed] (-3,-3) circle (1);
  \draw[dashed] (0,-3) circle (1);
  \draw[dashed] (3,-3) circle (1);
  \draw[dashed] (6,-3) circle (1);
  \draw[dashed] (-3,0) circle (1);
  \draw[dashed] (0,0) circle (1);
  \draw[dashed] (3,0) circle (1);
  \draw[dashed] (-3,3) circle (1);
  \draw[dashed] (0,3) circle (1);
  \draw[dashed] (3,3) circle (1);
  \draw[dashed] (-3,6) circle (1);
  \draw[dashed] (0,6) circle (1);
  \draw[dashed] (3,6) circle (1);
  \draw[dashed] (6,6) circle (1);

  \draw[fill=white] (0,0) circle (.07);
  \draw[fill=white] (3,0) circle (.07);
  \draw[fill=white] (0,3) circle (.07);
  \draw[fill=white] (3,3) circle (.07);

  \draw[<->, dashed]
    (2.5,0)
    to[bend right=47]
    node
      {
        \colorbox{white}{
        \tiny
        \color{darkblue} \bf
        \begin{tabular}{c}
          orientifold
          \\
          action
        \end{tabular}
      }
      }
    (0,2.5);

  \draw (0,3)
    node[right]
      {
        \colorbox{white}{
        \hspace{-.3cm}
        \tiny
        \color{darkblue} \bf
        O-plane
        \hspace{-.3cm}
        }
      };
  \draw (3,0)
    node[right]
      {
        \colorbox{white}{
        \hspace{-.5cm}
        \tiny
        \color{darkblue} \bf
        \begin{tabular}{c}
          mirror
          \\
          O-plane
        \end{tabular}
        \hspace{-.3cm}
        }
      };

  \draw[fill=black] (17:.7) circle (.07);
  \draw[fill=black] (17+90:.7) circle (.07);
  \draw[fill=black] (17+180:.7) circle (.07);
  \draw[fill=black] (17+270:.7) circle (.07);

  \draw (17+90:.7)
    node[right]
      {
        \colorbox{white}{
        \hspace{-.3cm}
        \tiny
        \color{darkblue} \bf
        brane
        \hspace{-.3cm}
        }
      };
  \draw (17+180:.7)+(.58,.03)
    node[right, below]
      {
        {
        \hspace{-.3cm}
        \tiny
        \color{darkblue} \bf
        mirror branes
        \hspace{-.3cm}
        }
      };

  \end{scope}

  \begin{scope}[shift={(8,0)}]

  \clip (-1.8,-1.8) rectangle (4.8,4.8);
  \draw[step=3, dotted] (-3,-3) grid (6,6);
  \draw[dashed] (-3,-3) circle (1);
  \draw[dashed] (0,-3) circle (1);
  \draw[dashed] (3,-3) circle (1);
  \draw[dashed] (6,-3) circle (1);
  \draw[dashed] (-3,0) circle (1);
  \draw[dashed] (0,0) circle (1);
  \draw[dashed] (3,0) circle (1);
  \draw[dashed] (-3,3) circle (1);
  \draw[dashed] (0,3) circle (1);
  \draw[dashed] (3,3) circle (1);
  \draw[dashed] (-3,6) circle (1);
  \draw[dashed] (0,6) circle (1);
  \draw[dashed] (3,6) circle (1);
  \draw[dashed] (6,6) circle (1);

  \begin{scope}[shift={(1.1,1.1)}]
  \draw[fill=white] (0,0) circle (.07);
  \end{scope}

  \draw[->, decorate, lightgray] (0,0) to (.97,.97);

  \draw[fill=white] (3-1.1,0+1.1) circle (.07);
  \draw[->, decorate, lightgray] (3,0) to (3-.97,0+.97);

  \begin{scope}[shift={(1.1+.2,-1.1-.2)}]
  \draw[fill=white] (0,3) circle (.07);
  \end{scope}
  \draw[->, decorate, lightgray] (0,3) to (0+.97+.2,3-.97-.2);

  \begin{scope}[shift={(-1.2,-1.2)}]
  \draw[fill=white] (3,3) circle (.07);
  \end{scope}
  \draw[->, decorate, lightgray] (3,3) to (3-1.07,3-1.07);

  \draw[fill=black] (17:.7)+(1.36,1.36) circle (.07);
  \draw[fill=black] (17+90:.7)+(1.1,1.1) circle (.07);
  \draw[fill=black] (17+180:.7)+(1.5,1.5) circle (.07);
  \draw[fill=black] (17+270:.7)+(1.5,1.5) circle (.07);

  \draw[->, decorate, lightgray] (17:.7) to ++(1.23,1.23);
  \draw[->, decorate, lightgray] (17+90:.7) to ++(1.0,1.0);
  \draw[->, decorate, lightgray] (17+180:.7) to ++(1.37,1.37);
  \draw[->, decorate, lightgray] (17+270:.7) to ++(1.37,1.37);

  \end{scope}

  \begin{scope}[shift={(0,1.5)}]
  \draw (0,-3.5) node {\tiny $x_1 = 0$};
  \draw (3,-3.5) node {\tiny $x_1 = \tfrac{1}{2}$};

  \begin{scope}[shift={(8,0)}]
  \draw (0,-3.5) node {\tiny $x_1 = 0$};
  \draw (3,-3.5) node {\tiny $x_1 = \tfrac{1}{2}$};
  \end{scope}
  \end{scope}

  \draw (-3.1,0) node {\tiny $x_2 = 0$};
  \draw (-3.1,3) node {\tiny $x_2 = \tfrac{1}{2}$};

\end{tikzpicture}
\end{center}

\vspace{-.4cm}

\noindent {\bf \footnotesize Figure A -- Illustration of the geometric situation of tadpole cancellation on toroidal ADE-orientifolds} {\footnotesize according to \hyperlink{Table1}{\it Table 1},
shown for the case $G^{\mathrm{ADE}} = \mathbb{Z}_4$.
This is for single wrapping number of the branes
along any further compact dimensions; but the general statement
is just the tensor product of this situation with the cohomology
of these further compact spaces.}

\medskip
\medskip
\noindent In view of the evident pattern evidenced by
\hyperlink{Table1}{\it Table 1} and \hyperlink{Table2}{\it Table 2}, here we ask the following question:

\vspace{-1mm}
\begin{center}
\emph{
Is there a generalized cohomological brane charge quantization
which enforces tadpole anomaly cancellation?
}
\end{center}

\vspace{-1mm}
\noindent We show in this paper that (see \hyperlink{FigureU}{\it Figure U}), for fluxless toroidal ADE-orientifolds,
the answer to this question is \emph{unstable equivariant Cohomotopy} theory; see \eqref{EquivariantCohomotopySet} below. Before explaining this, we put the open problem in  perspective:

\vspace{5mm}
\noindent {\bf The open problem -- Systematic understanding of tadpole cancellation by charge quantization.} While the RR-field tadpole cancellation  conditions are thought to be crucial not just for mathematical consistency,
but also for phenomenological accuracy of string model building \cite[Sec. 4.4]{IbanezUranga12},
a real understanding of the conditions in full detail and generality
has remained an open problem; see
\cite[p. 2]{BDS05}\cite[4.6.1]{Moore14}\cite[p. 2]{HMSV19} for
critical discussion.
In particular, most of the existing literature on tadpole cancellation simply regards D-brane charge
as being in ordinary cohomology, while widely accepted arguments say that D-brane charge instead
   must be regarded in (a twisted differential enhancement of) K-theory;
in this context,  see \cite{SS19b} for review,
and see \cite{GS-AHSS}\cite{GS19A}\cite{GS19B} for
    detailed constructions and accounts of the twisted differential case.
  D-brane charge in K-cohomology may be understood as a generalized
  \emph{charge quantization} rule, in analogy  to how Dirac's classical argument for charge
  quantization \cite{Dirac31} (see \cite[16.4e]{Frankel97}) expresses the electromagnetic field as
  a cocycle in (the differential refinement of) ordinary cohomology; see \cite{Freed00}.
  Notice that cohomological charge quantization concerns the
  full non-perturbative structure of a physical theory, including
  its instanton/soliton charge content.

\medskip
 Accordingly, in \cite[5]{Uranga00} it was suggested that RR-tadpole cancellation
 must be a consistency condition expressed in K-theory.  Specifically, for orientifolds this could be
  Atiyah's \emph{R}eal  K-theory \cite{Atiyah66},  i.e., KR-theory restricting on O-planes to KO-theory,
  which has been argued to capture  D-brane charges on orientifolds in
  \cite[5]{Witten98c}\cite{Gukov99}\cite[\S 3]{BGS01};
  explicit constructions are given in \cite{DMDR1}\cite{DMDR2}\cite{HMSV15}\cite{HMSV19}\cite{GS19B}.
In more detail, D-brane charge on orbifolds  is traditionally expected
  \cite[5.1]{Witten98c}\cite[4.5.2]{dBDHKMMS02}\cite{GarciaCompean99}
  to be in equivariant K-theory  (see \cite{Greenlees05}).
Hence orientifolds are expected to have charge quantization in a combination
of these aspects in some Real equivariant K-theory
\cite{Moutuou11}\cite{Moutuou12}\cite{FreedMoore12}\cite{Gomi17}.

\medskip
However, before even formulating tadpole cancellation in Real equivariant K-theory,
the full formulation of O-plane charge has remained open:

\medskip
\noindent \textbf{\emph{Open issue 1: Single O-plane charge.}}
While O-plane charge is not supposed to vary over all
integers, perturbative string theory predicts it to vary in
the set $\{0, \pm 1\}$ (e.g. \cite[p. 2]{HIS00}),
illustrated in \hyperlink{FigureB}{\it Figure B}.

\begin{center}
\hypertarget{FigureB}{}
\begin{tikzpicture}[decoration=snake]

  \draw (-2.5,0) node {\tiny $x_2 = 0$};

  \begin{scope}[shift={(9.8,0)}]

  \draw[dashed] (0,0) circle (1);

  \draw[dotted] (0,2.5) to (0,-2.5);
  \draw[dotted] (1.9,0) to (-1.9,0);

  \draw (0,-2.7) node {\tiny $x_1 = 0$};

  \draw[<->, dashed] (120:2.3)
    to node[very near start]
    {
      \tiny
      \color{darkblue} \bf
      \begin{tabular}{c}
        orientifold
        \\
        action
      \end{tabular}
    }
    (120+180:2.3);

  \draw (0,-.3) node
    {
      \colorbox{white}{
        \tiny
        \color{darkblue} \bf
        $O^{{}^{0}}\!$-plane
      }
    };

  \end{scope}

  \begin{scope}[shift={(0,0)}]

  \draw[dashed] (0,0) circle (1);

  \draw[dotted] (0,2.5) to (0,-2.5);
  \draw[dotted] (1.9,0) to (-1.9,0);

  \draw (0,-2.7) node {\tiny $x_1 = 0$};

  \draw[<->, dashed] (120:2.3)
    to node[very near start]
    {
      \tiny
      \color{darkblue} \bf
      \begin{tabular}{c}
        orientifold
        \\
        action
      \end{tabular}
    }
    (120+180:2.3);

  \draw (0,-.3) node
    {
      \colorbox{white}{
        \tiny
        \color{darkblue} \bf
        $O^{{}^{-}}\!$-plane
      }
    };

    \draw[fill=white] (0,0) circle (.07);

  \end{scope}

  \begin{scope}[shift={(4.9,0)}]

  \draw[dashed] (0,0) circle (1);

  \draw[dotted] (0,2.5) to (0,-2.5);
  \draw[dotted] (1.9,0) to (-1.9,0);

  \draw (0,-2.7) node {\tiny $x_1 = 0$};

  \draw[<->, dashed] (120:2.3)
    to node[very near start]
    {
      \tiny
      \color{darkblue} \bf
      \begin{tabular}{c}
        orientifold
        \\
        action
      \end{tabular}
    }
    (120+180:2.3);

  \draw (0,-.3) node
    {
      \colorbox{white}{
        \tiny
        \color{darkblue} \bf
        $O^{{}^{+}}\!$-plane
      }
    };

    \draw[fill=black] (0,0) circle (.07);

  \begin{scope}[scale=.7, shift={(-1.9,-2.5)}]

  \draw[dashed] (0,0) circle (1);

  \draw[dotted] (0,1.2) to (0,-1.2);
  \draw[dotted] (1.2,0) to (-1.2,0);

  \draw[fill=white] (0,0) circle (.07);

  \draw[fill=black] (30:.85) circle (.07);
  \draw[fill=black] (30+180:.85) circle (.07);

  \draw[->, decorate, lightgray] (30:.8) to (30:.1);
  \draw[->, decorate, lightgray] (30+180:.8) to (30+180:.1);

  \draw (56:1.1)
    node
    {
      \begin{rotate}{56}
        \tiny
        \raisebox{-3pt}{
          $\simeq$
        }
      \end{rotate}
    };

  \end{scope}

  \end{scope}

\end{tikzpicture}
\end{center}

\vspace{-.4cm}

\noindent {\bf \footnotesize Figure B -- The charge carried by a single O-plane}
{\footnotesize takes values in the set $\{0, \pm 1\}$
(in units of corresponding integral D-brane charge),
visualized here following the geometric illustration of \hyperlink{FigureA}{\it Figure A}.
For O4-planes this situation lifts to MO5-planes in M-theory
\cite{Hori98}\cite{Gimon98} \cite[II.B]{AKY98}\cite[3.1]{HananyKol00}.
(The notation for $O^{{}^{0}}$ originates with \cite[p. 29]{Hori98}\cite[p. 4]{Gimon98}; see \hyperlink{FigureT}{\it Figure T} for more.)}

\medskip
But in plain KR-theory all O-planes are $\mathrm{O}^{{}^{-}}\!$-planes.
To capture at least the presence of $\mathrm{O}^{{}^{+}}\!$-planes
requires adding to KR-theory an extra sign choice \cite{DMDR1}.
In some cases this may be regarded as part of a twisting of KR-theory \cite{HMSV19}, but the situation remains inconclusive \cite[p. 2]{HMSV19}.
\footnote{Note that \cite[footnote 1]{HMSV19}
  claims a problem with the sign choice in \cite{DMDR1}, and hence also in \cite{Moutuou12}.
  These continuing issues with orbifold K-theory for D-brane charge
  may motivate but do not affect the discussion here,
  where instead we propose equivariant Cohomotopy theory
  for M-brane charge as an alternative. }

{\small
\begin{floatingtable}[r]
{
\hypertarget{Table3}{}
\begin{tabular}{|c|c|c|c|c|}
  \hline
  \begin{tabular}{c}
\bf    O-plane
    \\
  \bf  species
  \end{tabular}
  &
  \begin{tabular}{c}
  \bf  Charge
    \\
    $q_{{}_{\mathrm{O}p^-}}/q_{{}_{\mathrm{D}p}}$
  \end{tabular}
  &
  \begin{tabular}{c}
  \bf  Transverse
    \\
 \bf   orientifold
  \end{tabular}
  &
  \begin{tabular}{c}
 \bf   Number of
    \\
 \bf   singularities
  \end{tabular}
  \\
  \hline
  \hline
  $\mathrm{O}9^{-}$
  &
  -32
  &
  $\mathbb{T}^0 \!\sslash\! \mathbb{Z}_2$
  &
  1
  \\
  \hline
  $\mathrm{O}8^{-}$
  &
  -16
  &
  $\mathbb{T}^1 \!\sslash\! \mathbb{Z}_2$
  &
  2
  \\
  \hline
  $\mathrm{O}7^{-}$
  &
  -8
  &
  $\mathbb{T}^2 \!\sslash\! \mathbb{Z}_2$
  &
  4
  \\
  \hline
  $\mathrm{O}6^{-}$
  &
  -4
  &
  $\mathbb{T}^3 \!\sslash\! \mathbb{Z}_2$
  &
  8
  \\
  \hline
  $\mathrm{O}5^{-}$
  &
  -2
  &
  $\mathbb{T}^4 \!\sslash\! \mathbb{Z}_2$
  &
  16
  \\
  \hline
  $\mathrm{O}4^{-}$
  &
  -1
  &
  $\mathbb{T}^5 \!\sslash\! \mathbb{Z}_2$
  &
  32
  \\
  \hline
\end{tabular}
}
\\
{\bf \footnotesize   Table 3 -- Absolute O$p$-plane charge}
{\footnotesize
\cite[(5.52)]{IbanezUranga12}\cite[10.212]{BLT13}
$- 32$
is not implied by K-theory
\cite{BGS01},
but is implied by Cohomotopy.
}
\end{floatingtable}
}

\medskip
\noindent \textbf{\emph{Open issue 2: Total O-plane charge.}}
  As highlighted in \cite[p. 4, p. 25]{BGS01}, it remains open whether a putative
  formalization of tadpole cancellation  via Real K-theory reflects the \emph{absolute total}
  charge to be carried by O-planes. This is a glaring open problem, since the absolute total charge
  -32 of O$p$-planes in toroidal orientifolds (see \hyperlink{Table3}{\it Table 3})
  fixes the gauge algebra $\mathfrak{so}(32)$
  of type I string theory required for duality with heterotic string theory
  (see, e.g., \cite[p. 250]{BLT13} \cite{AntoniadisPartoucheTaylor97}) with Green-Schwarz anomaly
  cancellation. This core result of string theory, is the basis of the  ``first superstring revolution''
   \cite[p. 21]{Schwarz11}, and a successful formalization of tadpole cancellation ought to reproduce it.

\medskip
  A proposal for capturing absolute background charge of O-planes
  by equipping K-theory with a quadratic pairing
  has been briefly sketched in \cite{DFM11}, but the implications
  remain somewhat inconclusive \cite[p. 22]{Moore14}.
  We notice that the implications on M-brane charge quantization
  of analogous quadratic functions in M-theory \cite{HopkinsSinder05}
  are reproduced by charge quantization in twisted Cohomotopy theory \cite{FSS19b}.
  Here we further check this alternative proposal:
  That brane charge quantization
  is in \emph{Cohomotopy} cohomology theory,
  which lifts K-theory through the Boardman homomorphism;
  see \eqref{FromUnstableCohomotopyToEquivariantKTheory} below.

\medskip

\noindent {\bf The proposal -- Charge quantization on orientifolds in Equivariant Cohomotopy theory.}
When educated guesswork gets stuck,
it is desirable to identify principles from which to systematically \emph{derive} charge
quantization in M-theory, if possible, and seek the proper generalized cohomology theory to describe
the M-theory fields, as was advocated and initiated in \cite{Sa1}\cite{Sa2} \cite{Sa3}\cite{tcu}.
A first-principles analysis of super $p$-brane sigma-models in rational homotopy theory
shows \cite{S-top}\cite{FSS15}\cite{FSS16a}\cite{FSS16b}
\newline
 that rationalized M-brane charge is
quantized in rational \emph{Cohomotopy} cohomology theory; see \cite{FSS19a} for review.
This naturally suggests the following hypothesis about charge quantization in M-theory
\cite{S-top}\cite{FSS19b}\cite{FSS19c} \cite{SS19b}\cite{SS19c}
(for exposition see \cite{Schreiber20}):

\vspace{0mm}
\hypertarget{HypothesisH}{}

\begin{center}
\fbox{\noindent  {\bf Hypothesis H.} {\bf \it  The M-theory C-field is charge-quantized in Cohomotopy theory}.}
\end{center}

Applied to toroidal orbifolds, the relevant flavor of unstable
Cohomotopy theory is (see \hyperlink{Table4}{\it Table 4})
\emph{unstable equivariant Cohomotopy} (\cite[8.4]{tomDieck79}\cite{Cruickshank03}),
denoted $\pi^\bullet_G$ \eqref{EquivariantCohomotopySet}.
This is the cohomology theory whose degrees are labeled by
orthogonal linear $G$-representations, called the \emph{RO-degree} (see, e.g., \cite[3]{Blu17})

\vspace{-2mm}
\begin{equation}
  \label{RODegree}
  \xymatrix{
    {\phantom{V}}\mathllap{
      \mbox{
        \tiny
        \color{darkblue} \bf
        ``RO-degree''
      }
    }
    \ar@[white]@(ul,ur)^{
      \mathllap{
      \mbox{
        \tiny
        \color{darkblue} \bf
        orthogonal linear $G$-representation
      }
      }
    }
  }
  \xymatrix{V \ar@(ul,ur)^{G \subset \mathrm{O}(\mathrm{dim}(V))}  }
  \!\!\!\!\!\in \mathrm{RO}(G)
  \;\;\; \mbox{\tiny \color{darkblue} \bf representation ring}
\end{equation}
and whose value on
a topological $G$-space $X$
(representing a global $G$-quotient orbifold $X \!\sslash\! G$)
with specified point at infinity $\infty \in X$ -- see diagram \eqref{VanishingAtInfinity} --
is the
\emph{set} of $G$-homotopy classes \eqref{GHomotopy} of
pointed $G$-equivariant continuous functions \eqref{EquivariantFunction}
from $X$ to the
{$V$-representation sphere} $S^V$ \eqref{RepSpheres}
(see \cref{EquivariantCohomotopyAndTadpoleCancellation}
for details and illustration):
\begin{equation}
  \label{EquivariantCohomotopySet}
  \begin{array}{ccc}
  \pi^V_G
  \big(
    X
  \big)
  &
  \coloneqq
  &
  \left\{
    \raisebox{-6pt}{
    {\xymatrix{
      X
      \ar@(ul,ur)|-{\,G\,}
      \ar[rr]^-c
      &&
      S^V
      \ar@(ul,ur)|-{\,G\,}
    }
    }
    }
  \right\}_{\raisebox{2pt}{\tiny$\!\!\!\Big/\sim$}}
  \\
    \tiny
    \color{darkblue}
    \begin{tabular}{c}    \bf
      equivariant Cohomotopy set
      \\
   \bf   of the orbifold $X \!\sslash\! G$
      \\
  \bf    in RO-degree $V$
    \end{tabular}
    &&
    \tiny
    \color{darkblue}
    \begin{tabular}{c}
  \bf    set of $G$-homotopy classes
      \\
    \bf  of $G$-equivariant continuous functions
      \\
   \bf   from $X$ to $S^V$
    \end{tabular}
  \end{array}
\end{equation}
This
is the evident enhancement to unstable $G$-equivariant homotopy theory
(see \cite[1]{Blu17})
of unstable plain Cohomotopy theory $\pi^\bullet$
(\cite{Borsuk36}\cite{Spanier49}\cite{KMT12}\cite[3.1]{FSS19b}).

\vspace{-.3cm}
\begin{equation}
  \label{FromUnstableCohomotopyToEquivariantKTheory}
  \raisebox{0pt}{
  \xymatrix@R=-20pt@C=16pt{
    \mbox{
      \raisebox{-10pt}{\footnotesize
      \begin{minipage}[l]{7.2cm}
      Equivariant Cohomotopy
      is a non-abelian (i.e. ``unstable'') Cohomology theory
      \cite{SSS12}\cite{NSS12}
      that maps to equivariant K-theory
      via stabilization followed by the
      Boardman homomorphism,
      see \cref{StableEquivariantHopfDegree} and \cite{SS19b}.
      \end{minipage}
      }
    }
    &
    \pi^\bullet_G
    \ar[rr]^-{\Sigma^\infty}_-{
      \mbox{
        \tiny
        \color{darkblue} \bf
        \begin{tabular}{c}
          $\phantom{a}$
          \\
          stablilization
        \end{tabular}
      }
    }
    &&
    \mathbb{S}_G
    \ar[rr]^-{\beta}_-{
      \mbox{
        \tiny
        \color{darkblue} \bf
        \begin{tabular}{c}
          $\phantom{a}$
          \\
          Boardman
          \\
          homomorphism
        \end{tabular}
      }
    }
    &&
    \mathrm{KO}_G
    \\
    &
    \mbox{
      \tiny
      \color{darkblue} \bf
      \begin{tabular}{c}
        unstable
        \\
        equivariant
        \\
        Cohomotopy
      \end{tabular}
    }
    &&
    \mbox{
      \tiny
      \color{darkblue} \bf
      \begin{tabular}{c}
        stable
        \\
        equivariant
        \\
        Cohomotopy
      \end{tabular}
    }
    &&
    \mbox{
      \tiny
      \color{darkblue} \bf
      \begin{tabular}{c}
        \\
        equivariant
        \\
        K-theory
      \end{tabular}
    }
  }
  }
\end{equation}

\noindent {\bf The solution -- From Hypothesis H.}  In this article
we explain how lifting brane charge quantization to ADE-equivariant Cohomotopy,
regarded as the generalized Dirac charge
quantization of the M-theory C-field (e.g. \cite{Duff99B}) on toroidal M-orientifolds
(\cite{DasguptaMukhi95}\cite{Witten95b}\cite{Hori98}\cite{ADE}),
gives the local O-plane charges
in $\{0,\pm 1\}$ from \hyperlink{FigureB}{\it Figure B}
and enforces on D-brane charge in the underlying equivariant K-theory
\eqref{FromUnstableCohomotopyToEquivariantKTheory}
the RR-field tadpole cancellation constraints from
\hyperlink{Table1}{\it Table 1} via their M-theory lift from
\hyperlink{Table2}{\it Table 2}.


\medskip
\noindent {\bf Overall picture -- M-Theory and Cohomotopy.}
As we further explain in \cite{OrbifoldCohomology},
unstable equivariant Cohomotopy theory
is the incarnation on flat orbifolds of
\emph{unstable twisted Cohomotopy} cohomology theory,
which we showed in \cite{FSS19b}\cite{FSS19c}
implies a list of M-theory anomaly cancellation conditions
on non-singular (i.e., ``smooth'') but topologically non-trivial spacetimes;
see \hyperlink{Table4}{\it Table 4}:

\vspace{-1mm}
{\small
\hypertarget{Table4}{}
\begin{center}
\hspace{-4.5mm}
\begin{tabular}{cc}
\setlength\tabcolsep{.4em}
\begin{tabular}{|c||c|c|}
  \hline
  Spacetime & {\bf Flat} & {\bf Curved}
  \\
  \hline
  \hline
  {\bf Smooth} &
  \begin{tabular}{c}
    plain
    \\
    Cohomotopy
    \\
    (\cite{FSS15}\cite{BSS18})
  \end{tabular}
  &
  \begin{tabular}{c}
    twisted
    \\
    Cohomotopy
    \\
    (\cite{FSS19b}\cite{FSS19c})
  \end{tabular}
  \\
  \hline
  \begin{tabular}{c}
    \bf Orbi-
    \\
    \bf singular
  \end{tabular}
   &
   \begin{tabular}{c}
     equivariant
     \\
     Cohomotopy
     \\
     (\cite{ADE}\cite{SS19b} \cref{M5MO5AnomalyCancellation})
   \end{tabular}
   &
   \begin{tabular}{c}
     orbifold
     \\
     Cohomotopy
     \\
     (\cite{OrbifoldCohomology})
   \end{tabular}
  \\
  \hline
\end{tabular}
&
\begin{minipage}[l]{8cm}
  {\bf \footnotesize Table 4 -- M-theory anomaly cancellation by
    C-field
  charge quantization in Cohomotopy.} \hspace{-2mm}
  {\footnotesize On smooth but curved spacetimes, Cohomotopy theory
  is twisted via the J-homomorphism by the tangent bundle.
  On flat orbi-orientifolds the spacetime curvature is
  all concentrated in the $G$-singularities,
  around which  the tangent bundle becomes a $G$-representation
  and twisted Cohomotopy becomes equivariant Cohomotopy.
  In each case the respective charge quantization implies
  expected anomaly cancellation conditions. See also
  \hyperlink{Table8}{\it Table 8}}.
\end{minipage}
\end{tabular}
\end{center}
}
\vspace{1mm}
\noindent Each entry in \hyperlink{Table4}{\it Table 4}
supports \hyperlink{HypothesisH}{\it Hypothesis H}
in different corners of the expected phase space of M-theory.
This suggests that
\hyperlink{HypothesisH}{\it Hypothesis H} is a correct
assumption about the elusive mathematical foundation of M-theory.

\medskip
\hypertarget{RelevanceOfUnstable}{}
\noindent {\bf The necessity of unstable = non-abelian charge quantization for O-planes.}
We highlight that most authors
who discuss equivariant Cohomotopy consider \emph{stable} equivariant Cohomotopy
theory (e.g. \cite{Segal71}\cite{Carlsson84} \cite{Lueck}), represented by the
equivariant sphere spectrum $\mathbb{S}_G$ in equivariant
stable homotopy theory (\cite{LMS86}\cite[Appendix]{HHR16});
see \cref{StableEquivariantHopfDegree} below.
There are comparison homomorphisms \eqref{FromUnstableCohomotopyToEquivariantKTheory}
from equivariant unstable Cohomotopy
to stable Cohomotopy and further to K-theory
but each step forgets some information (has a non-trivial kernel)
and produces spurious information (has a non-trivial cokernel);
see \cite{SS19b}.
For the result presented here (just as for the previous discussion in
\cite{FSS19b}\cite{FSS19c}), it is crucial that we use the richer \emph{unstable}
version of the Cohomotopy theory, hence the \emph{non-abelian
Cohomology theory} \cite{SSS12}\cite{NSS12},
which is the one that follows from analysis of super $p$-brane cocycles
\cite{S-top}\cite{FSS19a}. We find that:
\begin{enumerate}[{\bf (a)}]
\vspace{-2mm}
\item the difference in
the behavior
between the O-plane charges and the D-brane charges
(in \hyperlink{Table1}{\it Table 1}, \hyperlink{TableMNTC}{\it Table 2}
and \hyperlink{FigureP}{Figure P}) and
\vspace{-2mm}
\item the unstable/non-abelian nature of O-plane charge itself
(\hyperlink{FigureOP}{\it Figure OP})
\end{enumerate}
\vspace{-2mm}
are reflected in the passage from
the unstable to the stable range in unstable  ADE-equivariant Cohomotopy, where
the O-plane charges are distinguished as being in the unstable range;
see \hyperlink{FigureC}{\it Figure C}:

\vspace{-9mm}

\begin{center}
\hypertarget{FigureC}{}
$$
  \xymatrix@C=20pt@R=1.2em{
    &
    \mbox{
      \tiny
      \color{darkblue} \bf
      \begin{tabular}{c}
        toroidal
        \\
        orientifold
      \end{tabular}
    }
    \ar@{}[rrrr]|-{
      \mbox{
        \tiny
        \color{darkblue} \bf
        \begin{tabular}{c}
          cocycle in
          \\
          equivariant Cohomotopy
        \end{tabular}
      }
    }
    &&&&
    \mbox{
      \tiny
      \color{darkblue} \bf
      \begin{tabular}{c}
        equivariant cohomotopy
        \\
        classifying space
        \\
        (representation sphere)
      \end{tabular}
    }
    \\
    &
    \mathbb{T}^{\mathbf{4}_{\mathbb{H}}}
    \ar@(ul,ur)|<<<<{ \mathbb{Z}_2 }
    \ar[rrrr]^{ c }
    &&&&
    S^{\mathbf{4}_{\mathbb{H}}}
    \ar@(ul,ur)|<<<<{\mathbb{Z}_2}
    \\
    \mbox{
      \tiny
      \color{darkblue} \bf
      \begin{tabular}{c}
        set of fixed points
        \\
        in orientifold
        \\
        (O-planes)
     \end{tabular}
    }
    &
    \big(
      \mathbb{T}^{\mathbf{4}_{\mathbb{H}}}
    \big)^{\mathbb{Z}_2}
    =
    \big\{
      0, \tfrac{1}{2}
    \big\}^4
    \ar[rrrr]^-{
      c^{(\mathbb{Z}_2)}
    }_{
      \mbox{
        \tiny
        \color{darkblue} \bf
        \begin{tabular}{c}
          O-plane charge
          \\
          (a subset)
        \end{tabular}
      }
    }
    \ar@{^{(}->}[dd]
    &&&&
    \underset{
      S^0
    }{
    \underbrace{
      \{0,\infty\}
    }
    }
    =
    \big( S^{\mathbf{4}_{\mathbb{H}}} \big)^{\mathbb{Z}_2}
    \ar@{^{(}->}[dd]
    &
    \mbox{
      \tiny
      \color{darkblue} \bf
      \begin{tabular}{c}
        set of fixed points
        \\
        in classifying space
        \\
        (the 0-sphere)
     \end{tabular}
    }
    \\
    \\
    \mbox{
      \tiny
      \color{darkblue} \bf
      \begin{tabular}{c}
        underlying
        \\
        plain 4-torus
      \end{tabular}
    }
    &
    \big(
      \mathbb{T}^{\mathbf{4}_{\mathbb{H}}}
    \big)^1
    =
    T^4
    \ar[rrrr]^-{
      (c)^1
    }_-{
      \mbox{
        \tiny
        \color{darkblue} \bf
        \begin{tabular}{c}
          net charge
          \\
          (an integer)
        \end{tabular}
      }
    }    &&&&
    S^4
    =
    \big( S^{\mathbf{4}_{\mathbb{H}}} \big)^{1}
    &
    \mbox{
      \tiny
      \color{darkblue} \bf
      \begin{tabular}{c}
        underlying
        \\
        plain 4-sphere
      \end{tabular}
    }
  }
$$
\end{center}

\vspace{-3mm}
\noindent {\bf \footnotesize Figure C -- A cocycle in unstable ADE-equivariant Cohomotopy on a toroidal orientifold}
{\footnotesize
according to \eqref{EquivariantCohomotopySet},
and its decomposition on fixed point strata
into Elmendorf stages; see \cite[1.3]{Blu17}\cite[3.1]{ADE}.}

\vspace{4mm}
{\bf Characterizing brane/O-plane charges -- Unstable (equivariant) differential topology.}
Since in \hyperlink{FigureC}{\it Figure C}
the fixed locus in the classifying space is just a 0-sphere,
and since the Hopf degree of maps $X^n \to S^n$ stabilizes only for $n \geq 1$ --
see diagram \eqref{HopfDegreesUnderSuspension} -- the fixed points in the
spacetime (= O-planes) carry ``unstable'' or ``non-linear'' charge:
not given by a group element, but by a subset, distinguishing
$O^{{}^{\pm}}\!$-planes from $O^{{}^{0}}\!$-planes as in
\hyperlink{FigureOP}{\it Figure OP}.
The further distinction between $O^{{}^{-}}\!$-planes and
$O^{{}^{+}}\!$-planes is implied by
normal framing that enters in the unstable Pontrjagin-Thom theorem
(discussed in \cref{NormalFramingAndBraneAntibraneAnnihilation}).
Moreover, the local/twisted tadpole cancellation condition
in the vicinity of O-planes is implied by the unstable
equivariant Hopf degree theorem (discussed in \cref{LocalTadpoleCancellation}).
Last but not least, it is the unstable Pontrjagin-Thom theorem,
discussed in \cref{Sec-Coh},
which identifies all these charges with \emph{sub}-manifolds,
hence with actual brane/O-plane worldvolumes
as shown in \hyperlink{FigureA}{\it Figure A},
(while the stable PT-theorem instead relates
stable Cohomotopy to manifolds equipped with any maps to spacetime).

\vspace{.4cm}

{\small
\hspace{-.8cm}
\setlength\tabcolsep{.25em}
\begin{tabular}{|c|c|c|c|}
  \hline
  \begin{tabular}{c}
    \bf
    Classical
    theorem
  \end{tabular}
  &
  Reference
  &
  \begin{tabular}{c}
    \bf Interpretation for
    brane charge quantization
    \\
    \bf in unstable Cohomotopy (\hyperlink{HypothesisH}{\it Hypothesis H})
  \end{tabular}
  &
  \begin{tabular}{c}
    Discussed in
  \end{tabular}
  \\
  \hline
  \hline
  \begin{tabular}{c}
    Unstable
    \\
    Pontrjagin-Thom theorem
  \end{tabular}
  &
  \begin{tabular}{c}
    \cite[IX (5.5)]{Kosinski93}
  \end{tabular}
  &
  \begin{tabular}{c}
    Cohomotopy charge is
    sourced by submanifolds
    \\
    hence by worldvolumes of branes and O-planes
  \end{tabular}
  &
  \cref{PTTheorem}
  \\
  \hline
  \begin{tabular}{c}
    Unstable
    \\
    Hopf degree theorem
  \end{tabular}
  &
  \begin{tabular}{c}
    \cite[IX (5.8)]{Kosinski93}
    \\
    \cite[7.5]{Kobin16}
  \end{tabular}
  &
  \begin{tabular}{c}
    Charge of flat transversal branes is integer
    \\
    while charge of flat transversal O-planes is in $\{0,1\}$
  \end{tabular}
  &
  \cref{NormalFramingAndBraneAntibraneAnnihilation}
  \\
  \hline
  \begin{tabular}{c}
    Unstable
    \\
    equivariant Hopf degree theorem
  \end{tabular}
  &
  \cite[8.4]{tomDieck79}
  &
  \begin{tabular}{c}
    Branes appear in regular reps around O-planes
    \\
    = local/twisted tadpole anomaly cancellation
  \end{tabular}
  &
  \cref{EquivariantCohomotopyAndTadpoleCancellation}
  \\
  \hline
\end{tabular}
}

\medskip
\noindent {\bf Organization of the paper.}
In \cref{Sec-Coh} we discuss how the classical unstable Pontrjagin-Thom isomorphism
says that plain Cohomotopy classifies charge carried by brane worldvolumes.
In \cref{EquivariantCohomotopyAndTadpoleCancellation} we introduce the enhancement
of this situation to equivariant Cohomotopy on toroidal orbifolds, where it encodes joint
D-brane and O-plane charge. We explain in \cref{EquivariantCohomotopyAndTadpoleCancellation}
that now the \emph{equivariant Hopf degree theorem} encodes the form of local/twisted
tadpole cancellation conditions, and explain in \cref{GlobalTadpoleCancellation} that
super-differential refinement at global Elmendorf stage encodes the form of global/untwisted
tadpole cancellation conditions as in \hyperlink{Table1}{\it Table 1} and
\hyperlink{Table2}{\it Table 2}. The Pontrjagin-Thom theorem now serves to map
these charges precisely to the geometric situations of the form shown in
\hyperlink{FigureA}{\it Figure A}.
Finally, in \cref{M5MO5AnomalyCancellation} we specify these
general considerations to the physics of M5-branes at MO5-planes
in toroidal ADE-orientifolds in M-theory,
with the C-field charge-quantized in equivariant Cohomotopy theory,
according to \hyperlink{HypothesisH}{\it Hypothesis H}.
To set the scene, we first recall in \cref{HeteroticMTheoryOnADEOrbifolds}
the situation of heterotic M-theory on ADE-orbifolds  and
highlight subtleties in the interpretation of MO5-planes.
With this in hand, we apply in \cref{EquivariantCohomotopyChargeOfM5AtMO5}
the general discussion of equivariant Cohomotopy
from \cref{EquivariantCohomotopyAndTadpoleCancellation}
to ADE-singularities intersecting MO9-planes in M-theory,
and find (Cor. \ref{EquivariantCohomotopyOfSemiComplementSpacetime}, Cor. \ref{GlobalM5MO5CancellationImplied}) that this correctly
encodes the expected anomaly cancellation of M5-branes at MO5-planes,
and this, upon double dimensional reduction (see \hyperlink{Table7}{\it Table 7} and \hyperlink{FigureU}{\it Figure U}), the RR-field tadpole anomaly cancellation for D-branes on ADE-orientifolds.

\section{Cohomotopy and brane charge }
\label{Sec-Coh}

\vspace{-1mm}
Before turning to equivariant/orbifold structure in \cref{EquivariantCohomotopyAndTadpoleCancellation},
we first discuss basics of plain unstable Cohomotopy on plain manifolds.
The key point is that the unstable \emph{Pontrjagin-Thom theorem},
reviewed in \cref{PTTheorem},
identifies cocycles in unstable Cohomotopy theory with
cobordism classes of submanifolds carrying
certain extra structure (normal framing).
These submanifolds are naturally identified with the
worldvolumes of branes that source the corresponding Cohomotopy charge,
and the normal structure they carry corresponds to the charge
carried by the branes, distinguishing
branes from anti-branes. In \cref{NormalFramingAndBraneAntibraneAnnihilation}
we highlight that coboundaries in unstable Cohomotopy
accordingly correspond to brane pair creation/annihilation processes.
This way the Pontrjagin-Thom theorem establishes Cohomotopy as a
natural home for brane charges, as proposed in \cite{S-top}.

\vspace{-1mm}
\subsection{Pontrjagin-Thom theorem and brane worldvolumes}
\label{PTTheorem}

\noindent {\bf Cohomotopy cohomology theory.}
The special case of unstable $G$-equivariant Cohomotopy \eqref{EquivariantCohomotopySet}
with $G = 1$ the trivial group
is unstable plain Cohomotopy theory (\cite{Borsuk36}\cite{Spanier49}\cite{KMT12}\cite[3.1]{FSS19b}), denoted
$\pi^\bullet \coloneqq \pi^\bullet_1$.
This is the unstable/non-abelian cohomology theory whose degrees are
natural numbers $n \in \mathbb{N}$ and which assigns to an
un-pointed topological space $X$ the
\emph{Cohomotopy set} of free
homotopy classes of continuous maps into the $n$-sphere:
\vspace{-2mm}
\begin{equation}
  \label{PlainCohomotopySet}
  \begin{array}{ccc}
  \pi^n
  (
    X
  )
  &
  \coloneqq
  &
  \big\{ \!\!\!\!\!
    \raisebox{+2pt}{
    {\xymatrix{
      X
      \ar[rr]^-c
      &&
      S^n
    }
    }
    }
  \!\!\!\!\! \big\}_{\raisebox{16pt}{$/\sim$}}
  \\[-12pt]
    \tiny
    \color{darkblue}
                  \begin{tabular}{c}
   \bf   Cohomotopy set
      \\
    \bf   of the space $X$
      \\
    \bf  in degree $n$
    \end{tabular}
    &&
    \tiny
    \color{darkblue}
    \begin{tabular}{c}
     \bf set of homotopy classes
      \\
      \bf of continuous functions
      \\
      \bf from $X$ to the $n$-sphere $S^n$
    \end{tabular}
      \end{array}
\end{equation}
The contravariant
assignment $X \mapsto \pi^n(X)$ is analogous to the assignment
$X \mapsto H^n(X, \mathbb{Z})$ of integral cohomology groups, or of the assignment
$X \mapsto K^n(X)$ of K-theory groups, and as such may be regarded
as a generalized but \emph{non-abelian} cohomology theory \cite{SSS12}\cite{NSS12}: For $n \geq 1$ we have
(as for any connected topological space)
a weak homotopy equivalence between the $n$-sphere and the
classifying space of its loop group,
$
  S^n \;\simeq_{\mathrm{whe}}\; B \big(\Omega S^{n}\big)
$,
which means that the Cohomotopy sets \eqref{PlainCohomotopySet}
\vspace{-2mm} 
$$
  \underset{
    \mathclap{
      \mbox{
        \tiny
        \color{darkblue}   \bf
        $n$-Cohomotopy set
      }
      \;
    }
  }{
    \pi^n(X)
  }
  \;\;\;\; \simeq\;\;\;\;
  \underset{
    \mathclap{
    \;
    \mbox{
      \tiny
      \color{darkblue} \bf
      \begin{tabular}{c}
        non-abelian cohomology set
        \\
        with coefficients in
        \\
        loop group of $n$-sphere
      \end{tabular}
    }
    }
  }{
    H^1(X, \Omega S^n)
  }
$$
are equivalently the non-abelian cohomology sets with coefficient in the
loop-group of the $n$-sphere, in direct generalization of
the familiar case of non-abelian cohomology
$H^1(X,G) \simeq G\mathrm{Bund}(X)_{/\sim}$ with coefficients
in a compact Lie group $G$.

\medskip
In this way we may think of \eqref{PlainCohomotopySet} as defining
a generalized cohomology theory, different from but akin to, say, K-theory,
and as such we call it \emph{Cohomotopy cohomology theory}, or
\emph{Cohomotopy theory} or just \emph{Cohomotopy}, for short.
The capitalization indicates that
this term is the proper name of a specific cohomology theory
(we might abbreviate further to \emph{$C$-theory} to bring out the analogy
with $K$-theory yet more) and \emph{not} on par with
\emph{homotopy theory},
which instead is the name of the general mathematical framework
within which we are speaking. In particular, \emph{Cohomotopy cohomology theory} is \emph{not} the dual concept of \emph{homotopy theory},
but is the dual concept of the unstable/non-abelian generalized
homology theory which assigns homotopy groups
$X \mapsto \pi_n(X)$ to pointed topological spaces $X$
(hence: \emph{Homotopy homology theory}, mostly familiar in its
stable form).

\medskip

\noindent {\bf Unstable Pontrjagin-Thom theorem.}
Thinking of $X$ here as spacetime,
we are interested in the case that $X = X^D$ admits the structure of closed smooth manifold
of dimension $D \in \mathbb{N}$. In this case, the unstable Pontrjagin-Thom
theorem \eqref{UnstablePTTheorem} identifies (see e.g. \cite[IX.5]{Kosinski93})
the degree-$n$ Cohomotopy set of $X^D$ \eqref{PlainCohomotopySet} with
the set of cobordism classes of normally framed codimension-$n$
closed submanifolds of $X^D$
(see e.g.\cite[IX.2]{Kosinski93}), hence of closed submanifolds $\Sigma^{d} \overset{i}{\hookrightarrow} X^D$
which are of dimension $d = D - n$ and  equipped with a choice of trivialization

\vspace{0mm}
\begin{equation}
  \label{TrivializationOfNormalVectorBundle}
  \xymatrix@R=-4pt@C=4em{
    N_i\Sigma
    \ar[rr]^-{ \mbox{\tiny \color{darkblue} \bf normal framing} }_-{\simeq}
    &&
    \Sigma \times \mathbb{R}^{n}
    \\
    \mathclap{
    \mbox{ \bf
      \tiny
      \color{darkblue}
      \begin{tabular}{c}
        normal bundle
        \\
        of codimension $n$ submanifold $\Sigma$
        \\
        inside ambient manifold $X$
      \end{tabular}
    }
    }
    &&
    \mbox{ \bf
      \tiny
      \color{darkblue}
      \begin{tabular}{c}
        trivial vector bundle
        \\
        of rank $n$
      \end{tabular}
    }
  }
\end{equation}
of their normal vector bundle:
\vspace{-3mm}
\begin{equation}
  \label{UnstablePTTheorem}
\hspace{-5mm}
  \xymatrix@R=-10pt@C=3.2em{
    \mbox{
      \raisebox{-10pt}{
      \begin{minipage}[l]{3cm}
       \footnotesize \bf Unstable \\ Pontrjagin-Thom \\ theorem
      \end{minipage}
      \hspace{-1.9cm}
    }
    }
    &
    \pi^n
    \big(
      X^D
    \big)
    \ar@<+6pt>[rrr]^-{
      \overset{
        \mbox{ \bf
          \tiny
          \color{darkblue}
          take pre-image at 0 of regular representative
        }
      }{
        \mathrm{fib}_0 \, \circ \,  \mathrm{reg}
      }
    }_-{
      \simeq
    }
    \ar@{<-}@<-6pt>[rrr]_-{
      \mbox{\tiny ``PT collapse''}
      \atop
      \mbox{\bf
        \tiny
        \color{darkblue}
        assign directed asymptotic distance
      }
    }
    &&&
    \left\{ \!\!\!\!\!\!\!\!
    \mbox{
      \raisebox{2pt}{\footnotesize
      \begin{tabular}{c}
        Closed submanifolds $\Sigma^d \overset{i}{\hookrightarrow} X^D$
        \\
        of dimension $d = D - n$
        \\
        and equipped with normal framing
      \end{tabular}
      }
    }
   \!\!\!\!\!\!\!\!\!  \right\}_{\raisebox{5pt}{\tiny $\!\!\!\Big/{\!\!\mathrm{cobordism}}$}}
    \\
    &
    \mbox{ \bf
      \tiny
      \color{darkblue}
      \begin{tabular}{c}
        Cohomotopy set in degree $n$
        \\
        of closed $D$-dim. manifold $X$
      \end{tabular}
    }
  }
\end{equation}
The construction which exhibits this bijection is
traditionally called the Pontrjagin-Thom \emph{collapse},
but a more suggestive description,
certainly for our application to
brane charges, is this:
\emph{
The Cohomotopy class corresponding
to a submanifold/brane is represented by the function which assigns
\emph{directed asymptotic distance} from the submanifold/brane,
as measured with respect to the given normal framing
\eqref{TrivializationOfNormalVectorBundle} upon identifying
the normal bundle with a tubular neighborhood and regarding
all points outside the tubular neighborhood as being at infinite
distance.} See \hyperlink{FigureD}{\it Figure D}:

\vspace{-4mm}
\begin{center}
{\hypertarget{FigureD}{}}
\begin{tikzpicture}[scale=0.69]
  \node (X) at (-4.5,6)  {\small $X$};
  \node (sphere) at (6,6)  {\small $S^n = (\mathbb{R}^n)^{\mathrm{cpt}}$};

  \draw[->] (X) to node[above] {\footnotesize $c$} (sphere);

  \node at (-4.5,5.4)  {\tiny \color{darkblue} \bf manifold};
  \node at (-4.5,4.4)  {$\overbrace{\phantom{--------------------------}}$};

  \node at (6,5.4)  {\tiny \color{darkblue} \bf \begin{tabular}{c} $n$-sphere \\ Cohomotopy coefficient \end{tabular}};
  \node at (6,4.4)  {$\overbrace{\phantom{--------------}}$};

  \node at (.25,5.7)  {\tiny \color{darkblue} \bf Cohomotopy cocycle};

  \begin{scope}[shift={(-6,-1.5)}]
    \clip (-2.9,-2.9) rectangle (5.9,5.9);
    \draw[step=3, dotted] (-3,-2) grid (6,6);
    \draw[very thick] (-4,1.3) .. controls (-1,-3.2) and (2.3,6.6)  .. (7,4.2);
    \begin{scope}[shift={(0,.9)}]
      \draw[dashed] (-4,1.3) .. controls (-1,-3.2) and (2.3,6.6)  .. (7,4.2);
    \end{scope}
    \begin{scope}[shift={(0,-.9)}]
      \draw[dashed] (-4,1.3) .. controls (-1,-3.2) and (2.3,6.6)  .. (7,4.2);
    \end{scope}
    \begin{scope}[shift={(0,.45)}]
      \draw[dashed, thick] (-4,1.3) .. controls (-1,-3.2) and (2.3,6.6)  .. (7,4.2);
    \end{scope}
    \begin{scope}[shift={(0,-.45)}]
      \draw[dashed, thick] (-4,1.3) .. controls (-1,-3.2) and (2.3,6.6)  .. (7,4.2);
    \end{scope}
  \end{scope}
  \begin{scope}[shift={(4,0)}]
    \draw (2,0) circle (2);
    \node at (+.6,0)
      {{\tiny $0$}
        \raisebox{.0cm}{
          $
          \mathrlap{
          \!\!\!\!\!\!\!
          \mbox{ \bf
          \tiny \color{darkblue}
            \begin{tabular}{c}
              regular
              \\
              value
            \end{tabular}
          }}
          $
        }
      };
    \node (zero) at (0,0) {$-$};
    \node (infinity) at (4,0) {\colorbox{white}{$\infty$}};

   \fill[black] (2,0) ++(40+180:2) node (minusepsilon)
     {\begin{turn}{-45} $)$  \end{turn}};
   \fill[black] (2,0) ++(180-40:2) node (epsilon)
     {\begin{turn}{45} $)$ \end{turn}};
   \fill[black] (2.3,0.25) ++(40+180:2) node
     { \tiny $-\epsilon$ };
   \fill[black] (2.3,-0.25) ++(-40-180:2) node
     { \tiny $+\epsilon$ };
  \end{scope}

  \draw[|->, thin, brown] (-5.1-.25,.95-.25)
    to[bend right=6.7]
    (epsilon);
  \draw[|->, thin, brown] (-5.1+.25,.05+.25)
    to[bend right=6.7]
    (minusepsilon);
  \draw[|->, thin, brown] (-5.1,.5)
    to[bend right=6.7]
    node
    {
      \tiny
      \color{darkblue}
      \colorbox{white}{\bf
        codimension $n$ submanifold
      }
      $\mathrlap{
        \;\;\;\;\;\;\;\;
        \raisebox{-47pt}{
        \begin{turn}{90}
          \colorbox{white}{
            \begin{tabular}{c}
              \tiny \bf tubular neighborhood
              \\
              \tiny \bf $\leftrightsquigarrow$ normal framing
            \end{tabular}
          }
        \end{turn}
        }
      }$
    }
    (zero);

  \draw[|->, thin, olive] (-5.1-.5,1.4-.45) to[bend left=26] (infinity);
  \draw[|->, thin, olive] (-4.9,3.2) to[bend left=26] (infinity);
  \draw[|->, thin, olive] (-5.1+.5,-.4+.45)
    to[bend right=33] node[below] {\colorbox{white}{\tiny \color{darkblue} \bf \begin{tabular}{c}
    constant on $\infty$ \\
    away from tubular neighborhood\\\end{tabular}}} (infinity);
  \draw[|->, thin, olive] (-4.7,-2.7) to[bend right=30] (infinity);

\end{tikzpicture}
\end{center}
\vspace{-9mm}
\noindent {\bf \footnotesize Figure D --  The Pontrjagin-Thom construction}
{\footnotesize which establishes the unstable Pontrjagin-Thom theorem \eqref{UnstablePTTheorem}. The cocycle $c$ in Cohomotopy \cref{PlainCohomotopySet} is the continuous function which sends
each point to its directed asymptotic distance from the given submanifold.}

\vspace{3mm}
\noindent {\bf One-point compactifications by adjoining the point at infinity.}
Here and in all of the following, we are making crucial use of the
fact that the $n$-sphere is the one-point compactification
$(-)^{\mathrm{cpt}}$ of the
Cartesian space $\mathbb{R}^n$,
\begin{equation}
  \label{SphereIsCompactificationOfCartesianSpace}
  S^n
    \;\simeq_{{}_{\mathrm{homeo}}}\;
  \big( \mathbb{R}^n\big)^{\mathrm{cpt}}
  \;\coloneqq\;
  \big(
    \{ x \in \mathbb{R}^n \;\mbox{or}\; x = \infty  \}
    ,
    \tau_{\mathrm{cpt}}
  \big)
  \phantom{AAAA}
  \mbox{for all $n \in \mathbb{N}$},
\end{equation}
as indicated on the right of \hyperlink{FigureD}{\it Figure D}.
Here the one-point compactification $X^{\mathrm{cpt}}$ of a topological space $X$
is defined (e.g. \cite[p. 150]{Kelly55}) by adjoining one point to the underlying set of
$X$ -- denoted ``$\infty$'' as it becomes literally the \emph{point at infinity} -- and by declaring
on the resulting set a topology $\tau_{{\mathrm{cpt}}}$ whose open subsets are those of
$X$, not containing $\infty$, and those containing $\infty$ but whose complement in
$X$ is compact. Notice that this construction also applies to topological spaces that already are
compact, in which case the point at infinity appears disconnected
\begin{equation}
  \label{BasepointFreelyAdjoined}
  X \;\mbox{already compact}
  \;\;\Rightarrow\;\;
  X^{\mathrm{cpt}} \;=\; X_+  \;\coloneqq\;
  X \sqcup \{\infty\}
  \,.
\end{equation}
This means that \eqref{SphereIsCompactificationOfCartesianSpace}
indeed holds also in the ``unstable range'' of $n = 0$:
\begin{equation}
  \label{OSphereAsCompactificationOfPoint}
  \big( \mathbb{R}^0 \big)^{\mathrm{cpt}}
  \;=\;
  \big( \{0\} \big)^{\mathrm{cpt}}
  \;=\;
  \{0\} \sqcup \{\infty\}
  \;=\;
  S^0
  \,.
\end{equation}

\noindent {\bf Cohomotopy charge vanishing at infinity.}
In view of the Pontrjagin-Thom theorem \eqref{UnstablePTTheorem},
it makes sense to say that a cocycle in Cohomotopy \emph{vanishes}
wherever it takes as value the point at infinity
$\infty \in \big( \mathbb{R}^n\big)^\mathrm{cpt} \simeq S^n$ in the
coefficient sphere, identified under \eqref{SphereIsCompactificationOfCartesianSpace}. This means
to regard the coefficient sphere as a pointed topological space,
with basepoint $\infty \in S^n$.
Given then a non-compact (spacetime) manifold $X$
(such as $X = \mathbb{R}^n$), a Cohomotopy cocycle
$X \longrightarrow S^n$
\emph{vanishes at infinity} if it extends to the one-point
compactification $X^{\mathrm{cpt}}$ \eqref{SphereIsCompactificationOfCartesianSpace}
such as to send the actual point at infinity
$\infty \in X^{\mathrm{cpt}}$ to the point at infinity in the coefficient
sphere.
\begin{equation}
  \label{VanishingAtInfinity}
  \hspace{-2cm}
  \mbox{
    \begin{minipage}[l]{9cm}
      \footnotesize
      A Cohomotopy cocycle on a non-compact space $X$
      which {\it vanishes at infinity} is
      a Cohomotopy cocycle on the one-point compactification
      $X^{\mathrm{cpt}}$ that sends the point
      at infinity in the domain to that in the coefficient
      $n$-sphere.
    \end{minipage}
  }
  \phantom{AAA}
  \raisebox{20pt}{
  \xymatrix@R=1.5em{
    X^{\mathrm{cpt}}
    \ar[rr]^-{c}
    &&
    \big( \mathbb{R}^n\big) \mathrlap{ \; \simeq S^n }
    \\
    \{\infty\}
    \ar@{^{(}->}[u]
    \ar[rr]_-{ c_{\vert_{ \{\infty\}}} }
    &&
    \{\infty\}
    \ar@{^{(}->}[u]
  }
  }
\end{equation}
\begin{example}[{\hyperlink{FigureE}{\it Figure E}}]
For $X = \mathbb{R}^n$, we have that Cohomotopy $n$-cocycles
on $X$ vanishing at infinity are equivalently maps from an $n$-sphere
to itself:
\end{example}

\vspace{-8mm}
\begin{center}
{\hypertarget{FigureE}{}}
\begin{tikzpicture}
  \begin{scope}[shift={(0,-1.3)}]
  \node (X) at (-4.5,6)  {\small $(\mathbb{R}^n)^{\mathrm{cpt}}$};
  \node (sphere) at (6,6)  {\small $S^n = (\mathbb{R}^n)^{\mathrm{cpt}}$};

  \draw[->] (X) to node[above] {\footnotesize $c  = 1 - 3 = -2$} (sphere);

  \node at (-4.5, 5.3)
    {\tiny \color{darkblue} \bf
      \begin{tabular}{c}
        Euclidean $n$-space
        \\
        compactified by
        \\
        a point at infinity
      \end{tabular}
    };
  \node at (-4.5,4.6)
    {$\overbrace{\phantom{----------------}}$};

  \node at (6,5.4)
    {\tiny \color{darkblue} \bf
      \begin{tabular}{c}
        $n$-sphere
        \\
        Cohomotopy coefficient
      \end{tabular}
    };
  \node at (6,4.6)  {$\overbrace{\phantom{--------------}}$};

  \node at (.55,5.4)
    {
      \tiny
      \color{darkblue} \bf
      \begin{tabular}{c}
        Cohomotopy cocycle
        \\
        counting net number
        \\
        of charged submanifolds
      \end{tabular}
    };
  \end{scope}

  \begin{scope}[shift={(-4.5,1.3)}]
    \draw (0,0) circle (2);
    \node (infinity1) at (2,0) {\colorbox{white}{$\infty$}};
    \node (submanifold1) at (180-20:2) {$\bullet$};
    \node (submanifold2) at (180+110:2) {};
    \draw[fill=white] (180+110:2) circle (.07);
    \node (submanifold3) at (180+130:2) {$\bullet$};
    \node (submanifold4) at (180+140:2) {$\bullet$};
  \end{scope}

  \draw[|->, thin, olive]
    (infinity1)
    to[bend right=40]
    node {\colorbox{white}{\tiny \color{darkblue} \bf cocycle vanishes at infinity}}
    (7.7,-.2);
  \node at (-.2,-1.5) { \colorbox{white}{$\phantom{{A A A}\atop {A A} }$} };
  \node at (5.1,-1.8) { \colorbox{white}{$\phantom{ A }$} };

  \begin{scope}[shift={(4,1.3)}]
    \draw (2,0) circle (2);
    \node at (+.5,0)
      {{\footnotesize $0$}
        \raisebox{.1cm}{
          $
          \mathrlap{
          \!\!\!\!\!\!\!
          \mbox{\bf
          \tiny \color{darkblue}
            \begin{tabular}{c}
              regular
              \\
              value
            \end{tabular}
          }}
          $
        }
      };
    \node (zero) at (0,0) {$-$};
    \node (infinity) at (4,0) {\colorbox{white}{$\infty$}};

   \fill[black] (2,0) ++(40+180:2) node (minusepsilon)
     {\begin{turn}{-45} $)$  \end{turn}};
   \fill[black] (2,0) ++(180-40:2) node (epsilon)
     {\begin{turn}{45} $)$ \end{turn}};
   \fill[black] (2.3,0.25) ++(40+180:2) node
     {\footnotesize $-\epsilon$ };
   \fill[black] (2.3,-0.25) ++(-40-180:2) node
     {\footnotesize $+\epsilon$ };
  \end{scope}

  \draw[|->, thin, olive]
    (submanifold2)
    to[bend right=16]
    node[very near start, below]
      {
        \tiny
        \color{darkblue} \bf
        \begin{tabular}{c}
          here with opposite
          \\
          normal framing
          \\
          (see \hyperlink{FigureF}{\it Figure F})
        \end{tabular}
      }
    (zero);

  \draw[|->, thin, olive]
    (submanifold3)
    to[bend right=16]
    (zero);

  \draw[|->, thin, olive]
    (submanifold4)
    to[bend right=16]
    node
      {
        \hspace{-4.4cm}
        \raisebox{-.9cm}{
          \colorbox{white}{
            \hspace{-.3cm}
            \tiny \color{darkblue} \bf more submanifolds
            \hspace{-.3cm}
          }
        }
      }
    (zero);

  \draw[white, line width=8pt] (2,.7) to (2.8,.7);

  \draw[|->, thin, brown]
    (submanifold1)+(.13,.3)
    to[bend left=11]
    (epsilon);
  \draw[|->, thin, brown]
    (submanifold1)+(-.1,-.4)
    to[bend left=11]
    (minusepsilon);
  \draw[|->, thin, brown]
    (submanifold1)
    to[bend left=11]
    node
    {
      \raisebox{-1cm}{
      \hspace{1.4cm}
      \tiny
      \color{darkblue}
      \colorbox{white}{ \bf
        codimension $n$ submanifold
      }
      $\mathrlap{
        \;\;\;\;\;\;\;\;\;
        \raisebox{-40pt}{
        \begin{turn}{90}
          \colorbox{white}{
            \hspace{-.5cm}
            \begin{tabular}{c}
          \bf    \tiny tubular neighborhood
              \\
          \bf    \tiny $\leftrightsquigarrow$ normal framing
            \end{tabular}
          }
        \end{turn}
        }
      }$
      }
    }
    (zero);
\end{tikzpicture}
\end{center}
\vspace{-1cm}
\noindent {\bf \footnotesize Figure E -- Cohomotopy in degree $n$
of Euclidean $n$-space vanishing at infinity }
{\footnotesize is given by Cohomotopy cocycles \eqref{PlainCohomotopySet} on the one-point compactification $(\mathbb{R}^n) \simeq S^n$ \eqref{SphereIsCompactificationOfCartesianSpace} that send $\infty$ to
$\infty$ \eqref{VanishingAtInfinity}. }

\medskip

Of course, this is just the cohomotopical version of \emph{instantons}
in ordinary gauge theory:

\medskip
\noindent {\bf Instantons and solitons.}
If $G$ is a compact Lie group with
classifying space $B G$ equipped with the canonical point
$\ast \simeq B \{e\} \longrightarrow B G$, then a
\emph{$G$-instanton sector} on Euclidean space $X = \mathbb{R}^n$
is the homotopy class of a continuous function from the
one-point compactification of $X$ to $B G$, which takes the base points
to each other \footnote{
An actual instanton in this instanton sector is a $G$-principal
connection on $X^{\mathrm{cpt}}$ whose underlying $G$-principal bundle has this
classifying map. Ultimately we are interested in such enhancement
to \emph{differential cohomology}, but this is beyond the scope of the
present article.}

\vspace{-2mm}
\begin{equation}
  \label{Instanton}
  \mbox{
    \begin{minipage}[l]{9cm}
      \footnotesize
      A $G$ \emph{instanton sector} is
      a cocycle in degree-1 $G$-cohomology
      which \emph{vanishes at infinity}
      in that it is a cocycle
      on the one-point compactification
      $X^{\mathrm{cpt}}$ \eqref{SphereIsCompactificationOfCartesianSpace} which sends the point
      at infinity in the domain to the base point in the
      classifying space $B G$.
    \end{minipage}
  }
  \phantom{AAA}
  \raisebox{20pt}{
  \xymatrix@R=1.5em{
    \big( \mathbb{R}^n\big)^{\mathrm{cpt}}
    \ar[rr]^-{c}
    &&
    B G
    \\
    \{\infty\}
    \ar@{^{(}->}[u]
    \ar[rr]_-{ c_{\vert_{\{\infty\}}} }
    &&
    B \{e\}
    \ar@{^{(}->}[u]
  }
  }
\end{equation}

\vspace{-3mm}
\noindent {\bf Cohomotopy and $\mathrm{SU}(N)$-instanton sectors.}
Specifically for $n = 4$ and $G = \mathrm{SU}(N)$ any
map $S^4 \overset{\epsilon}{\longrightarrow} B \mathrm{SU}(N)$
representing a generator $1 \in \mathbb{Z} \simeq \pi_4\big( B \mathrm{SU}(N) \big)$
of the 4th homotopy group of the classifying space exhibits a
bijection between the 4-Cohomotopy of $\mathbb{R}^4$ vanishing
at infinity \eqref{VanishingAtInfinity}, and the set of $\mathrm{SU}(N)$ instanton sectors
$$
  \pi^4\big( ( \mathbb{R}^n)^{\mathrm{cpt}}  \big)
  \;=\;
  \big\{
  \xymatrix{
    ( \mathbb{R}^4)^{\mathrm{cpt}}
    \ar[r]
    &
    S^4
  }
  \big\}_{\!\!\big/\sim}
 \;\;
  \overset{\epsilon_\ast}{\simeq}
 \;\;
  \big\{
  \xymatrix{
    ( \mathbb{R}^4)^{\mathrm{cpt}}
    \ar[r]
    &
    B \mathrm{SU}(N)
  }
  \big\}_{\!\!\big/\sim}
  \;\simeq\;
  \left\{ \!\!\!\!\!
  \mbox{ \footnotesize
    \begin{tabular}{c}
      $\mathrm{SU}(n)$-instanton sectors
      \\
      on $\mathbb{R}^4$
    \end{tabular}
  }
   \!\!\!\!\! \right\}.
$$

\vspace{-1mm}
\noindent Under this identification of $\mathrm{SU}(N)$-instanton sectors
with Cohomotopy vanishing at infinity, the Pontrjagin-Thom construction
\eqref{UnstablePTTheorem} produces precisely the distribution
of \emph{instanton center points}, again illustrated by the
left hand side in \hyperlink{FigureE}{\it Figure E}.
To see all this in more detail, we next turn to
further discussion of the charge structure encoded by Cohomotopy.

\subsection{Hopf degree theorem and brane-antibrane annihilation}
\label{NormalFramingAndBraneAntibraneAnnihilation}

{\bf The classical \emph{Hopf degree theorem}} describes
the $n$-Cohomotopy \eqref{PlainCohomotopySet} of orientable closed $D$-manifolds $X$
\eqref{UnstablePTTheorem} in the special case where $n = D$.
It says that, in the ``stable range'' $n \geq 1$, the Cohomotopy set is in bijection with the
set of integers, where the bijection is induced by sending the continuous
function representing a Cohomotopy coycle to its mapping degree (see, e.g., \cite[7.5]{Kobin16}):

\vspace{-3mm}
\begin{equation}
  \label{HopfDegreeTheorem}
  \hspace{-1cm}
  \mbox{\footnotesize
    \begin{tabular}{c}
      \bf
      Hopf degree
      \\
      \bf theorem
      \\
      in stable range $n \geq 1$
    \end{tabular}
  }
  \phantom{AA}
  \raisebox{30pt}{
  \xymatrix@R=-2pt@C=3em{
    \mbox{\bf
      \tiny
      \color{darkblue}
      \begin{tabular}{c}
        $n$-Cohomotopy
        \\
        of $n$-manifold
      \end{tabular}
    }
    \\
    \pi^{n}
    \big(
      X
    \big)
    \ar[rr]^
      {
        S^n \overset{\epsilon_\ast}{\longrightarrow} K(\mathbb{Z}, n )
      }_-{\simeq}
    &&
    H^{n}
    \big(
      X, \mathbb{Z}
    \big)
    \ar@{}[r]|-{\simeq}
    &
    \mathbb{Z}
    \\
    \big[X^n \overset{c}{\longrightarrow}S^n\big]
    \ar@{|->}[rr]
    &&
    \big[X^n \overset{c}{\longrightarrow}S^n \overset{\epsilon}{\longrightarrow}  K(\mathbb{Z},n) \big]
    \ar@{}[r]|-{ \eqqcolon }
    &
    \mathrm{deg}(c)
  }
  }
\end{equation}
Under the Pontrjagin-Thom theorem \eqref{UnstablePTTheorem} the Hopf degree
theorem \eqref{HopfDegreeTheorem} translates into the following geometric situation
for signed (charged) points in $X^n$ (see \cite[IX.4]{Kosinski93}):
A codimension-$n$ submanifold  in an $n$-manifold $X^n$
is a set of points in $X^n$, and a choice of normal framing
\eqref{TrivializationOfNormalVectorBundle}
is, up to normally framed cobordism,
the same as choice of sign (charge) in $\{\pm 1\}$ for each point,
as shown in \hyperlink{FigureF}{\it Figure F}:

\vspace{-4mm}
\begin{center}
\hypertarget{FigureF}{}
\begin{tikzpicture}[scale=0.75]
  \draw (1.5,6.7) node {$\overbrace{\phantom{------------------------}}$};
  \draw (11,6.7) node {$\overbrace{\phantom{---------------}}$};
  \draw (1.5,7.3) node {\tiny \color{darkblue} \bf manifold};
  \draw (11,7.4)
    node
    {
      \tiny
      \color{darkblue} \bf
      \begin{tabular}{c}
        sphere
        \\
        Cohomotopy coefficient
      \end{tabular}
    };
  \draw (5.5,7.6) node { \tiny \color{darkblue} \bf Cohomotopy cocycle  };
 \begin{scope}[shift={(0, 0.6)}]
 \begin{scope}
  \clip (-2.9,-2.9) rectangle (5.9,5.9);
  \draw[step=3, dotted] (-3,-3) grid (6,6);

  \draw[dashed] (0+2.3+.3,4.9) circle (1.8);
  \draw (0+2.3+.2-.8,4.9-1.2)
    node
    {
      {
      \tiny \color{darkblue} \bf
      \begin{tabular}{c}
        tubular
        \\
        neighborhood
      \end{tabular}
      }
    };

  \draw[dashed] (0+2.3+.3,-2.4) circle (1.8);
  \draw (0+2.3+.2-.6,-2.4+1.2)
    node
    {
      {
      \tiny \color{darkblue} \bf
      \begin{tabular}{c}
        tubular
        \\
        neighborhood
      \end{tabular}
      }
    };

  \end{scope}

  \node at (11,2) {\colorbox{white}{$\phantom{a}$}};

  \draw[dashed] (11-.1,2) circle (2);
  \node (zero) at (11,2) {\tiny $0$};
  \node (infinity) at (11-.1,2+2.1) {\tiny $\infty$};
  \node (leftinfinity) at (11-.1-2.2,2) {\tiny $\infty$};
  \node (bottominfinity) at (11-.1,2-2.1) {\tiny $\infty$};
  \node (rightinfinity) at (11-.1+2.2,2) {\tiny $\infty$};
  \node (torus) at (1.5,7.35)
    {\raisebox{0pt}{\small $
      X^n
    $}};
  \node (sphere) at (11,7.35)
    {\raisebox{0pt}{\small $
      S^{n}
        =
      D(
        \mathbb{R}^{n}
      )/S(\mathbb{R}^{n})
    $}};
    \draw[->, thin] (torus) to node[above]{\footnotesize $c$} (sphere);

   \begin{scope}[shift={(.13,0)}]

   \draw[fill=black] (0+2.3,4.9) circle (.07);

   \draw[|->, olive] (0+2.3+.2+.55,4.9-.05) to (11-.2+.55,2+.05);
   \draw[|->, olive] (0+2.3+.2+1.1,4.9-.05) to (11-.2+1.1,2+.05);
   \draw[|->, olive] (0+2.3+.2+1.65,4.9-.05) to (11-.2+1.65,2+.05);

   \draw[|->, olive] (0+2.3+.2-.55,4.9-.05) to (11-.2-.55,2+.05);
   \draw[|->, olive] (0+2.3+.2-1.1,4.9-.05) to (11-.2-1.1,2+.05);
   \draw[|->, olive] (0+2.3+.2-1.65,4.9-.05) to (11-.2-1.65,2+.05);

   \draw[|->, olive]
     (0+2.3+.2,4.9-.05)
     to
     node {\colorbox{white}{\tiny \color{darkblue} \bf positively charged submanifold} }
     (11-.2,2+.05);

   \draw[fill=white] (0+2.3,-2.4) circle (.07);

   \draw[|->, olive] (0+2.3+.2+.55,-2.4+.05) to (11-.2-.55,2-.05);
   \draw[|->, olive] (0+2.3+.2+1.1,-2.4+.05) to (11-.2-1.1,2-.05);
   \draw[|->, olive] (0+2.3+.2+1.65,-2.4+.05) to (11-.2-1.65,2-.05);

   \draw[|->, olive] (0+2.3+.2-.55,-2.4+.05) to (11-.2+.55,2-.05);
   \draw[|->, olive] (0+2.3+.2-1.1,-2.4+.05) to (11-.2+1.1,2-.05);
   \draw[|->, olive] (0+2.3+.2-1.65,-2.4+.05) to (11-.2+1.65,2-.05);

   \draw[|->, olive]
     (0+2.3+.2,-2.4+.05)
     to
     node {\colorbox{white}{\tiny \color{darkblue} \bf negatively charged submanifold} }
     (11-.2,2-.05);

   \end{scope}

   \begin{scope}[shift={(-.04,.2)}]
   \draw[fill=black] (0.4,1.4) circle (.07);
   \draw[fill=black] (0.6,1.05) circle (.07);

   \draw[fill=white] (0.4,1.2) circle (.07);
   \draw[fill=white] (0.7,1.3) circle (.07);

   \draw[dashed] (0.57,1.2) circle (.6);

   \draw (1.55,1.2) node {$\simeq$};
  \draw (1.55,1.2-.9)
    node
    {
      \tiny
      \color{darkblue} \bf
      \begin{tabular}{c}
        opposite charges
        \\
        cancel each other
      \end{tabular}
    };

   \draw[dashed] (2.57,1.2) circle (.6);

   \end{scope}

   \draw[|->, olive]
     (4,1.4+.3)
     to[bend right=6]
     (leftinfinity);
   \draw[|->, olive]
     (4,1.4-.3)
     to[bend right=6]
     (leftinfinity);
   \draw[|->, olive]
     (4,1.4)
     to[bend right=6]
     node
       {
         \colorbox{white}{
         \hspace{-.3cm}
         \tiny
         \color{darkblue} \bf
         \begin{tabular}{c}
           no charge here
         \end{tabular}
         \hspace{-.3cm}
         }
       }
     (leftinfinity);

\end{scope}

\end{tikzpicture}
\end{center}
\vspace{-.4cm}
\noindent {\bf \footnotesize Figure F -- Charge in Cohomotopy
carried by submanifolds, under the PT-isomorphism \eqref{UnstablePTTheorem}}
{\footnotesize is encoded in their normal framing \eqref{TrivializationOfNormalVectorBundle}. In full codimension
the normal framing is a normal orientation and hence a choice in $\{\pm 1\}$, which we indicate graphically by
$
\renewcommand{\arraystretch}{.4}
\begin{array}{ccc}
  \bullet  &\leftrightarrow& -1
  \\
  \circ &\leftrightarrow& +1
\end{array}
$ }

\vspace{2mm}
\noindent Under this geometric translation, we have the correspondence
\vspace{-2mm}
$$
  \xymatrix{
    \mbox{\footnotesize
      \begin{tabular}{c}
        Hopf degree
        \\
        of Cohomotopy cocycle on $X$
      \end{tabular}
    }
    \ar@{<->}[rr]^-{\mbox{\tiny PT}}
    &&
    \mbox{\footnotesize
      \begin{tabular}{c}
      Net number of $\pm$-charges
      \\
      carried by points in $X$
      \end{tabular}
    }
  }
$$

\vspace{-2mm}
\noindent The mechanism which implements this on the geometric right hand side is that
points of opposite sign/normal framing are cobordant to the empty
collection of points, hence mutually annihilate each other
via coboundaries in Cohomotopy,
as shown in \hyperlink{FigureG}{\it Figure G}:

\vspace{-3mm}
\begin{center}
\hypertarget{FigureG}{}
\begin{tikzpicture}[scale=.8]

  \begin{scope}[shift={(0,-.7)}]

  \node (X) at (-4.5,6)  {\small $[0,1] \times X $};
  \node (sphere) at (6,6)  {\small $S^n = (\mathbb{R}^n)^{\mathrm{cpt}}$};

  \draw[->] (X)
    to
    node[above]
    {
      \tiny
      $0 \simeq (-1) + (+1)$
    }
    (sphere);

  \node at (-4.5,5.4)
    {
      \tiny
      \color{darkblue} \bf
      \begin{tabular}{c}
        product space
        \\
        of interval
                with manifold
      \end{tabular}
    };
  \node at (-4.5, 4.7)  {$\overbrace{\phantom{----------------------}}$};

  \node at (6,5.4)  {\tiny \color{darkblue} \bf \begin{tabular}{c} $n$-sphere \\ Cohomotopy coefficient \end{tabular}};
  \node at (6,4.7)  {$\overbrace{\phantom{--------------}}$};

  \node at (.5,5.7)  {\tiny \color{darkblue} \bf Cohomotopy coboundary};

  \end{scope}

  \begin{scope}[shift={(-1.5,0.0)}]

  \begin{scope}[shift={(-6,-1.5)}]
    \clip (0,-2.9) rectangle (6,5.9);
    \draw[step=3, dotted] (-3,-2.5) grid (6,5.5);
  \end{scope}

  \begin{scope}[shift={(-6,-1.5)}]
    \draw (6,-2.8) node {\tiny $\{1\} \times X $  };
    \draw (0,-2.8) node {\tiny $\{0\} \times X $  };
  \end{scope}

 \begin{scope}[rotate=-90]

 \draw[very thick] (-3,0) .. controls (-3,-2.5-.85) and (3,-2.5-.85) .. (+3,0);

 \draw[dashed, thick] (-3+.4,0) .. controls (-3+.4,-2.5+.5-.85) and (3-.4,-2.5+.5-.85) .. (+3-.4,0);
 \draw[dashed, thick] (-3-.4,0) .. controls (-3-.4,-2.5-.5-.85) and (3+.4,-2.5-.5-.85) .. (+3+.4,0);

 \draw[dashed] (-3+.4+.4,0) .. controls (-3+.4+.4,-2.5+.5+.5-.85) and (3-.4-.4,-2.5+.5+.5-.85) .. (+3-.4-.4,0);
 \draw[dashed] (-3-.4-.4,0) .. controls (-3-.4-.4,-2.5-.5-.5-.85) and (3+.4+.4,-2.5-.5-.5-.85) .. (+3+.4+.4,0);

 \end{scope}

  \end{scope}

  \begin{scope}[shift={(4,0)}]
    \draw (2,0) circle (2);
    \node at (+.6,0)
      {{\tiny $0$}
        \raisebox{.0cm}{
          $
          \mathrlap{
          \!\!\!\!\!\!\!
          \mbox{
          \tiny \color{darkblue} \bf
            \begin{tabular}{c}
              regular
              \\
              value
            \end{tabular}
          }}
          $
        }
      };
    \node (zero) at (0,0) {$-$};
    \node (infinity) at (4,0) {\colorbox{white}{$\infty$}};

   \fill[black] (2,0) ++(40+180:2) node (minusepsilon)
     {\begin{turn}{-45} $)$  \end{turn}};
   \fill[black] (2,0) ++(180-40:2) node (epsilon)
     {\begin{turn}{45} $)$ \end{turn}};
   \fill[black] (2.3,0.25) ++(40+180:2) node
     { \tiny $-\epsilon$ };
   \fill[black] (2.3,-0.25) ++(-40-180:2) node
     { \tiny $+\epsilon$ };
  \end{scope}

  \draw[|->, olive] (-1.35,2.9) to (3.8,0+0.05);
  \draw[|->, olive] (-1.35,2.9+.4) to (3.9,0+.4+0.05);
  \draw[|->, olive] (-1.35,2.9-.4) to (3.9,0-.4+0.05);

  \draw[|->, olive] (-1.35,3+.8) to[bend right=-17] (7.7,0.15);

  \draw[|->, olive] (-1.35,-2.9) to (3.8-.0,0-0.05);
  \draw[|->, olive] (-1.35,-2.9-.4) to (3.9-.0,0+.4-0.05);
  \draw[|->, olive] (-1.35,-2.9+.4) to (3.9-.0,0-.4-0.05);

  \draw[|->, olive] (-1.35,-3-.8) to[bend left=-17] (7.7,-0.15);

  \draw[fill=black] (-1.5,3) circle (.07);

  \draw[fill=white] (-1.5,-3) circle (.07);

  \draw (-1.5,3)+(1.4,-.2)
    node
    {
      \colorbox{white}{
      \hspace{-.4cm}
      \tiny
      \color{darkblue} \bf
      \begin{tabular}{c}
        positively charged
        \\
        submanifold
      \end{tabular}
      \hspace{-.4cm}
      }
    };

  \draw (-1.5,-3)+(1.4,+.2)
    node
    {
      \colorbox{white}{
      \hspace{-.4cm}
      \tiny
      \color{darkblue} \bf
      \begin{tabular}{c}
        negatively charged
        \\
        submanifold
      \end{tabular}
      \hspace{-.4cm}
      }
    };

  \draw (-8,0)
    node
    {
      \colorbox{white}{
      \hspace{-.4cm}
      \tiny
      \color{darkblue} \bf
      \begin{tabular}{c}
        no
        \\
        submanifold
      \end{tabular}
      \hspace{-.4cm}
      }
    };

  \draw (-3.8,0) node
    {
      \colorbox{white}
      {
        \hspace{-.3cm}
        \tiny
        \color{darkblue} \bf
        cobordism
        \hspace{-.3cm}
      }
    };

\end{tikzpicture}
\end{center}
\vspace{-5mm}
\noindent  {\footnotesize \bf Figure G -- Cobordisms between submanifolds of opposite normal framing}
{\footnotesize as in \hyperlink{FigureF}{\it Figure F} exhibit their pair creation/annihilation.
This is the geometric mechanism which underlies the
Hopf degree theorem \eqref{HopfDegreeTheorem} when translating
via the Pontrjagin-Thom theorem \eqref{UnstablePTTheorem}
between Cohomotopy charge and the submanifolds sourcing it,
as in \hyperlink{FigureD}{\it Figure D}.
}

\medskip

\noindent {\bf Hopf degree in unstable range.}
The classical Hopf degree theorem \eqref{HopfDegreeTheorem}
is stated only in the stable range $n \geq 1$, but it is immediate
to extend it to the unstable range.
While this is a simple statement in itself,
it is necessary to conceptually complete the discussion of the
equivariant Hopf degree theorem in \cref{LocalTadpoleCancellation} below,
where the ordinary Hopf degree appears jointly in stable and unstable range,
with the distinction being responsible for the difference in
nature between O-plane charge (unstable range) and D-brane charge
(stable range):
For $X = X^0$ a compact 0-manifold, hence a finite set,
and $X^{\mathrm{cpt}} =  X_+ = X \sqcup \{\infty\}$
the same set with a ``point at infinity'' adjoined
\eqref{BasepointFreelyAdjoined}, its
unstable Cohomotopy classes \eqref{PlainCohomotopySet} in degree 0,
being functions to the 0-sphere
hence to the 2-element set $S^0 = \{0,\infty\}$
that take $\infty \mapsto \infty$
$$
  \pi^0\big( X^{\mathrm{cpt}} \big)
  \;=\;
  \big\{
    X
      \xrightarrow{ \;\;\; c \;\;\; }
    S^0
  \big\},
$$
are in bijection
to the subsets $S \subset X$ of $X$, by the assignment that
sends $c$ to the pre-image
$c^{-1}\big( \{0\}\big)$
of $0 \in S^0$ under $c$.
We may think of these subsets as elements of
the power set $\{0,1\}^X$ and as such call them the
sets $\mathrm{deg}(c)$ of Hopf degrees in $\{0,1\}$ for $n = 0$:
\vspace{-3mm}
\begin{equation}
  \label{UnstableRangeHopfDegreeTheorem}
  \hspace{-1cm}
  \mbox{\footnotesize
    \begin{tabular}{c}
      \bf
      Hopf degree
      \\
      \bf theorem
      \\
      in unstable range $n = 0 $
    \end{tabular}
  }
  \phantom{AA}
  \raisebox{30pt}{
  \xymatrix@R=-2pt@C=3em{
    \mbox{
      \tiny
      \color{darkblue} \bf
      \begin{tabular}{c}
        $0$-Cohomotopy
        \\
        of $0$-manifold
      \end{tabular}
    }
    && &
      \mathclap{
      \mbox{
        \tiny
        \color{darkblue} \bf
        \begin{tabular}{c}
         sets of
         \\
         unstable Hopf degrees
        \end{tabular}
      }
      }
    \\
    \pi^{0}
    \big(
      X^{\mathrm{cpt}}
    \big)
    \ar[rr]^
      {
        S^0 = \{0,\infty\}
      }_-{\simeq}
    &&
    \mathrm{Subsets}(X)
    \ar@{}[r]|-{\simeq}
    &
    \{0,1\}^{X}
    \\
    \big[X^0 \overset{c}{\longrightarrow}S^0\big]
    \ar@{|->}[rr]
    &&
    \big[ c^{-1}\big( \{0\}\big) \subset X  \big]
    \ar@{}[r]|-{ \eqqcolon }
    &
    \mathrm{deg}(c)
  }
  }
\end{equation}

\begin{example}
For $X = \{0\}$ the single point
so that, with \eqref{OSphereAsCompactificationOfPoint},
$X^{\mathrm{cpt}}$ is the 0-sphere, we have
$
  \pi^0\big( \{0\}^{\mathrm{cpt}} \big)
  \simeq
  \{0,1\}
  $,
as illustrated in the following figure:
\end{example}

\vspace{-.5cm}

\begin{center}
\hyperlink{FigureH}{}
\begin{tikzpicture}[scale=1.2]

  \begin{scope}[shift={(6,0)}]

  \begin{scope}

  \draw (0,2.3)
    node
    {$
      \big( \mathbb{R}^0\big)^{\mathrm{cpt}}
    $};

  \draw (0,1.7) node {$\overbrace{\phantom{--}}$};

  \draw
    (-.01,1.3) to (.01,1.3);
  \draw (-.2,1.3) node {\tiny $\infty$};

  \draw (0,.5) circle (.07);
  \draw (-.2,.5) node {\tiny $0$};

  \end{scope}

  \draw[|->, olive]
   (0.1,1.3)
   to
   node
   {\colorbox{white}{\tiny \color{darkblue} \bf vanishing at infinity}}
   (3-.3,1.3);

  \draw[|->, olive]
    (0.1,.5)
    to
    node
    {\colorbox{white}{ \tiny \color{darkblue} \bf charge }}
    (3-.3,.5);

  \draw[->] (.7,2.25) to node[above]{\small $c = 1$}  (3-.3,2.25);

  \begin{scope}[shift={(3,0)}]

  \draw (0,2.3)
    node
    {$
      S^0
    $};

  \draw (0,1.7) node {$\overbrace{\phantom{--}}$};

  \draw
    (-.01,1.3) to (.01,1.3);
  \draw (-.14,1.3) node {\tiny $\infty$};

  \draw
    (-.01,.5) to (.01,.5);
  \draw (-.14,.5) node {\tiny $0$};

  \end{scope}

  \end{scope}

  \begin{scope}

  \begin{scope}

  \draw (0,2.3)
    node
    {$
      \big( \mathbb{R}^0\big)^{\mathrm{cpt}}
    $};

  \draw (0,1.7) node {$\overbrace{\phantom{--}}$};

  \draw
    (-.01,1.3) to (.01,1.3);
  \draw (-.2,1.3) node {\tiny $\infty$};

  \draw
    (-.01,.5) to (.01,.5);
  \draw (-.2,.5) node {\tiny $0$};

  \end{scope}

  \draw[|->, olive]
   (0.1,1.3)
   to
   node
   {\colorbox{white}{\tiny \color{darkblue} \bf vanishing at infinity}}
   (3-.3,1.3);

  \draw[|->, olive]
    (0.1,.5)
    to
    node
      {\colorbox{white}{\tiny \color{darkblue} \bf no charge}}
    (3-.3,1.3);

  \draw[->] (.7,2.25) to node[above]{\small $c = 0$}  (3-.3,2.25);

  \begin{scope}[shift={(3,0)}]

  \draw (0,2.3)
    node
    {$
      S^0
    $};

  \draw (0,1.7) node {$\overbrace{\phantom{--}}$};

  \draw
    (-.01,1.3) to (.01,1.3);
  \draw (-.14,1.3) node {\tiny $\infty$};

  \draw
    (-.01,.5) to (.01,.5);
  \draw (-.14,.5) node {\tiny $0$};

  \end{scope}

  \end{scope}

\end{tikzpicture}
\end{center}

\vspace{-.5cm}
\noindent {\footnotesize \bf Figure H -- Hopf degree in the unstable range}
{\footnotesize takes values in the set $\{0,1\}$ \eqref{UnstableRangeHopfDegreeTheorem}, corresponding to the
binary choice of there being or not being a unit charge at the single
point.}

\medskip
The point of unstable Hopf degree in $\{0,1\}$ is that it exhibits
{\it homogeneous behavior under suspension} $\Sigma^1$ \eqref{EquSuspension}
across the unstable and stable range of Hopf degrees,
with the unstable Hopf degrees in $\{0,1\}$ injecting into
the full set of integers in the stable range:
\begin{equation}
  \label{HopfDegreesUnderSuspension}
  \raisebox{20pt}{
  \xymatrix@R=13pt{
    \overset{
      \mathclap{
      \mbox{
        \tiny
        \color{darkblue} \bf
        \begin{tabular}{c}
          unstable
          \\
          Hopf degrees
        \end{tabular}
      }
      }
    }
    {
      \{0,1\}
    }
    \ar@{}[d]|-{
      \mathllap{
      \mbox{ \tiny \color{darkblue} \bf  \eqref{UnstableRangeHopfDegreeTheorem} }
      \;
      }
      \begin{rotate}{270}
        $\!\!\!\!\!\simeq$
      \end{rotate}
    }
    \ar@{^{(}->}[rr]^-{ \mbox{ \tiny \color{darkblue} \bf injection } }
    &&
    \overset{
      \mathclap{
      \mbox{
        \tiny
        \color{darkblue} \bf
        \begin{tabular}{c}
          stable
          \\
          Hopf degrees
        \end{tabular}
      }
      }
    }{
      \mathbb{Z}
    }
    \ar@{}[d]|-{
      \mathllap{
      \mbox{ \tiny \color{darkblue} \bf \eqref{HopfDegreeTheorem} }
      \;
      }
      \begin{rotate}{270}
        $\!\!\!\!\!\simeq$
      \end{rotate}
    }
    \ar[rr]^-{=}
    &&
    \overset{
      \mathclap{
      \mbox{ \bf
        \tiny
        \color{darkblue}
        \begin{tabular}{c}
          stable
          \\
          Hopf degrees
        \end{tabular}
      }
      }
    }{
      \mathbb{Z}
    }
    \ar@{}[d]|-{
      \mathllap{
      \mbox{ \tiny \color{darkblue} \bf \eqref{HopfDegreeTheorem} }
      \;
      }
      \begin{rotate}{270}
        $\!\!\!\!\!\simeq$
      \end{rotate}
    }
    \ar[rr]^-{=}
    &&
    \overset{
      \mathclap{
      \mbox{
        \tiny
        \color{darkblue} \bf
        \begin{tabular}{c}
          stable
          \\
          Hopf degrees
        \end{tabular}
      }
      }
    }{
      \mathbb{Z}
    }
    \ar@{}[d]|-{
      \mathllap{
      \mbox{ \tiny \color{darkblue} \bf \eqref{HopfDegreeTheorem} }
      \;
      }
      \begin{rotate}{270}
        $\!\!\!\!\!\simeq$
      \end{rotate}
    }
    \ar[r]
    &
    \cdots
    \ar@{}[d]|-{\vdots}
    \\
    \pi^0\big( S^0 \big)
    \ar[rr]_-{ \Sigma^1 }^-{ \mbox{\tiny \color{darkblue} \bf suspension } }
    &&
    \pi^{1}\big( S^1 \big)
    \ar[rr]_-{ \Sigma^1 }
    &&
    \pi^{2}\big( S^2 \big)
    \ar[rr]_-{ \Sigma^1 }
    &&
    \pi^{3}\big( S^3 \big)
    \ar[r]
    &
    \cdots
  }
  }
\end{equation}

As we next turn from plain to equivariant Cohomotopy in \cref{EquivariantCohomotopyAndTadpoleCancellation},
we find that unstable and stable Hopf degrees unify in the
equivariant Hopf degree theorems, and that
the {\it D-brane charge is what appears in the stable range},
while the {\it O-plane charge is what appears in the unstable range}
(in particular, via the proof of Theorem \ref{CharacterizationOfStabilizationOfUnstableCohomotopy} below).

\section{Equivariant Cohomotopy and tadpole cancellation}
\label{EquivariantCohomotopyAndTadpoleCancellation}

We now turn to the equivariant enhancement \eqref{EquivariantCohomotopySet} of Cohomotopy theory.
We discuss in \cref{LocalTadpoleCancellation}
and in \cref{GlobalTadpoleCancellation}, respectively,
how this captures the form of the
local/twisted (see Diagram \eqref{KernelOfTheGlobalElmendorfStageProjection} in \cref{GlobalTadpoleCancellation})
and of the global/untwisted
tadpole cancellation conditions (see \cref{HeteroticMTheoryOnADEOrbifolds})
according to \hyperlink{Table1}{\it Table 1}
and \hyperlink{Table2}{\it Table 2},
by appeal to the equivariant enhancement
of the Hopf degree theorem applied to representation spheres,
which we state as
Theorem \ref{UnstableEquivariantHopfDegreeTheorem} and
Theorem \ref{CharacterizationOfStabilizationOfUnstableCohomotopy}.

\medskip

\noindent {\bf Basic concepts of unstable equivariant homotopy theory.}
To set up notation,
we start with reviewing a minimum of underlying concepts from
unstable equivariant homotopy theory (see \cite[1]{Blu17}\cite[3.1]{ADE} for more).

\noindent {\it Topological $G$-spaces.}
For $G$ a finite group, a \emph{topological $G$-space}
$\xymatrix{ X
\ar@(ul,ur)|-{\,G\,}}
$
(or just \emph{$G$-space}, for short) is
a topological space $X$ equipped with a continuous $G$-action,
hence with a continuous function
${G \times X \xrightarrow{\cdot}   X}$
such that for all $g_i \in G$ and $x \in X$
we have $g_1 \cdot (g_2 \cdot x) = (g_1 g_2) \cdot x$ and
$e \cdot x = x$ (where $e \in G$ is the neutral element).

\medskip

Here we are concerned with the
{\bf classes of examples of $G$-spaces}
shown in \hyperlink{Table5}{\it Table 5}:\footnote{
  For our purposes here, the covering $G$-space $X$
  is all we need to speak about the corresponding
  orbifold $X \!\sslash\! G$. For a dedicated
  discussion of geometric orbifolds we refer to
  \cite[13]{Ratcliffe06}\cite{OrbifoldCohomology}.
  Note that \cite[13]{Ratcliffe06} says ``Euclidean orbifold''
  for any flat orbifold.
}
{\small
\begin{center}
\hypertarget{Table5}{}
\begin{tabular}{|c||c|c||c|c|}
  \hline
  {\bf $G$-representation}
  &
  {\bf $G$-space}
  &
  {\bf $G$-orbifold}
  &
  \multicolumn{2}{c|}{
    {\bf Terminology}
  }
  \\
  \hline
  \hline
  \multirow{2}{*}{
    \begin{tabular}{c}
    $
      \underset
      {
        \mbox{
          \tiny
          \color{darkblue} \bf
          \begin{tabular}{c}
            finite group
          \end{tabular}
        }
      }
      {
        G
      }
    $
    \\
    $\phantom{-}$
    \\
    $
      \underset
      {
        \mbox{
          \tiny
          \color{darkblue} \bf
          \begin{tabular}{c}
            orthogonal linear
            \\
            $G$-representation
          \end{tabular}
        }
      }
      {
        V \;\in\; \mathrm{RO}(G)
      }
    $
    \end{tabular}
  }
  &
      $
    \underset{
      \mbox{ \bf
        \tiny
        \color{darkblue}
        \begin{tabular}{c}
          Euclidean
          \\
          $G$-space \eqref{EuclideanGSpace}
        \end{tabular}
      }
    }{
    \xymatrix{
      \mathbb{R}^V
      \ar@(ul,ur)|{\,G\,}
    }
    }
  $
  &
  $
    \underset{
      \mbox{\bf
        \tiny
        \color{darkblue}
        \begin{tabular}{c}
          Euclidean orbifold
        \end{tabular}
      }
    }{
      \mathbb{R}^V \!\sslash\! G
    }
  $
  &
  \begin{tabular}{l}
    {\bf singularity}
  \end{tabular}
  &
  \multirow{2}{*}{
    \begin{tabular}{c}
      {\bf ADE-singularities}
      \\
      \\
      $
      \underset{
        \mbox{\bf
          \tiny
          \color{darkblue}
          \begin{tabular}{c}
            finite subgroup of $\mathrm{SU}(2)$
            \\
            \eqref{ADESubgroups}
          \end{tabular}
        }
      }{
        G \subset \mathrm{SU}(2)
      }
      $
      \\
      $\phantom{-}$
      \\
      $
      \underset{
        \mbox{\bf
          \tiny
          \color{darkblue}
          \begin{tabular}{c}
            quaternionic
            representation
            \\
            \eqref{TheQuaternionicRepresentation}
          \end{tabular}
        }
      }{
        V = \mathbf{4}_{\mathbb{H}}
      }
      $
    \end{tabular}
  }
  \\
  \cline{2-4}
  &
  $
    \underset{
      \mbox{\bf
        \tiny
        \color{darkblue}
        \begin{tabular}{c}
          $G$-representation
          \\
          sphere \eqref{RepSpheres}
        \end{tabular}
      }
    }{
    \xymatrix{
      S^V
      \ar@(ul,ur)|{\,G\,}
    }
    }
  $
  &
  $
    \underset{
      \!\!\!\!\!\!\!\!\!\!
      \mbox{\bf
        \tiny
        \color{darkblue}
        \begin{tabular}{c}
          Euclidean orbifold
          \\
          including point at infinity \eqref{VanishingAtInfinity}
        \end{tabular}
      }
    }{
      S^V \!\sslash\! G
      =
      \big( \mathbb{R}^V \!\sslash\! G \big)^{\mathrm{cpt}}
    }
  $
  &
  \begin{tabular}{l}
    {\bf vicinity of}
    \\
    {\bf singularity}
  \end{tabular}
  &
  \\
  \cline{2-4}
    $
      \underset
      {
        \mbox{\bf
          \tiny
          \color{darkblue}
          \begin{tabular}{c}
            crystallographic group \eqref{CrystallographicGroups})
          \end{tabular}
        }
      }
      {
        G \rtimes \mathbb{Z}^{\mathrm{dim}(V)}
        \subset
        \mathrm{Iso}\big( \mathbb{R}^V \big)
      }
    $
  &
  $
    \underset{
      \mbox{\bf
        \tiny
        \color{darkblue}
        \begin{tabular}{c}
          $G$-representation
          \\
          torus \eqref{RepresentationTorus}
        \end{tabular}
      }
    }{
    \xymatrix{
      \mathbb{T}^V
      \ar@(ul,ur)|{\,G\,}
    }
    }
  $
  &
  $
    \underset{
      \!\!\!\!\!\!\!\!\!\!\!\!\!\!\!\!\!\!\!\!\!\!\!\!\!\!\!\!\!\!
      \mbox{\bf
        \tiny
        \color{darkblue}
        \begin{tabular}{c}
          toroidal orbifold
        \end{tabular}
      }
    }{
      \mathbb{T}^V \!\sslash\! G
      =
      \big( \mathbb{R}^V \!\sslash\! G \big)/\mathbb{Z}^{\mathrm{dim}(V)}
    }
  $
  &
  \begin{tabular}{l}
    {\bf flat, compact}
    \\
    {\bf singular space}
  \end{tabular}
  &
  \\
  \hline
\end{tabular}
\end{center}

\vspace{-.2cm}
\noindent
{\bf \footnotesize Table 5 -- Flat $G$-orbifolds and the $G$-spaces covering them.} \
{\footnotesize Examples arising in application to
M-theory are discussed in \cref{HeteroticMTheoryOnADEOrbifolds}.}
}

\medskip

\noindent {\bf Orbifold terminology.} As common in string theory, we will
be thinking of $G$-spaces $X$ as stand-ins for their homotopy quotients
$X \!\sslash\! G$, which are the actual orbifolds. This is mathematically
fully justified by the fact that
the proper notion of generalized cohomology of such global quotient orbifolds
$X \!\sslash\! G$ is equivalently the $G$-equivariant
generalized cohomology of the space $X$.
We relegate a comprehensive discussion
of this technical point to \cite{OrbifoldCohomology},
but this is mathematical folklore:
see \cite[\S 1]{PronkScull10}\cite[p. 1]{Schwede17}\cite[p. ix-x]{Schwede18}.
Moreover, in the specific application to M-theory,
below in \cref{M5MO5AnomalyCancellation}, the relevant orbifolds
are always part of \emph{orbi-orientifolds}, in that a subgroup $\mathbb{Z}_2^{\mathrm{refl}}$
of the orbifold quotient group $G = G^{\mathrm{ADE}}$ combines
with a Ho{\v r}ava-Witten-involution $\mathbb{Z}_2^{\mathrm{HW}}$
to an orientation-changing involution $\mathbb{Z}_2^{\mathrm{HW} + \mathrm{refl}}$ which fixes an ``$\mathrm{MO5}$-plane''.
This is made precise in \cref{HeteroticMTheoryOnADEOrbifolds}
below; see \eqref{OrbiOrientifoldGroupSequence} there.
Since, with passage to the $\mathbb{Z}_2^{\mathrm{HW}}$-fixed
locus (the ``$\mathrm{MO9}$-pane'') understood \eqref{SemiComplement},
the further localization to the $\mathrm{MO5}$-plane coincides with the
orbifold singularity, we will often refer here
to orbifold fixed points as orientifold fixed points,
wherever this serves the preparation of the application in \cref{M5MO5AnomalyCancellation}.
Accordingly, the orbifold singularities in the applications below
in \cref{M5MO5AnomalyCancellation} are always inside an O-plane,
so that the relevant flavor of equivariant K-theory 
considered below in Prop. \ref{TheoremLocalTadpoleCancellation} and 
in \hyperlink{FigureP}{\it Figure P}, \hyperlink{FigureM}{\it Figure M}
is $\mathrm{KO}$.

\medskip

\noindent {\bf Linear $G$-representations.}
The $G$-spaces of interest for the discussion of toroidal orbifolds all come from
{\it orthogonal linear $G$-representations} $V$: finite-dimensional Euclidean vector spaces
equipped with a linear action by $G$ factoring through the canonical action of the orthogonal
group. We will denote concrete examples of such $V$ of dimension $n \in \mathbb{N}$
and characterized by some label ``$\mathrm{l}$'' in the form
$V = \mathbf{n}_{\mathrm{l}}$, and also refer to them as an
\emph{RO-degree} \eqref{RODegree}.

\medskip
 The {\it key class of examples} of interest here are finite subgroups
(see, e.g., \cite[A.1]{SS19b})
\begin{equation}
  \label{ADESubgroups}
  G^{\mathrm{ADE}}
  \;\subset\;
  \mathrm{SU}(2)
  \;\simeq\;
  \mathrm{Sp}(1)
  \;\simeq\;
  U(1,\mathbb{H})
  \;\simeq\;
  S(\mathbb{H})
\end{equation}
of the multiplicative group of unit norm elements
$q \in S(\mathbb{H})$
in the vector space $\mathbb{H} \simeq_{{}_{\mathbb{R}}} \mathbb{R}^4$ of quaternions,
and their defining 4-dimensional linear representation on this
space (by left quaternion multiplication), which we denote by
\begin{equation}
  \label{TheQuaternionicRepresentation}
  \mathbf{4}_{\mathbb{H}}
  \;\in\;
  \mathrm{RO}\big( G^{\mathrm{ADE}} \big)
  \,.
\end{equation}
All of these, except the cyclic groups of odd order, contain the
subgroup
\begin{equation}
  \label{PointReflectionSubgroup}
  \mathbb{Z}_2^{\mathrlap{\mathrm{refl}}}
  \;\;\;\coloneqq\;
  \big\langle -1 \in S(\mathbb{H})\big\rangle
  \;\subset\;
  G^{\mathrm{A}_{\mathrm{ev}}\mathrm{DE}}
\end{equation}
generated by the quaternion $-1 \in \mathbb{H}$. This
acts on the 4-dimensional quaternionic representation
\eqref{TheQuaternionicRepresentation} by point reflection at the
origin, hence as the 4-dimensional sign representation
$$
  \xymatrix{
    \mathbb{R}^{\mathbf{4}_{\mathbb{H}}}
    \ar@(ul,ur)|-{\; \mathbb{Z}_2^{\mathrm{refl}}\!\!\!\! }
  }
  \;\simeq\;
  \xymatrix{
    \mathbb{R}^{\mathbf{4}_{\mathrm{sgn}}}
    \ar@(ul,ur)|-{ \; \mathbb{Z}_2 }
  }
  \,,
$$
as illustrated for 2 of 4 dimensions in \hyperlink{FigureI}{\it Figure I}.

\medskip

\noindent {\bf Euclidean $G$-Spaces.}
The underlying Euclidean space of a linear $G$-representation $V$ is of course a $G$-space, hence
a {\it Euclidean $G$-space}, which we
suggestively denote by $\mathbb{R}^V$:

\vspace{-.5cm}

\begin{equation}
  \label{EuclideanGSpace}
  \mbox{\tiny \color{darkblue} \bf linear $G$-representation }
  \;\;\;\;
    V \in \mathrm{RO}(G)
  \;\;\;\;\;\; \Rightarrow  \;\;\;\;\;\;
  \xymatrix{
    \mathbb{R}^V\ar@(ul,ur)|{\,G\,}
  }
  \;\;\;\;
  \mbox{ \tiny \color{darkblue} \bf Euclidean $G$-space }
\end{equation}

\begin{example}[{\hyperlink{FigureI}{\it Figure I}}]
With $G = \mathbb{Z}_2$ and $V = \mathbf{2}_{\mathrm{sgn}}$
its 2-dimensional sign representation, the Euclidean $G$-spaces
$\mathbb{R}^{\mathbf{2}_{\mathrm{sgn}}}$ is the Cartesian plane
equipped with the action of point reflection at the origin:
\end{example}

\begin{center}
\hypertarget{FigureI}{}
\begin{tikzpicture}[scale=0.5]

  \draw
    (-16,0)
    node
    {\footnotesize
      \begin{minipage}[l]{6.4cm}
        {  \bf Figure I -- The Euclidean $\mathbb{Z}_2$-space}
        \eqref{EuclideanGSpace}
        of the 2-dimensional sign representation
        $\mathbf{2}_{\mathrm{sgn}}$. The underlying
        topological space is the Euclidean plane $\mathbb{R}^2$,
        with group action by point reflection at the origin.
      \end{minipage}
    };

  \begin{scope}
  \clip (-1.8-1.7,-3.4) rectangle (4.8-1.5,3.4);
  \draw[step=3, dotted] (-6,-6) grid (6,6);

  \draw[<->, dashed, color=darkblue]
    (-2,2)
    to node[near start]
    {
      \colorbox{white}{ \bf
        \tiny
        \color{darkblue}
        \begin{tabular}{c}
          $\mathbb{Z}_2$
          \\
          action
        \end{tabular}
      }
    }
    (2,-2);
  \draw[<->, dashed, color=darkblue] (2,2) to (-2,-2);

  \end{scope}

  \draw (-6,0) node {$\mathbb{R}^{\mathbf{2}_{\mathrm{sgn}}} = $};

  \begin{scope}[shift={(0,-1.4)}]

  \draw (-3,-2.6) node {\tiny $x_1 = -\tfrac{1}{2}$};
  \draw (0,-2.6) node {\tiny $x_1 = 0$};
  \draw (3,-2.6) node {\tiny $x_1 = \tfrac{1}{2}$};

  \end{scope}

  \draw (-4.1,0) node {\tiny $x_2 = 0$};
  \draw (-4.1,3) node {\tiny $x_2 = \tfrac{1}{2}$};
  \draw (-4.1,-3) node {\tiny $x_2 = -\tfrac{1}{2}$};

\end{tikzpicture}
\end{center}

\vspace{-2mm}
Notice that for $V, W \in \mathrm{RO}(G)$ two orthogonal linear
$G$-representations, with $V \oplus W \in \mathrm{RO}(G)$
their direct sum representation, the Cartesian product of
their Euclidean $G$-spaces \eqref{EuclideanGSpace} is the Euclidean $G$-space of their direct sum:
\begin{equation}
  \label{CartesianProductOfEuclideanGSpaces}
  \mathbb{R}^V \times \mathbb{R}^W
  \;\simeq\;
  \mathbb{R}^{V \oplus W}
  \,.
\end{equation}

\noindent {\bf $G$-Representation spheres.}
The one-point compactification \eqref{SphereIsCompactificationOfCartesianSpace}
of a Euclidean space $\mathbb{R}^V$ \eqref{EuclideanGSpace}
becomes itself a $G$-space,
with the point at infinity
declared to be fixed by all group elements;
this is called the {\it representation sphere} of $V$
(see, e.g., \cite[1.1.5]{Blu17}):
\vspace{-3mm}
\begin{equation}
  \label{RepSpheres}
  \raisebox{10pt}{
  \xymatrix@R=-4pt{
    &
    \mathclap{
    \mbox{\bf
      \tiny
      \color{darkblue}
      \begin{tabular}{c}
        one-point compactification
        \\
        of Euclidean space $V$
      \end{tabular}
    }
    }
    &&
    \mathclap{
    \mbox{\bf
      \tiny
      \color{darkblue}
      \begin{tabular}{c}
        unit sphere
        \\
        in product of $V$
        \\
        with the 1d trivial representation
      \end{tabular}
    }
    }
    \\
    S^V
    \ar@{}[r]|-{ \coloneqq }
    &
    \big( \mathbb{R}^V \big)^{\mathrm{cpt}}
    \ar@{}[r]|-{ \simeq }
    &
    D\big(\mathbb{R}^V\big)/S\big( \mathbb{R}^V\big)
    \ar@{}[r]|-{ \simeq }
    &\
    S\big( \mathbb{R}^{\mathbf{1}_{\mathrm{triv}} \oplus V} \big)\;.
    \\
    \mathclap{
    \mbox{\bf
      \tiny
      \color{darkblue}
      \begin{tabular}{c}
        representation
        \\
        sphere
      \end{tabular}
    }
    }
    &&
    \mathclap{
    \mbox{\bf
      \tiny
      \color{darkblue}
      \begin{tabular}{c}
        unit disk in $V$
        \\
        with boundary collapsed
        \\
        to the point at infinity
      \end{tabular}
    }
    }
    }
  }
\end{equation}

\vspace{-2mm}

\begin{example}[{\hyperlink{FigureJ}{\it Figure J}}]
With $G \coloneqq \mathbb{Z}_2$ the group of order 2
and $\mathbf{1}_{\mathrm{sgn}}$ its 1-dimensional sign-representation,
the corresponding representation sphere \eqref{RepSpheres}
is the circle equipped with the $\mathbb{Z}_2$-action that reflects
across an equator:
\end{example}

\begin{center}
\hypertarget{FigureJ}{}
\begin{tikzpicture}[scale=0.6]

  \draw (-13,0)
   node
   {\begin{minipage}[l]{6.7cm} \footnotesize
     {\bf Figure J -- The $\mathbb{Z}_2$-representation sphere}
     {of the 1-dimensional sign representation
     $\mathbf{1}_{\mathrm{sgn}}$
     is the $\mathbb{Z}_2$-space whose underlying topological
     space is the circle, and equipped with the $\mathbb{Z}_2$-action
     that reflects points across the equator through $0$ and the
     point at infinity.}
     \end{minipage}
   };

    \draw (-4,.5)
      node
      {$
        \xymatrix{
          S^{\mathbf{1}_{\mathrm{sgn}}}
                    \ar@(ul,ur)|{\,\mathbb{Z}_2\,}
        }
        =
      $};

    \draw (0,0) circle (2);
    \node (infinity1) at (2,0) {\colorbox{white}{$\infty$}};
    \node
      (submanifold1)
      at (-2-.3,0)
      {\footnotesize $0$};
    \draw  (180-0:2) node {$-$};
    \node (submanifold2) at (180+50:2) {};
    \node (mirrorsubmanifold2) at (180-50:2) {};
    \node (submanifold3) at (180+35:2) {};
    \node (mirrorsubmanifold3) at (180-35:2) {};

    \node (submanifold4) at (90:2) {};
    \node (mirrorsubmanifold4) at (-90:2) {};
    \node (submanifold5) at (45:2) {};
    \node (mirrorsubmanifold5) at (-45:2) {};

    \draw[<->, dashed, darkblue]
      (submanifold3)
      to
      (mirrorsubmanifold3);
    \draw[<->, dashed, darkblue]
      (submanifold2)
      to
      node[near start]
        {
          \raisebox{1.2cm}{
          \tiny \bf
          \color{darkblue}
          \hspace{.2cm}
          \begin{tabular}{c}
            $\mathbb{Z}_2$
            \\
            action
          \end{tabular}
          }
        }
      (mirrorsubmanifold2);

    \draw[<->, dashed, darkblue]
      (submanifold4)
      to
      (mirrorsubmanifold4);
    \draw[<->, dashed, darkblue]
      (submanifold5)
      to
      (mirrorsubmanifold5);

\end{tikzpicture}
\end{center}

\vspace{-3mm}
\noindent {\bf $G$-Representation tori.}
Similarly, consider the linear $G$-representation $V$ such that
$G \subset \mathrm{Iso}\big( \mathbb{R}^{\mathrm{dim}(V)}\big)$
is the point group of a crystallographic group $C$
(see, e.g., \cite{Farkas81}) of the underlying Euclidean space
$\mathbb{R}^{\mathrm{dim}(V)}$ with corresponding translational sub-lattice $\mathbb{Z}^n \subset \mathrm{Iso}(n)$ inside the Euclidean group
in $n = \mathrm{dim}(V)$ dimensions. This means we have an
exact sequence of this form:
\vspace{-2mm}
\begin{equation}
  \label{CrystallographicGroups}
  \raisebox{25pt}{
  \xymatrix@R=-2pt{
    &
    \mathclap{
    \mbox{\bf
      \tiny
      \color{darkblue}
      \begin{tabular}{c}
        lattice of translations
        \\
        normal subgroup
      \end{tabular}
    }}
    &&
    \mathclap{
    \mbox{\bf
      \tiny
      \color{darkblue}
      \begin{tabular}{c}
        crystallographic
        \\
        group
      \end{tabular}
    }
    }
    &&
    \mathclap{
    \mbox{\bf
      \tiny
      \color{darkblue}
      \begin{tabular}{c}
        point group
        \\
        $\simeq C/\mathbb{Z}^n$
      \end{tabular}
    }
    }
    \\
    1
    \ar[r]
    &
    \mathbb{Z}^n
    \ar@{^{(}->}[dddddd]
    \ar@{^{(}->}[rr]
    &&
    C
    \ar@{^{(}->}[dddddd]
    \ar@{->>}[rr]
    &&
    {\color{darkblue} G }
    \ar@{^{(}->}[dddddd]
    \ar[r]
    &
    1
    \\
    \\
    \\
    \\
    \\
    \\
       1
    \ar[r]
    &
    \mathbb{R}^n
    \ar@{^{(}->}[rr]
    &&
    \mathrm{Iso}(n)
    \ar@{->>}[rr]
    &&
    \mathrm{O}(n)
    \ar[r]
    &
    1
    \\
    &
    \mathclap{
    \mbox{\bf
      \tiny
      \color{darkblue}
      \begin{tabular}{c}
        translation group
      \end{tabular}
    }
    }
    &&
    \mathclap{
    \mbox{\bf
      \tiny
      \color{darkblue}
      \begin{tabular}{c}
        Euclidean group
        \\
        (isometries of $\mathbb{R}^n$)
      \end{tabular}
    }
    }
    &&
    \mathclap{
    \mbox{\bf
      \tiny
      \color{darkblue}
      \begin{tabular}{c}
        orthogonal group
      \end{tabular}
    }
    }
  }
  }
\end{equation}
Then the corresponding torus $\mathbb{T}^n \coloneqq \mathbb{R}^n/\mathbb{Z}^n$
inherits a $G$-action from $\mathbb{R}^V$. We may call the resulting
$G$-space the {\it representation torus} of $V$. This is
the type of $G$-space whose global quotients are {\it toroidal orbifolds}:
\begin{equation}
  \label{RepresentationTorus}
  \begin{array}{ccccccc}
    V \in \mathrm{RO}(G)
    &\Rightarrow&
    \xymatrix{
      \mathbb{R}^V
      \ar@(ul,ur)|{\,G\,}
    }
    &\Rightarrow&
    \xymatrix@R=-6pt{
      \mathbb{T}^V
      \ar@(ul,ur)|-{\,G\,}
      \ar@{}[r]|-{\coloneqq}
      &
      \mathbb{R}^V
      \ar@(ul,ur)|-{\,G\,}
      &
      \!\!\!\!\!\!\!\!\!\!\!\!\!\!\!\!\!\!\!\!\!\!\!\!\!\!\!\!\!\!\!/\mathbb{Z}^n
    }
    &\Rightarrow&
    \mathbb{T}^V \!\sslash\!G\;.
    \\
    \mbox{\bf
      \tiny
      \color{darkblue}
      \begin{tabular}{c}
        orthogonal linear
        \\
        $G$-representation
      \end{tabular}
    }
    &&
    \mbox{\bf
      \tiny
      \color{darkblue}
      \begin{tabular}{c}
        Euclidean
        \\
        $G$-space
      \end{tabular}
    }
    &
    \mbox{\bf
      \tiny
      \color{darkblue}
      \begin{tabular}{c}
        if $G$ is point group
        \\
        of crystallographic group
      \end{tabular}
    }
    &
    \mbox{\bf
      \tiny
      \color{darkblue}
      \begin{tabular}{c}
        representation torus
      \end{tabular}
    }
    &&
    \mbox{\bf
      \tiny
      \color{darkblue}
      \begin{tabular}{c}
        toroidal
        \\
        orbifold
      \end{tabular}
    }
  \end{array}
\end{equation}

\vspace{-2mm}
\begin{example}[{\hyperlink{FigureK}{\it Figure K}}]
\label{FixedPointsInSignRepTorus}
For $G = \mathbb{Z}_4$ the cyclic group of
order 4 and $\mathbf{2}_{\mathrm{rot}}$ its 2-dimensional
linear representation given by rotations around the origin
by integer multiples of $\pi/2$,
this action descends to the 2-torus quotient
to give the representation torus $\mathbb{T}^{\mathbf{2}_{\mathrm{rot}}}$:
\end{example}

\vspace{-8mm}
\begin{center}
\hypertarget{FigureK}{}
\begin{tikzpicture}[scale=0.85]

  \draw
    (-12,.5)
    node
    {\footnotesize
      \begin{minipage}[l]{9cm}
        {  \bf Figure K -- The $\mathbb{Z}_4$-representation torus}
        \eqref{RepresentationTorus}
        of the 2-dimensional rotational representation
        $\mathbf{2}_{\mathrm{rot}}$. The underlying
        topological space is the 2-torus $T^2 = \mathbb{R}^2/\mathbb{Z}^2$,
        of which we show the canonical covering $\mathbb{R}^2$-coordinate chart.
        Due to the coordinate identifications
        $$
          \big([x_1], [x_2]\big) = \big([x_1 + n], [x_2 + m]\big)
          \;\in\; \mathbb{T}^2 = \mathbb{R}^2 / \mathbb{Z}^2
        $$
        the fixed point set \eqref{FixedLoci}
        of the $\mathbb{Z}_2$-subgroup has four points
        is
        $$
          \big(
            \mathbb{T}^{\mathbf{2}_{\mathrm{rot}}}
          \big)^{\mathbb{Z}_2}
          \;=\;
          \Big\{
            \big([0],[0]\big),
            \big([\tfrac{1}{2}], [\tfrac{1}{2}]\big),
            \big([0],[\tfrac{1}{2}]\big),
            \big([\tfrac{1}{2}], [0]\big)
          \Big\}
          \subset \mathbb{T}^2\,.
        $$
        while that of the full group has two points
        $$
          \big(
            \mathbb{T}^{\mathbf{2}_{\mathrm{rot}}}
          \big)^{\mathbb{Z}_4}
          \;=\;
          \Big\{
            \big([0],[0]\big),
            \big([\tfrac{1}{2}], [\tfrac{1}{2}]\big)
          \Big\}
          \subset \mathbb{T}^2\,.
        $$
      \end{minipage}
    };

  \draw (-5.5,.6)
    node
    {
      $
      \xymatrix{
        \mathbb{T}^{\mathbf{2}_{\mathrm{rot}}}
        \ar@(ul,ur)|{\,\mathbb{Z}_4\,}
             }
      \;=\;
      $
    };

  \begin{scope}
  \clip (-1.8-1.7,-1.4) rectangle (4.8-1.5,3.5);
  \draw[step=3, dotted] (-3.4,-3) grid (6,6);

  \draw[<->, dashed, darkblue] (-.8,.8) to (.8,-.8);

  \draw[<->, dashed, greenii]
    (0-45+3:1.3)
    arc
    (0-45+3:90-45-3:1.3);

  \draw[<->, dashed, greenii]
    (0+45+3:1.3)
    arc
    (0+45+3:90+45-3:1.3);

  \draw[<->, dashed, greenii]
    (0-30+3:1.6)
    arc
    (0-30+3:90-30-3:1.6);

  \draw[<->, dashed, greenii]
    (0-30+3+90:1.6)
    arc
    (0-30+3+90:90-30-3+90:1.6);

  \draw[<->, dashed, darkblue] (0-30:1.5) to (0-30+180:1.5);

  \draw[<->, dashed, greenii]
    (0-20+3:1.9)
    arc
    (0-20+3:90-20-3:1.9);

  \draw[<->, dashed, greenii]
    (0-20+3+90:1.9)
    arc
    (0-20+3+90:90-20-3+90:1.9);

  \draw[<->, dashed, darkblue] (0-20:1.8) to (0-20+180:1.8);

  \draw[<->, dashed, greenii]
    (0-10+3:2.2)
    arc
    (0-10+3:90-10-3:2.2);

  \draw[<->, dashed, greenii]
    (0-10+3+90:2.2)
    arc
    (0-10+3+90:90-10-3+90:2.2);

  \draw[<->, dashed, darkblue] (0-10:2.1) to (0-10+180:2.2);

  \draw[<->, dashed, greenii]
    (0+3:2.5)
    arc
    (0+3:90-3:2.5);

  \draw[<->, dashed, greenii]
    (0+3+90:2.5)
    arc
    (0+3+90:90-3+90:2.5);

  \draw[<->, dashed, darkblue] (0:2.4) to (0+180:2.4);

  \draw[<->, dashed, greenii]
    (0+10+3:2.8)
    arc
    (0+10+3:90+10-3:2.8);

  \draw[<->, dashed, greenii]
    (0+10+3+90:2.8)
    arc
    (0+10+3+90:90+10-3+90:2.8);

  \draw[<->, dashed, darkblue] (0+10:2.7) to (0+10+180:2.7);

  \draw[<->, dashed, greenii]
    (0+20+3:3.1)
    arc
    (0+20+3:90+20-3:3.1);

  \draw[<->, dashed, greenii]
    (0+20+3+90:3.1)
    arc
    (0+20+3+90:90+20-3+90:3.1);

  \draw[<->, dashed, darkblue] (0+20:3) to (0+20+180:3);

  \draw
    (2,2)
    node
      {
        \colorbox{white}{
        \hspace{-.4cm}
        \tiny
        \color{greenii}
        \begin{tabular}{c}
          $\mathbb{Z}_4$
          \\
          action
        \end{tabular}
        \hspace{-.4cm}
      }
      }
    (0,2.5);

  \draw
    (-1.3,0)
    node
      {
        \colorbox{white}{
        \hspace{-.4cm}
        \tiny
        \color{darkblue}
        \begin{tabular}{c}
          $\mathbb{Z}_2 \subset \mathbb{Z}_4$
          \\
          action
        \end{tabular}
        \hspace{-.4cm}
      }
      }
    (0,2.5);

  \end{scope}

  \begin{scope}[shift={(0,1)}]
  \draw (-3,-2.6) node {\tiny $x_1 = -\tfrac{1}{2}$};
  \draw (0,-2.6) node {\tiny $x_1 = 0$};
  \draw (3,-2.6) node {\tiny $x_1 = \tfrac{1}{2}$};

  \draw (-3,-2.8) .. controls (-3,-2.8-.8) and (3,-2.8-.8) .. node[below] {\tiny $\sim$} (3,-2.8);

  \end{scope}

  \draw (-3.9,0) node {\tiny $x_2 = 0$};
  \draw (-3.9,3) node {\tiny $x_2 = \tfrac{1}{2}$};

\end{tikzpicture}
\end{center}

\vspace{-1mm}
\noindent {\bf $H$-Fixed subspaces and isotropy groups.}
For $\xymatrix{X \ar@(ul,ur)|{\, G\,}}$ a $G$-space and $H \subset G$ any
subgroup, the {\it $H$-fixed subspace}
\begin{equation}
  \label{FixedLoci}
  X^H
    \;\coloneqq\;
  \big\{
    x \in X \big\vert h\cdot x = x
    \;
    \mbox{for all}\;
    h \in H
  \big\}
  \;\subset\; X
\end{equation}
is the topological subspace of $X$ on those points which are fixed
by the action of $H$. In particular, for $1 \subset G$ the trivial
group we have $X^1 = X$. We also write
\begin{equation}
  \label{IsotropySubgroups}
  \mathrm{Isotr}_X(G)
  \;\coloneqq\;
  \big\{
    \mathrm{Stab}_G(x) \subset G
    \,\big\vert\,
    x \in X
  \big\}
\end{equation}
for the set of {\it isotropy subgroups} of $G$, hence those that
appear as stabilizer groups of some point,
namely as maximal subgroups fixing a point:
$
  \mathrm{Stab}_G(x)
  \;\coloneqq\;
  \big\{
    g \in G
    \vert
    g \cdot x = x
  \big\}
  \;\subset\;
  G
  \,.
$
It is the isotropy subgroups \eqref{IsotropySubgroups},
but not necessarily the generic
subgroups, which serve to filter a $G$-space in a non-degenerate
way,
since if one isotropy subgroup is strictly larger than
another, then its fixed subspace \eqref{FixedLoci} is strictly smaller
$$
  H_1 \subsetneq H_2
  \;\in\; \mathrm{Isotr}_X(G)
  \phantom{AAA}
  \Rightarrow
  \phantom{AAA}
  X^{H_2} \subsetneq X^{H_1}.
$$
\begin{example}[fixed subspaces of ADE-singularities]
\label{FixedSubspacesOfADESingularities}
The non-trivial fixed subspaces of the
Euclidean $G$-space \eqref{EuclideanGSpace}
of the quaternionic representation $\mathbf{4}_{\mathbb{H}}$
\eqref{TheQuaternionicRepresentation}
are all the singleton sets consisting of the origin:
\begin{equation}
  \label{FixedSubspacesOfQuaternionRepresentation}
  \big(
    \mathbb{R}^{\mathbf{4}_{\mathbb{H}}}
  \big)^H
  \;=\;
  \left\{
  \begin{array}{cc}
    \mathbb{R}^4 & \mbox{if}\;H = 1
    \\
    \{0\} & \mbox{otherwise}.
  \end{array}
  \right.
  \end{equation}
\end{example}

\begin{example}[{\hyperlink{FigureK}{\it Figure K}}]
\label{ExampleRT}
For $G = \mathbb{Z}_2$ and $\mathbf{n}_{\mathrm{sgn}}$
the $n$-dimensional sign representation, the corresponding
representation torus \eqref{RepresentationTorus}
has as $\mathbb{Z}_2$-fixed space \eqref{FixedLoci} the 0-dimensional space
which is the set of points whose canonical coordinates are
all either 0 mod $\mathbb{Z}$ or $\tfrac{1}{2}$ mod $\mathbb{Z}$:
\vspace{-2mm}
\begin{equation}
  \label{RepresentationTorusOfSignRep}
  \mathbb{T}^{\mathbf{n}_{\mathrm{sgn}}}
  \;\coloneqq\;
  \xymatrix{
    (
    \mathbb{R}^n
    \ar@(ul,ur)^{\footnotesize [x] \mapsto [-x]  }
    &
    \!\!\!\!\!\!\!\!\!\!\!\!\!\!\!\!\!\!\!\!\!\!\!\!\!\!\!\!\!\!
    /\mathbb{Z}^n)
  }
  \phantom{AAA}
  \Longrightarrow
  \phantom{AAA}
  \big(
    \mathbb{T}^{\mathbf{n}_{\mathrm{sgn}}}
  \big)^{\mathbb{Z}_2}
  \;=\;
  \big\{
    [0], [\tfrac{1}{2}]
  \big\}^n
  \;\subset\;
    \mathbb{T}^n = \mathbb{R}^n/\mathbb{Z}^n.
\end{equation}
\end{example}

\begin{example}[\bf Kummer surface]
\label{KummerSurface}
The reflection ADE-action \eqref{PointReflectionSubgroup}
$
  \xymatrix{
    \mathbb{R}^{\mathbf{4}_{\mathbb{H}}}
    \ar@(ul,ur)|-{\,
      \mathbb{Z}^{\mathrm{refl}}_2
    \!\!\!}
  }
$
is clearly crystallographic \eqref{CrystallographicGroups}.
The orbifold
$
  \mathbb{T}^{\mathbf{4}_{\mathbb{H}}}
    \!\sslash\!
  \mathbb{Z}^{\mathrm{refl}}_2
  \simeq
  \mathbb{T}^{\mathbf{4}_{\mathrm{sgn}}}
    \!\sslash\!
  \mathbb{Z}_2
$
presented by the corresponding representation torus \eqref{RepresentationTorus}
(when equivalently thought of as an orbifold of the complex 2-dimensional
torus) known as the {\it Kummer surface} (e.g. \cite[5.5]{BDP17}).
The cardinality of its fixed point set \eqref{FixedLoci}
is (by Example \ref{ExampleRT})
$$
  \left\vert
    \big(
      \mathbb{T}^{\mathbf{4}_{\mathbb{Z}}}
    \big)^{\mathbb{Z}_2^{\mathrm{refl}}}
  \right\vert
  \;=\;
  \left\vert
    \{[0],[\tfrac{1}{2}]\}^4
  \right\vert
  \;=\;
  16
  .
$$
\end{example}

\medskip

\noindent {\bf Residual action on fixed spaces.}
There is a residual group action on any $H$-fixed subspace
$X^H$ \eqref{FixedLoci}
inherited from the $G$-action on all of $X$,
with the residual group  being the
``Weyl group'' \cite[p. 13]{May96}
\begin{equation}
  \label{WeylGroup}
  W_G(H)
  \;\coloneqq\;
  N_G(H) / H
\end{equation}{
which is the quotient group
of the maximal subgroup $N_G(H) \subset G$ for which $H$ is a normal
subgroup (the normalizer of $H$ in $G$) by $H$ itself.
Thereby any $H$-fixed subspace becomes itself a $W_G(H)$-space:
\begin{equation}
  \label{ResidualActionOnFixedSubspaces}
  \begin{array}{ccccc}
  \xymatrix{
    X
    \ar@(ul,ur)|{\, G\,}
  }
  &~~~\colon~~~&
  (H \subset G)
  &~~ \longmapsto ~~&
  \xymatrix{
    X^H
    \ar@(ul,ur)^{W_G(H)}
  }\!\!.
  \\
  \mathclap{
  \mbox{\bf
    \tiny
    \color{darkblue}
    A $G$-space induces
  }}
  &&
  \mathclap{
  \mbox{\bf
    \tiny
    \color{darkblue}
    for each subgroup $H$
  }}
  &&
  \mathclap{
  \mbox{\bf
    \raisebox{-2pt}{
    \tiny
    \color{darkblue}
    \begin{tabular}{l}
      the $H$-fixed space with
      \\
      residual $W_G(H)$-action
    \end{tabular}
    }
  }}
  \end{array}
\end{equation}
Notice the two extreme cases of the Weyl group \eqref{WeylGroup}:
\begin{equation}
  \label{ExtremeCasesOfWeylGroups}
  W_G(1) = G
  \phantom{AAA}
  \mbox{and}
  \phantom{AAA}
  W_G(G) = 1
  \,.
\end{equation}

\medskip

\noindent {\bf Maps between $G$-spaces and their Elmendorf stages.}
The relevant {\it morphisms between $G$-spaces}
are
continuous functions between the underlying spaces
that are $G$-equivariant:
\begin{equation}
  \label{EquivariantFunction}
  \xymatrix{
   X \ar@(ul,ur)|{\, G\,}
   \ar[rr]^-{f}
   &&
   Y \ar@(ul,ur)|{\, G\,}
  }
  \phantom{AAA}
  \Leftrightarrow
  \phantom{AAA}
  \xymatrix{
   X
   \ar[rr]^-{f}
   &&
   Y
  }
  \;\;
  \mbox{\footnotesize
   \begin{tabular}{l}
     such that $f(g\cdot x) = g \cdot f(x)$
     \\
     for all $g \in G$  and all $x \in X$.
   \end{tabular}
  }
\end{equation}
This $G$-equivariance implies that $H$-fixed points are sent to $H$-fixed points,
for every subgroup $H \subset G$, hence that every
$G$-equivariant continuous function \eqref{EquivariantFunction}
induces a system of plain continous functions
$f^H := f_{\vert X^H}$ between
$H$-fixed point spaces \eqref{FixedLoci},
which are each equivariant with respect to the residual
$W_G(H)$-action \eqref{WeylGroup}
and compatible with each other with respect to
inclusions $H_i \subset H_j$ of subgroups:
\begin{equation}
  \label{SystemOfMapsOnHFixedSubspaces}
  \begin{array}{ccc}
  {\xymatrix{
    X
    \ar@(ul,ur)|{\,G\,}
    \ar[rr]^{f}
    &&
    Y\ar@(ul,ur)|{\,G\,}
  }}
  &
  \phantom{AAA}
    \Rightarrow
  \phantom{AA}
  &
  \raisebox{60pt}{
  \xymatrix{
    X
    \ar@(ul,dl)_G
    \ar[rr]^-f
    &&
    Y
    \ar@(ur,dr)^G
    &{\phantom{AAAAA}}&
    1
    \ar@{^{(}->}[d]
    \\
    X^{H_i}
    \ar@(ul,dl)_{W_G(H_i)}
    \ar@{^{(}->}[u]
    \ar[rr]^-{f^{H_i}}
    &&
    Y^{H_i}
    \ar@(ur,dr)^{W_G(H_i)}
    \ar@{^{(}->}[u]
    &
       \ar@{}[d]|{ \mbox{\hspace{1.3cm} for all} }
    &
    H_i
    \ar@{^{(}->}[d]
    \\
    X^{H_j}
    \ar@(ul,dl)_{W_G(H_j)}
    \ar@{^{(}->}[u]
    \ar[rr]^-{f^{H_j}}
    &&
    Y^{H_j}
    \ar@(ur,dr)^{W_G(H_j)}
    \ar@{^{(}->}[u]
    &&
    H_j
    \ar@{^{(}->}[d]
    \\
    X^{G}
    \ar@{^{(}->}[u]
    \ar[rr]^-{f^{G}}
    &&
    Y^{G}
    \ar@{^{(}->}[u]
    &&
    G
  }
  }
  \\
  \mathclap{
  \mbox{\bf
    \tiny
    \color{darkblue}
    \begin{tabular}{c}
      $G$-equivariant function
      between $G$-spaces
    \end{tabular}
  }
  }
  &
  \mathclap{
  \mbox{\bf
    \tiny
    \color{darkblue}
    \begin{tabular}{c}
      induces
    \end{tabular}
  }
  }
  &
  \mathclap{
  \mbox{\bf
    \tiny
    \color{darkblue}
    \begin{tabular}{c}
      system of
      $W_G(H)$-equivariant functions
      between $H$-fixed subspaces
    \end{tabular}
  }
  }
  \end{array}
\end{equation}
We will refer to the component $f^H$ here as
the {\it Elmendorf stage} labeled by $H$ \cite[1.3]{Blu17}\cite[3.1]{ADE}.

\medskip

Finally, a {\it $G$-homotopy} between two $G$-equivariant functions
$f_1, f_2$ \eqref{EquivariantFunction}
\begin{equation}
  \label{GHomotopy}
  \xymatrix{
    X
    \ar@(ul,dl)_G
    \ar@/^1pc/[rr]^-{f_0}_-{\ }="s"
    \ar@/_1pc/[rr]_-{f_1}^-{\ }="t"
    &&
    Y
    \ar@(ur,dr)^G
    \ar@{=>}^\eta "s"; "t"
  }
\end{equation}

\vspace{-.5cm}
\noindent is a homotopy
$
  [0,1] \times X
  \overset{ \eta }{\longrightarrow}
  X
$
between the underlying continuous functions, hence such that
$f_i = \eta(i,-)$,
which is equivariant as a function on the product $G$-space
$X \times [0,1]$, where the $G$-action on the interval $[0,1]$
is taken to be trivial.

\medskip

\subsection{Equivariant Hopf degree on spheres and Local tadpole cancellation}
\label{LocalTadpoleCancellation}

We discuss the unstable
(Theorem \ref{UnstableEquivariantHopfDegreeTheorem})
and the stabilized
(Theorem \ref{CharacterizationOfStabilizationOfUnstableCohomotopy})
equivariant Hopf degree theorem for representation spheres,
which characterizes equivariant Cohomotopy
in compatible RO-degree (Def. \ref{CompatibleRODegree} below),
on Euclidean $G$-spaces and vanishing at infinity, hence
of the vicinity of $G$-singularities inside flat Euclidean space
(Def. \ref{CohomotopyOfVicinityOfSingularity} below).
Using this we show (Prop. \ref{TheoremLocalTadpoleCancellation})
that equivariant Cohomotopy implies the
form of the local/twisted tadpole cancellation conditions from
\hyperlink{Table1}{\it Table 1}, \hyperlink{Table2}{\it Table 2}.

\medskip

\subsubsection{Unstable equivariant Hopf degree}
\label{UnstableEquivariantHopfDegree}

For stating the equivariant Hopf degree theorem, we need the following concept of
\emph{compatible RO-degree} for equivariant Cohomotopy. This condition is really a
reflection of the structure of \emph{J-twisted} Cohomotopy (as in \cite{FSS19b}\cite{FSS19c})
in its version on flat orbifolds, and as such is further developed in \cite{OrbifoldCohomology}.

\vspace{-.3cm}

\begin{defn}[\bf Compatible RO-degree]
\label{CompatibleRODegree}
Given a $G$-space $\xymatrix{ X \ar@(ul,ur)|{\,G \,}}$
such that each $H$-fixed subspace $X^H$ \eqref{FixedLoci}
for isotropy groups $H \in \mathrm{Isotr}_X(G)$ \eqref{IsotropySubgroups}
admits the structure of an orientable manifold, we say that an orthogonal linear $G$-representation
$V$ is a \emph{compatible RO-degree for equivariant Cohomotopy of $X$}
if for each isotropy subgroup $H \in \mathrm{Isotr}_X(G)$ \eqref{IsotropySubgroups}
the following two conditions hold:
\footnote{
These conditions are a specializations of the conditions stated in
\cite[p. 212-213]{tomDieck79}, streamlined here for our purpose.}
\begin{enumerate}[{\bf (i)}]
\vspace{-2mm}
\item
 {\bf Compatible fixed space dimensions:}
 the dimension of the $H$-fixed subspace of $V$ equals that of
 the $H$-fixed subspace of $X$:
 \begin{equation}
   \label{FixedSpacesOfCompatibleDimension}
   \mathrm{dim}\big(  X^H \big)
   \;=\;
   \mathrm{dim}\big( V^H \big).
 \end{equation}

 \vspace{-3mm}
\item
  {\bf Compatible orientation behavior:}
  the action \eqref{ResidualActionOnFixedSubspaces}
  of an element $[g] \in W_G(H)$ \eqref{WeylGroup}
  on $V^H$ is orientation preserving or reversing,
  respectively, precisely if it is so on $X^H$
  \begin{equation}
    \label{OrientationBehaviousCompatible}
    \mathrm{orient}
    \left(
    \raisebox{-10pt}{
    \xymatrix{
      X^H
      \ar@(ul,ur)^{ [g] \in W_H(H) }
    }}
    \right)
    \;=\;
    \mathrm{orient}
    \left(
    \raisebox{-10pt}{
    \xymatrix{
      (S^V)^H
      \ar@(ul,ur)^{ [g] \in W_H(H) }
    }
    }
    \right).
  \end{equation}
\end{enumerate}
\end{defn}

\begin{example}[\bf Compatible RO-degree for representation-spheres and -tori]
\label{ExamplesOfCompatibleRODegree}
We observe that every real linear $G$-representation
$V$ is a compatible RO-degree (Def. \ref{CompatibleRODegree})
\begin{enumerate}[{\bf (i)}]
\vspace{-2mm}
\item for the corresponding representation
sphere $S^V$ \eqref{RepSpheres};
\vspace{-2mm}
\item and for the corresponding representation torus $\mathbb{T}^{V}$
\eqref{RepresentationTorus}
\end{enumerate}
\vspace{-2mm}
If the latter exists, hence if $G$ is the point group of a crystallographic group on $\mathbb{R}^V$ \eqref{CrystallographicGroups}.
\end{example}

For brevity, we introduce the following
terminology, following \hyperlink{Table5}{\it Table 5},
for the situation in which we will now
consider equivariant Cohomotopy in compatible RO-degree:
\begin{defn}[Cohomotopy of vicinity of the singularity]
  \label{CohomotopyOfVicinityOfSingularity}
  Given a finite group $G$ and an orthogonal linear
  $G$-representation $V \in \mathrm{RO}(G)$,
  we say that the {\it Cohomotopy of the vicinity of the singularity}
  is
  the unstable $G$-equivariant Cohomotopy \eqref{EquivariantCohomotopySet}
  $$
    \pi^V_G
    \big(
      (\mathbb{R}^V)^{\mathrm{cpt}}
    \big)
    \;=\;
    \pi^V_G
    \big(
      S^V
    \big)
  $$
  in compatible RO-degree $V$ (Def. \ref{CompatibleRODegree}, Example
\ref{ExamplesOfCompatibleRODegree})
of the Euclidean $G$-space $\mathbb{R}^V$ \eqref{EuclideanGSpace}
and vanishing at infinity \eqref{VanishingAtInfinity},
hence of the representation sphere $S^V$ \eqref{RepSpheres}
and preserving the point at infinity.
\end{defn}

The key implication of the first clause \eqref{FixedSpacesOfCompatibleDimension} on
compatible RO-degrees is that
each Elmendorf stage $c^H$
\eqref{SystemOfMapsOnHFixedSubspaces}
of a $G$-equivariant Cohomotopy cocycle $c$
is a cocycle in ordinary Cohomotopy
\eqref{PlainCohomotopySet}
to which
the ordinary Hopf degree theorem applies,
either in its stable range \eqref{HopfDegreeTheorem}
or in the unstable range \eqref{UnstableRangeHopfDegreeTheorem}:

\vspace{-2mm}
\begin{equation}
  \label{ElmedorfStageWiseHopfDegrees}
  \hspace{-2mm}
  \!\!\!\!\!\!\!\!\!
  \begin{array}{ccc}
  \underset{
    \mbox{\bf
      \tiny
      \color{darkblue}
      \begin{tabular}{c}
        equivariant Cohomotopy cocycle \eqref{EquivariantCohomotopySet}
        \\
        in compatible RO-degree $V$ \eqref{FixedSpacesOfCompatibleDimension}
      \end{tabular}
    }
  }{
  {\xymatrix{
    X
    \ar@(ul,ur)|{\, G\,}
    \ar[rr]^-{c}
    &&
    S^V\ar@(ul,ur)|{\,G\,}
  }}
  }
  &
  \phantom{}
    \Rightarrow
  \phantom{A}
  &
  \raisebox{83pt}{
  \xymatrix{
    X
    \ar[rr]^-c
    &&
    S^{\,\mathrm{dim}(X) \gt 0}
    &
    &
    \mathrm{deg}(f) \in \mathbb{Z}
    \\
    \\
    X^{H}
    \ar@{^{(}->}[uu]|-{ \raisebox{2pt}{$\vdots$} }
    \ar[rr]^-{c^{H}}
    &&
    S^{\,\mathrm{dim}(X^{H}) \gt 0}
    \ar@{^{(}->}[uu]|-{ \raisebox{2pt}{$\vdots$} }
    &
    &
    \underset{
      \tiny
      \color{darkblue}  \bf
      \begin{tabular}{c}
     \bf    ordinary stable Hopf degree \eqref{HopfDegreeTheorem}
      \end{tabular}
    }{
      \mathrm{deg}(c^{H}) \in \mathbb{Z}
    }
    \\
    &
    \ddots
    \ar@{^{(}->}[ul]|-{\raisebox{5pt}{$\ddots$}}
    \ar[rr]
    &&
    \ddots
    \ar@{^{(}->}[ul]|-{\raisebox{5pt}{$\ddots$}}
    \\
    \mathllap{
      \mbox{\bf
        \tiny
        \color{darkblue}
        Elmendorf stages \eqref{SystemOfMapsOnHFixedSubspaces}
      }
      \;\;\;
    }
    X^{K}
    \ar@{^{(}->}[uu]|-{ \raisebox{2pt}{$\vdots$} }
    \ar[rr]^>>>>>>>{c^{K}}
    &&
    S^{\,\mathrm{dim}(V^{K})\gt 0}
    \ar@{^{(}->}[uu]|-{ \raisebox{2pt}{$\vdots$} }
    &
    &
    \mathrm{deg}(c^{K}) \in \mathbb{Z}
    \\
    &
    \ddots
    \ar@{^{(}->}[ul]|-{\raisebox{5pt}{$\ddots$}}
    \ar[rr]
    \ar@{^{(}->}[uu]|-{ \raisebox{2pt}{$\vdots$} }
    &&
    \ddots
    \ar@{^{(}->}[ul]|-{\raisebox{5pt}{$\ddots$}}
    \ar@{^{(}->}[uu]|-{ \raisebox{2pt}{$\vdots$} }
    \\
    X^{J}
    \ar@{^{(}->}[uu]|-{ \raisebox{2pt}{$\vdots$} }
    \ar[rr]^>>>>>>>{c^{J}}
    &
    \ar@{..>}[u]
    &
    S^{\,\mathrm{dim}(V^J)= 0}
    \ar@{^{(}->}[uu]|-{ \raisebox{2pt}{$\vdots$} }
    &
    \ar@{..>}[u]
    &
    \underset{
      \mbox{\bf
        \tiny
        \color{darkblue}
        \phantom{AA} ordinary unstable Hopf degree \eqref{UnstableRangeHopfDegreeTheorem}
      }
    }{
      \mathrm{deg}(c^J)
      \in
      \mathrlap{ \{0,1\}^{(X^J)} }
      \phantom{\mathbb{Z}}
    }
    \\
    {\phantom{ {A \atop A} \atop {A \atop A} }}
    \ar@{..>}[u]
    &
    &
    {\phantom{ {A \atop A} \atop {A \atop A} }}
    \ar@{..>}[u]
    &
  }
  }
  \end{array}
\end{equation}

\vspace{-1.2cm}

\begin{theorem}[\bf Unstable equivariant Hopf degree theorem for
representation spheres]
\label{UnstableEquivariantHopfDegreeTheorem}
The unstable Cohomopotopy of the vicinity of a $G$-singularity $\mathbb{R}^V$
(Def. \ref{CohomotopyOfVicinityOfSingularity}) is in bijection to the product set of
 one copy of the integers for each isotropy group \eqref{IsotropySubgroups} with positive-dimensional
 fixed subspace $\mathrm{Isotr}^{d_{\mathrm{fix}} \gt 0}_X(G)$ \eqref{FixedLoci}, and
 one copy of $\{0,1\}$ if there is an isotropy group with 0-dimensional fixed subspace
 $\mathrm{Isotr}^{d_{\mathrm{fix}} = 0}_X(G)$
(which is then necessarily unique and, in fact, the group $G$ itself):
\vspace{-3mm}
\begin{equation}
  \label{UnstableEquivariantCohomotopyOfRepresentationSphereInCompatibleDegree}
  \xymatrix{
    \pi^V_G\big( \big(\mathbb{R}^V\big)^{\mathrm{cpt}} \big)
    \ar[rrr]^-{
      c
      \;\mapsto\;
      ( H \mapsto {\color{darkblue} \bf N_H}(c) )
    }_-{\simeq}
    &&&
    \mathbb{Z}^{{}^{ \mathrm{Isotr}^{d_{\mathrm{fix}} \gt 0}_X(G) } }
    \times
    \{0,1\}^{{}^{ \mathrm{Isotr}^{d_{\mathrm{fix}} = 0}_X(G) } }
  },
\end{equation}
where, for $H \in \mathrm{Isotr}^{d_{\mathrm{fix}} \gt 0  }_X(G)$,
the ordinary Hopf degree at Elmendorf stage $H$  \eqref{ElmedorfStageWiseHopfDegrees} is of the form
\begin{equation}
  \label{TheWeylGroupMultiples}
  \xymatrix@R=-2pt{
    \mathrm{deg}\big( c^{H} \big)
    &
    \!\!\!
    \!\!\!
    \!\!\!
    \!\!\!
    =
    \!\!\!
    \!\!\!
    \!\!\!
    \!\!\!
    &
    \phi_H\big(
       \{ \mathrm{deg}\big( c^K \big) \big\vert K \supsetneq H \in \mathrm{Isotr}_X(G)  \}
    \big)
    &
    \!\!\!
    \!\!\!
    \!\!\!
    \!\!\!
    -
    \!\!\!
    \!\!\!
    \!\!\!
    \!\!\!
    &
    {\color{darkblue} \bf N_H}(c) \cdot \big| \big( W_G(H)\big) \big|
    &
    \!\!\!\!\!\!\!\!\!
    \in
    \mathbb{Z}.
    \\
    \mathclap{
    \mbox{\bf
      \tiny
      \color{darkblue}
      \begin{tabular}{c}
        The ordinary Hopf degree \eqref{HopfDegreeTheorem}
        \\
        at Elmendorf stage $K$ \eqref{SystemOfMapsOnHFixedSubspaces}
      \end{tabular}
    }
    }
    &
    \!\!\!\!\!\!
    \!\!\!\!\!\!
    \!\!\!\!\!\!
    \!\!\!\!\!\!
    &
    \mathclap{
    \mbox{\bf
      \tiny
      \color{darkblue}
      \begin{tabular}{c}
        offset, being a function $\phi_H$ of
        \\
        the Hopf degrees
        at all lower stages.
      \end{tabular}
    }
    }
    &
    \!\!\!\!\!\!
    \!\!\!\!\!\!
    \!\!\!\!\!\!
    \!\!\!\!\!\!
    &
    \mathclap{
    \mbox{\bf
      \tiny
      \color{darkblue}
      \begin{tabular}{c}
        an integer multiple of
        \\
        the order of the Weyl group \eqref{WeylGroup}
      \end{tabular}
    }
    }
  }
\end{equation}
The isomorphism
\eqref{UnstableEquivariantCohomotopyOfRepresentationSphereInCompatibleDegree}
is exhibited by sending an equivariant Cohomotopy cocycle $c$ to the sequence of the
integers ${\color{darkblue} \bf N_H}(c)$ from \eqref{TheWeylGroupMultiples}
in positive fixed subspace dimensions,
together with  possibly the choice of an element of $\{0,1\}$,
which is the unstable Hopf degree in dimension 0 \eqref{UnstableRangeHopfDegreeTheorem},
at Elmendorf stage $G$ (if $\mathrm{dim}(V^G) = 0$).
\end{theorem}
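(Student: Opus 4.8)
The plan is to compute $\pi^V_G\big((\mathbb{R}^V)^{\mathrm{cpt}}\big) = \pi^V_G(S^V)$ through the Elmendorf decomposition \eqref{SystemOfMapsOnHFixedSubspaces}, which identifies a $G$-equivariant cocycle $c\colon S^V \to S^V$ with a compatible system of $W_G(H)$-equivariant maps $c^H\colon (S^V)^H \to (S^V)^H$ on the fixed strata, indexed by the isotropy subgroups $H \in \mathrm{Isotr}_X(G)$. Since $V$ is a compatible RO-degree for $S^V$ (Example \ref{ExamplesOfCompatibleRODegree}), each fixed stratum $(S^V)^H = S^{V^H}$ is an ordinary sphere of dimension $\mathrm{dim}(V^H)$, so that every Elmendorf stage carries an ordinary Hopf degree, as recorded in \eqref{ElmedorfStageWiseHopfDegrees}: an integer $\mathrm{deg}(c^H) \in \mathbb{Z}$ in the stable range $\mathrm{dim}(V^H) > 0$ by \eqref{HopfDegreeTheorem}, or an element of $\{0,1\}$ in the unstable range $\mathrm{dim}(V^H) = 0$ by \eqref{UnstableRangeHopfDegreeTheorem}.

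First I would set up the induction over the poset $\mathrm{Isotr}_X(G)$, ordered by inclusion, building $c$ from the deepest stratum outward, that is from the largest isotropy subgroup (smallest fixed sphere) toward the trivial group (largest fixed sphere). Here $G$ is the unique maximal isotropy subgroup, since it stabilizes both the origin $0 \in \mathbb{R}^V$ and the point at infinity, and fixed-space dimension is strictly monotone on isotropy groups. This is the base case, with $W_G(G) = 1$ by \eqref{ExtremeCasesOfWeylGroups}: if $\mathrm{dim}(V^G) = 0$ then $(S^V)^G = S^0$ and the basepoint-preserving constraint forces $c^G$ into $\{0,1\}$ via \eqref{UnstableRangeHopfDegreeTheorem}; if $\mathrm{dim}(V^G) > 0$ then $c^G$ is an unconstrained self-map of a sphere with free Hopf degree in $\mathbb{Z}$, which fixes the free parameter $N_G(c)$ since $|W_G(G)| = 1$ makes the offset $\phi_G$ in \eqref{TheWeylGroupMultiples} vacuous.

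The inductive step is the heart of the argument and the main obstacle. Assuming the compatible system has been built at all stages $K \supsetneq H$, I would extend it over stage $H$: the restriction of the sought map $c^H$ to the singular subset $S_{\mathrm{sing}} := \bigcup_{K \supsetneq H}(S^V)^K \subset S^{V^H}$ is already prescribed by the deeper stages, and on the complement $W_G(H)$ acts \emph{freely} (this is precisely why one filters by isotropy rather than by arbitrary subgroups). Existence of an extension is automatic, for instance by transitioning to the constant map at the basepoint $\infty$ away from a collar of $S_{\mathrm{sing}}$; the real content is the enumeration. Changing the extension over the free locus amounts to adding or removing local degree-$\pm1$ modifications supported in small balls, and equivariance forces these to occur simultaneously over all $|W_G(H)|$ points of a free orbit. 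Since the group acts linearly, the local degrees agree across the orbit by the orientation-compatibility \eqref{OrientationBehaviousCompatible} (which is automatic here as domain and codomain are the same representation sphere), so each such change shifts $\mathrm{deg}(c^H)$ by exactly $\pm|W_G(H)|$. Relative equivariant obstruction theory, as in \cite[8.4]{tomDieck79}, then shows that the $W_G(H)$-equivariant homotopy classes of extensions rel the prescribed boundary data form a torsor under $\mathbb{Z}$, with this $\mathbb{Z}$ acting through the orbit-wrappings just described. This yields exactly \eqref{TheWeylGroupMultiples}: the degree at stage $H$ equals the forced offset $\phi_H\big(\{\mathrm{deg}(c^K)\}_{K \supsetneq H}\big)$, the degree of the reference extension pinned down by the restriction to the deeper strata, corrected by the free integer $N_H(c)$ times $|W_G(H)|$.

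Finally I would assemble the bijection \eqref{UnstableEquivariantCohomotopyOfRepresentationSphereInCompatibleDegree} by reading off, from each equivariant cocycle $c$, the tuple of free parameters $N_H(c)$ over the positive-dimensional isotropy strata together with the single $\{0,1\}$-choice at $G$ when $\mathrm{dim}(V^G) = 0$. Injectivity holds because the $N_H(c)$, together with the deeper data, recover each $\mathrm{deg}(c^H)$ inductively through \eqref{TheWeylGroupMultiples}, and hence the full $G$-homotopy class by the Elmendorf identification; surjectivity holds because the inductive extension realizes any prescribed tuple, the existence of an extension at each stage guaranteeing that the construction never stalls. The one point demanding care throughout is the bookkeeping of the offset $\phi_H$, so that it is a well-defined function of only the already-constructed deeper degrees and is coherent across the entire isotropy poset; this is where one leans most heavily on the congruence analysis of \cite{tomDieck79}.
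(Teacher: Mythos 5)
Your proposal is correct and follows essentially the same route as the paper: an induction over the isotropy poset from the maximal stage $G$ (where the $\{0,1\}$-valued unstable Hopf degree lives when $\mathrm{dim}(V^G)=0$) down to the trivial group, with the degree at each Elmendorf stage determined modulo $\vert W_G(H)\vert$ by the deeper strata. The only difference is presentational: the paper outsources the inductive step to \cite[Thm.\ 8.4.1]{tomDieck79} and verifies only the new induction start at $S^0$ (the two extensions $\mathrm{id}_{S^V}$ and $\mathrm{const}_\infty$ of \eqref{InductionStartForRepSpheres}), whereas you reconstruct that step explicitly via free-orbit degree modifications and equivariant obstruction theory.
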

\begin{proof}
  In the special case that no subgroup $H \subset G$
  has a fixed subspace of vanishing dimension,
  this is \cite[Theorem 8.4.1]{tomDieck79}
  (the assumption of positive dimension is made
  ``for simplicity'' in \cite[middle of p. 212]{tomDieck79}).
  Hence we just need to convince ourselves that the proof
  given there generalizes:
in the present case of representation spheres,
  the only possible 0-dimensional fixed subspace is
  the 0-sphere. Hence we need to consider the case that
  $( S^V)^G = S^0$.

  To generalize the inductive argument in \cite[p. 214]{tomDieck79} to this case,
  we just need to see that every function $( S^V)^G \to ( S^V)^G$
  extends to a $W_G(H)$-equivariant function $( S^V)^H \to ( S^V)^H$ on
  a next higher Elmendorf stage $H$. But this holds in the present case:  every function
  from $S^0 = \{0,\infty\}$ to itself
  (as in \hyperlink{FigureH}{\it Figure H}) readily extends even to a $G$-equivariant function
  $S^V \to S^V$, and by assumption of vanishing at infinity \eqref{VanishingAtInfinity}
  one of exactly two extensions will work, namely either the identity function
  or the function constant on $\infty \in S^V$:
  \begin{equation}
    \label{InductionStartForRepSpheres}
    \xymatrix@R=-4pt{
      &
      \{0,1\}
      \ar@{<-}[rr]^-{ \mathrm{deg}\left( (-)^G \right) }
      &&
      \pi^V\big( S^V \big)
      \\
      \mbox{\bf
        \tiny
        \color{darkblue}
        \begin{tabular}{c}
          configuration of
          \\
          a single point in $S^0$
          \\
          sitting at $0 \in S^0$
        \end{tabular}
      }
      &
      \big[
        S^0 \xrightarrow{\mathrm{id}_{S^0}} S^0
      \big]
      \ar@{}[rr]|-{\longmapsfrom}
      &&
      \big[
        S^V \xrightarrow{c = \mathrm{id}_{S^V}} S^V
      \big]
      &
      \mbox{\bf
        \tiny
        \color{darkblue}
        \begin{tabular}{c}
          configuration of
          \\
          a single charged point in $S^V$
          \\
          which is sitting at $0 \in S^V$
        \end{tabular}
      }
      \\
      \mbox{\bf
        \tiny
        \color{darkblue}
        \begin{tabular}{c}
          configuration of
          \\
          no point in $S^0$
        \end{tabular}
      }
      &
      \big[
        S^0 \xrightarrow{\mathrm{const}_\infty} S^0
      \big]
      \ar@{}[rr]|-{\longmapsfrom}
      &&
      \big[
        S^V \xrightarrow{c = \mathrm{const}_{\infty}} S^V
      \big]
      &
      \mbox{\bf
        \tiny
        \color{darkblue}
        \begin{tabular}{c}
          configuration of
          \\
          no point in $S^V$
        \end{tabular}
      }
    }
  \end{equation}
    From this induction forward, the  proof of \cite[8.4.1]{tomDieck79} applies verbatim
  and shows that on top of  this initial Hopf degree number of -1 (a charge at $0 \in S^0$) or
  $0$ (no charge at $0 \in S^0$) there may now be further
  $N_H \cdot \vert W_G(H)\vert$-worth
  of Hopf degree at the next higher Elmendorf stage $H$, and so on.
\end{proof}

\begin{example}[$\Z_2$-equivariant Cohomotopy]
Consider
$$
  c
  \;\in\;
  \pi^{\mathbf{n}_{\mathrm{sgn}}}_{\mathbb{Z}_2}
    \big(
      (\mathbb{R}^{\mathbf{n}_{\mathrm{sgn}}})^{\mathrm{cpt}}
    \big)
$$
(i.e., a cocycle in $\mathbb{Z}_2$-equivariant
Cohomotopy vanishing at infinity \eqref{VanishingAtInfinity}
of the $n$-dimensional Euclidean orientifold
$\mathbb{R}^{\mathbf{n}_{\mathrm{sgn}}}$ \eqref{EuclideanGSpace}
underlying the $n$-dimensional sign representation $\mathbf{n}_{\mathrm{sgn}}$,
as in \hyperlink{FigureI}{\it Figure I},
hence the equivariant Cohomotopy of the representation sphere
$S^{\mathbf{n}_{\mathrm{sgn}}}$
\eqref{RepSpheres}, as in \hyperlink{FigureJ}{\it Figure J},
in compatible RO-degree $\mathbf{n}_{\mathrm{sgn}}$, by Example \ref{ExamplesOfCompatibleRODegree}).
Then the unstable equivariant Hopf degree theorem \ref{UnstableEquivariantHopfDegreeTheorem}
says, when translated to a geometric situation via
the unstable Pontrjagin-Thom theorem \eqref{UnstablePTTheorem}, that:

\begin{enumerate}[{\bf (i)}]
\vspace{-2mm}
\item there either is, or is not, a single charge sitting at the finite fixed point $0 \in S^{\mathbf{n}_{\mathrm{sgn}}}$,
    corresponding, with \eqref{InductionStartForRepSpheres},
    to an offset of $- 1$ or $0$, respectively,
    in \eqref{TheWeylGroupMultiples};
    \vspace{-2mm}
\item
  in addition, there is any integer number
  (the $N_{1} \in \mathbb{N}$ in \eqref{TheWeylGroupMultiples})
  of orientifold mirror pairs
  (since $\vert W_{\mathbb{Z}_2}(1)\vert = \vert \mathbb{Z}_2\vert = 2$,
   by \eqref{ExtremeCasesOfWeylGroups}) of charges floating in the vicinity.
\end{enumerate}

\vspace{-4mm}
\begin{center}
{\hypertarget{FigureL}{}}
\begin{tikzpicture}[scale=0.75]
  \begin{scope}[shift={(0,-1.3)}]
  \node
    (X)
    at (-4.5,6)
    {
    \raisebox{44pt}{
    $
      (
        \xymatrix{
          \mathbb{R}^{\mathbf{n}_{\mathrm{sgn}}}
          \ar@(ul,ur)^{
            \overset{
              \mathclap{
              \mbox{\bf
                \tiny
                \color{darkblue}
                \begin{tabular}{c}
                  sign
                  \\
                  representation
                \end{tabular}
              }
              }
            }
            {
              \mathbb{Z}_2
            }
          }
        }
      )^{\mathrm{cpt}}
    $}};
  \node
    (sphere)
    at (6,6)
    {
    \raisebox{44pt}{
    $
      S^{\mathbf{n}_{\mathrm{sgn}}}
        =
      (
        \xymatrix{
          \mathbb{R}^{\mathbf{n}_{\mathrm{sgn}}}
          \ar@(ul,ur)^{
            \overset{
              \mathclap{
              \mbox{\bf
                \tiny
                \color{darkblue}
                \begin{tabular}{c}
                  sign
                  \\
                  representation
                \end{tabular}
              }
              }
            }{
              \mathbb{Z}_2
            }
          }
        }
      )^{\mathrm{cpt}}
    $}};

  \draw[->]
    (X)
    to node[above]
    {$c$}
    (sphere);

  \node at (-4.5,4.9)
    {\tiny \color{darkblue} \bf
      \begin{tabular}{c}
        Euclidean $n$-space
        \\
        around orientifold singularity
        \\
        compactified by
        a point at infinity
      \end{tabular}
    };
  \node at (-4.5,3.9)
    {$\overbrace{\phantom{----------------}}$};

  \node at (6,4.9)
    {\tiny \color{darkblue} \bf
      \begin{tabular}{c}
        representation sphere
        \\
        equivariant Cohomotopy coefficient
      \end{tabular}
    };
  \node at (6,3.9)  {$\overbrace{\phantom{--------------}}$};

  \node
    at (.25,5.7)
    {
      \tiny
      \color{darkblue} \bf
      equivariant Cohomotopy cocycle
    };

  \end{scope}

  \begin{scope}[shift={(-4.5,0)}]
    \draw (0,0) circle (2);
    \node (infinity1) at (2,0) {\colorbox{white}{$\infty$}};
    \node
      (submanifold1)
      at (180-0:2)
      {$
        \mathllap{
        \mbox{ \bf
          \tiny
          \color{darkblue}
          \begin{tabular}{c}
            orientifold
            \\
            singularity
          \end{tabular}
        }
        \;
        }
      $};
    \draw[fill=white] (180-0:2) circle (.07);
    \node (submanifold2) at (180+50:2) {$\bullet$};
    \node (mirrorsubmanifold2) at (180-50:2) {$\bullet$};
    \node (submanifold3) at (180+35:2) {$\bullet$};
    \node (mirrorsubmanifold3) at (180-35:2) {$\bullet$};

    \draw[<->, dashed, darkblue]
      (submanifold3)
      to
      (mirrorsubmanifold3);
    \draw[<->, dashed, darkblue]
      (submanifold2)
      to
      node[near start]
        {
          \raisebox{.6cm}{ \bf
          \tiny
          \color{darkblue}
          \hspace{.2cm}
          \begin{tabular}{c}
            orientifold
            \\
            action
          \end{tabular}
          }
        }
      (mirrorsubmanifold2);
  \end{scope}

  \draw[|->, thin, olive]
    (infinity1)
    to[bend right=40]
    node {
      \colorbox{white}{\bf
        \tiny
        \color{darkblue}
        \hspace{-.5cm}
        \begin{tabular}{c}
          cocycle vanishes
          \\
          at infinity
          \\
          (far away from the singularity)
        \end{tabular}
        \hspace{-.5cm}
      }
    }
    (7.7,-.2);
  \node at (-.2,-1.5) { \colorbox{white}{$\phantom{{A A A}\atop {A A} }$} };
  \node at (5.1,-1.8) { \colorbox{white}{$\phantom{ A }$} };

  \begin{scope}[shift={(4,0)}]
    \draw (2,0) circle (2);
    \node at (+.5,0)
      {$\!\!\!\!\!\!0$};
    \node (zero) at (0,0) {$-$};
    \node (infinity) at (4,0) {\colorbox{white}{$\infty$}};

   \fill[black] (2,0) ++(40+180:2) node (minusepsilon)
     {\begin{turn}{-45} $)$  \end{turn}};
   \fill[black] (2,0) ++(180-40:2) node (epsilon)
     {\begin{turn}{45} $)$ \end{turn}};
   \fill[black] (2.3,0.25) ++(40+180:2)
     node (label+epsilon)
     { \tiny $-\epsilon$ };
   \fill[black] (2.3,-0.25) ++(-40-180:2)
     node (label-epsilon)
     { \tiny $+\epsilon$ };

   \draw[<->, dashed, darkblue]
     (label+epsilon)
     to
     node {\tiny $\mathbb{Z}_2$}
     (label-epsilon);
  \end{scope}

  \draw[|->, olive]
    (mirrorsubmanifold2)
    to[bend left=18]
    (zero);
  \draw[|->, olive]
    (mirrorsubmanifold3)
    to[bend left=18]
    node {
      \colorbox{white}{\bf
        \tiny
        \color{darkblue}
        \hspace{-.5cm}
        \begin{tabular}{c}
          submanifolds away from
          \\
          fixed point/singularity
        \end{tabular}
        \hspace{-.5cm}
      }
    }
    (zero);

  \draw[|->, thin, olive]
    (submanifold2)
    to[bend right=18]
    (zero);
  \draw[|->, thin, olive]
    (submanifold3)
    to[bend right=18]
    node[near end]
      {
        \colorbox{white}{\bf
        \tiny
        \color{darkblue}
        \begin{tabular}{c}
          mirror submanifolds
        \end{tabular}
        }
      }
    (zero);

  \draw[|->, thin, brown]
    (submanifold1)
    to[bend left=11]
    node[near end]
    {
      \hspace{-.6cm}
      \colorbox{white}{\bf
        \tiny
        \color{darkblue}
        \hspace{-.6cm}
        \begin{tabular}{c}
          submanifold inside
          \\
          fixed point/singularity
        \end{tabular}
        \hspace{-.6cm}
      }
    }
    (zero);
\end{tikzpicture}
\end{center}
\vspace{-.6cm}
\noindent {\bf \footnotesize Figure L  --
The $\mathbb{Z}_2$-Equivariant Cohomotopy of
Euclidean $n$-orientifolds vanishing at infinity}
{\footnotesize according to the unstable equivariant
Hopf degree theorem \ref{UnstableEquivariantHopfDegreeTheorem}
applied to sign-representation spheres (\hyperlink{FigureJ}{\it Figure J})
and visuallized by the corresponding configurations of
charged points via the unstable Pontrjagin-Thom construction
\eqref{UnstablePTTheorem}, in equivariant enhancement of the
situation show in \hyperlink{FigureE}{\it Figure E}.
The same situation, just crossed with an interval, appears in
the application to M5/MO5 charge in
\hyperlink{FigureV}{\it Figure V}.
}

\end{example}

\medskip

It is possible and instructive to make this fully explicit in the simple special case
of the 1-dimensional sign representation, where the statement of the equivariant Hopf
degree theorem \ref{UnstableEquivariantHopfDegreeTheorem} may be found in
elementary terms: It is readily checked that all the continuous functions $c^1 : S^1 \to S^1$
which take $0$ to either of $0, \infty \in S^1$ and wind around at constant parameter speed
are $\mathbb{Z}_2$-equivariant, hence are Elmendorf stages \eqref{SystemOfMapsOnHFixedSubspaces}
of $\mathbb{Z}_2$-equivariant cocycles $c$:
\begin{equation}
  \label{EquivariantMapS1sgnToItself}
  \xymatrix@R=2pt{
    \mathllap{
      \big( \mathbb{R}^{\mathbf{1}_{\mathrm{sgn}}}\big)^{\mathrm{cpt}}
      \simeq
      \;
    }
    S^{\mathbf{1}_{\mathrm{sgn}}}
    \ar[rr]^-c
    &&
    S^{\mathbf{1}_{\mathrm{sgn}}}
    \\
    S^1
    \ar[rr]^-{c^1}
    &&
    S^1
    \\
    \\
    S^0
    \ar@{^{(}->}[uu]
    \ar[rr]^-{c^{\mathbb{Z}_2}}
    &&
    S^0
    \ar@{^{(}->}[uu]
  }
  \,.
\end{equation}
If such a function vanishes at infinity \eqref{VanishingAtInfinity},
in that it takes $\infty \mapsto \infty$ as shown in
\hyperlink{FigureL}{\it Figure L}, then
we have one of two cases:
\begin{enumerate}[{\bf (i)}]
\vspace{-2mm}
  \item
    either $c^1$ {\it  winds an odd number} of times,
    so that \eqref{TheWeylGroupMultiples} reads:
    \vspace{-2mm}
    $$
      \;
      \mathrm{deg}(c^1)
      =
      \;
      \overset{
        \mathclap{
        \mbox{\bf
          \tiny
          \color{darkblue}
          offset
        }
        }
      }
      {
        \overbrace{1}
      }
      \;-\;
      N_{1} \cdot 2
      \,,
    $$
        in which case it satisfies $c^1(0) = 0$,
    so that under the PT-theorem \eqref{UnstablePTTheorem}
    there is precisely one
    charge at the singular fixed point, together with
    the even integer number
    $2 \cdot N_1 \in \mathbb{Z}$ of net charges in its ``vicinity''
    (namely: away from infinity)
    which are arranged in $\mathbb{Z}_2$-mirror pairs,
    due to the $\mathbb{Z}_2$-equivariance of $c$;
    this is what is shown on the
    left of  \hyperlink{FigureL}{\it Figure L};
    \vspace{-2mm}
  \item
    or $c^1$ {\it winds an even number} of times
    so that \eqref{TheWeylGroupMultiples} reads:
    \vspace{-2mm}
    $$
      \;
      \mathrm{deg}(c^1)
      =
      \overset{
        \mathclap{
        \mbox{\bf
          \tiny
          \color{darkblue}
          offset
        }
        }
      }{
        \overbrace{0}
      }
      \;-\;
      N_{1} \cdot 2
      \,,
    $$
        in which case it satisfies $c^1(0) = \infty$,
    so that under the PT-theorem \eqref{UnstablePTTheorem}
    there is no
    charge at the singular fixed point,
    but a net even integer number
    $2 \cdot N_1 \in \mathbb{Z}$ of charges in its vicinity,
    as before.
\end{enumerate}
\begin{remark} [Number of branes and offset]
Notice that:
\begin{enumerate}[{\bf (i)}]
\vspace{-3mm}
\item For $N_1 = 0$ (no branes)
this is the situation of \eqref{InductionStartForRepSpheres}:
either there is a non-vanishing charge associated with the
singular fixed point (O-plane charge), or not.
\vspace{-3mm}
\item
  Furthermore, if there is, it is either +1 or -1,
  so that in general
  the charge associated with the singular fixed point is
  in $\{0, \pm 1\}$, as befits O-plane charge according to
  \hyperlink{FigureOP}{\it Figure OP}.
\vspace{-3mm}
\item The offset is relevant only modulo 2, so that we could have
chosen an offset of $+1$ instead of as $-1$ in the first case.
This choice just fixes the sign convention for D-brane/O-plane
charge.
\end{enumerate}

\end{remark}

\noindent {\bf Characterizing the brane content around a singularity.}
In the above example
in RO-degree $\mathbf{1}_{\mathrm{sgn}}$ \eqref{EquivariantMapS1sgnToItself},
it is clear that
the configurations of branes
implied by the unstable equvariant Hopf degree theorem
(Theorem \ref{UnstableEquivariantHopfDegreeTheorem})
appear in multiples of the
regular $G$-set around a fixed O-plane charge
stuck in the singularity,
as illustrated in \hyperlink{FigureL}{\it Figure L} and
as demanded by the local/twisted tadpole cancellation conditions
according to \hyperlink{Table1}{\it Table 1}.
In order to prove that this is the case generally, we
now turn to the stabilized equivariant Hopf degree theorem
(Theorem \ref{CharacterizationOfStabilizationOfUnstableCohomotopy} below),
which concretely characterizes the (virtual) $G$-sets of branes
that may appear classified by equivariant Cohomotopy.

\medskip

\subsubsection{Stable equivariant Hopf degree}
\label{StableEquivariantHopfDegree}
In a homotopy-theoretic incarnation of perturbation theory,
we may approximate unstable equivariant Cohomotopy (Theorem \ref{CharacterizationOfStabilizationOfUnstableCohomotopy})
by its homotopically linearized, namely stabilized (see \cite{BSS18}) version.
We briefly recall the basics of  stable equivariant Cohomotopy in RO-degree 0
(\cite{Segal71}, see \cite[7.6 \& 8.5]{tomDieck79}\cite{Lueck}) before applying this in
Theorem \ref{CharacterizationOfStabilizationOfUnstableCohomotopy} and Prop. \ref{TheoremLocalTadpoleCancellation} below.

\medskip

\noindent {\bf Equivariant suspension.}
For $V,W \in \mathrm{RO}(G)$
two orthogonal linear $G$-representations, and for
$$
  \big[
    S^V
    \simeq
    \big( \mathbb{R}^V\big)^{\mathrm{cpt}}
    \overset{c}{\longrightarrow}
    \big( \mathbb{R}^V\big)^{\mathrm{cpt}}
    \simeq
    S^V
  \big]
  \;\in\;
  \pi^V_G
  \big(
    \big( \mathbb{R}^V \big)^{\mathrm{cpt}}
  \big)
$$
the class of a cocycle in the equivariant Cohomotopy \eqref{EquivariantCohomotopySet}
of the Euclidean $G$-space $\mathbb{R}^V$ \eqref{EuclideanGSpace}
in compatible $\mathrm{RO}$-degree $V$ (Example \ref{ExamplesOfCompatibleRODegree})
and vanishing at infinity \eqref{VanishingAtInfinity},
we obtain the class of a cocycle vanishing at infinity
on the product $G$-space $\mathbb{R}^{V \oplus W}$
\eqref{CartesianProductOfEuclideanGSpaces}
in compatible degree $V \oplus W$,
simply by forming the
Cartesian product of $c$ with the identity on $\mathbb{R}^W$.
This is the {\it equivariant suspension} of $c$ by RO-degree $W$:
\begin{equation}
  \label{EquSuspension}
  \underset{
    \mathclap{
    \mbox{\bf
     \tiny
     \color{darkblue}
     \begin{tabular}{c}
       equivariant suspension
       \\
       by RO-degree $W$
       \\
       of equivariant Cohomotopy cocycle
     \end{tabular}
    }
    }
  }{
    \Sigma^W c
  }
  \hspace{1cm} \coloneqq \hspace{.2cm}
    \Big[
    \big(\mathbb{R}^V \times \mathbb{R}^W\big)^{\mathrm{cpt}}
    \underset{
      \mathclap{
      \mbox{\bf
        \tiny
        \color{darkblue}
        \begin{tabular}{c}
          \phantom{a}
          \\
          compactified Cartesian product
          \\
          of cocycle with identity on $\mathbb{R}^W$
        \end{tabular}
      }
      }
    }{
      \xrightarrow{\;\;
        c \,\times\, \mathrm{id}_{\mathbb{R}^W}
      }
    }
    \big(\mathbb{R}^V \times \mathbb{R}^W\big)^{\mathrm{cpt}}
  \Big]
  \;\in\;
  \pi^{V \oplus W}_G
  \big(
    \big( \mathbb{R}^{V \oplus W} \big)^{\mathrm{cpt}}
  \big)
  \,.
\end{equation}
Note that this reduces to the ordinary suspension operation \cref{HopfDegreesUnderSuspension}
for $G = 1$ the trivial group, hence for RO-degrees $\mathbf{n}_{\mathbf{triv}} = n$.
These equivariant suspension operations
form a directed system on the collection
of equivariant Cohomotopy sets \eqref{EquivariantCohomotopySet},
indexed by inclusions of orthogonal linear representations:
\begin{equation}
  \label{DirectedSystemOfEquivariantSuspensionMaps}
  \big(
    V \hookrightarrow V \oplus W
  \big)
  \;\;\longmapsto\;\;
  \Big(
  \pi^V_G\big( \big(\mathbb{R}^V\big)^{\mathrm{cpt}} \big)
  \xrightarrow{ \;\Sigma^W}
  \pi^{V \oplus W}_G\big( \big(\mathbb{R}^{V \oplus W}\big)^{\mathrm{cpt}} \big)
  \Big)
  \,.
\end{equation}

\medskip

\noindent {\bf Stable equivariant Cohomotopy.}
As a consequence of the above, one may consider the union
of all unstable equivariant Cohomotopy sets
of representation spheres in all compatible degrees,
with respect to the identifications along
the equivariant suspension maps \eqref{DirectedSystemOfEquivariantSuspensionMaps}
(the colimit of this system):
\begin{equation}
  \label{StableEquivariantCohomotopyOfThePoint}
  \hspace{-1.8cm}
  \underset{
    \mbox{\bf
      \tiny
      \color{darkblue}
      \begin{tabular}{c}
        unstable equivariant Cohomotopy set
        \\
        in compatible RO-degree $V$
      \end{tabular}
    }
  }{
    \pi^V_G
    \big(
      \big(
        \mathbb{R}^V
      \big)^{\mathrm{cpt}}
    \big)
  }
  \;\;
  \underset{
    \mbox{\bf
      \tiny
      \color{darkblue}
      \begin{tabular}{c}
        stabilization map
        \\
        (coprojection into colimit)
      \end{tabular}
    }
  }{
    \xymatrix{ \;\;\;\; \ar[rr]^{\Sigma^\infty} &&   }
  }
  \;\;\;\;\;\;\;\;\;\;
  \underset{
    \mathclap{
    \mbox{\bf
      \tiny
      \color{darkblue}
      \begin{tabular}{c}
        union (colimit) of
        \\
        all unstable Cohomotopy sets
        \\
        in compatible RO-degrees
        \\
        identified along equivariant suspensions
      \end{tabular}
    }
    }
  }{
    \underset{
      \underset{W}{\longrightarrow}
    }{\mathrm{lim}}
    \;
    \pi^W_G
    \big(
      \big(
        \mathbb{R}^W
      \big)^{\mathrm{cpt}}
    \big)
  }
  \hspace{.7cm}= \hspace{.8cm}
  \underset{
    \mathclap{
    \mbox{\bf
      \tiny
      \color{darkblue}
      \begin{tabular}{c}
        stable
        \\
        equivariant Cohomotopy ring
        \\
        in RO-degree 0
      \end{tabular}
    }
    }
  }{
    \mathbb{S}^0_G
  }.
\end{equation}
Since the resulting union/colimit is, by construction, stable under taking further such suspensions,
this is called the {\it stable equivariant Cohomotopy in degree 0}
(\cite[p. 1]{Segal71}, see \cite[p. 9-10]{Lueck})
also called the \emph{0th stable $G$-equivariant homotopy group of spheres}
or the \emph{$G$-equivariant stable 0-stem} or similar
(see \cite[IX.2]{May96}\cite[3]{Schwede}).
Notice that here the stable RO-degree is
the formal difference of the unstable RO-degree by
the RO-degree of the singularity, so that vanishing stable RO-degree
is another expression of compatibility of unstable degree, in the sense of
Example \ref{ExamplesOfCompatibleRODegree}:

\vspace{-5mm}
$$
  -
  \underset{
    \mbox{\bf
      \tiny
      \color{darkblue}
      \begin{tabular}{c}
        RO-degree of
        \\
        singularity
      \end{tabular}
    }
  }{
    \underbrace{
      V
    }
  }
  +
  \underset{
    \mbox{\bf
      \tiny
      \color{darkblue}
      \begin{tabular}{c}
        compatible RO-degree of
        \\
        unstable Cohomotopy
      \end{tabular}
    }
  }{
    \underbrace{
      V
    }
  }
  =
  \underset{
    \mbox{\bf
      \tiny
      \color{darkblue}
      \begin{tabular}{c}
        RO-degree of
        \\
        stable Cohomotopy
      \end{tabular}
    }
  }{
    \underbrace{
      0
    }
  }
$$
via
$$
\pi_V(S^W)=[S^V, S^W]=\pi^W(S^V) \xrightarrow{\Sigma^\infty}
S^{W- V}.
$$
Extensive computation of stable $\mathbb{Z}_2$-equivariant Cohomotopy
of representation spheres in non-vanishing RO-degrees,
i.e., computation of
the abelian groups
$\mathbb{S}^{\mathbf{n}_{\mathrm{sgn}} + \mathbf{m}_{\mathrm{triv}}}_{\mathbb{Z}_2}$,
is due to \cite{ArakiIriye82}\cite{Iriye82};
see also \cite[5]{DuggerIsaksen16}\cite[p. 10-15]{Dugger08}.
Under \hyperlink{HypothesisH}{\it Hypothesis H}, these
groups are relevant for tadpole cancellation with branes
wrapping orientifold singularities non-transversally.
This is of interest to us but goes beyond the scope of this article.

\medskip

\noindent {\bf Equivalence to the Burnside ring.}
Due to the stabilization, the stable equivariant Cohomotopy
set \eqref{StableEquivariantCohomotopyOfThePoint}
has the structure of an abelian group,
in fact
the structure of a ring. As such, it is isomorphic to the
{\it Burnside ring} $A(G)$ of virtual $G$-sets
(\cite{Burnside01}\cite{Solomon67}\cite[1]{tomDieck79},
for exposition in our context see \cite[2]{SS19b}):
\vspace{-2mm}
\begin{equation}
  \label{IsoToAG}
\hspace{1cm}
  \overset{
    \mathclap{
    \mbox{\bf
      \tiny
      \color{darkblue}
      \begin{tabular}{c}
        stable
        \\
        equivariant Cohomotopy
      \end{tabular}
    }
    }
  }{
    \mathbb{S}_G^0
  }
  \hspace{8mm} \simeq \hspace{3mm}
  \overset{
    \mathclap{
    \mbox{\bf
      \tiny
      \color{darkblue}
      \begin{tabular}{c}
        Burnside ring
      \end{tabular}
    }
    }
  }{
    A(G)
   }
  \;\;=\;\;
  \big\{ \!\!
    \mbox{
        Virtual $G$-sets
    }
 \!\! \big\}.
\end{equation}
This result is due to \cite[p. 2]{Segal71};
see \cite[7.6.7 \& 8.5.1]{tomDieck79}\cite[1.13]{Lueck}, we highlight its geometric
meaning below; see \hyperlink{FigureM}{\it Figure M}.
This is a non-linear analog
(more precisely, the analog over the absolute base ``field'' $\mathbb{F}_1$
\cite[p. 3]{Cohn04}\cite[2.5.6]{Durov07}) of the fact that the
equivariant K-theory in degree 0 is the representation ring
of virtual linear $G$-representations
over the field of real numbers (see, e.g., \cite[3]{Greenlees05}):
\begin{equation}
  \label{ROG}
  \overset{
    \mathclap{
    \mbox{\bf
      \tiny
      \color{darkblue}
      \begin{tabular}{c}
        equivariant K-theory
      \end{tabular}
    }
    }
  }{
    \mathrm{KO}_G^0
  }
  \hspace{4mm} \simeq \hspace{4mm}
  \overset{
    \mathclap{
    \mbox{\bf
      \tiny
      \color{darkblue}
      \begin{tabular}{c}
        representation ring
      \end{tabular}
    }
    }
  }{
    \mathrm{RO}(G)
   }
 \; \;=\;
  \big\{ \!\!
    \mbox{
        Virtual $G$-representations
    }
 \!\! \big\}.
\end{equation}
In fact, the operation $S \mapsto \mathbb{R}[S]$ that sends a (virtual) $G$-set $S \in \mathrm{A}(G)$
to its linearization, hence to its linear span $\mathbb{R}[S]$,
hence to the (virtual) permutation representation that it induces
(see \cite[4]{tomDieck79}\cite[2]{SS19b}),  is
a ring homomorphism from the Burnside ring to the representation
ring.  Furthermore, it exhibits the value on the point of
unique multiplicative morphism
from equivariant stable Cohomotopy theory to equvariant K-theory,
called the \emph{Boardman homomorphism} \cite[II.6]{Adams74},
which is the Hurewicz homomorphism generalized from
ordinary cohomology to generalized cohomology theories:

\vspace{-2mm}
\begin{equation}
  \label{BoardmanH}
  \xymatrix@C=1.6em@R=2pt{
    \mbox{\bf
      \tiny
      \color{darkblue}
      \begin{tabular}{c}
        equivariant stable Cohomotopy
      \end{tabular}
    }
    &
    \mathbb{S}_G^0
    \ar@{}[dd]|-{
      \begin{rotate}{270}
        $\!\!\simeq$
      \end{rotate}
    }
    \ar[rrrr]^{
      \overset{
        \mbox{\bf
          \tiny
          \color{darkblue}
          \begin{tabular}{c}
            Boardman homomorphism
          \end{tabular}
        }
      }{
        \beta
      }
    }
    &&&&
    \mathrm{KO}_G^0
    \ar@{}[dd]|-{
      \begin{rotate}{270}
        $\!\!\simeq$
      \end{rotate}
    }
    &
    \mbox{\bf
      \tiny
      \color{darkblue}
      \begin{tabular}{c}
        equivariant K-theory
      \end{tabular}
    }
    \\
    \\
    \mbox{\bf
      \tiny
      \color{darkblue}
      \begin{tabular}{c}
        Burnside ring
      \end{tabular}
    }
    &
    \mathrm{A}(G)
    \ar[rrrr]_{
      \underset{
        \mathclap{
        \mbox{\bf
          \tiny
          \color{darkblue}
          \begin{tabular}{c}
            linearization
            \\
            sending $G$-sets $S$ to
            \\
            linear $G$-representations $\mathbb{R}[S]$
          \end{tabular}
        }
        }
      }{
        S \, \mapsto \, \mathbb{R}[S]
      }
    }
    &&&&
    \mathrm{RO}(G)
    &
    \mbox{\bf
      \tiny
      \color{darkblue}
      \begin{tabular}{c}
        representation ring
      \end{tabular}
    }
  }
\end{equation}

In summary, the composite of the stabilization morphism
\eqref{StableEquivariantCohomotopyOfThePoint}
with the isomorphism \eqref{IsoToAG} to the Burnside ring explicitly
extracts from any cocycle $c$ in unstable equivariant Cohomotopy
a virtual $G$-set $\{\mathrm{branes}\}$, hence a
virtual $G$-permutation representation
$\mathbb{R}[ \{\mathrm{branes}\}]$.
The following theorem explicitly identifies this $G$-set
$\{branes\}$ in terms of the Elmendorf stage-wise Hopf degrees
of the cocycle $c$; see \hyperlink{FigureM}{\it Figure M} below for
illustration.

\begin{theorem}[\bf Stabilized equivariant Hopf degree theorem for representation spheres]
 \label{CharacterizationOfStabilizationOfUnstableCohomotopy}
Consider a cocycle $c$ in
unstable Cohomotopy of the vicinity of a $G$-singularity $\mathbb{R}^V$
(Def. \ref{CohomotopyOfVicinityOfSingularity}).
Its image under stabilization in equivariant stable Cohomotopy
\eqref{StableEquivariantCohomotopyOfThePoint} is,
under the identification \eqref{IsoToAG} with the Burnside ring,
precisely that virtual $G$-set $\{\mathrm{branes}\} \in A(G)$
whose net number of $H$-fixed points (``Burnside marks'', see \cite[2]{SS19b}),
equals the Hopf degree of $c$
at any Elmendorf stage $H \in \mathrm{Isotr}_X(G)$ \eqref{ElmedorfStageWiseHopfDegrees}.
Hence if $H = \langle g\rangle$ is a cyclic group generated
by an element $g \in G$, this number also equals the character value at $g$
(i.e., the trace of the linear action of $g$)
on the linear representation $\mathbb{R}\big[\{\mathrm{branes}\}\big]$:

\vspace{-8mm}
\begin{equation}
  \label{MorphismFromUnstableEquivariantCohomotopyToRepresentationRing}
  \xymatrix@R=1pt{
    &
    \overset{
      \mathclap{
      \mbox{\bf
        \tiny
        \color{darkblue}
        \begin{tabular}{c}
          cocycle in
          \\
          unstable equivariant Cohomotopy
        \end{tabular}
      }
      }
    }{
      c
    }
        \ar@{}[dd]|-{
      \begin{rotate}{270}
        $\!\!\in$
      \end{rotate}
    }
  \ar@{|->}[rr]
    &&
    \overset{
      \mathclap{
      \mbox{\bf
        \tiny
        \color{darkblue}
        \begin{tabular}{c}
          cocycle in
          stable equivariant Cohomotopy
          \\
          $\simeq$ virtual $G$-set of $\mathrm{branes}$
        \end{tabular}
      }
      }
    }{
      \{\mathrm{branes}\}
    }
    \ar@{}[dd]|-{
      \begin{rotate}{270}
        $\!\!\in$
      \end{rotate}
    }
    \ar@{|->}[rr]
    &&
    \overset{
      \mathclap{
      \mbox{\bf
        \tiny
        \color{darkblue}
        \begin{tabular}{c}
          virtual linear $G$-representation
          \\
          spanned by virtual $G$-set of branes
        \end{tabular}
      }
      }
    }{
      \mathbb{R}[ \{\mathrm{branes}\}]
    }
    \ar@{}[dd]|-{
      \begin{rotate}{270}
        $\!\!\in$
      \end{rotate}
    }
    \\
    \\
    &
    \pi^V_G
    \big(
      \big(
        \mathbb{R}^V
      \big)^{\mathrm{cpt}}
    \big)
    \ar@{}[dd]|-{
      \mathllap{
        \mbox{\bf
          \tiny
          \color{darkblue}
          \eqref{UnstableEquivariantCohomotopyOfRepresentationSphereInCompatibleDegree}
        }
        \;
      }
      \begin{rotate}{270}
        $\!\!\!\!\!\simeq$
      \end{rotate}
    }
    \ar[rr]^-{
      \overset{
        \mathclap{
        \mbox{\bf
          \tiny
          \color{darkblue}
          stabilization
        }
        }
      }{
        \Sigma^\infty
      }
    }
    &&
    \mathbb{S}_G^0
    \ar@{}[dd]|-{
      \mathllap{
        \mbox{\bf
          \tiny
          \color{darkblue}
          \eqref{IsoToAG}
        }
        \;
      }
      \begin{rotate}{270}
        $\!\!\!\!\!\simeq$
      \end{rotate}
    }
    \ar[rr]^-{\beta}
    &&
    \mathrm{KO}_G^0
    \ar@{}[dd]|-{
      \mathllap{
        \mbox{\bf
          \tiny
          \color{darkblue}
          \eqref{ROG}
        }
        \;
      }
      \begin{rotate}{270}
        $\!\!\!\!\!\simeq$
      \end{rotate}
    }
    \\
    \\
    &
    \mathbb{Z}^{{}^{\mathrm{Isotr}^{d_{\mathrm{fix}}\gt 0 }_X(G)}}
    \times
    \{0,1\}^{{}^{\mathrm{Isotr}^{d_{\mathrm{fix}}= 0 }_X(G)}}
    \ar[rr]
    &&
    \mathrm{A}(G)
    \ar[rr]_-{
      \underset{
        \mbox{\bf
          \tiny
          \color{darkblue}
          linearization
        }
      }
      {
        \mathbb{R}[-]
      }
    }
    &&
    \mathrm{RO}(G)
    \\
    \underset
    {
      \mathclap{
      \mbox{\bf
        \tiny
        \color{darkblue}
        \begin{tabular}{c}
          Elmendorf
          \\
          stage \eqref{SystemOfMapsOnHFixedSubspaces}
        \end{tabular}
      }
      }
    }
    {
      G \supset H
      \mathrlap{\; \colon}
    }
    &
    \underset{
      \mathclap{
      \mbox{\bf
        \tiny
        \color{darkblue}
        \begin{tabular}{c}
          Hopf degree at stage $H$
          \eqref{ElmedorfStageWiseHopfDegrees}
          \\
          of Cohomotopy cocycle
        \end{tabular}
      }
      }
    }{
      \mathrm{deg}\big(
        c^H
      \big)
    }
    \ar@{=}[rr]
    &&
    \underset{
      \mathclap{
      \mbox{\bf
        \tiny
        \color{darkblue}
        \begin{tabular}{c}
          net number of $H$-fixed elements
          \\
          = Burnside marks at stage $H$
          \\
          in virtual set of $\mathrm{branes}$
        \end{tabular}
      }
      }
    }{
      \Big\vert \{\mathrm{branes}\}^H \Big\vert
    }
    \\
    \underset
    {
      \mathclap{
      \mbox{\bf
        \tiny
        \color{darkblue}
        \begin{tabular}{c}
          Elmendorf stage
          \\
          at cyclic subgroup
          \\
          generated by $g \in G$
        \end{tabular}
      }
      }
    }
    {
      H = \langle g\rangle
      \mathrlap{\; \colon}
    }
    &
    \underset{
      \mathclap{
      \mbox{\bf
        \tiny
        \color{darkblue}
        \begin{tabular}{c}
          Hopf degree at stage $H = \langle g\rangle$
          \eqref{ElmedorfStageWiseHopfDegrees}
          \\
          of Cohomotopy cocycle
        \end{tabular}
      }
      }
    }{
      \mathrm{deg}\big(
        c^{\langle g \rangle}
      \big)
    }
    \ar@{=}[rr]
    &&
    \underset{
      \mathclap{
      \mbox{\bf
        \tiny
        \color{darkblue}
        \begin{tabular}{c}
          virtual number of fixed points
          \\
          under action of $g \in G$
          \\
          on virtual set of branes
        \end{tabular}
      }
      }
    }{
      \Big\vert \{\mathrm{branes}\}^{\langle g \rangle} \Big\vert
    }
    \ar@{=}[rr]
    &&
    \underset{
      \mathclap{
      \mbox{\bf
        \tiny
        \color{darkblue}
        \begin{tabular}{c}
          character value at $g$
          \\
          of virtual permutation representation
          \\
          spanned by $\mathrm{branes}$
        \end{tabular}
      }
      }
    }{
      \mathrm{Tr}_{g}\big( \mathbb{R}[ \{\mathrm{branes} \} ] \big)
    }
  }
\end{equation}
\end{theorem}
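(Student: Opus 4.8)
The plan is to pin down the virtual $G$-set $\{\mathrm{branes}\}$ through its \emph{marks} (its numbers of fixed points) and to match these marks, stage by stage, with the ordinary Hopf degrees $\mathrm{deg}(c^H)$ already extracted in Theorem \ref{UnstableEquivariantHopfDegreeTheorem}. First I would recall the classical \emph{table of marks} characterization of the Burnside ring (tom Dieck): the mark homomorphism $A(G) \longrightarrow \prod_{(H)} \mathbb{Z}$, sending a virtual $G$-set $S$ to the tuple of its marks $\big( \vert S^H\vert \big)_{(H)}$ indexed by conjugacy classes of subgroups $H \subset G$, is injective. Hence a virtual $G$-set is completely determined by its marks, and it suffices to compute the marks of the class corresponding to $\Sigma^\infty c$ under the identification \eqref{IsoToAG} with $A(G)$.

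Second, I would invoke the defining compatibility of the Segal/tom Dieck isomorphism $\mathbb{S}^0_G \simeq A(G)$ \eqref{IsoToAG} with fixed points: under this isomorphism the $H$-mark of a stable class is computed by passing to $H$-fixed points and taking the ordinary (non-equivariant) mapping degree of the resulting self-map of a sphere. Since $(S^V)^H = S^{V^H}$ is an ordinary sphere, a representative $c \colon S^V \to S^V$ of a cocycle in the vicinity of the singularity (Def. \ref{CohomotopyOfVicinityOfSingularity}) restricts on $H$-fixed points to the Elmendorf stage $c^H \colon S^{V^H} \to S^{V^H}$, whose degree is precisely $\mathrm{deg}(c^H)$ from \eqref{ElmedorfStageWiseHopfDegrees}. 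Combining the two facts gives $\vert \{\mathrm{branes}\}^H \vert = \mathrm{deg}(c^H)$ for every subgroup $H$; at the isotropy stages these are the nontrivial degrees of \eqref{UnstableEquivariantCohomotopyOfRepresentationSphereInCompatibleDegree}, while the recursion \eqref{TheWeylGroupMultiples}, with its Weyl-group multiples $N_H\cdot\vert W_G(H)\vert$, is exactly the system of Burnside congruences cutting out the image of the mark homomorphism. Thus the tuple $\big(\mathrm{deg}(c^H)\big)_H$ lands in that image and, by injectivity, determines the unique virtual $G$-set $\{\mathrm{branes}\}$, proving the first assertion.

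Finally, for the character statement I would use the elementary fixed-point formula for permutation representations: for a cyclic subgroup $H = \langle g \rangle$ and a $G$-set $S$, the trace of the linear action of $g$ on $\mathbb{R}[S]$ equals the number $\vert S^{\langle g \rangle}\vert$ of elements of $S$ fixed by $g$, and this identity extends $\mathbb{Z}$-linearly to virtual $G$-sets. Since the linearization $\mathbb{R}[-]$ realizes the Boardman homomorphism $\beta$ under the identifications \eqref{ROG} and \eqref{BoardmanH}, this yields $\mathrm{Tr}_g\big(\mathbb{R}[\{\mathrm{branes}\}]\big) = \vert \{\mathrm{branes}\}^{\langle g \rangle}\vert = \mathrm{deg}(c^{\langle g\rangle})$, as claimed.

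The main obstacle I anticipate is making the fixed-point compatibility of the Segal isomorphism fully precise across the \emph{unstable range}: one must check that the $H$-fixed-point degree defining the mark is literally the stabilization of the unstable Hopf degree $\mathrm{deg}(c^H)$, including the stage where $\mathrm{dim}(V^H) = 0$ and the degree a priori takes values in $\{0,1\}$ rather than in all of $\mathbb{Z}$. This is exactly the point controlled by the homogeneous suspension behavior \eqref{HopfDegreesUnderSuspension}, which embeds the unstable degrees $\{0,1\}$ into the stable degrees $\mathbb{Z}$ compatibly with $\Sigma^\infty$; carrying the inductive argument underlying Theorem \ref{UnstableEquivariantHopfDegreeTheorem} through this embedding is the step requiring the most care.
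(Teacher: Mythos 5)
Your argument is correct, and its mathematical substance coincides with what the paper relies on: both rest on tom Dieck's identification of the stable class of an equivariant self-map of a representation sphere by its system of fixed-point degrees. The difference is one of packaging. The paper simply cites \cite[8.5.1]{tomDieck79} for the case where all fixed subspaces are positive-dimensional, whereas you unwind that citation into its constituent parts --- injectivity of the mark homomorphism $A(G)\to\prod_{(H)}\mathbb{Z}$, the compatibility of the Segal isomorphism \eqref{IsoToAG} with passage to $H$-fixed points, the observation that the Weyl-group multiples in \eqref{TheWeylGroupMultiples} realize the Burnside congruences cutting out the image of the marks, and the standard trace formula for permutation representations. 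That unwinding is a genuine service to the reader and is entirely sound. Where you are weaker than the paper is precisely at the point you flag as delicate: the Elmendorf stage with $\mathrm{dim}(V^H)=0$. You propose to carry the inductive argument of Theorem \ref{UnstableEquivariantHopfDegreeTheorem} through the embedding \eqref{HopfDegreesUnderSuspension}, but no re-run of that induction is needed. The paper's resolution is a one-line reduction: since stabilization \eqref{StableEquivariantCohomotopyOfThePoint} is the colimit over all equivariant suspensions, one has $\Sigma^\infty(c)=\Sigma^\infty\big(\Sigma^{\mathbf{1}_{\mathrm{triv}}}c\big)$, and the once-suspended cocycle $\Sigma^{\mathbf{1}_{\mathrm{triv}}}c$ \eqref{EquSuspension} has all fixed subspaces of positive dimension, so the positive-dimensional case applies to it verbatim; the only remaining check is that the unstable degree in $\{0,1\}$ injects into $\mathbb{Z}$ under this suspension, which is \eqref{HopfDegreesUnderSuspension}. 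Substituting that reduction for your final paragraph would close the only soft spot in an otherwise complete proof.
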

\begin{proof}
  For the case that all fixed subspace dimensions
  are positive, this is essentially the statement
  of \cite[8.5.1]{tomDieck79}, after unwinding the definitions there
  (see \cite[p. 190]{tomDieck79}).
  We just need to see that the statement generalizes
  as claimed to the case where
  the full fixed subspace $\big( S^V\big)^G = S^0$ is the 0-sphere.
  But, under stabilization map $\Sigma^\infty$
  \eqref{StableEquivariantCohomotopyOfThePoint}, the image
  of a Cohomotopy cocycle $S^V \xrightarrow{\;\;c\;\;} S^V$
  and its equivariant suspension \eqref{EquSuspension}
  $
    S^{\mathbf{1}_{\mathrm{triv}}\oplus V }
    \xrightarrow{
      \Sigma^{\mathbf{1}_{\mathrm{triv}}}c
    }
    S^{\mathbf{1}_{\mathrm{triv}} \oplus V}
  $
  by, in particular, the trivial 1-dimensional representation,
  have the same image
  $
    \Sigma^\infty ( c )
    \simeq
    \Sigma^\infty \big( \Sigma^{\mathbf{1}_{\mathrm{triv}}} c \big)
      $.
  Now to the suspended cocycle $\Sigma^{\mathbf{1}_{\mathrm{triv}}} c$
  the theorem \cite[8.5.1]{tomDieck79} applies, and hence the claim follows from
  the fact \eqref{HopfDegreesUnderSuspension} that the
  unstable Hopf degree in $\{0,1\}$ injects
  under suspension into the stable Hopf degrees:
  $$
    \big[ S^0 = \big( S^V\big)^G \xrightarrow{\;c^G\;} \big( S^V\big)^G = S^0 \big]
    \;\in\;
    \{0,1\}
    \hookrightarrow \mathbb{Z}
    \,.
  $$

  \vspace{-7mm}
\end{proof}

\medskip

\noindent {\it For ADE-singularities}, this implies the following
(see \hyperlink{FigureM}{\it Figure M}):

\begin{prop}[\bf Classification of Cohomotopy charge in the vicinity of ADE-singularities]
\label{TheoremLocalTadpoleCancellation}
Consider $G = G^{\mathrm{ADE}} \subset \mathrm{SU}(2)$ a finite ADE-group
\eqref{ADESubgroups} and $\mathbf{4}_{\mathbb{H}}$ its canonical quaternionic
representation \eqref{TheQuaternionicRepresentation}. Then  the homomorphism \eqref{MorphismFromUnstableEquivariantCohomotopyToRepresentationRing}
from Theorem \ref{CharacterizationOfStabilizationOfUnstableCohomotopy}
identifies the unstable Cohomotopy  of the vicinity of the $G^{\mathrm{ADE}}$-singularity
$\mathbb{R}^{\mathbf{4}_{\mathbb{H}}}$ (Def. \ref{CohomotopyOfVicinityOfSingularity})
with its image in the representation ring
\vspace{-3mm}
$$
  \xymatrix{
  \pi^{\mathbf{4}_{\mathbb{H}}}_{G^{\mathrm{ADE}}}
  \big(
    \big(
      \mathbb{R}^{\mathbf{4}_{\mathbb{H}}}
    \big)^{\mathrm{cpt}}
  \big)
  \; \ar@{^{(}->}[rr]^{ \beta \circ \Sigma^\infty }
  &&
  \mathrm{KO}_G^0
  \simeq\mathrm{RO}(G)
  }
$$
which consists of all the virtual representations of the form
\begin{equation}
  \mathbb{R}[\{\mathrm{branes}\}]
  \;=\;
  N_{\color{darkblue} \bf \mathrm{Opla}} \cdot \mathbf{1}_{\mathrm{triv}}
  -
  N_{\color{darkblue} \bf \mathrm{brane} \atop \mathrm{int}} \cdot \mathbf{k}_{\mathrm{reg}}
  \phantom{AAAA}
  \mbox{for}
  \phantom{AA}
  \begin{array}{rcl}
    N_{\color{darkblue} \bf \mathrm{brane} \atop \mathrm{int}}
    &\in& \mathbb{N}
    \,,
    \\
    N_{\color{darkblue} \bf \mathrm{Opla}} &\in& \{0, 1\}
  \end{array}
\end{equation}
hence of the form of the local/twisted tadpole cancellation conditions in
\hyperlink{Table1}{\it Table 1} and
\hyperlink{Table2}{\it Table 2}.
\end{prop}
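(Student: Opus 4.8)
The plan is to reduce the statement to the two Hopf-degree theorems already established, by first pinning down the isotropy structure of $\mathbb{R}^{\mathbf{4}_{\mathbb{H}}}$. By Example \ref{FixedSubspacesOfADESingularities} the $G^{\mathrm{ADE}}$-action on $\mathbb{R}^{\mathbf{4}_{\mathbb{H}}}$ fixes only the origin and is free away from it, so the only isotropy subgroups are the trivial group $1$ and the full group $G = G^{\mathrm{ADE}}$. Thus $\mathrm{Isotr}^{d_{\mathrm{fix}} > 0}_X(G) = \{1\}$, with $(S^{\mathbf{4}_{\mathbb{H}}})^1 = S^4$, and $\mathrm{Isotr}^{d_{\mathrm{fix}} = 0}_X(G) = \{G\}$, with $(S^{\mathbf{4}_{\mathbb{H}}})^G = S^0$. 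This is exactly the two-stage situation to which Theorem \ref{UnstableEquivariantHopfDegreeTheorem} and Theorem \ref{CharacterizationOfStabilizationOfUnstableCohomotopy} apply.

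First I would apply Theorem \ref{UnstableEquivariantHopfDegreeTheorem}, which here specializes to $\pi^{\mathbf{4}_{\mathbb{H}}}_{G}\big((\mathbb{R}^{\mathbf{4}_{\mathbb{H}}})^{\mathrm{cpt}}\big) \simeq \mathbb{Z} \times \{0,1\}$, a cocycle $c$ being recorded by the pair $(N_1, N_G)$, where $N_G = \mathrm{deg}(c^G) \in \{0,1\}$ is the unstable Hopf degree at the fixed point (in the sense of \eqref{UnstableRangeHopfDegreeTheorem}) and $N_1 \in \mathbb{Z}$ is the Weyl-group multiple at the generic stage. Using $W_G(1) = G$ from \eqref{ExtremeCasesOfWeylGroups} and evaluating the offset $\phi_1$ of \eqref{TheWeylGroupMultiples} exactly as in the elementary $\mathbf{1}_{\mathrm{sgn}}$-computation following \eqref{InductionStartForRepSpheres} — where the presence or absence of a charge at the fixed point contributes offset $N_G$ — the two Elmendorf-stage Hopf degrees read
\begin{equation*}
  \mathrm{deg}(c^G) = N_G \in \{0,1\},
  \qquad
  \mathrm{deg}(c^1) = N_G - N_1 \cdot |G|.
\end{equation*}

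Next I would feed these into Theorem \ref{CharacterizationOfStabilizationOfUnstableCohomotopy}: the stabilization $\beta \circ \Sigma^\infty$ sends $c$ to the virtual $G$-set $\{\mathrm{branes}\} \in A(G)$ whose marks at the isotropy subgroups $H \in \{1, G\}$ are these Hopf degrees. Since the only orbit types occurring in $S^{\mathbf{4}_{\mathbb{H}}}$ are $G/1$ (the free stratum) and $G/G$ (the origin), the virtual $G$-set is a $\mathbb{Z}$-combination $\{\mathrm{branes}\} = N_G \cdot [G/G] - N_1 \cdot [G/1]$; one checks directly that its marks — $\big|[G/G]^H\big| = 1$ for all $H$, while $\big|[G/1]^H\big| = |G|$ for $H = 1$ and $0$ otherwise — reproduce the displayed degrees. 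Applying the linearization $\mathbb{R}[-]$ of \eqref{MorphismFromUnstableEquivariantCohomotopyToRepresentationRing}, with $\mathbb{R}[G/G] = \mathbf{1}_{\mathrm{triv}}$ and $\mathbb{R}[G/1] = \mathbf{k}_{\mathrm{reg}}$ (the regular representation, of dimension $k = |G|$), yields
\begin{equation*}
  \mathbb{R}[\{\mathrm{branes}\}]
  = N_G \cdot \mathbf{1}_{\mathrm{triv}} - N_1 \cdot \mathbf{k}_{\mathrm{reg}},
\end{equation*}
which is the asserted form upon setting $N_{\mathrm{Opla}} = N_G \in \{0,1\}$ and $N_{\mathrm{brane} \atop \mathrm{int}} = N_1$. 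Injectivity of $\beta \circ \Sigma^\infty$ on this set is then immediate, since distinct pairs $(N_1, N_G)$ produce distinct marks at $1$ and $G$ (as $|G| \geq 2$), and $\mathbf{1}_{\mathrm{triv}}, \mathbf{k}_{\mathrm{reg}}$ are $\mathbb{R}$-linearly independent in $\mathrm{RO}(G)$.

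The step I expect to be the main obstacle is the careful bookkeeping in passing from the unstable parametrization to the stable $G$-set: one must verify both that the offset $\phi_1$ is exactly $N_G$ (rather than some other integer-valued function of the lower stage) and that no spurious orbit types $G/K$ with $K \notin \{1,G\}$ enter $\{\mathrm{branes}\}$ — both of which rest on the two-stratum $G$-CW structure of $S^{\mathbf{4}_{\mathbb{H}}}$ recorded above. A minor point is the sign convention $N_1 \in \mathbb{Z}$ versus $N_{\mathrm{brane} \atop \mathrm{int}} \in \mathbb{N}$: the two signs of $N_1$ correspond to branes and anti-branes, the physical (brane) sector being the one singled out in the statement.
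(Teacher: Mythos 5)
Your proof is correct and follows essentially the same route as the paper's: both arguments observe via Example \ref{FixedSubspacesOfADESingularities} that the only isotropy groups are $1$ and $G$, read off $\mathrm{deg}(c^1) = Q_{\mathrm{Opla}} - N_1\cdot|G|$ from Theorem \ref{UnstableEquivariantHopfDegreeTheorem} with $W_G(1)=G$, and then use Theorem \ref{CharacterizationOfStabilizationOfUnstableCohomotopy} to match Burnside marks/characters against $N_G\cdot[G/G]-N_1\cdot[G/1]$, whose linearization is $N_G\cdot\mathbf{1}_{\mathrm{triv}}-N_1\cdot\mathbf{k}_{\mathrm{reg}}$. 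The only cosmetic difference is that the paper states the marks and invokes uniqueness of the $G$-set realizing them, whereas you exhibit the $G$-set and verify its marks — the same computation in the opposite direction.
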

\begin{proof}
By \eqref{FixedSubspacesOfQuaternionRepresentation},  the representation
$\mathbf{4}_{\mathbb{H}}$  is such that \emph{every} non-trivial subgroup
$1 \neq H \subset G$ has a 0-dimensional fixed space:
$$
  \mathrm{dim}
  \left(
  \big(
    \mathbb{R}^{\mathbf{4}_{\mathbb{H}}}
  \big)^H
  \right)
  \;=\;
  \left\{
  \begin{array}{cc}
    4 & \mbox{if}\;H = 1
    \\
    0 & \mbox{otherwise}.
  \end{array}
  \right.
  $$
This means that for
$
  c \in \pi^{\mathbf{4}_{\mathbb{H}}}
  \big(
    \big( \mathbb{R}^{\mathbf{4}_{\mathbb{H}}}\big)^{\mathrm{cpt}}
  \big)
$
an equivariant Cohomotopy cocycle in the vicinity of an ADE-singularity,
its only Elmendorf stage-wise Hopf degree \eqref{ElmedorfStageWiseHopfDegrees}
in positive dimension is, by equation \eqref{TheWeylGroupMultiples} in Theorem \ref{UnstableEquivariantHopfDegreeTheorem}, of the form
$$
  \mathrm{deg}\big( c^1 \big)
  \;=\;
   \overset{
     \mbox{\tiny $\in \{0, 1\}$}
   }{
     \overbrace{ Q_{\mathrm{Opla}} }
   }
   -
   N_1 \cdot \vert G \vert
  \,,
$$
where we used the fact that $W_G(1) = G$ \eqref{ExtremeCasesOfWeylGroups}.
But, by Theorem \ref{CharacterizationOfStabilizationOfUnstableCohomotopy},
this implies that the virtual $G$-set
$\{\mathrm{branes}\}$ of branes corresponding
to $c$ has the following Burnside marks
$$
  \{\mathrm{branes}\}^H
  \;=\;
  \left\{
  \begin{array}{cc}
    Q_{\mathrm{Opla}}
    -
    N_1 \cdot \left\vert G \right\vert
    & \mbox{if}\; H = 1
    \\
    Q_{\mathrm{Opla}} & \mbox{otherwise}
    \,,
  \end{array}
  \right.
$$
hence that the corresponding permutation representation of branes
has the following characters:
$$
  \mathrm{Tr}_g\big( \mathbb{R}\{\mathrm{branes}\}\big)
  \;=\;
  \left\{
  \begin{array}{cc}
    Q_{\mathrm{Opla}}
    -
    N_1 \cdot \left\vert G \right\vert
    & \mbox{if}\; g = e
    \\
    Q_{\mathrm{Opla}}
    &
    \mbox{otherwise}\,.
    \end{array}
  \right.
 $$
The unique $G$-set/$G$-representation with these
Burnside marks/characters is the sum of
the $N_1$-fold multiple of the regular $G$-set/$G$-representation
and the $Q_{\mathrm{Opla}}$-fold multiple of the trivial representation
(see \hyperlink{FigureM}{\it Figure M}):
$$
  \mathbb{R}[\{\mathrm{branes}\}]
  \;=\;
  Q_{\mathrm{Opla}} \cdot \mathbf{1}_{\mathrm{triv}}
  \;-\;
  N_1 \cdot \mathbf{k}_{\mathrm{reg}}
  \,,
$$

\vspace{-7mm}
\end{proof}

\medskip
\noindent The situation is illustrated by \hyperlink{FigureM}{\it Figure M}:
\begin{center}
\hypertarget{FigureM}{}
\begin{tikzpicture}[scale=.8]

  \begin{scope}[shift={(0,1)}]

  \draw node at (0,7.2)
    {
      \tiny
      \color{darkblue} \bf
      \begin{tabular}{c}
        equivariant Cohomotopy
        \\
        vanishing at infinity
        \\
        of Euclidean $G$-space
        \\
        in compatible RO-degree $V$
      \end{tabular}
    };

  \draw node at (0,6)
    {
      $
      \pi^{V}_{{}_{G}}
      \big(
        \big(
          \mathbb{R}^V
        \big)^{\mathrm{cpt}}
      \big)
    $
    };

  \draw[->] (0+2,6)
    to
      node {\colorbox{white}{\small $\Sigma^\infty$}}
      node[above]
        {
          \raisebox{.3cm}{
          \tiny
          \color{darkblue} \bf
          stabilization
          }
        }
    (6-.8,6);

  \draw node at (6,6.9)
    {
      \tiny
      \color{darkblue} \bf
      \begin{tabular}{c}
        stable
        \\
        equivariant
        \\
        Cohomotopy
      \end{tabular}
    };

  \draw node at (6,6)
    {
      $
        \mathbb{S}_G^0
      $
    };

  \draw[->] (6+.7,6)
    to
      node{\colorbox{white}{\footnotesize $\beta$}}
      node[above]
        {
          \raisebox{.3cm}{
          \tiny
          \color{darkblue} \bf
          \begin{tabular}{c}
            Boardman
            \\
            homomorphism
          \end{tabular}
          }
        }
    (11-.9,6);

  \draw node at (6,5.5)
    {
      \begin{rotate}{270}
        $\!\!\simeq$
      \end{rotate}
    };

  \draw node at (6,5)
    {
      $
        A_G
      $
    };

  \draw node at (6,4.3)
    {
      \tiny
      \color{darkblue} \bf
      \begin{tabular}{c}
        Burnside
        \\
        ring
      \end{tabular}
    };

  \draw[->] (6+.7,5) to
    node
    {
      \colorbox{white}
      {
        \small
        $
        \underset
        {
          \mbox{
            \tiny
            \color{darkblue} \bf
            linearization
          }
        }
        {\footnotesize
          \mathbb{R}[-]
        }
        $
      }
    }
    (11-.7,5);

  \draw node at (11,6.8)
    {
      \tiny
      \color{darkblue} \bf
      \begin{tabular}{c}
        equivariant
        \\
        K-theory
      \end{tabular}
    };

  \draw node at (11,6)
    {
      $
        \mathrm{KO}_G^0
      $
    };

  \draw node at (11,5.5)
    {
      \begin{rotate}{270}
        $\!\!\simeq$
      \end{rotate}
    };

  \draw node at (11.2,5)
    {
      $
        \mathrm{RO}(G)
      $
    };

  \draw node at (11,4.3)
    {
      \tiny
      \color{darkblue} \bf
      \begin{tabular}{c}
        representation
        \\
        ring
      \end{tabular}
    };

  \end{scope}

\begin{scope}[shift={(0,.4)}]

  \draw node at (0,4)
    {$
      \overset{
      }{
        \mbox{
          \tiny
          e.g. one $O^{{}^{-}}\!\!$-plane
          and two branes
        }
      }
    $};

  \draw node at (0,3)
    {
      $\overbrace{\phantom{AAAAAAAAAAAAAAAAAAA}}$
    };

  \draw node at (6,4)
    {$
      \overset{
      }{
        \mbox{
          \tiny
          \begin{tabular}{c}
            minus the trivial $G$-set
            \\
            with two regular $G$-sets
          \end{tabular}
        }
      }
    $};

  \draw node at (11,3)
    {
      $\overbrace{\phantom{AAAAAA}}$
    };

  \draw node at (11,4)
    {$
      \overset{
      }{
        \mbox{
          \tiny
          \begin{tabular}{c}
            minus the trivial $G$-representation
            \\
            plus two times the regular $G$-representation
          \end{tabular}
        }
      }
    $};

  \draw node at (6,3)
    {
      $\overbrace{\phantom{AAAAAA}}$
    };

  \draw[dashed]
    (0,0) circle (2);

  \draw (0,2+.2) node {\footnotesize $\infty$};
  \draw (0,-2-.2) node {\footnotesize $\infty$};
  \draw (2+.3,0) node {\footnotesize $\infty$};
  \draw (-2-.3,0) node {\footnotesize $\infty$};

  \draw[fill=white]
    (0,0) circle (.07);

  \draw[fill=black]
    (18:.6) circle (.07);
  \draw[fill=black]
    (18+90:.6) circle (.07);
  \draw[fill=black]
    (18+180:.6) circle (.07);
  \draw[fill=black]
    (18+270:.6) circle (.07);

  \draw[fill=black]
    (58:1.2) circle (.07);
  \draw[fill=black]
    (58+90:1.2) circle (.07);
  \draw[fill=black]
    (58+180:1.2) circle (.07);
  \draw[fill=black]
    (58+270:1.2) circle (.07);

  \draw[|->] (3.6,0) to ++(.5,0);
  \draw[|->] (8.4,0) to ++(.5,0);

  \begin{scope}[shift={(6,.2)}]

    \draw[fill=white] (0,2) circle (0.07);

    \draw (5,2) node {\small $+\mathbf{1}_{{}_{\mathrm{triv}}}$ };

    \draw[|->, olive] (-.05,2.12) arc (30:325:.2);

    \begin{scope}[shift={(0,.3)}]
    \draw[fill=black] (0,1) circle (0.07);
    \draw[fill=black] (0,.5) circle (0.07);
    \draw[fill=black] (0,0) circle (0.07);
    \draw[fill=black] (0,-.5) circle (0.07);

    \draw[|->, olive] (0-.1,1-.05) arc (90+6:270-16:.2);
    \draw[|->, olive] (0-.1,.5-.05) arc (90+6:270-16:.2);
    \draw[|->, olive] (0-.1,0-.05) arc (90+6:270-16:.2);

    \draw[|->, olive] (0+.1,-.5+.1) to[bend right=60] (0+.1,1-.03);

    \draw (5,0.25) node {\small $-\mathbf{4}_{{}_{\mathrm{reg}}}$ };

    \end{scope}

    \begin{scope}[shift={(0,-1.7)}]
    \draw[fill=black] (0,1) circle (0.07);
    \draw[fill=black] (0,.5) circle (0.07);
    \draw[fill=black] (0,0) circle (0.07);
    \draw[fill=black] (0,-.5) circle (0.07);

    \draw[|->, olive] (0-.1,1-.05) arc (90+6:270-16:.2);
    \draw[|->, olive] (0-.1,.5-.05) arc (90+6:270-16:.2);
    \draw[|->, olive] (0-.1,0-.05) arc (90+6:270-16:.2);

    \draw[|->, olive] (0+.1,-.5+.1) to[bend right=60] (0+.1,1-.03);

    \draw (5,0.25) node {\small $-\mathbf{4}_{{}_{\mathrm{reg}}}$ };

    \end{scope}

\end{scope}

  \end{scope}
\end{tikzpicture}
\end{center}
\vspace{-4mm}
\noindent {\bf \footnotesize Figure M -- Virtual $G$-representations of brane configurations classified by
equivariant Cohomotopy} {\footnotesize
in the vicinity of ADE-singularities
(Def. \ref{CohomotopyOfVicinityOfSingularity}), according to
Prop. \ref{TheoremLocalTadpoleCancellation},
following
Theorem \ref{UnstableEquivariantHopfDegreeTheorem} and
Theorem \ref{CharacterizationOfStabilizationOfUnstableCohomotopy}.
The results reproduces the form of the local/twisted tadpole cancellation
conditions in \hyperlink{Table1}{\it Table 1}, \hyperlink{Table2}{\it Table 2}.
Shown is a situation for $G = \mathbb{Z}_4$ and
$V = \mathbf{2}_{\mathrm{rot}}$ as
in \hyperlink{FigureK}{\it Figure K}.}

\subsection{Equivariant Hopf degree on tori and Global tadpole cancellation}
\label{GlobalTadpoleCancellation}

We now globalize the characterization of equivariant Cohomotopy
from the vicinity of singular fixed points to compact toroidal
orbifolds, in Theorem \ref{UnstableEquivariantHopfDegreeTheoremForTori} below. Prop. \ref{PushforwardOfVicinityOfSingularityToRepresentationTorus}
below shows that the two are closely related, implying that the
local/twisted tadpole cancellation carries over to toroidal orbifolds.
Then we informally discuss the enhancement of unstable equivariant
Cohomotopy to a super-differential cohomology theory \eqref{DifferentialEquivariantCohomotopyPullback}
and show that its implications \eqref{KernelOfTheGlobalElmendorfStageProjection}
on the underlying equivariant Cohomotopy
enforce the form of the global/untwisted tadpole cancellation conditions.

\medskip

\noindent {\bf Globalizing from Euclidean orientifolds to toroidal orientifolds.}
In \cref{LocalTadpoleCancellation} we discussed the
characterization of  equivariant Cohomotopy in the vicinity
of singularities (according to \hyperlink{Table5}{\it Table 5}).
We may globalize this to compact toroidal orientifolds by
applying this local construction in the vicinity
of each singularity, using that the condition of
``vanishing at infinity'' \eqref{VanishingAtInfinity}
with respect to any one singularity
means that the local constructions may be glued together.
This \emph{local-to-global} construction is
indicated in \hyperlink{FigureN}{\it Figure N}:

\vspace{-3mm}
\begin{center}
{\hypertarget{FigureN}{}}
\begin{tikzpicture}[scale=0.75]
  \draw (1.5,6.4) node {$\overbrace{\phantom{------------------------}}$};
  \draw (11,6.4) node {$\overbrace{\phantom{---------------}}$};
  \draw (1.5,7.3) node {\tiny \color{darkblue} \bf toroidal orientifold};
  \draw (11,7.3)
    node
    {
      \tiny
      \color{darkblue} \bf
      \begin{tabular}{c}
        representation sphere
        \\
        equivariant Cohomotopy coefficient
      \end{tabular}
    };
  \draw (5.8,7.6) node { \tiny \color{darkblue} \bf  $\mathbb{Z}_2$-equivariant Cohomotopy cocycle  };
  \begin{scope}
  \clip (-2.9,-2.9) rectangle (5.9,5.9);
  \draw[step=3, dotted] (-3,-3) grid (6,6);
  \draw[dashed] (-3,-3) circle (1);
  \draw[dashed] (0,-3) circle (1);
  \draw[dashed] (3,-3) circle (1);
  \draw[dashed] (6,-3) circle (1);
  \draw[dashed] (-3,0) circle (1);
  \draw[dashed] (0,0) circle (1);
  \draw[dashed] (3,0) circle (1);
  \draw[dashed] (-3,3) circle (1);
  \draw[dashed] (0,3) circle (1);
  \draw[dashed] (3,3) circle (1);
  \draw[dashed] (-3,6) circle (1);
  \draw[dashed] (0,6) circle (1);
  \draw[dashed] (3,6) circle (1);
  \draw[dashed] (6,6) circle (1);

  \draw (0,.2) node {\colorbox{white}{\tiny $(0,0)$}};
  \draw[fill=white] (0,0) circle (.07);

  \draw[<->, dashed, darkblue]
    (0-1.9,3+1.6)
    to
    node[near start, above]
      {
        \tiny
        \color{darkblue} \bf
        \begin{tabular}{c}
          orientifold
          \\
          action
        \end{tabular}
      }
    (0+1.5,3-1.5);

  \draw (3,0) node {\colorbox{white}{\tiny $(\tfrac{1}{2},0)$}};
  \draw (0,3) node {\colorbox{white}{\tiny $(0,\tfrac{1}{2})$}};
  \draw (0,3-.25) node {{\tiny \color{darkblue} \bf a fixed point}};
  \draw (0,3-.65) node {\colorbox{white}{\tiny \color{darkblue} \bf \begin{tabular}{c}disk around fixed point\end{tabular}}};

  \draw
    (3,3)
    node
    {
      \colorbox{white}{
        \tiny
        $(\tfrac{1}{2},\tfrac{1}{2})$
      }
    };

  \draw
    (3+.25,3+.51)
    node {$\bullet$};

  \draw
    (3-.25,3-.51)
    node {$\bullet$};

  \end{scope}

  \draw (0,-3.4) node {\tiny $x_1 = 0$};
  \draw (3,-3.4) node {\tiny $x_1 = \tfrac{1}{2}$};
  \draw (-3.7,0) node {\tiny $x_2 = 0$};
  \draw (-3.7,3) node {\tiny $x_2 = \tfrac{1}{2}$};
  \draw[<->, dashed, darkblue]
    (11-1,2+1)
    to
    node[below, near end] {\tiny $\mathbb{Z}_2$}
    (11+1,2-1);
  \node at (11,2) {\colorbox{white}{$\phantom{a}$}};

  \draw[dashed] (11,2) circle (2);
  \node (zero) at (11,2) {\tiny $0$};
  \node (infinity) at (11,2+2.1) {\tiny $\infty$};
  \node (bottominfinity) at (11,2-2.1) {\tiny $\infty$};
  \node (rightinfinity) at (11+2.2,2) {\tiny $\infty$};
  \node (torus) at (1.5,8)
    {\raisebox{42pt}{$
      \mathbb{T}^{\mathbf{n}_{\mathrm{sgn}}}
        =
      \xymatrix{
        \mathbb{R}^{\mathbf{n}_{\mathrm{sgn}}}
        \ar@(ul,ur)^{
          \overset{
            \mathclap{
            \mbox{
              \tiny
              \color{darkblue} \bf
              \begin{tabular}{c}
                sign
                \\
                representation
              \end{tabular}
            }
            }
          }
          {
            \mathbb{Z}_2
          }
        }
      }
      /\mathbb{Z}^n
    $}};
  \node (sphere) at (11,8)
    {\raisebox{42pt}{$
      S^{\mathbf{n}_{\mathrm{sgn}}}
        =
      D(
        \xymatrix{
          \mathbb{R}^{\mathbf{n}_{\mathrm{sgn}}}
          \ar@(ul,ur)^{
            \overset{
              \mathclap{
              \mbox{
                \tiny
                \color{darkblue} \bf
                \begin{tabular}{c}
                  sign
                  \\
                  representation
                \end{tabular}
              }
              }
            }
            {
              \mathbb{Z}_2
            }
          }
        }
      )/S(\mathbb{R}^{\mathbf{n}_{\mathrm{sgn}}})
    $}};
    \draw[->, thin] (torus) to node[above]{c} (sphere);

    \draw[|->, thin, olive] (3+.4,3+.5)
      to[bend left=5]
      node
        {
          \hspace{-1.3cm}
          \colorbox{white}{
            \tiny
            \color{darkblue} \bf
            some charge at this singularity
          }
        }
      (zero);

    \draw[|->, thin, olive] (3-.1,3-.5)
      to[bend left=5]
      node
        {
          \hspace{-1.3cm}
          \colorbox{white}{
            \tiny
            \color{darkblue} \bf
            mirror charge
          }
        }
      (zero);

    \draw[|->, thin, olive] (0+.2,0+0) to[bend left=6.7] (zero);
    \draw[|->, thin, olive] (0-0,0-.5) to[bend left=6] (11-.1,2-.3);
    \draw[|->, thin, olive] (0+1,0+0)
      to[bend left=6.7]
      node
        {
          \hspace{-.8cm}
          \colorbox{white}{
            \tiny
            \color{darkblue} \bf
            O-plane unit charge at this singularity
          }
        }
      (11+1.9,2+0);
    \draw[|->, thin, olive] (1.5+0,4+0)
      to[bend left=13]
      node[below] {\hspace{.4cm}\colorbox{white}{\tiny \color{darkblue}
      \bf vanishing charge far away from all singularities}}
      (infinity);
    %
    \draw[|->, thin, olive] (3+.2,0-.5) to[bend left=7] (bottominfinity);
    \draw[|->, thin, olive] (3+.4,0+0)
      to[bend left=7]
      node {\colorbox{white}{\tiny \color{darkblue} \bf no charge at this singularity}}
      (bottominfinity);
    \draw[|->, thin, olive] (3+.6,0-.8) to[bend left=7] (bottominfinity);
\end{tikzpicture}
\end{center}
\vspace{-5mm}
\noindent {\bf \footnotesize Figure N --
Equivariant Cohomotopy cocycle on toroidal orbifolds glued
from local cocycles in the vicinity of singularities,}
{\footnotesize as formalized in the proof of Theorem \ref{UnstableEquivariantHopfDegreeTheoremForTori}.
Shown is a situation for $G = \mathbb{Z}_2$, as in
\hyperlink{FigureI}{\it Figure I} and \hyperlink{FigureJ}{\it Figure J}.}

\medskip

\noindent {\bf Well-isolated singularities.}
In order to formalize this local-to-global construction
conveniently, we make the following
sufficient assumption on the $G$-spaces to which we apply it:
\begin{defn}
We say that a $G$-space $\xymatrix{X \ar@(ul,ur)|{\, G\,}}$
has {\it well-isolated singularities} if all the
\emph{minimal} subgroups
with 0-dimensional fixed subspaces \eqref{FixedLoci}
are in the center of $G$, i.e. if the following condition holds:
\begin{equation}
  \label{WellIsolatedFixedPoints}
  H \subset G
  \;\mbox{minimal such that}\;
  \mathrm{dim}\big( X^G\big) = 0
  \phantom{AA}
  \Rightarrow
  \phantom{AA}
  H \subset \mathrm{Center}(G)\;.
\end{equation}
\end{defn}
\begin{example} The ADE-singularites (\hyperlink{Table5}{\it Table 5})
with well-isolated fixed points in the sense of \eqref{WellIsolatedFixedPoints} are all those in the $\mathbb{A}$-series,
as well as the generalized quaternionic ones in the $\mathbb{D}$-series --
see \hyperlink{Table6}{\it Table 6}.
 This is because, for ADE-singularities, all non-trivial subgroups
have 0-dimensional fixed space \eqref{FixedSubspacesOfQuaternionRepresentation}, so that here
the condition of well-isolated singularities \eqref{WellIsolatedFixedPoints}
requires that all non-trivial minimal elements in the subgroup lattice be in the center.
This is trivially true for the cyclic groups in the $\mathbb{A}$-series, since they are abelian.
For the generalized quaternionic groups in the $\mathbb{D}$-series there
is in fact a unique minimal non-trivial subgroup, and it in
fact it is always the orientifold action $H_{\mathrm{min}}  = \mathbb{Z}_2$
which coincides with the center, as shown for the first few cases
in \hyperlink{Table6}{\it Table 6}.
\end{example}

\medskip

The point of the notion of well-isolated fixed points
\eqref{WellIsolatedFixedPoints} is that it is sufficient
to guarantee that the action of the full group restricts to the
union of the 0-dimensional fixed subspaces, since then
\begin{equation}
  \label{SetOfIsolatedFixedPointsIsIndeedFixed}
  H \cdot x_{\mathrm{fixed}}
  \;=\;
  x_{\mathrm{fixed}}
  \phantom{AA}
  \Rightarrow
  \phantom{AA}
  H \cdot (g \cdot x_{\mathrm{fixed}})
  \;=\;
  (H \cdot g) \cdot x_{\mathrm{fixed}}
  \;=\;
  (g \cdot H) \cdot x_{\mathrm{fixed}}
  \;=\;
  g \cdot (H \cdot x_{\mathrm{fixed}})
  \;=\;
  g \cdot x_{\mathrm{fixed}},
\end{equation}
for all $g \in G$.
Hence, with \eqref{WellIsolatedFixedPoints},
the quotient set
\begin{equation}
  \label{SetOfWellIsolatedFixedPoints}
  \mathrm{IsolSingPts}_G(X)
  \;\coloneqq\;
  \Bigg(
  \underset
  { \Scale[0.6]
    {
      H \subset G
 ,
      \mathrm{dim}( X^H ) = 0
    }
  }
  {
    \bigcup
  }
  X^H
  \Bigg)
  \big/ G
\end{equation}
exists and
is the set of isolated singular
points in the orbifold $X \!\sslash\! G$.

{\footnotesize
\begin{center}
\hypertarget{Table6}{}
\hspace{-.2cm}
\small\addtolength{\tabcolsep}{-5pt}
\begin{tabular}{|c||c|c|c|c|c|}
  \hline
  \begin{tabular}{c}
    {\bf Dynkin}
    \\
 {\bf  label}
  \end{tabular}
  &
  $\mathbb{A}3 = \mathbb{D}3$
  &
  $\mathbb{D}4$
  &
  $\mathbb{D}6$
  &
  $\mathbb{D}10$
  &
  $\mathbb{D}18$
  \\
  \hline
  \raisebox{-20pt}{
  \begin{tabular}{c}
    $G \subset \mathrm{Sp}(1)$
    \\
    \phantom{A}
    \\
    \phantom{A}
    \\
    \begin{rotate}{+90}
      $
      \mathllap{
      \mbox{\bf
      subgroup lattice
      }
      }
      $
    \end{rotate}
  \end{tabular}
  }
  &
  $
    \xymatrix@C=5pt{
      &
      {\phantom{Q_8}}
      \\
      &
      \mathbb{Z}_4
      \mathrlap{ \, \mbox{\tiny = center} }
      \\
      &
      {\color{darkblue} \bf \mathbb{Z}^{\mathrlap{\mathrm{refl}}}_2 }
      \ar@{^{(}->}[u]
      &
      {\phantom{\mathbb{Z_2}}}
      \\
      & 1
      \ar@{^{(}->}[u]
    }
  $
  &
  $
    \xymatrix@C=12pt{
      & Q_8
      \ar@{=}[d]
      \\
      &
      2 D_4
      \\
      \mathbb{Z}_4
      \ar@{^{(}->}[ur]
      &
      \mathbb{Z}_4
      \ar@{^{(}->}[u]
      &
      \mathbb{Z}_4
      \ar@{^{(}->}[ul]
      \\
      &
      {\color{darkblue} \bf \mathbb{Z}^{\mathrlap{\mathrm{refl}}}_2 }
      \mathrlap{ \;\;\; \mbox{\tiny = center} }
      \ar@{^{(}->}[ul]
      \ar@{^{(}->}[u]
      \ar@{^{(}->}[ur]
      \\
      & 1
      \ar@{^{(}->}[u]
    }
  $
  &
  $
    \xymatrix@C=12pt{
      &
      Q_{16}
      \ar@{=}[d]
      \\
      &
      2 D_8
      \\
      2 D_4
      \ar@{^{(}->}[ur]
      &
      \mathbb{Z}_8
      \ar@{^{(}->}[u]
      &
      2 D_4
      \ar@{^{(}->}[ul]
      \\
      \mathbb{Z}_4
      \ar@{^{(}->}[u]
      &
      \mathbb{Z}_4
      \ar@{^{(}->}[ul]
      \ar@{^{(}->}[u]
      \ar@{^{(}->}[ur]
      &
      \mathbb{Z}_4
      \ar@{^{(}->}[u]
      \\
      &
      {\color{darkblue} \bf \mathbb{Z}^{\mathrlap{\mathrm{refl}}}_2 }
      \mathrlap{ \;\;\; \mbox{\tiny = center} }
      \ar@{^{(}->}[ul]
      \ar@{^{(}->}[u]
      \ar@{^{(}->}[ur]
      \\
      &
      1
      \ar@{^{(}->}[u]
    }
  $
  &
  $
  \xymatrix@C=12pt{
    &
    Q_{32}
    \ar@{=}[d]
    \\
    &
    2 D_{16}
    \\
    2 D_8
    \ar@{^{(}->}[ur]
    &
    \mathbb{Z}_{16}
    \ar@{^{(}->}[u]
    &
    2 D_8
    \ar@{^{(}->}[ul]
    \\
    2 D_4
    \ar@{^{(}->}[u]
    &
    \mathbb{Z}_8
    \ar@{^{(}->}[ul]
    \ar@{^{(}->}[u]
    \ar@{^{(}->}[ur]
    &
    2 D_4
    \ar@{^{(}->}[u]
    \\
    \mathbb{Z}_4
    \ar@{^{(}->}[u]
    &
    \mathbb{Z}_4
    \ar@{^{(}->}[ul]
    \ar@{^{(}->}[u]
    \ar@{^{(}->}[ur]
    &
    \mathbb{Z}_4
    \ar@{^{(}->}[u]
    \\
    &
      {\color{darkblue} \mathbf \mathbb{Z}^{\mathrlap{\mathrm{refl}}}_2 }
      \mathrlap{ \;\;\; \mbox{\tiny = center} }
    \ar@{^{(}->}[ul]
    \ar@{^{(}->}[u]
    \ar@{^{(}->}[ur]
    \\
    &
    1
    \ar@{^{(}->}[u]
  }
  $
  &
  \xymatrix@C=12pt{
    &
    Q_{64}
    \ar@{=}[d]
    \\
    &
    2 D_{32}
    \\
    2 D_{16}
    \ar@{^{(}->}[ur]
    &
    \mathbb{Z}_{32}
    \ar@{^{(}->}[u]
    &
    2 D_{16}
    \ar@{^{(}->}[ul]
    \\
    2 D_{8}
    \ar@{^{(}->}[u]
    &
    \mathbb{Z}_{16}
    \ar@{^{(}->}[ul]
    \ar@{^{(}->}[u]
    \ar@{^{(}->}[ur]
    &
    2 D_{8}
    \ar@{^{(}->}[u]
    \\
    2 D_{4}
    \ar@{^{(}->}[u]
    &
    \mathbb{Z}_{8}
    \ar@{^{(}->}[ul]
    \ar@{^{(}->}[u]
    \ar@{^{(}->}[ur]
    &
    2 D_{4}
    \ar@{^{(}->}[u]
    \\
    \mathbb{Z}_4
    \ar@{^{(}->}[u]
    &
    \mathbb{Z}_4
    \ar@{^{(}->}[ul]
    \ar@{^{(}->}[u]
    \ar@{^{(}->}[ur]
    &
    \mathbb{Z}_4
    \ar@{^{(}->}[u]
    \\
    &
      {\color{darkblue} \bf \mathbb{Z}^{\mathrlap{\mathrm{refl}}}_2 }
      \mathrlap{ \;\;\; \mbox{\tiny = center} }
    \ar@{^{(}->}[ul]
    \ar@{^{(}->}[u]
    \ar@{^{(}->}[ur]
    \\
    &
    1
    \ar@{^{(}->}[u]
  }
  \\
  \hline
\end{tabular}
\end{center}
}
\vspace{-2mm}
\noindent {\footnotesize \bf Table 6 -- The ADE-Singularities
$\xymatrix{\mathbb{R}^{\mathbf{4}_{\mathbb{H}}}\ar@(ul,ur)|{\;\;\;\; G^{\mathrm{ADE}}\, }}$ with well-isolated fixed points}
{\footnotesize according to
\eqref{WellIsolatedFixedPoints} are those in the $\mathbb{A}$-series $G^{\mathrm{ADE}} = \mathbb{Z}_n$ and the quaternionic groups $Q_{2^{n+2}} = 2 D_{2^{n} + 2}$
in the $\mathbb{D}$-series. For the latter and for the even-order
cyclic groups, the minimal non-trivial central subgroup is
unique and given by the point reflection group
$\mathbb{Z}_2^{\mathrm{refl}}$ \eqref{PointReflectionSubgroup}.
}


\vspace{4mm}
\noindent {\bf Unstable equivariant Hopf degree of representation tori.}
With these preliminaries in hand, we may now state and prove the unstable
equivariant Hopf degree theorem for representation tori with well-isolated singularities,
Theorem \ref{UnstableEquivariantHopfDegreeTheoremForTori} below.
Its statement and proof are directly analogous to the case for
representation spheres in Theorem \ref{UnstableEquivariantHopfDegreeTheorem}.
The difference here, besides the passage from
spheres to tori, is the extra assumption on well-isolated singularities
and the fact that the proof here invokes the construction of the
previous proof around each one of the well-isolated singularities.

\begin{theorem}[\bf Unstable equivariant Hopf degree theorem for
representation tori]
\label{UnstableEquivariantHopfDegreeTheoremForTori}
The unstable equivariant Cohomotopy \eqref{EquivariantCohomotopySet}
of a $G$-representation torus $\mathbb{T}^V$ \eqref{RepresentationTorus}
with well-isolated singularities \eqref{WellIsolatedFixedPoints}
and with a point at infinity adjoined \eqref{BasepointFreelyAdjoined}
in compatible RO-degree $V$ (Example \ref{ExamplesOfCompatibleRODegree})
is in bijection to the product set of
one copy of the
integers for each isotropy group \eqref{IsotropySubgroups}
with positive dimensional fixed subspace \eqref{FixedLoci},
and one copy of $\{0,1\}$ for each well-isolated fixed point
\eqref{SetOfWellIsolatedFixedPoints}
\begin{equation}
  \label{UnstableEquivariantCohomotopyOfRepresentationTorusInCompatibleDegree}
  \xymatrix{
    \pi^V_G
    \big(
      \big( \mathbb{T}^V \big)_+
    \big)
    \ar[rrr]^-{ 
      c 
      \;\mapsto\;
      ( H \mapsto {\color{darkblue} \bf N_H}(c) )   
    }_-{\simeq}
    &&&
    \mathbb{Z}^{{}^{ \mathrm{Isotr}^{d_{\mathrm{fix}} \gt 0}_X(G) } }
    \times
    \{0,1\}^{{}^{
      \mathrm{IsolSingPts}_G(X)
    }}
  },
\end{equation}
where, for $H \in \mathrm{Isotr}^{d_{\mathrm{fix}} \gt 0  }_X(G)$,
the ordinary Hopf degree at Elmendorf stage $H$  \eqref{ElmedorfStageWiseHopfDegrees}
is of the form
\begin{equation}
  \label{TheWeylGroupMultiplesForRepresentationTori}
  \xymatrix@R=-2pt{
    \mathrm{deg}\big( c^{H} \big)
    &=&
    \phi_H\big(
       \{ \mathrm{deg}\big( c^K \big) \big\vert K \supsetneq H \in \mathrm{Isotr}_X(G)  \}
    \big)
    &-&
    {\color{darkblue} \bf N_H}(c) \cdot \big| \big( W_G(H)\big) \big|
    &
    \!\!\!\!\!\!\!\!\!\!\!\!
    \in
    \mathbb{Z}.
    \\
    \mathclap{
    \mbox{\bf
      \tiny
      \color{darkblue}
      \begin{tabular}{c}
        The ordinary Hopf degree \eqref{HopfDegreeTheorem}
        \\
        at Elmendorf stage $K$ \eqref{SystemOfMapsOnHFixedSubspaces}
      \end{tabular}
    }
    }
    &
    &
    \mathclap{
    \mbox{\bf
      \tiny
      \color{darkblue}
      \begin{tabular}{c}
        a fixed offset, being a function of
        \\
        the Hopf degrees
        at all lower stages.
      \end{tabular}
    }
    }
    &
    &
    \mathclap{
    \mbox{\bf
      \tiny
      \color{darkblue}
      \begin{tabular}{c}
        an integer multiple of
        \\
        the order of the Weyl group \eqref{WeylGroup}
      \end{tabular}
    }
    }
  }
\end{equation}
The isomorphism \eqref{UnstableEquivariantCohomotopyOfRepresentationTorusInCompatibleDegree}
is exhibited by sending an equivariant Cohomotopy cocycle to the sequence of the
integers ${\color{darkblue} \bf N_H}$ from \eqref{TheWeylGroupMultiplesForRepresentationTori}
in positive fixed subspace dimensions, together with the collection of elements in $\{0,1\}$,
which are the unstable Hopf degrees in dimension 0 \eqref{UnstableRangeHopfDegreeTheorem},
at Elmendorf stage $G$ at each one of the well-isolated singularities.
\end{theorem}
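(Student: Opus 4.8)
The plan is to run the inductive scheme of tom Dieck's equivariant Hopf degree theorem (exactly as in the proof of Theorem \ref{UnstableEquivariantHopfDegreeTheorem}), but with the representation sphere replaced by the representation torus $\mathbb{T}^V$, using the well-isolated hypothesis \eqref{WellIsolatedFixedPoints} to reduce the analysis in the vicinity of each $G$-singularity to the sphere case already established there. Concretely, a cocycle $c \in \pi^V_G\big( (\mathbb{T}^V)_+ \big)$ decomposes via the Elmendorf functor \eqref{SystemOfMapsOnHFixedSubspaces} into a compatible system of $W_G(H)$-equivariant maps $c^H : (\mathbb{T}^V)^H_+ \to (S^V)^H = S^{V^H}$; by the compatible-RO-degree hypothesis \eqref{FixedSpacesOfCompatibleDimension} each $(\mathbb{T}^V)^H$ is a closed orientable manifold of the same dimension as its target sphere, so the ordinary Hopf degree theorem applies at every stage, in its stable form \eqref{HopfDegreeTheorem} when $\mathrm{dim}(V^H) > 0$ and in its unstable form \eqref{UnstableRangeHopfDegreeTheorem} when $\mathrm{dim}(V^H) = 0$.

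First I would treat the $0$-dimensional strata. By \eqref{SetOfIsolatedFixedPointsIsIndeedFixed} the well-isolated hypothesis guarantees that the union of $0$-dimensional fixed subspaces is $G$-invariant, hence a disjoint union of $G$-orbits indexed by $\mathrm{IsolSingPts}_G(X)$ \eqref{SetOfWellIsolatedFixedPoints}. On each such orbit the residual Weyl-group action permutes the points transitively, so $W_G(H)$-equivariance forces the unstable Hopf degree in $\{0,1\}$ \eqref{UnstableRangeHopfDegreeTheorem} to be constant along the orbit; this yields exactly one factor of $\{0,1\}$ per element of $\mathrm{IsolSingPts}_G(X)$, accounting for the second factor in \eqref{UnstableEquivariantCohomotopyOfRepresentationTorusInCompatibleDegree}. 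Around each such point a small $G$-invariant disk is $G$-equivariantly diffeomorphic to the Euclidean $G$-space $\mathbb{R}^V$ \eqref{EuclideanGSpace}, so the local behavior is governed precisely by the representation-sphere statement of Theorem \ref{UnstableEquivariantHopfDegreeTheorem}, and the induction start \eqref{InductionStartForRepSpheres} carries over verbatim.

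Next I would run the induction upward through the positive-dimensional strata, from larger isotropy subgroups (smaller fixed spaces) to smaller ones (larger fixed spaces). At stage $H$ with $\mathrm{dim}(V^H) > 0$, the maps already fixed on the strictly higher strata $K \supsetneq H$ determine a fixed offset $\phi_H$, namely the degree contribution forced inside $(\mathbb{T}^V)^H$ by those more singular sub-loci; the remaining freedom is the insertion of charges in the $W_G(H)$-free part of $(\mathbb{T}^V)^H$, which necessarily occur in $W_G(H)$-orbits, each contributing $|W_G(H)|$ \eqref{WeylGroup} to the ordinary degree. This reproduces \eqref{TheWeylGroupMultiplesForRepresentationTori} with the free integer $N_H(c) \in \mathbb{Z}$, and the collection of these integers over $\mathrm{Isotr}^{d_{\mathrm{fix}}>0}_X(G)$ gives the first factor of \eqref{UnstableEquivariantCohomotopyOfRepresentationTorusInCompatibleDegree}. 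The existence and $G$-homotopy-uniqueness of an extension realizing any prescribed data is the content of tom Dieck's equivariant obstruction argument \cite[8.4.1]{tomDieck79}, which applies here since the only modification relative to the sphere case is global.

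The main obstacle I expect is precisely this local-to-global gluing: verifying that the data prescribed around the finitely many well-isolated singularities assemble into a single global $G$-equivariant cocycle on $(\mathbb{T}^V)_+$, and that two cocycles inducing the same Elmendorf-stage data are $G$-homotopic. The key that makes this work is the vanishing-at-infinity condition \eqref{VanishingAtInfinity}: outside small invariant disks around the singular points every representative may be taken to send its argument to the basepoint $\infty \in S^V$, so the local cocycles extend by the constant map and glue across the complement, as illustrated in \hyperlink{FigureN}{\it Figure N}. Once this gluing and its homotopy-uniqueness are in place, tom Dieck's inductive argument proceeds stage by stage exactly as in the proof of Theorem \ref{UnstableEquivariantHopfDegreeTheorem}, completing the bijection \eqref{UnstableEquivariantCohomotopyOfRepresentationTorusInCompatibleDegree}.
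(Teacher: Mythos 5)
Your proposal is correct and follows essentially the same route as the paper's own proof: reduce to tom Dieck's Theorem 8.4.1 for the positive-dimensional Elmendorf stages, use the well-isolated hypothesis to get a $G$-invariant set of isolated fixed points with one $\{0,1\}$-choice per orbit, realize these choices on an equivariant tubular neighborhood via the representation-sphere case, and glue globally by sending the complement to $\infty\in S^V$. The only cosmetic difference is that you make explicit why equivariance forces the $\{0,1\}$-datum to be constant along each $G$-orbit, which the paper leaves implicit in its use of the quotient set $\mathrm{IsolSingPts}_G(X)$.
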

\begin{proof}
  In the special case when no subgroup $H \subset G$
  has a fixed subspace of vanishing dimension,
  this is \cite[Theorem 8.4.1]{tomDieck79}, where
  the assumption of positive dimension is made
  ``for simplicity'' in \cite[middle of p. 212]{tomDieck79}.
  Hence we just need to convince ourselves that the proof
  given there generalizes.

  To that end, assume that $\mathrm{dim}\big( V^G \big) = 0$.
  To generalize the inductive argument in \cite[p. 214]{tomDieck79} to this case,
  we just need to see that every
  $G$-invariant function on the isolated fixed
  \eqref{SetOfWellIsolatedFixedPoints}
  \begin{equation}
    \label{CohomotopyCocycleOnWellIsolatedFixedPoints}
    \xymatrix@R=1.2em{
      \mathrm{IsolSingPts}_G(X)
      \ar[rr]
      &&
      S^0
      \\
       \underset
       {
         \mathclap{
         \Scale[0.6]
         {
           H \subset G
           ,
           \mathrm{dim}( X^H ) = 0
         }
         }
       }
       {
         \bigcup
       }
       \;\;\;\;\;
       X^H
       \ar[rr]^{ (c^H) }
      \ar[u]^q
      &&
      S^0
      \ar@{=}[u]
    }
  \end{equation}
  extends to a $W_G(K)$-equivariant function $( S^V)^K \to ( S^V)^K$ on
  the next higher Elmendorf stage $K \in \mathrm{Isotr}^{d_{\mathrm{fix}}\gt 0}_X(G)$.
  For this, consider a $G$-equivariant tubular neighborhood around the well-isolated
  fixed points.  This is guaranteed to exist on general grounds by the equivariant tubular
  neighborhood theorem, since, by assumption \eqref{WellIsolatedFixedPoints},
  the set of points (in the bottom left of \eqref{CohomotopyCocycleOnWellIsolatedFixedPoints})
  is an equivariant (and of course closed) subspace, by \eqref{SetOfIsolatedFixedPointsIsIndeedFixed}.
  In fact, in the present specific situation of global
  \emph{orthogonal} linear actions on a Euclidean space we
  obtain a concrete such equivariant tubular neighborhood
  by forming the union of Euclidean open balls of radius $\epsilon$ around each
  of the points, for any small enough positive real number $\epsilon$.
  This kind of tubular neighborhood is indicated by the collection of dashed circles
  in \hyperlink{FigureA}{\it Figure A} and  \hyperlink{FigureN}{\it Figure N}.
  Given this or any choice of equivariant tubular neighborhood,
  the extensions \eqref{InductionStartForRepSpheres} in the proof of
  Theorem \ref{UnstableEquivariantHopfDegreeTheorem} apply to the
  vicinity of any one fixed points. This is a choice in $\{0,1\}$ for each element in
  $\mathrm{IsolSingPts}_G(X)$ \eqref{SetOfWellIsolatedFixedPoints},
  hence in total is the choice of an element in
  $\{0,1\}^{{}^{ \mathrm{IsolSingPts}_G(X) }}$, as it appears in
  \eqref{UnstableEquivariantCohomotopyOfRepresentationSphereInCompatibleDegree}.
  Since all these local extensions to the vicinity of any of the singularities ``vanish at infinity'' \eqref{VanishingAtInfinity}, i.e., at some distance $\gt \epsilon$ from any and all of the
  well-isolated fixed points, they may jointly be further extended to a
  global cocycle $\mathbb{T}^V \overset{c}{\longrightarrow} S^V$ by declaring that $c$
  sends every other point in $\mathbb{T}^V$ outside the given tubular neighborhood to $\infty \in S^V$
  (shown in \hyperlink{FigureN}{\it Figure N}).
   From this induction onwards, the  proof of \cite[8.4.1]{tomDieck79} applies verbatim
  and shows that on top of  this initial Hopf degree
  choice in $\{0,1\}^{{}^{\mathrm{IsolSingPts}_G(X)}}$
  there may now be further
  $N_H \cdot \vert W_G(H)\vert$-worth
  of Hopf degree at the next higher Elmendorf stage $H$, and so on.
\end{proof}

\vspace{1mm}
\noindent {\bf Stable equivariant Hopf degree of representation tori.}
Note that the unstable equivariant Hopf degrees of
representation spheres (Theorem \ref{UnstableEquivariantHopfDegreeTheorem}) and of representation tori (Theorem \ref{UnstableEquivariantHopfDegreeTheoremForTori})
have the same form,
\eqref{UnstableEquivariantCohomotopyOfRepresentationSphereInCompatibleDegree}
and \eqref{UnstableEquivariantCohomotopyOfRepresentationTorusInCompatibleDegree},
respectively,  away from the unstable Hopf degrees in vanishing fixed space dimensions.
It follows immediately that, up to equivariant homotopy,
all brane charge may be thought of as concentrated in the
vicinity of the ``central'' singularity (see \hyperlink{FigureO}{\it Figure O}):

\vspace{-2mm}
\begin{prop}[\bf Pushforward in unstable equivariant Cohomotopy]
\label{PushforwardOfVicinityOfSingularityToRepresentationTorus}
Let$\xymatrix{\mathbb{T}^V\ar@(ul,ur)|-{G}}$ be a $G$-representation torus
\eqref{RepresentationTorus}
with well-isolated singularities \eqref{WellIsolatedFixedPoints}, and write
$
      D_\epsilon ( \mathbb{R}^V)
    \overset{i}{\hookrightarrow}
       \mathbb{T}^V
  $,
  $
    0 \hookrightarrow x_0
  $,
for the inclusion of the $G$-equivariant tubular neighborhood around the fixed
point $x_0 \in \mathbb{T}^V$ covered by $0 \in \mathbb{R}^V$ that is given
by the open $\epsilon$-ball around the point, for any small enough positive radius
$\epsilon$. Then pushforward along $i$ from the unstable equivariant Cohomotopy
of the vicinity of this fixed point (as in Theorem \ref{UnstableEquivariantHopfDegreeTheorem})
to that of the full representation torus (as in Theorem \ref{UnstableEquivariantHopfDegreeTheoremForTori})
$$
  \xymatrix@R=-2pt{
  \mathclap{
  \mbox{\bf
    \tiny
    \color{darkblue}
    \begin{tabular}{c}
      unstable equivariant Cohomotopy
      \\
      of vicinity of $G$-singularity
    \end{tabular}
  }
  }
  \ar@{}[rrr]|-{
    \mbox{\bf
      \tiny
      \color{darkblue}
      \begin{tabular}{c}
        identify with
        vicinity of $x_0$
        \\
        ${\phantom{\vert}}$
      \end{tabular}
    }
  }
  &&&
  \mathclap{
  \mbox{\bf
    \tiny
    \color{darkblue}
    \begin{tabular}{c}
      unstable equivariant Cohomotopy
      \\
      of $G$-representation torus
    \end{tabular}
  }
  }
  \\
  \pi^V_G
  \big(
    \big(
      \mathbb{R}^{V}
    \big)^{\mathrm{cpt}}
  \big)
  \ar@{^{(}->}[rrr]^-{
    i_\ast
  }_-{ \simeq_{{}_{ (d_{\mathrm{fix}} \gt 0 } ) } }
  &&&
  \pi^V_G
  \big(
    \big(
      \mathbb{T}^{V}
    \big)_+
  \big)
  \\
  \left[
  \!\!\!\!\!\!
  \scalebox{.95}{
  $
  {\begin{array}{ccc}
    \mathbb{R}^{V}
    &\mathrlap{\xrightarrow{\phantom{---}c\phantom{---}}}&
    S^V
    \\
    x
    &\!\!\!\longmapsto\!\!\!&
    \left\{
    \!\!\!
    \scalebox{.9}{
    $
    \begin{array}{cl}
      c(x) & \mbox{if $d(x,0) \lt \epsilon$}
      \\
      \infty & \mbox{otherwise}
    \end{array}
    $
    }
    \right.
  \end{array}}
  $
  }
  \!\!\!\!\!\!\!\!\!\!\!\!
  \right]
  \ar@{}[rrr]|-{ \longmapsto }
  &&&
  \left[
  \!\!\!
  \scalebox{.95}{
  $
  {\begin{array}{ccc}
    \mathbb{T}^{V}
    &\mathrlap{\xrightarrow{\phantom{---}i_\ast(c)\phantom{---}}}&
   \;\;\;\;\;\;\;\;\;\; S^V
    \\
    x
    &\!\!\!\longmapsto\!\!\!&
    \left\{
    \!\!\!\!\!\!
    \scalebox{.9}{
    $
    \begin{array}{cl}
      c(x) & \mbox{if $d(x,x_0) \lt \epsilon$}
      \\
      \infty & \mbox{otherwise}
    \end{array}
    $
    }
    \right.
  \end{array}}
  $
  }
  \!\!\!\!\!\!\!\!\!\!\!\!
  \right]
  }
$$
is an isomorphism on Hopf degrees at Elmendorf stages $H_{\gt 0}$
of non-vanishing fixed space dimension and an injection
on the unstable Hopf degree set at Elemendorf stages 
$H_{= 0}$ with vanishing fixed subspace dimension:
$$
  i_\ast
  \;\colon\;
  \left\{
    \begin{array}{cl}
      N_{H_{= 0}}(c) \! & \hookrightarrow \; N_{H_{= 0}}(i_\ast(c))
      \\
      N_{H_{\gt 0}}(c) & \mapsto \; N_{H_{\gt 0}}(i_\ast(c))\,.
    \end{array}
  \right.
$$
\end{prop}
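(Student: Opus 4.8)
The plan is to identify the pushforward $i_\ast$ with precomposition by a Pontrjagin--Thom collapse map and then read off its effect on each Elmendorf stage using the explicit Hopf-degree isomorphisms of Theorem \ref{UnstableEquivariantHopfDegreeTheorem} and Theorem \ref{UnstableEquivariantHopfDegreeTheoremForTori}. First I would observe that the explicit formula defining $i_\ast(c)$ is nothing but the composite $c \circ \kappa$, where
$$
  \kappa
  \;\colon\;
  \big( \mathbb{T}^V \big)_+
  \longrightarrow
  \big( D_\epsilon(\mathbb{R}^V) \big)^{\mathrm{cpt}}
  \;\simeq\;
  S^V
$$
is the collapse that restricts to the identity on the open $\epsilon$-ball around $x_0$ and sends its closed complement (together with the freely adjoined basepoint) to the point at infinity. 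Since $x_0$ is covered by $0 \in \mathbb{R}^V$, it is fixed by the \emph{entire} group $G$ acting linearly, so the ball $D_\epsilon(\mathbb{R}^V)$ is a $G$-invariant open set with closed $G$-invariant complement; hence $\kappa$ is continuous and $G$-equivariant, and so is $c \circ \kappa$. This exhibits $i_\ast(c)(x) = c(x)$ for $d(x,x_0) < \epsilon$ and $= \infty$ otherwise, and shows $i_\ast$ is well defined on homotopy classes, independently of the representative and of $\epsilon$ (for $\epsilon$ small enough to isolate $x_0$, which is guaranteed by well-isolation \eqref{WellIsolatedFixedPoints}).

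Next I would compute Hopf degrees stage by stage. At an Elmendorf stage $H$ of positive fixed-space dimension, restriction to $H$-fixed subspaces gives $i_\ast(c)^H = c^H \circ \kappa^H$, where $\kappa^H \colon \big( (\mathbb{T}^V)^H \big)_+ \to S^{V^H}$ is again the collapse of the complement of a ball inside the fixed subtorus $(\mathbb{T}^V)^H$. Because collapsing the complement of an open ball in a closed connected oriented manifold onto the corresponding sphere has mapping degree $1$ (with the sign fixed by the compatible orientation behavior of Def. \ref{CompatibleRODegree}), multiplicativity of the degree yields
$$
  \mathrm{deg}\big( i_\ast(c)^H \big)
  \;=\;
  \mathrm{deg}\big( c^H \big) \cdot \mathrm{deg}\big( \kappa^H \big)
  \;=\;
  \mathrm{deg}\big( c^H \big).
$$
Since the positive-dimensional isotropy data is determined by the linear subspaces $V^H$, which are shared by $\mathbb{R}^V$ and $\mathbb{T}^V$, the offsets $\phi_H$ and Weyl-group orders entering \eqref{TheWeylGroupMultiples} and \eqref{TheWeylGroupMultiplesForRepresentationTori} coincide; preservation of all these Hopf degrees therefore forces $N_{H_{\gt 0}}(i_\ast(c)) = N_{H_{\gt 0}}(c)$, which is the claimed isomorphism on the positive-dimensional factors.

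Finally, for a zero-dimensional stage I would restrict to the set of isolated singular points. The point $x_0$ lies inside the ball, so $\kappa$ sends it to $0 \in S^0 = (S^V)^G$, whence $i_\ast(c)$ records there exactly the unstable Hopf degree in $\{0,1\}$ carried by $c$ at $0 \in S^V$; every other point of $\mathrm{IsolSingPts}_G(X)$ lies outside the ball, so $\kappa$ sends it to $\infty$ and $i_\ast(c)$ carries charge $0$ there. Thus the single factor $\{0,1\}$ of Theorem \ref{UnstableEquivariantHopfDegreeTheorem} maps by the coordinate section $b \mapsto (b \text{ at } x_0;\, 0 \text{ elsewhere})$ into $\{0,1\}^{\mathrm{IsolSingPts}_G(X)}$, which is manifestly injective. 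The main point to nail down is the degree-$1$ collapse computation on each positive-dimensional fixed subtorus --- equivalently, the recognition that $i_\ast$ \emph{is} precisely the local-to-global extension-by-$\infty$ already used in the proof of Theorem \ref{UnstableEquivariantHopfDegreeTheoremForTori}, now applied to the vicinity of $x_0$ with trivial (charge-$0$) data placed at all remaining singular points; everything else is bookkeeping against the two Hopf-degree isomorphisms.
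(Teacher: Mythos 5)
Your argument is correct, and it is worth noting that the paper offers no proof of this Proposition at all --- it is stated and then only illustrated by \emph{Figure O}, the justification being left implicit in the extension-by-$\infty$ construction already used in the proof of Theorem \ref{UnstableEquivariantHopfDegreeTheoremForTori}. Your reading of $i_\ast$ as precomposition with the $G$-equivariant collapse map onto the $\epsilon$-ball (legitimate because $x_0$ is fixed by all of $G$), the degree-one collapse computation at each positive-dimensional Elmendorf stage, and the coordinate-section injection $b \mapsto (b \text{ at } x_0,\, 0 \text{ elsewhere})$ on the unstable $\{0,1\}$-factors is precisely the reasoning the paper takes for granted, so your write-up supplies the missing details rather than taking a different route; the only place you match the paper's own level of informality is in asserting that the unspecified offsets $\phi_H$ of \eqref{TheWeylGroupMultiples} and \eqref{TheWeylGroupMultiplesForRepresentationTori} agree, which is needed to pass from preservation of Hopf degrees to preservation of the coefficients $N_{H_{\gt 0}}$.
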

\noindent This is illustrated by \hyperlink{FigureO}{\it Figure O}:

\begin{center}
\hypertarget{FigureO}{}
\begin{tikzpicture}[scale=0.8, decoration=snake]

  \begin{scope}[shift={(0,0)}]

  \begin{scope}[shift={(0,-.6)}] 

  \node at (1.4,8)
    {
      \tiny
      \color{darkblue} \bf
      \begin{tabular}{c}
        unstable equivariant Cohomotopy
        \\
        of vicinity of singularity
      \end{tabular}
    };

  \node (EquivariantCohomotopy) at (1.4,5.3+1.6)
    {$
      \pi^{ \mathbf{4}_{\mathbb{H}} }_{\mathbb{Z}_4}
      \big(
        \big(
          \mathbb{R}^{ \mathbf{4}_{\mathbb{H}} }
        \big)^{\mathrm{cpt}}
      \big)
    $};

  \end{scope}

  \node at (1.4,5.3)
    {$
      \overbrace{
        \phantom{------------------}
      }
    $};

  \begin{scope}[shift={(0,.8)}]

  \draw[<->, dashed, darkblue]
    (2.5,0)
    to[bend right=47]
    node
      {
        \colorbox{white}{\bf
        \tiny
        \color{darkblue}
        \begin{tabular}{c}
          orientifold
          \\
          action
        \end{tabular}
      }
      }
    (0,2.5);

  \end{scope}

  \begin{scope}[shift={(0, .8)}]
  \begin{scope}
  \clip (-1.8,-1.5) rectangle (1.5,1.5);
  \draw[step=3, dotted] (-3,-2) grid (6,6);
  \draw[dashed] (-3,-3) circle (1);
  \draw[dashed] (0,-3) circle (1);
  \draw[dashed] (3,-3) circle (1);
  \draw[dashed] (6,-3) circle (1);
  \draw[dashed] (-3,0) circle (1);
  \draw[dashed] (0,0) circle (1);
  \draw[dashed] (3,0) circle (1);
  \draw[dashed] (-3,3) circle (1);
  \draw[dashed] (0,3) circle (1);
  \draw[dashed] (3,3) circle (1);
  \draw[dashed] (-3,6) circle (1);
  \draw[dashed] (0,6) circle (1);
  \draw[dashed] (3,6) circle (1);
  \draw[dashed] (6,6) circle (1);

  \draw[fill=white] (0,0) circle (.07);
  \draw[fill=white] (3,0) circle (.07);
  \draw[fill=white] (0,3) circle (.07);
  \draw[fill=white] (3,3) circle (.07);

  \draw (0,3)
    node[right]
      {
        \colorbox{white}{
        \hspace{-.3cm}
        \tiny
        \color{darkblue} \bf
        O-plane
        \hspace{-.3cm}
        }
      };
  \draw (3,0)
    node[right]
      {
        \colorbox{white}{
        \hspace{-.5cm}
        \tiny
        \color{darkblue} \bf
        \begin{tabular}{c}
          mirror
          \\
          O-plane
        \end{tabular}
        \hspace{-.3cm}
        }
      };

  \end{scope}

  \begin{scope}[shift={(0,0)}]

  \draw[fill=black] (17:.8) circle (.07);
  \draw[fill=black] (17+90:.8) circle (.07);
  \draw[fill=black] (17+180:.8) circle (.07);
  \draw[fill=black] (17+270:.8) circle (.07);

  \begin{scope}[rotate=7]

  \draw[fill=black] (17:.3) circle (.07);
  \draw[fill=black] (17+90:.3) circle (.07);
  \draw[fill=black] (17+180:.3) circle (.07);
  \draw[fill=black] (17+270:.3) circle (.07);

  \end{scope}

  \draw (17+90+16:.62)
    node[right]
      {
        {
        \hspace{-.3cm}
        \tiny
        \color{darkblue} \bf
        branes
        \hspace{-.3cm}
        }
      };
  \draw (17+180:.7)+(.58,.03)
    node[right, below]
      {
        {
        \hspace{-.3cm}
        \tiny
        \color{darkblue} \bf
        mirror branes
        \hspace{-.3cm}
        }
      };

  \end{scope}

  \begin{scope}[shift={(0,1.6)}]
  \draw (0,-3.4) node {\tiny $x_1 = 0$};

  \end{scope}

  \draw (-2.5,0) node {\tiny $x_2 = 0$};

\end{scope}

\end{scope}

  \begin{scope}[shift={(10,0)}]

  \begin{scope}[shift={(0,-.6)}] 

  \node at (1.4,8)
    {
      \tiny
      \color{darkblue} \bf
      \begin{tabular}{c}
        unstable equivariant Cohomotopy
        \\
        of representation torus
      \end{tabular}
    };

  \node (EquivariantCohomotopy) at (1.4,5.3+1.6)
    {$
      \pi^{ \mathbf{4}_{\mathbb{H}} }_{\mathbb{Z}_4}
      \big(
        \big(
          \mathbb{T}^{ \mathbf{4}_{\mathbb{H}} }
        \big)_+
      \big)
    $};

   \end{scope}

  \node at (1.4,5.3)
    {$
      \overbrace{
        \phantom{------------------}
      }
    $};

  \begin{scope}[shift={(0,.8)}]

  \draw[<->, dashed, darkblue]
    (2.5,0)
    to[bend right=47]
    node
      {
        \colorbox{white}{
        \tiny
        \color{darkblue} \bf
        \begin{tabular}{c}
          orientifold
          \\
          action
        \end{tabular}
      }
      }
    (0,2.5);

  \end{scope}

  \begin{scope}[shift={(0, .8)}]
  \begin{scope}
  \clip (-1.8,-1.5) rectangle (4.8,4.4);
  \draw[step=3, dotted] (-3,-2) grid (6,6);
  \draw[dashed] (-3,-3) circle (1);
  \draw[dashed] (0,-3) circle (1);
  \draw[dashed] (3,-3) circle (1);
  \draw[dashed] (6,-3) circle (1);
  \draw[dashed] (-3,0) circle (1);
  \draw[dashed] (0,0) circle (1);
  \draw[dashed] (3,0) circle (1);
  \draw[dashed] (-3,3) circle (1);
  \draw[dashed] (0,3) circle (1);
  \draw[dashed] (3,3) circle (1);
  \draw[dashed] (-3,6) circle (1);
  \draw[dashed] (0,6) circle (1);
  \draw[dashed] (3,6) circle (1);
  \draw[dashed] (6,6) circle (1);

  \draw[fill=white] (0,0) circle (.07);

  \end{scope}

  \begin{scope}[shift={(0,0)}]

  \draw[fill=black] (17:.8) circle (.07);
  \draw[fill=black] (17+90:.8) circle (.07);
  \draw[fill=black] (17+180:.8) circle (.07);
  \draw[fill=black] (17+270:.8) circle (.07);

  \begin{scope}[rotate=7]

  \draw[fill=black] (17:.3) circle (.07);
  \draw[fill=black] (17+90:.3) circle (.07);
  \draw[fill=black] (17+180:.3) circle (.07);
  \draw[fill=black] (17+270:.3) circle (.07);

  \end{scope}

  \draw (17+90+16:.62)
    node[right]
      {
        {
        \hspace{-.3cm}
        \tiny
        \color{darkblue} \bf
        branes
        \hspace{-.3cm}
        }
      };
  \draw (17+180:.7)+(.58,.03)
    node[right, below]
      {
        {
        \hspace{-.3cm}
        \tiny
        \color{darkblue} \bf
        mirror branes
        \hspace{-.3cm}
        }
      };

  \end{scope}

  \begin{scope}[shift={(0,1.6)}]
  \draw (0,-3.4) node {\tiny $x_1 = 0$};
  \draw (3,-3.4) node {\tiny $x_1 = \tfrac{1}{2}$};

  \end{scope}

  \draw (-2.5,0) node {\tiny $x_2 = 0$};
  \draw (-2.5,3) node {\tiny $x_2 = \tfrac{1}{2}$};

\end{scope}

\end{scope}

 \begin{scope}[shift={(0,-.6)}] 

 \draw[->]
   (3.6,7)
   to
   node[above] {\tiny $i_\ast$}
   node[below] {\tiny $\simeq_{{}_{d_{\mathrm{fix}} \gt 0}}$ }
   (9.3,7);

  \end{scope}

\end{tikzpicture}
\end{center}

\vspace{-.3cm}

\noindent {\footnotesize \bf Figure O -- Pushforward in equivariant Cohomotopy from the vicinity of a singularity to the full toroidal orientifold}
{\footnotesize is an isomorphism on brane charges and an injection on
O-plane charges, by Prop. \ref{PushforwardOfVicinityOfSingularityToRepresentationTorus}.
Shown is a case with $G = \mathbb{Z}_{4}$, as in \hyperlink{FigureM}{\it Figure M}. All integer number of branes (black dots) are in the image
of the map, but only the O-plane at $(x_1, x_2) = (0,0)$ is in the image. }

\medskip

\noindent {\bf Local tadpole cancellation in toroidal ADE-orientifolds.}
Under the identification from Prop. \ref{PushforwardOfVicinityOfSingularityToRepresentationTorus},
the stabilized equivariant Hopf degree theorem
for representation spheres (Theorem \ref{CharacterizationOfStabilizationOfUnstableCohomotopy})
applies also to representation tori,
and hence so does Prop. \ref{TheoremLocalTadpoleCancellation},
showing now for the case of toroidal orbifolds
with ADE-singularities that the brane charges
classified by equivariant Cohomotopy are necessarily
multiples of the regular representation.
This result is visualized in \hyperlink{FigureP}{\it Figure P}:

\vspace{-.4cm}
\begin{center}
\hypertarget{FigureP}{}
\begin{tikzpicture}[scale=0.8, decoration=snake]

  \node at (1.4,5.3)
    {$
      \overbrace{
        \phantom{------------------}
      }
    $};

  \node at (1.4,8)
    {
      \tiny
      \color{darkblue} \bf
      \begin{tabular}{c}
        equivariant Cohomotopy
        \\
        of representation torus
        \\
        (orientifold Cohomotopy)
      \end{tabular}
    };

  \node (EquivariantCohomotopy) at (1.4,5.3+1.6)
    {$
      \pi^{ \mathbf{4}_{\mathbb{H}} }_{\mathbb{Z}_4}
      \big(
        \big(
          \mathbb{T}^{ \mathbf{4}_{\mathbb{H}} }
        \big)_+
      \big)
    $};

  \node (EquivariantCocycle) at (1.4,5.3+.8)
    {\tiny $
      4 \cdot [\mathbb{Z}_4/\mathbb{Z}_4]
      - 3 \cdot [\mathbb{Z}_4/1]
    $};

  \node at (1.4+8,7.8)
    {
      \tiny
      \color{darkblue} \bf
      \begin{tabular}{c}
        equivariant K-theory
        \\
        of representation torus
        \\
        = representation ring
      \end{tabular}
    };

  \node at (1.4+8,5.3)
    {$
      \overbrace{
        \phantom{--------}
      }
    $};

  \node (PlainCohomotopy) at (1.4+8,5.3+1.6)
    {$
      \mathrm{KO}_{\mathbb{Z}_4}^0
      \;\simeq\;
      \mathrm{RO}(\mathbb{Z}_4)
    $};

  \node (PlainCocycle) at (1.4+8,5.3+.8)
    {\tiny
     \raisebox{-.0cm}{
     $
     \begin{aligned}
      &
      4 \cdot \mathbf{1}
      - 3 \cdot \mathbf{4}_{\mathrm{reg}}
      \end{aligned}
    $}};

  \draw[->]
    (EquivariantCohomotopy)
    to
    node[above]
    {
      \tiny
      \color{darkblue} \bf
      stabilize and linearize
    }
    (PlainCohomotopy);
  \draw[|->] (EquivariantCocycle) to (PlainCocycle);

  \draw[<->, dashed, darkblue]
    (2.5,0)
    to[bend right=47]
    node
      {
        \colorbox{white}{
        \tiny
        \color{darkblue} \bf
        \begin{tabular}{c}
          orientifold
          \\
          action
        \end{tabular}
      }
      }
    (0,2.5);

  \begin{scope}[shift={(0, .8)}]
  \begin{scope}
  \clip (-1.8,-1.5) rectangle (4.8,4.4);
  \draw[step=3, dotted] (-3,-2) grid (6,6);
  \draw[dashed] (-3,-3) circle (1);
  \draw[dashed] (0,-3) circle (1);
  \draw[dashed] (3,-3) circle (1);
  \draw[dashed] (6,-3) circle (1);
  \draw[dashed] (-3,0) circle (1);
  \draw[dashed] (0,0) circle (1);
  \draw[dashed] (3,0) circle (1);
  \draw[dashed] (-3,3) circle (1);
  \draw[dashed] (0,3) circle (1);
  \draw[dashed] (3,3) circle (1);
  \draw[dashed] (-3,6) circle (1);
  \draw[dashed] (0,6) circle (1);
  \draw[dashed] (3,6) circle (1);
  \draw[dashed] (6,6) circle (1);

  \draw[fill=white] (0,0) circle (.07);
  \draw[fill=white] (3,0) circle (.07);
  \draw[fill=white] (0,3) circle (.07);
  \draw[fill=white] (3,3) circle (.07);

  \draw (0,3)
    node[right]
      {
        \colorbox{white}{
        \hspace{-.3cm}
        \tiny
        \color{darkblue} \bf
        O-plane
        \hspace{-.3cm}
        }
      };
  \draw (3,0)
    node[right]
      {
        \colorbox{white}{
        \hspace{-.5cm}
        \tiny
        \color{darkblue} \bf
        \begin{tabular}{c}
          mirror
          \\
          O-plane
        \end{tabular}
        \hspace{-.3cm}
        }
      };

  \draw[fill=black] (38:.8) circle (.07);
  \draw[fill=black] (38+90:.8) circle (.07);
  \draw[fill=black] (38+180:.8) circle (.07);
  \draw[fill=black] (38+270:.8) circle (.07);

  \draw[fill=black] (70:.4) circle (.07);
  \draw[fill=black] (70+90:.4) circle (.07);
  \draw[fill=black] (70+180:.4) circle (.07);
  \draw[fill=black] (70+270:.4) circle (.07);

  \end{scope}

  \begin{scope}[shift={(3,3)}]

    \draw[fill=black] (17:.7) circle (.07);
  \draw[fill=black] (17+90:.7) circle (.07);
  \draw[fill=black] (17+180:.7) circle (.07);
  \draw[fill=black] (17+270:.7) circle (.07);

  \draw (17+90:.7)
    node[right]
      {
        \colorbox{white}{
        \hspace{-.3cm}
        \tiny
        \color{darkblue} \bf
        brane
        \hspace{-.3cm}
        }
      };
  \draw (17+180:.7)+(.58,.03)
    node[right, below]
      {
        {
        \hspace{-.3cm}
        \tiny
        \color{darkblue} \bf
        mirror branes
        \hspace{-.3cm}
        }
      };

  \end{scope}

  \begin{scope}[shift={(0,1.6)}]
  \draw (0,-3.4) node {\tiny $x_1 = 0$};
  \draw (3,-3.4) node {\tiny $x_1 = \tfrac{1}{2}$};

  \end{scope}

  \draw (-2.5,0) node {\tiny $x_2 = 0$};
  \draw (-2.5,3) node {\tiny $x_2 = \tfrac{1}{2}$};

  \draw[|->] (6.8,1.5) to ++(.6,0);

  \draw (1.4+8,1.5) node
   {\footnotesize $
     \begin{array}{c}
      4 \cdot \mathbf{1}_{{}_{\mathrm{triv}}}
      \\
      - 3 \cdot \mathbf{4}_{{}_{\mathrm{reg}}}
     \end{array}
   $};

\end{scope}
\end{tikzpicture}
\end{center}

\vspace{-.6cm}

\noindent {\bf \footnotesize Figure P -- Local/twisted tadpole cancellation in a toroidal ADE-orientifold is enforced by equivariant Cohomotopy}
{\footnotesize according to Prop. \ref{PushforwardOfVicinityOfSingularityToRepresentationTorus},
which reduces to the situation in the vicinity of a single singularity,
as in \cref{LocalTadpoleCancellation}. Shown is a case with $G = \mathbb{Z}_4$ as in \hyperlink{FigureO}{\it Figure O}.}

\noindent This is the local/twisted tadpole cancellation
in toroidal ADE-orentifolds according to
\hyperlink{Table1}{\it Table 1} and \hyperlink{Table2}{\it Table 2}.

\medskip

\noindent {\bf Global/untwisted tadpole cancellation from super-differential Cohomotopy.}
This concludes our discussion of local tadpole cancellation
in global (i.e. toroidal) ADE-orientifolds implied by
C-field charge quantization in equivariant Cohomotopy.
Finally, we turn to discuss how the global/untwisted tadpole cancellation
condition on toroidal orbifolds follows from charge quantization
in super-differential equivariant Cohomotopy.
We state the concrete condition below in \eqref{KernelOfTheGlobalElmendorfStageProjection},
but first we explain how this condition arises from super-differential
refinement:

\medskip

\noindent {\bf Super-differential enhancement of unstable equivariant Cohomotopy theory.}
Given any generalized cohomology theory for charge quantization,
it is its corresponding enhancement to a \emph{differential cohomology theory} which classifies
not just the topological soliton/instanton sectors, but the actual
geometric higher gauge field content, hence including the flux densities.
For stable/abelian cohomology theories this is discussed
for instance in \cite{Freed00}\cite{Bunke12}, while in the broader
generality of unstable/non-abelian cohomology theories
this is discussed in \cite{FSS10}\cite{SSS12}\cite{FSS12}\cite{FSS15}.
For example, ordinary degree-2 integral cohomology theory 
$B U(1) \simeq B^2 \mathbb{Z}$
classifies magnetic charge sectors,
but it is its differential cohomology enhancement $\mathbf{B}U(1)_{\mathrm{conn}}$
(Deligne cohomology) which is the universal moduli for actual electromagnetic field configurations.
Similarly, plain (twisted) K-theory $K U$ and $K O$ classifies
topological RR-charge sectors, but it is differential K-theory
which classifies the actual RR-fields; see \cite{GS-AHSS}\cite{GS19A}\cite{GS19B}.

\medskip
Hence with \hyperlink{HypothesisH}{\it Hypothesis H}
we are ultimately to consider the refinement of
ADE-equivariant Cohomotopy theory $\pi^{\mathbf{4}_{\mathbb{H}}}_G$,
discussed so far,
to some \emph{differential} equivariant Cohomotopy theory,
denoted
$\big( \pi^{\mathbf{4}_{\mathbb{H}}}_G \big)_{\mathrm{conn}}$
and characterized as completing a homotopy pullback diagram
of geometric unstable cohomology theories of the
following form:
\begin{equation}
  \label{DifferentialEquivariantCohomotopyPullback}
  \hspace{-10mm}
  \raisebox{43pt}{
  \xymatrix@C=6em@R=18pt{
    \mathclap{
    \mbox{
      \tiny
           \begin{tabular}{c}
        super-differential
        \\
        unstable equivariant Cohomotopy
        \\
        \cite{OrbifoldCohomology}
        = \cite{FSS15} $\wedge$ \cite{ADE}
      \end{tabular}
    }
    }
    &
    \big(\pi^\bullet_G\big)_{\mathrm{conn}}
    \ar@{}[ddrr]|-{
      \mbox{\bf
        \tiny
        \color{darkblue}
        \begin{tabular}{c}
          {\color{black} \large
          \begin{rotate}{-140}
            $\!\!\Rightarrow$
          \end{rotate}
          }
          \\
          \\
          universal homotopy
        \end{tabular}
      }
    }
    \ar[rr]^{
      \mbox{\bf
        \tiny
        \color{darkblue}
        \begin{tabular}{c}
          forget topological data,
          \\
          retain only flux super-forms
          \\
        \end{tabular}
      }
    }
    \ar[dd]|-{
      \mbox{\bf
        \tiny
        \color{darkblue}
        \begin{tabular}{c}
          forget flux forms,
          \\
          retain underlying cocycle in
          \\
          plain equivariant Cohomotopy
        \end{tabular}
      }
      \;\;\;\;\;\;\;
      \;\;\;\;\;\;\;
      \;\;\;\;\;\;\;
    }
    &&
    \Big\{ \big( \mu_{{}_{\rm M2/M5}}\big)_G  \Big\}
    \ar[dd]|-{
      \;\;\;\;\;\;\;
      \;\;\;\;\;\;\;
      \;\;\;\;\;\;\;
      \mbox{\bf
        \tiny
        \color{darkblue}
        \begin{tabular}{c}
          inject this cocycle, thereby
          \\
          enforce 11d SuGra torsion constraint
        \end{tabular}
      }
    }
    &
    \mathclap{
    \mbox{
      \tiny
           \begin{tabular}{c}
        $G$-equivariant enhancement \cite[5]{ADE}
        \\
        of M2/M5-brane super WZW-terms
        \\
        jointly regarded as a cocycle in
        \\
        super-rational 4-Cohomotopy
        \\
        \cite[3]{FSS15}\cite[2.3]{FSS16a}\cite[(57)]{FSS19a}
      \end{tabular}
    }
    }
    \\
    \\
    \mathclap{
    \mbox{
      \tiny
       \begin{tabular}{c}
      unstable equivariant Cohomotopy
      \\
      cohomology theory
      \\
      \eqref{EquivariantCohomotopySet}
      \end{tabular}
    }
    }
    &
    \pi^\bullet_G
    \ar[rr]_-{
      \mbox{\bf
        \tiny
        \color{darkblue}
        \begin{tabular}{c}
          rationalize, i.e.:
          \\
          forget all torsion subgroups
          \\
          in homotopy/cohomology groups
        \end{tabular}
      }
    }
    &&
    \mbox{\bf $\Omega$}_G(-, \mathfrak{l}S^4_G)
    &
    \mathclap{
    \mbox{
      \tiny
           \begin{tabular}{c}
        super-rational
        \\
        unstable equivariant Cohomotopy theory
        \\
        \cite[3.2]{ADE}
      \end{tabular}
    }
    }
    }
  }
\end{equation}

\noindent
Discussing this construction $\big( \pi^{\mathbf{4}_{\mathbb{H}}}_G \big)_{\mathrm{conn}}$ in detail requires invoking concepts from
$\infty$-stacks and $L_\infty$-algebroids \cite{FSS10}\cite{SSS12},
as well as their application to super-geometric orbifolds \cite{OrbifoldCohomology},
which is beyond the scope of this article.
However, for the present purpose of seeing the global tadpole
cancellation condition arise,
all that matters are the following implications of
super-differential refinement, which we make explicit
by themselves:

\medskip

\noindent {\bf Rational flux constraints from equivariant enhancement of M2/M5-cocycle.}
The homotopy pullback construction \eqref{DifferentialEquivariantCohomotopyPullback}
amounts to equipping the rationalization of cocycles in
plain unstable equivariant Cohomotopy \eqref{EquivariantCohomotopySet}
with equivalences (connection data) to prescribed
flux super-forms in super-rational equivariant Cohomotopy theory
\cite[3.2]{ADE}. The flux super-forms relevant
for charge-quantization of the M-theory C-field
according to \hyperlink{HypothesisH}{\it Hypothesis H}
are $G$-equivariant enhancements of the joint M2/M5-brane cocycle \cite[3]{FSS15}\cite[2.3]{FSS16a}\cite[3.42]{ADE}\cite[(57)]{FSS19a}
with coefficients in the rationalized 4-sphere $\mathfrak{l}S^4$:

\vspace{-3mm}
\begin{equation}
  \label{TheM2M5Cocycle}
  \xymatrix{
    \underset{
      \mathclap{
      \mbox{\bf
        \tiny
        \color{darkblue}
        \begin{tabular}{c}
          $D = 11$, $\mathcal{N} = 1$
          \\
          super-Minkowski spacetime
        \end{tabular}
      }
      }
    }{
      \mathbb{R}^{10,1\vert \mathbf{32}}
    }\;\;\;\;
    \ar[rrrrr]^-{
      \mu_{{}_{\rm M2/M5}}
      \coloneqq
      \left(
        {
          {
            \frac{i}{2}
            \overline{\psi}\Gamma_{a_1 a_2} \psi
            \wedge
            e^{a_1} \wedge e^{a_2}
            \,,
          }
        \atop
          {
            \frac{1}{5}
            \overline{\psi}\Gamma_{a_1 \cdots a_5} \psi
            \wedge
            e^{a_1} \wedge \cdots \wedge e^{a_2}
          }
        }
      \right)
    }_-{
      \;\;\;\;\;
      \mbox{\bf
        \tiny
        \color{darkblue}
        \begin{tabular}{c}
          M2/M5-brane super-cocycle
          \\
          (joint M2/M5 WZW-term curvatures)
        \end{tabular}
      }
    }
    &&&&&
    \underset{
      \mathclap{
      \mbox{\bf
        \tiny
        \color{darkblue}
        \begin{tabular}{c}
          rationalized
          \\
          4-sphere
        \end{tabular}
      }
      }
    }{\;\;\;
      \mathfrak{l} S^4
    }
  }\;.
\end{equation}
Specifically, for
$G^{\mathrm{ADE}}$-equivariance \eqref{ADESubgroups} at ADE-singularities
$\mathbf{4}_{\mathbb{H}}$ \eqref{TheQuaternionicRepresentation},
a choice of equivariant extension of
this cocycle is a choice of extension
to an Elmendorf-stage diagram
as in \eqref{ElmedorfStageWiseHopfDegrees} --
see \cite[5]{ADE}:\footnote{
  For more general actions this involves extension to a functor on the
  \emph{orbit category}; see \cite[Lemma 5.4]{ADE}.
}
\begin{equation}
  \label{ElmendorfStagesOfEquivariantM2M5Cocycle}
  \xymatrix@R=6pt@C=5em{
    &
  (  \mathbb{R}^{10,1\vert \mathbf{32}}
    \ar@(ul,ur)|-{\;\;G^{\mathrm{ADE}}\!\!\!\!})
    \ar[rr]^{ (\mu_{{}_{\rm M2/M5}})_{G_{\mathrm{ADE}}} }
    &&
   ( \mathfrak{l}
    S^{\mathbf{4}_{\mathbb{H}}}
    \ar@(ul,ur)|-{\;\;G^{\mathrm{ADE}}\!\!\!\!})
    &
    \mbox{\bf
     \tiny
     \color{darkblue}
     \begin{tabular}{c}
       $G^{\mathrm{ADE}}$-equivariant enhancement
       \\
       of M2/M5-brane super-cocycle
     \end{tabular}
    }
    \\
    (-)^{H = 1}
    \ar@{}[dd]|-{
      \mathclap{
      \mbox{\bf
       \tiny
       \color{darkblue}
       \begin{tabular}{c}
         Elmendorf stages
         \\
         \eqref{SystemOfMapsOnHFixedSubspaces}
       \end{tabular}
      }
      }
    }
    &
    \mathbb{R}^{10,1\vert \mathbf{32}}
    \ar[rr]^{ \mu_{{}_{\rm M2/M5}} }
    &&
    \mathfrak{l}
    S^{\mathbf{4}_{\mathbb{H}}}
    &
    \mbox{\bf
      \tiny
      \color{darkblue}
      \begin{tabular}{c}
        M2/M5-brane super-cocycle
        \\
        \eqref{TheM2M5Cocycle}
      \end{tabular}
    }
    \\
    \\
    {(-)}^{H = G^{\mathrm{ADE}}}
    &
    \underset{
      \mathclap{
      \mbox{\bf
      \tiny
      \color{darkblue}
      \begin{tabular}{c}
        MK6 super-embedding
        \\
        (see \cite[Thm. 4.3]{ADE} and
        Rem. \ref{TheRoleOfMK6EndingOnM5})
      \end{tabular}
      }
      }
    }{
      \mathbb{R}^{6,1\vert \mathbf{16}}
    }
    \ar@{^{(}->}[uu]
    \ar[rr]^{ \in \{0,1\} }
    &&
    \mathfrak{l}S^0
    \ar@{^{(}->}[uu]
    &
    \mbox{\bf
      \tiny
      \color{darkblue}
      \begin{tabular}{c}
        charge at fixed planes
      \end{tabular}
    }
  }
\end{equation}
This involves a binary choice
at lowest (and hence any other, by Example \ref{FixedSubspacesOfADESingularities})
Elmendorf stage. The homotopy in the diagram \eqref{DifferentialEquivariantCohomotopyPullback}
enforces this local choice of rationalized flux
globally onto the rationalized
fluxes of the equivariant Cohomotopy cocycles. This
has two effects:

\medskip
\noindent {\bf 1. Super-differential enhancement at global Elmendorf stage implies vanishing total flux.}
Note the M2/M5-brane super-cocycle $\mu_{{}_{\rm M2/M5}}$ \eqref{TheM2M5Cocycle}
appearing at global Elmendorf stage in \eqref{ElmendorfStagesOfEquivariantM2M5Cocycle}
has vanishing bosonic flux ( $\mu_{{}_{\rm M2/M5}}\vert_{\psi = 0} = 0$ by \eqref{TheM2M5Cocycle}).
Also, the infinitesimal fermionic component $\psi$ does
not contribute to the topology seen by plain equivariant Cohomotopy
(see \cite{OrbifoldCohomology} for details). Hence
the homotopy in \eqref{DifferentialEquivariantCohomotopyPullback}
forces the underlying classes in plain equivariant Cohomotopy to
be \emph{pure torsion} at global Elmendorf stage.
But, since in compatible RO-degree
(as in Example \ref{ExamplesOfCompatibleRODegree})
the Hopf degree theorem \eqref{HopfDegreeTheorem} implies non-torsion Cohomotopy groups
at all positive Elmendorf stages \eqref{ElmedorfStageWiseHopfDegrees},
this means
that super-differential refinement \eqref{DifferentialEquivariantCohomotopyPullback}
of equivariant Cohomotopy in compatible RO-degree enforces
\emph{vanishing} Hopf degrees at global Elmendorf stage $H = 1$
\eqref{ElmedorfStageWiseHopfDegrees}.

\medskip
Explicitly, this means that the super-differential enhancement \eqref{DifferentialEquivariantCohomotopyPullback}
forces the underlying plain equivariant Cohomotopy cocycles
of ADE-orientifolds in compatible RO-degree to be in the kernel of the forgetful map $(-)^1$ \eqref{ElmedorfStageWiseHopfDegrees}
from equivariant to ordinary Cohomotopy, which projects out
the global Elmendorf stage at $H = 1$:
\begin{equation}
  \label{KernelOfTheGlobalElmendorfStageProjection}
  \xymatrix@R=1pt@C=3em{
    \overset{
      {
      \mathclap{
      \mbox{\bf
        \tiny
        \color{darkblue}
        \begin{tabular}{c}
          unstable equivariant Cohomotopy
          \\
          admitting super-differential refinement
        \end{tabular}
      }
      }
      }
    }{
      \pi^{\mathbf{4}_{\mathbb{H}}}_{G^{\mathrm{ADE}}}
      \big(
        \big(
          \mathbb{T}^{\mathbf{4}_{\mathbb{H}}}
        \big)_+
      \big)_{
        {}_{
          \underset{
            \mathclap{
            \mbox{\bf
              \tiny
              \color{darkblue}
              \begin{tabular}{c}
                ``super-differentiable''
                \\
                \eqref{DifferentialEquivariantCohomotopyPullback}
              \end{tabular}
            }
            }
          }{\mathrm{Sdiffble}}
        }}
    }
    \ar@{^{(}->}[dd]_-{ \mathrm{kernel} }
    \ar[rrrr]
    \ar@{}[ddrrrr]|<<<<<<<<<<<<<<<<<<<{ \mbox{\tiny (pb)} }
    &&&&
    \{0\}
    \ar@{^{(}->}[dd]
    \\
    \\
    \;\;\;
    \underset{
      \mathclap{
      \mbox{\bf
        \tiny
        \color{darkblue}
        \begin{tabular}{ccc}
          $\phantom{A}$ &&
          \\
       &&   equivariant Cohomotopy \eqref{EquivariantCohomotopySet}
          \\
        &&  of toroidal orbifold \eqref{RepresentationTorus}
          \\
        &&  with ADE-singularities \eqref{TheQuaternionicRepresentation}
        \end{tabular}
      }
      }
    }{
      \pi^{\mathbf{4}_{\mathbb{H}}}_{G^{\mathrm{ADE}}}
    }
    \big(
      \big(
        \mathbb{T}^{\mathbf{4}_{\mathbb{H}}}
      \big)_+
    \big)
    \;\;\; \ar[rrr]^-{
        \left\vert Q_{\mathrm{tot}}\right\vert \coloneqq (-)^1
    }_-{
        \mbox{\bf
          \tiny
          \color{darkblue}
          \begin{tabular}{c}
            project out
            \\
            global charge =
            \\
            Hopf degree at global
            Elmendorf stage
            \\
            \eqref{ElmedorfStageWiseHopfDegrees}
          \end{tabular}
        }
    }
    &&& \;\;\;\;
    \underset{
      \mathclap{
      \mbox{\bf
        \tiny
        \color{darkblue}
        \begin{tabular}{c}
          plain Cohomotopy
          \\
          of plain 4-torus
          \\
          \eqref{PlainCohomotopySet}
        \end{tabular}
      }
      }
    }{
      \pi^4
      \big(
        \big(
          \mathbb{T}^4
        \big)_+
      \big)
    }
    \ar@{}[r]|-{\simeq}
    &
    \underset{
      \mathclap{
      \mbox{\bf
        \tiny
        \color{darkblue}
        \begin{tabular}{c}
        \\
          global Hopf degree
          \\
          = net brane/O-plane charge
          \\
          \eqref{HopfDegreeTheorem}
        \end{tabular}
      }
      }
    }{
      \mathbb{Z}
    }
  }
\end{equation}

\medskip
\noindent It is now immediate, from
Theorem \ref{UnstableEquivariantHopfDegreeTheorem} and
Theorem \ref{UnstableEquivariantHopfDegreeTheoremForTori},
that this enforces the condition of vanishing net brane/O-plane charge,
precisely in the form of the global/untwisted tadpole cancellation
condition from \hyperlink{Table1}{\it Table 1} and
\hyperlink{Table2}{\it Table 2}
in the way illustrated in \hyperlink{FigureA}{\it Figure A}.

\vspace{4mm}
\noindent {\bf 2. Super-differential enhancement at lower Elmendorf stage implies choice of O-plane charge.}
The globalization via \eqref{KernelOfTheGlobalElmendorfStageProjection} of the lower $S^0$-valued Elmendorf stage
in the equivariantized M2/M5-brane cocycle \eqref{ElmendorfStagesOfEquivariantM2M5Cocycle} means to
impose the chosen charge $\in \{0,1\}$ to all O-planes,
via Prop. \ref{TheoremLocalTadpoleCancellation}
as illustrated in \hyperlink{FigureH}{\it Figure H}.
We will denote the ADE-equivariant Cohomotopy sets
which admit super-differential refinement with the choice
$-1 \in \{0,1\}$ in \eqref{ElmendorfStagesOfEquivariantM2M5Cocycle} by a
subscript $(-)_-$:
\begin{example}[\bf Super-differentiable equivariant Cohomotopy of ADE-orbifolds]
\label{SuperDifferentiableEquivariantCohomotopyOfADEOrbifolds}
Locally, the super-differentiable equivariant Cohomotopy
of the vicinity of an ADE-singularity (\hyperlink{Table5}{\it Table 5})
with respect to the choice $-1 \in \{-0,-1\}$
in the equivariant enhancement \eqref{ElmendorfStagesOfEquivariantM2M5Cocycle}
of the super-flux form \eqref{TheM2M5Cocycle} is
\begin{equation}
  \label{SuperDifferentiableLocalCohomotopyCharge}
  \pi^{\mathbf{4}_{\mathbb{H}}}_{G^{\mathrm{ADE}}}
  \big(
    \big(
    \mathbb{R}^{\mathbf{4}_{\mathbb{H}}}
    \big)^{\mathrm{cpt}}
  \big)_-
  \;=\;
  \Big\{
    \underset{
      \mbox{
        \tiny
        \color{darkblue} \bf
        \begin{tabular}{c}
          local charge structure
          \\
          (Prop. \eqref{TheoremLocalTadpoleCancellation})
        \end{tabular}
      }
    }{
      1 \cdot \mathbf{1}_{\mathrm{triv}}
      -
      N_{\mathrm{brane}} \cdot \mathbf{k}_{\mathrm{reg}}
     }
     \;\Big\vert\;
     N_{\mathrm{brane}} \in \mathbb{Z}
  \Big\}
  \,.
\end{equation}

Globally,
the super-differentiable equivariant
Cohomotopy specifically of the Kummer surface
ADE-orbifold
$\mathbb{T}^{\mathbf{4}_{\mathbb{H}}}\sslash \mathbb{Z}^{\mathrm{refl}}_2$
(Example \ref{KummerSurface}) is

\begin{equation}
  \label{SuperDifferentiableEquivariantCohomotopyOfKummerSurface}
  \underset{
    \mbox{
      \tiny
      \color{darkblue} \bf
      \begin{tabular}{c}
        ADE-equivariant Cohomotopy
        \\
        admitting super-differential lift
        \\
        \eqref{DifferentialEquivariantCohomotopyPullback}
      \end{tabular}
    }
  }{
  \pi^{\mathbf{4}_{\mathbb{H}}}_{\mathbb{Z}_2^{\mathrm{refl}}}
  \big(
    \big(
    \mathbb{T}^{\mathbf{4}_{\mathbb{H}}}
    \big)_+
  \big)_{\mathrm{Sdiffble}_-}
  }
  \;=\;
  \Big\{
    \underset{
      \mbox{
        \tiny
        \color{darkblue} \bf
        \begin{tabular}{c}
        \\
          super-differentiability
          \\
          at low Elmendorf stage
          \\
          \eqref{ElmendorfStagesOfEquivariantM2M5Cocycle}
        \end{tabular}
      }
    }{
      16 \cdot \mathbf{1}_{\mathrm{triv}}
    }
    -
    \underset{
      {
      \mbox{
        \tiny
        \color{darkblue} \bf
        \begin{tabular}{c}
                \\
          local charge structure
          \\
          (Prop. \ref{TheoremLocalTadpoleCancellation},
          Prop. \ref{PushforwardOfVicinityOfSingularityToRepresentationTorus})
        \end{tabular}
      }
      }
    }{
      N_{\mathrm{brane} } \cdot \mathbf{2}_{\mathrm{reg}}
    }
    \;\big\vert\;
    \underset{
      \mbox{
        \tiny
        \color{darkblue} \bf
        \begin{tabular}{c}
        \\
          super-differentiablity
          \\
          at global Elmendorf stage
          \\
          \eqref{KernelOfTheGlobalElmendorfStageProjection}
        \end{tabular}
      }
    }{
      2 N_{\mathrm{brane} } - 16 = 0
    }
  \Big\}.
\end{equation}
\end{example}

\section{M5/MO5 anomaly cancellation}
\label{M5MO5AnomalyCancellation}

We now apply the general discussion of equivariant Cohomotopy in
\cref{EquivariantCohomotopyAndTadpoleCancellation} to cohomotopical charge quantization
of the M-theory C-field, according to \hyperlink{HypothesisH}{\it Hypothesis H},
for compactifications of heterotic M-theory on toroidal orbifolds with ADE-singularities. The
resulting M5/MO5-anomaly cancellation is discussed in \cref{EquivariantCohomotopyChargeOfM5AtMO5}
below. In order to set the scene and to sort out some fine print,
we first discuss in \cref{HeteroticMTheoryOnADEOrbifolds}
relevant folklore regarding heterotic M-theory on ADE-orbifolds.

\subsection{Heterotic M-theory on ADE-orbifolds}
\label{HeteroticMTheoryOnADEOrbifolds}

We now explain how the singularity structure (as in \hyperlink{Table5}{\it Table 5}), which
must really be meant when speaking of MO5-planes \eqref{TheMO5} coinciding with black
M5-branes \eqref{M5Singularity}, is that  of ``$\tfrac{1}{2}\mathrm{M5}$-branes''
\eqref{TheHalfM5} \cite[2.2.7]{ADE}\cite[4]{FSS19d}; see
\hyperlink{FigureS}{\it Figure S} below.
This singularity structure  goes back to \cite[3]{Sen97} with further discussion and development in
\cite{FLO99}\cite{KSTY99}\cite{FLO00a}\cite{FLO00b}\cite{FLO00c}\cite{CabreraHananySperling19};
the type IIA perspective is considered in \cite{GKST01} and also briefly in \cite[p. 4]{KataokaShimojo02}.
We highlight the systematic picture behind the resulting {\it heterotic M-theory on ADE-orbifolds}
and its string theory duals, further below in
\hyperlink{Table7}{\it Table 7}.

\medskip

\noindent
{\bf Critique of pure $\mathrm{MO5}$-planes.} We highlight the following:
\begin{enumerate}[{\bf (i)}]
\vspace{-2mm}
\item Seminal literature on M-theoretic orientifolds speaks of
M5-branes parallel and/or coincident to {\it MO5} {\it singularities}
\cite{DasguptaMukhi95}\cite[3.3]{Witten95b}\cite[2.1]{Hori98},
namely to Euclidean $\mathbb{Z}_2$-orientifolds \eqref{EuclideanGSpace}
of the form (see \cite[2.2.2]{ADE}):

\vspace{-.6cm}

\begin{equation}
  \label{TheMO5}
  {\color{darkblue}\tiny \bf \mathrm{MO5}}
  \phantom{AAAAAA}
  \mathbb{R}^{5,1}
 \; \xymatrix{\ar@{^{(}->}[r]&}
  \mathbb{R}^{5,1}
  \times
  \xymatrix{
    \mathbb{R}^{\mathbf{5}_{\mathrm{sgn}}}
    \ar@(ul,ur)|-{\, \mathbb{Z}_2}
  }
  \,,
\end{equation}
where $\mathbb{R}^{\mathbf{5}_{\mathrm{sgn}}}$
is the Euclidean singularity \eqref{EuclideanGSpace} of the 5-dimensional sign representation of the group $\mathbb{Z}_2$.

\vspace{-2mm}
\item But $\sfrac{1}{2}$BPS M5-brane solutions of $D=11$ supergravity themselves have been classified
\cite[8.3]{MF10} and found to be given, in their singular far horizon limit \cite[3]{AFCS99},
by singularities for finite subgroups $G^{\mathrm{ADE}} \subset \mathrm{SU}(2) \simeq \mathrm{Sp}(1)$ \eqref{ADESubgroups} of the type

\vspace{-.6cm}

\begin{equation}
  \label{M5Singularity}
  {\color{darkblue}\tiny \mathrm{M5}}
  \phantom{AAAAAA}
  \mathbb{R}^{5,1}
 \; \xymatrix{\ar@{^{(}->}[r]&}
  \mathbb{R}^{5,1}
  \times
  \mathbb{R}^1
  \times
  \xymatrix{
    \mathbb{R}^{\mathbf{4}_{\mathbb{H}}}
    \ar@(ul,ur)|{\;\;\; G^{\mathrm{ADE}} }
  }
  \,,
\end{equation}
where the
last factor is an ADE-singularity \eqref{TheQuaternionicRepresentation}.

\vspace{-3mm}
\item As orbifold singularities, this coincides with the far horizon geometry
of coincident KK-monopole solutions to 11d supergravity
(e.g. \cite[(47)]{IMSY98}\cite[(18)]{Asano00};  see \cite[2.2.5]{ADE})

\vspace{-.3cm}

\begin{equation}
  \label{MK6Singularity}
  {\color{darkblue}\tiny \mathrm{MK6}}
  \phantom{AAAAAA}
  \mathbb{R}^{6,1}
   \; \xymatrix{\ar@{^{(}->}[r]&}
  \mathbb{R}^{6,1}
  \times
  \xymatrix{
    \mathbb{R}^{\mathbf{4}_{\mathbb{H}}}
    \ar@(ul,ur)|{\;\;\; G^{\mathrm{ADE}} }
  }
  \,,
\end{equation}
which, from the perspective of type IIA theory, reflects the fact
that NS5-branes are domain walls inside D6-branes
(e.g. \cite[p. 5]{EGKRS00}, see \cite[3.3.1. 3.3.2]{Fazzi17}).
This is illustrated by the central dot on the vertical axis in \hyperlink{FigureS}{\it Figure S}.  Hence for the special case that
$G^{\mathrm{ADE}} = \mathbb{Z}^{\mathrm{refl}}_2$ \eqref{PointReflectionSubgroup},
this yields the product $\mathbb{R}^1 \times \mathbb{R}^{\mathbf{4}_{\mathrm{sgn}}}$ of
the 4-dimensional sign representation with the trivial 1-dimensional representation, instead of the
5-dimensional sign representation in \eqref{TheMO5}.

\vspace{-3mm}
\item
In order to allow M5-singularities \eqref{M5Singularity}
to coincide with MO5-singularities \eqref{TheMO5}
we have to consider intersecting a
$\sfrac{1}{2}$BPS 5-brane solution with an $\mathrm{MO9}$ locus
fixed by a Ho{\v r}ava-Witten involution $\mathbb{Z}_2^{\mathrm{HW}}$
(\cite{HoravaWitten96a}, see \cite[2.2.1]{ADE}):

\vspace{-4mm}
\begin{equation}
  \label{TheMO9}
  {\color{darkblue}\tiny \mathrm{MO9}}
  \phantom{AAAAAA}
  \mathbb{R}^{9,1}
   \; \xymatrix{\ar@{^{(}->}[r]&}
  \mathbb{R}^{9,1}
  \times
  \xymatrix{
    \mathbb{R}^{\mathbf{1}_{\mathrm{sgn}}}
    \ar@(ul,ur)|-{\, \mathbb{Z}^{\mathrm{HW}}_2\!\!\!\!\!}
  }
  \,.
\end{equation}

\vspace{-3mm}
\item This intersection is called the $\tfrac{1}{2}\mathrm{M5}$ in \cite[2.2.7]{ADE}\cite[4]{FSS19d}

\vspace{-.5cm}

\begin{equation}
  \label{TheHalfM5}
  {\color{darkblue}
    \tfrac{1}{2}\mathrm{M5}
  }
  = \mathrm{MK6} \cap \mathrm{MO9}
  \phantom{AAAAAA}
  \mathbb{R}^{5,1}
  \; \xymatrix{\ar@{^{(}->}[r]&}
  \mathbb{R}^{5,1}
  \times
  \xymatrix{
    \mathbb{T}^{\mathbf{1}_{\mathrm{sgn}}}
    \ar@(ul,ur)^{\mathbb{Z}_2^{\mathrm{HW}}}
  }
  \xymatrix{
    \times
    \ar@[white]@(ul,ur)^{\times}
  }
  \xymatrix{
    \mathbb{T}^{\mathbf{4}_{\mathbb{H}}}
    \ar@(ul,ur)^{ G^{\mathrm{ADE}} }
  }
\end{equation}
since its type IIA incarnation
is known as the $\tfrac{1}{2}\mathrm{NS5}$
\cite[6]{GKST01}\cite[p. 18]{ApruzziFazzi17}.
This is the brane configuration thought to geometrically engineer
$D=6$, $\mathcal{N} = (1,0)$ field theories \cite{HananyZaffaroni97}\cite{HKLY15}\cite[6]{DHTV14}.
\end{enumerate}
\vspace{.0cm}

\noindent \hspace{-5pt}
$
  \mbox{
    \hyperlink{FigureR}{}
    \begin{minipage}[l]{9.3cm}
Since the fixed point set of the
toroidal orbifolds \eqref{RepresentationTorus}
for both the $\tfrac{1}{2}\mathrm{M5}$ \eqref{TheHalfM5}
and the $\mathrm{MO5}$ \eqref{TheMO5}
is the same set \eqref{RepresentationTorusOfSignRep}
of 32 points, all arguments about
$\mathrm{MO5}$ \eqref{TheMO5} which depend only on the
set of isolated orientifold fixed points, such as in \cite{DasguptaMukhi95}\cite[3.3]{Witten95b}\cite[2.1]{Hori98},
apply to $\tfrac{1}{2}\mathrm{M5}$ \eqref{TheHalfM5} as well.
But the $\tfrac{1}{2}\mathrm{M5}$ orientifold has in addition
fixed lines, namely the $\mathrm{MK6}$ loci, and fixed 4-planes, namely
the $\mathrm{MO9}$, as shown on the right of
\hyperlink{FigureS}{Figure S}.
This reflects the fact that, by the classification of \cite[8.3]{MF10},
 the black $\mathrm{M5}$ not only may, but must appear as a domain wall inside an $\mathrm{MK6}$ singular locus.

We {\bf conclude} from this that:
\emph{The $\tfrac{1}{2}\mathrm{M5}$
\eqref{TheHalfM5} orientifold is the
correct model of orientifolded M5/MO5 geometry, while
the pure $\mathrm{MO5}$ \eqref{TheMO5} is
just its restriction along the diagonal subgroup inclusion
\eqref{Z2ReflHW}, as shown in \hyperlink{FigureR}{\it Figure R}}
\end{minipage}
  }
  \phantom{AA}
  \raisebox{70pt}{\small
  \xymatrix@C=-5pt@R=-1pt{
    &
    \mathclap{
    \overset{
      \mbox{\bf
        \tiny
        \color{darkblue}
        \begin{tabular}{c}
          orientifold
          \\
          subgroup
          \\
          ${\phantom{A}}$
        \end{tabular}
      }
    }{
      H \subset G
    }
    }
    &
    &
    &
    &
    \mathclap{
    \overset{
      \mbox{\bf
        \tiny
        \color{darkblue}
        \begin{tabular}{c}
          fixed/singular
          subspace \eqref{FixedLoci}
          \\
          ${\phantom{A}}$
        \end{tabular}
      }
    }{
      \mathbb{R}^{5,1} \times
      \big(
        \mathbb{R}^{
          \mathbf{1}^{\mathrm{HW}}_{\mathrm{sgn}}
          +
          \mathbf{4}^{\mathrm{ADE}}_{\mathbb{H}}
        }
      \big)^H
    }
    }
    \\
    &
      \mathbb{Z}_2^{\mathrm{HW}}
      \times
      G^{\mathrm{ADE}}
    &&&&
    \overset{
      \mbox{\bf
        \tiny
        \color{darkblue}
        $\tfrac{1}{2}\mathrm{M5}$
      }
    }{
      \mathrlap{\phantom{\vert \atop \vert}}
      \mathbb{R}^{5,1}
    }
    \\
    \mathbb{Z}_2^{\mathrm{HW}}
    \ar@{^{(}->}[ur]
    &
    &
    \mathbb{Z}_2^{\mathrm{refl}}
    \ar@{^{(}->}[ul]
    &
    {\phantom{AA}}
    &
    \overset{
      \mbox{\bf
        \tiny
        \color{darkblue}
        \begin{tabular}{c}
          MO9
        \end{tabular}
      }
    }{
      \mathrlap{\phantom{\vert \atop \vert}}
      \mathbb{R}^{9,1}
    }
    &&
    \overset{
      \mbox{\bf
        \tiny
        \color{darkblue}
        \begin{tabular}{c}
          MK6
        \end{tabular}
      }
    }{
      \mathrlap{\phantom{\vert \atop \vert}}
      \mathbb{R}^{6,1}
    }
    \\
    &
    \mathclap{\phantom{\vert \atop \vert}}
    \mathbb{Z}_2^{ \mathrm{refl}+\mathrm{HW} }
    \ar@{^{(}->}[uu]|-{\mathrm{diag}}
    &&&&
    \underset{
      \mbox{\bf
        \tiny
        \color{darkblue}
        $\mathrm{MO}5$
      }
    }{
      \mathrlap{\phantom{\vert \atop \vert}}
      \mathbb{R}^{5,1}
    }
    \ar@{^{(}->}[ul]
    \ar@{_{(}->}[ur]
    \\
    &
    &
    &&&
    \\
    &
    1
    \ar@/^1pc/@{^{(}->}[uuul]
    \ar@{^{(}->}[uu]
    \ar@/_1pc/@{_{(}->}[uuur]
    &&&&
    \\
    \mathrlap{
    \!\!\!\!\!\!\!\!\!\!
    \mbox{
      \begin{minipage}[l]{7cm}
      {\footnotesize \bf
      Figure R --
      Fixed subspaces in the $\tfrac{1}{2}\mathrm{M5}$-singularity
      \eqref{TheMO5}  }
      {\footnotesize with MO5 \eqref{TheMO5} in the intersection
       of MK6 \eqref{MK6Singularity} with MO9 \eqref{TheMO9},
       illustrated in \hyperlink{FigureS}{\it Figure S}. }
      \end{minipage}
    }
    }
  }
  }
$

\vspace{2mm}
\begin{equation}
  \label{Z2ReflHW}
\xymatrix{
    \underset{
      \mbox{\bf
        \tiny
        \color{darkblue}
        \eqref{TheMO5}
      }
    }{
      \mathbb{Z}_2^{\mathrm{refl}+\mathrm{HW}}
    }
    \; \ar@{^{(}->}[rr]^-{\small  \mathrm{diag} }
    &&
    \underset{
      \mbox{\bf
        \tiny
        \color{darkblue}
        \eqref{TheMO9}
      }
    }{
      \mathbb{Z}^{\mathrm{HW}}_2
    }
    \times
    \underset{
      \mbox{\bf
        \tiny
        \color{darkblue}
        \eqref{PointReflectionSubgroup}
      }
    }{
      \mathbb{Z}^{\mathrm{refl}}_2
    }
  \;  \ar@{^{(}->}[rr]
    &&
    \underset{
      \mbox{\bf
        \tiny
        \color{darkblue}
        \eqref{TheMO9}
      }
    }{
      \mathbb{Z}_2^{\mathrm{HW}}
    }
      \times
    \underset{
      \mbox{\bf
        \tiny
        \color{darkblue}
        \eqref{ADESubgroups}
      }
    }{
      G^{\mathrm{ADE}}
    }
}.
\end{equation}

\noindent In summary, this data arranges into a short exact sequence
of orbi-/orienti-fold group actions (as in \cite[p. 4]{DFM11})
\begin{equation}
 \label{OrbiOrientifoldGroupSequence}
\hspace{-4mm}
  \xymatrix@C=2.5em@R=0pt{
    1
    \ar[r]
    &
    \underset{
      \mbox{\bf
        \tiny
        \color{darkblue}
        \begin{tabular}{c}
          \phantom{a}
          \\
          orbifold
        \end{tabular}
      }
    }{
      G^{\mathrm{ADE}}
    }
    \ar@{^{(}->}[rr]^-{
      \mbox{
        \tiny
        \begin{tabular}{c}
          index-2 subgroup
        \end{tabular}
      }
    }
    &&
    \overset{
      \{\mathrm{e}, \sigma\}
    }{
      \overbrace{
        \mathbb{Z}_2^{\mathrm{HW}}
      }
    }
    \times
    G^{\mathrm{ADE}}
    \ar[rr]^-{
      \scalebox{.7}{
      $
      \begin{aligned}
        (\mathrm{e}, q) & \mapsto (\mathrm{e}, + q)
        \\
        (\sigma, q) & \mapsto ( R , - q )
      \end{aligned}
      $
      }
    }_-{ \simeq }
    &
    \underset{
      \mathclap{
      \mbox{\bf
        \tiny
        \color{darkblue}
        \begin{tabular}{c}
          \phantom{a}
          \\
          ---\!---\!---\!---\!---\!---\!---\!---\!---\!---\!---
          orbi-orientifold
          ---\!---\!---\!---\!---\!---\!---\!---\!---\!---\!---
        \end{tabular}
      }
      }
    }
    {\phantom{\mathclap{A}}}
    &
    \overset{
      \{\mathrm{e}, R\}
    }{
      \overbrace{
        \mathbb{Z}_2^{\mathrm{HW} + \mathrm{refl}}
      }
    }
    \times
    G^{\mathrm{ADE}}
    \ar@{->>}[r]
    &
    \underset{
      \mbox{\bf
        \tiny
        \color{darkblue}
        \begin{tabular}{c}
          \phantom{A}
          \\
          orientifold
        \end{tabular}
      }
    }{
      \mathbb{Z}_2^{\mathrm{HW} + \mathrm{refl}}
    }
    \ar[r]
    &
    1
    \\
    &
    \mathbb{R}^{\mathbf{1}_{\mathrm{triv}} + \mathbf{4}_{\mathbb{H}}}
    &&
    \mathbb{R}^{ \mathbf{1}_{\mathrm{sgn}} + \mathbf{4}_{\mathbb{H}} }
    &&
    \mathbb{R}^{ \mathbf{5} }
    &&
    \mathbb{R}^{ \mathbf{5}_{\mathrm{sgn}} }
  }
\end{equation}

\noindent This situation is illustrated by the following figure:

\vspace{3mm}
{\small
\hypertarget{FigureS}{}
\begin{tabular}{cc}
\begin{tabular}{|cc||c|c|}
  \hline
  \multicolumn{2}{|c||}{
    $\mbox{\bf Orientifold}$
  }
  &
  $\mathclap{\phantom{\vert \atop \vert}}
  \mathrm{\bf MO5}$ & $\tfrac{1}{2}\mathrm{\bf M5}$
  \\
  \hline
  \hline
  \begin{tabular}{c}
    Global quotient
    \\
    group
  \end{tabular}
    &
  $G=$
    &
  $\mathbb{Z}_2$
    &
  $\mathclap{\phantom{A \atop A}}$
  $\mathbb{Z}_2^{\mathrm{HW}} \times G^{\mathrm{ADE}}$
  \\
  \hline
  \begin{tabular}{c}
    Global quotient
    \\
    group action
  \end{tabular}
  &
  $\xymatrix{ \mathbb{T}^V \ar@(ul,ur)^G } = $
  &
  $
  \xymatrix{
    \mathbb{T}^{\mathbf{5}_{\mathrm{sgn}}}
    \ar@(ul,ur)^{\mathbb{Z}_2}
  }
  $
  &
  $
  \xymatrix{
    \mathbb{T}^{\mathbf{1}_{\mathrm{sgn}}}
    \ar@(ul,ur)^{\mathbb{Z}_2^{\mathrm{HW}}}
  }
  \xymatrix{
    \times
    \ar@[white]@(ul,ur)^{\times}
  }
  \xymatrix{
    \mathbb{T}^{\mathbf{4}_{\mathrm{sgn}}}
    \ar@(ul,ur)^{ G^{\mathrm{ADE}} }
  }
  $
  \\
  \hline
  \begin{tabular}{c}
    Fixed/singular
    \\
    points
  \end{tabular}
  &
  $\left( T^V\right)^G = $
  &
  \multicolumn{2}{c|}{ $\{0,\tfrac{1}{2}\}^5 = \overline{32}$ }
  \\
  \hline
  \multicolumn{2}{|c||}{
    \begin{tabular}{c}
      Far horizon-limit
      \\
      of M5 SuGra solution?
    \end{tabular}
  }
  &
  no
  & yes
  \\
  \hline
\end{tabular}
   \hspace{-.1cm}
   \scalebox{.76}{
   \raisebox{-96pt}{
   \includegraphics[width=.5\textwidth]{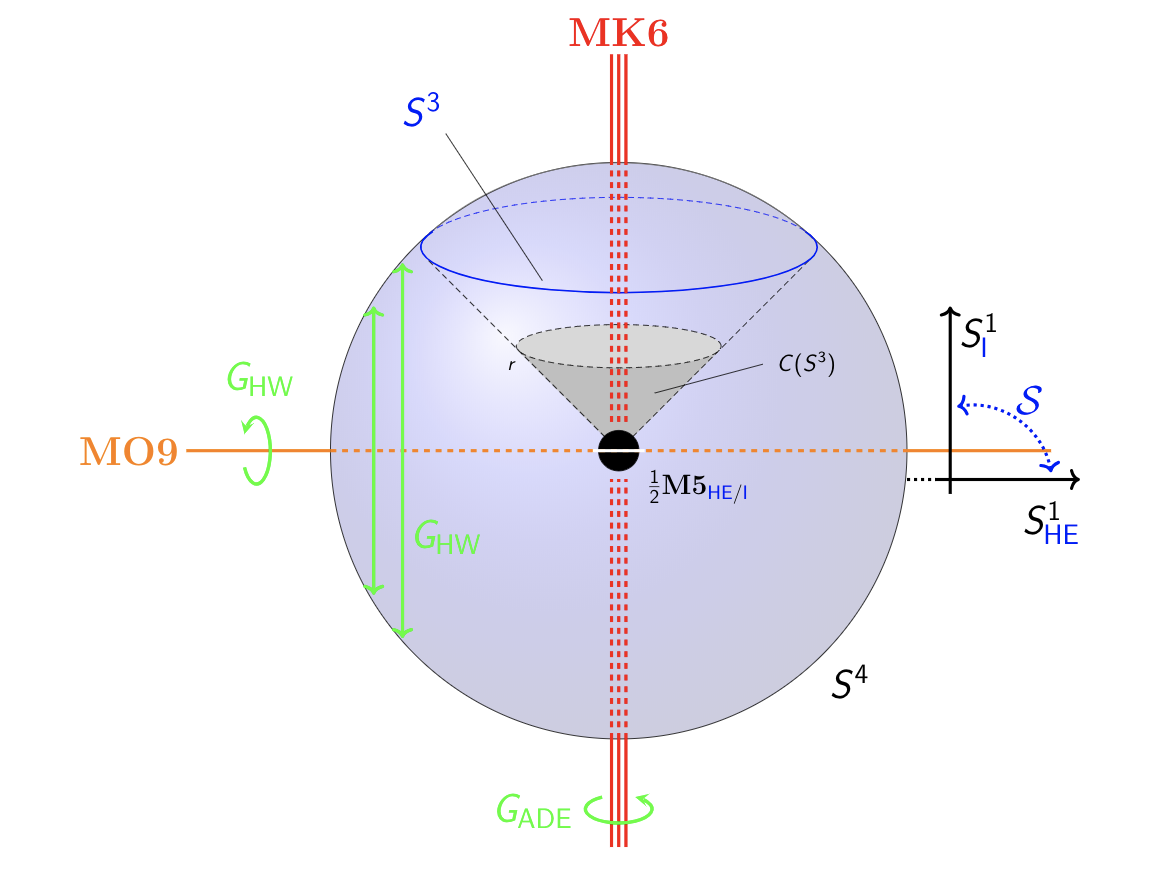}
   }}
\end{tabular}
}

\noindent {\footnotesize \bf Figure S
-- Singularity structure of heterotic M-theory on ADE-singularities},
{\footnotesize as in \hyperlink{FigureR}{Figure R}, \cite[2.2.2, 2.2.7]{ADE}.
The corresponding toroidal orbifolds
(as per \hyperlink{Table5}{\it Table 5})
are illustrated in \hyperlink{FigureV}{\it Figure V} and
\hyperlink{Table8}{\it Table 8}.}

\vspace{5mm}
\noindent {\bf $\mathrm{O}^0$-planes and M2-brane CS level.}
There is one more ingredient to the $G$-space structure
of heterotic M-theory on ADE-orbifolds (see \hyperlink{Table7}{\it Table 7}
below for the full picture): While the MO5-planes \eqref{TheMO5}
are supposed to be the M-theory lifts of the charged $\mathrm{O4}^{\pm}$-planes
\cite[3]{Hori98}\cite[III.A]{Gimon98}\cite[3.1.1]{HananyKol00},
the M-theory lift of the un-charged $\mathrm{O4}^0$-planes
(see \hyperlink{FigureOP}{\it Figure OP})
involves one more group action on spacetime \cite[III.B]{Gimon98},
being rotation of the circle fiber in M/IIA-duality,
which we hence indicate as follows:

\vspace{-.5cm}

\begin{equation}
  \label{TheIIAZero}
  {\color{darkblue}\tiny  \bf \mathrm{IIA}^0}
  \phantom{AAAAAA}
  \mathbb{R}^{9,1}
  \times \varnothing
   \; \xymatrix{\ar@{^{(}->}[r]&}
  \mathbb{R}^{9,1}
  \times
  \xymatrix{
    S^1
    \ar@(ul,ur)|-{\, \mathbb{Z}^{\mathrm{rot}}_k \!\!\!}
  }
  .
\end{equation}
Here on the right we have the circle regarded as a
$\mathbb{Z}_k$-space (\cref{EquivariantCohomotopyAndTadpoleCancellation})
via rigid rotation by multiples of $2 \pi/k$, for any
$k \in \mathbb{Z} \setminus \{0\}$.
This is of course a free action
(in particular, not a representation sphere \eqref{RepSpheres})
hence with empty fixed subspace \eqref{FixedLoci},
whence the superscript $(-)^0$ and the 
empty set $\varnothing$ of fixed points in \eqref{TheIIAZero}.
But passing along the unique $\mathbb{Z}_k^{\mathrm{rot}}$-equivariant function
\eqref{EquivariantFunction}
\begin{equation}
  \label{KKReductionOnShiftCircle}
  \xymatrix{
    S^1
    \ar@(ul,ur)|-{\, \mathbb{Z}^{\mathrm{rot}}_k\!\!\!}
    \ar[rrr]^-{ \mathrm{KK}_{S^1_{\mathrm{rot}}} }_-{
      \mbox{
        \tiny
        \color{darkblue} \bf
        \begin{tabular}{c}
          KK-reduction on
          $S^1_{\mathrm{rot}}$
        \end{tabular}
      }
    }
    &&&
    \ast
    \ar@(ul,ur)|-{\, \mathbb{Z}^{\mathrm{rot}}_k \!}
  }
\end{equation}
from the circle to the point $\ast$ with its
necessarily trivial $\mathbb{Z}^{\mathrm{rot}}_k$-action,
as befits KK-reduction from M-theory to type IIA string theory
(see \cite{BSS18} for discussion in the context of \hyperlink{HypothesisH}{\it Hypothesis H}), we
obtain a non-empty fixed subspace:
\vspace{-2mm}
\begin{equation}
  \label{TheIIA}
  {\color{darkblue}\tiny  \bf \mathrm{IIA}}
  \phantom{AAAAAA}
  \mathbb{R}^{9,1}
  \; \xymatrix{\ar@{^{(}->}[r]&}
  \mathbb{R}^{9,1}
  \times
  \xymatrix{
    \ast
    \ar@(ul,ur)|-{\, \mathbb{Z}^{\mathrm{rot}}_k\!}
  }
  .
\end{equation}
In these terms, we may phrase the core of
M/IIA duality as saying that
\begin{center}
\emph{The lift of $\mathrm{IIA}$ \eqref{TheIIA}
through $\mathrm{KK}_{S^1_{\mathrm{rot}}}$ \eqref{KKReductionOnShiftCircle}
is $\mathrm{IIA}^0$ \eqref{TheIIAZero}}.
\end{center}
Notice in the case that the global 11d-spacetime is $\mathrm{AdS}_3$
times $S^7$ regarded as an $S^1_{\mathrm{rot}}$-fibration
$$
  \xymatrix@R=10pt{
    S^1\ar@(ul,ur)|-{\, \mathbb{Z}^{\mathrm{rot}}_k \!\!\!} \ar[r]
    &
    S^7
    \ar@(ul,ur)|-{\, \mathbb{Z}_k^{\mathrm{rot}}  \!\!\!}
    \ar[r]
        & \mathbb{C}P^3
  }
$$
the order $k$ of $\mathbb{Z}^{\mathrm{rot}}_k$ in \eqref{TheIIAZero}
is the level of the dual 3d Chern-Simons-matter theory \cite{ABJM08}.

\medskip
The argument in \cite[III.B]{Gimon98}, together with our discussion above,
suggests that the analogous statement for $\mathrm{O4}^0$-planes is this:

\vspace{-2mm}
\begin{center}
\emph{The lift of $\mathrm{O4}^0$
through $\mathrm{KK}_{S^1_{\mathrm{rot}}}$  \eqref{KKReductionOnShiftCircle}
is $\mathrm{MO5}^0$ \eqref{TheMO5Zero}
}.
\end{center}

\vspace{-2mm}
\noindent Hence we take $\mathrm{MO5}^0$ to be the following $G$-space/orbifold, combining $\mathrm{MO5}$ \eqref{TheMO5} with $\mathrm{IIA}^0$ \eqref{TheIIAZero}:

\vspace{-.2cm}

\begin{equation}
  \label{TheMO5Zero}
  {\color{darkblue}\tiny  \bf \mathrm{MO5}^0}
  \phantom{AAAAAA}
  \mathbb{R}^{4,1} \times \varnothing
 \; \xymatrix{\ar@{^{(}->}[r]&}
  \mathbb{R}^{4,1}
  \times
  \xymatrix{
    S^1
    \ar@(ul,ur)|-{\, \mathbb{Z}^{\mathrm{rot}}_k\!\!}
  }
  \times
  \xymatrix{
    \mathbb{T}^{\mathbf{5}_{\mathrm{sgn}}}
    \ar@(ul,ur)^-{\, \mathbb{Z}_2^{\mathrm{refl}+\mathrm{HW}}}
  }
  \,.
\end{equation}
As before in \eqref{TheIIAZero}, the fixed subspace of the diagonal group action (now for $k =2$, as in \cite[(3.2)]{Gimon98})
$$
  \mathbb{Z}_2^{\mathrm{refl}+\mathrm{rot}+\mathrm{HW}}
  \; \xymatrix{\ar@{^{(}->}[r]^{\mathrm{diag}}&}
  \mathbb{Z}_2^{\mathrm{refl}+\mathrm{HW}}
  \times
  \mathbb{Z}_2^{\mathrm{rot}}
$$
in \eqref{TheMO5Zero} is actually empty,
since the action of $\mathbb{Z}_2^{\mathrm{rot}}$
and hence that of $\mathbb{Z}_2^{\mathrm{refl}+\mathrm{HW}+\mathrm{rot}}$
is free, whence the superscript $(-)^0$.
But, as before in \eqref{TheIIA}, under M/IIA KK-reduction
\eqref{KKReductionOnShiftCircle} we have an equivariant projection
map to the orbifold
\begin{equation}
  \label{TheO4Zero}
  {\color{darkblue}\tiny \bf \mathrm{O4}}^0
  \phantom{AAAAAA}
  \mathbb{R}^{4,1}
 \; \xymatrix{\ar@{^{(}->}[r]&}
  \mathbb{R}^{4,1}
  \times
  \xymatrix{
    \ast
    \ar@(ul,ur)|-{\, \mathbb{Z}^{\mathrm{rot}}_k\!\!\!}
  }
  \times
  \xymatrix{
    \mathbb{T}^{\mathbf{5}_{\mathrm{sgn}}}
    \ar@(ul,ur)^-{\, \mathbb{Z}_2^{\mathrm{refl}+\mathrm{HW}}}
  }
  \,,
\end{equation}
with non-empty fixed/singular subspace being the $\mathrm{O4}$-worldvolume --
which is thereby exhibited as being un-charged, as its lift to M-theory
in in fact non-singular.

\medskip
In the same manner, there is
the analogous $\mathbb{Z}_k^{\mathrm{rot}}$-resolution
of the $\mathrm{MK6}$-singularity \eqref{MK6Singularity}
\begin{equation}
  \label{TheMK6Zero}
  {\color{darkblue}\tiny \bf \mathrm{MK6}^0}
  \phantom{AAAAAA}
  \mathbb{R}^{6,1} \times \varnothing
  \; \xymatrix{\ar@{^{(}->}[r]&}
  \mathbb{R}^{5,1}
  \times
  \xymatrix{
    S^1
    \ar@(ul,ur)|-{\, \mathbb{Z}^{\mathrm{rot}}_k \!\!\!}
  }
  \times
  \xymatrix{
    \mathbb{T}^{ \mathbf{4}_{\mathbb{H}} }
    \ar@(ul,ur)|-{\;\; G^{\mathrm{ADE}} \!\! }
  }
  ,
\end{equation}
as well as of the $\mathrm{MO9}$-singularity \eqref{TheMO9}:

\vspace{-.2cm}

\begin{equation}
  \label{TheMO9Zero}
  {\color{darkblue}\tiny \bf \mathrm{MO9}^0}
  \phantom{AAAAAA}
  \mathbb{R}^{9,1} \times \varnothing
   \; \xymatrix{\ar@{^{(}->}[r]&}
  \mathbb{R}^{8,1}
  \times
  \xymatrix{
    S^1
    \ar@(ul,ur)|-{\, \mathbb{Z}^{\mathrm{rot}}_k\!\!\!}
  }
  \times
  \xymatrix{
    \mathbb{T}^{\mathbf{1}_{\mathrm{sgn}}}
    \ar@(ul,ur)|-{\, \mathbb{Z}_2^{\mathrm{HW}}\!\!\!\!\!}
  }
  .
\end{equation}
The reduction of the latter along $\mathrm{KK}_{S^1_{\mathrm{rot}}}$
\eqref{KKReductionOnShiftCircle} is

\begin{equation}
  \label{TheO8Zero}
  {\color{darkblue}\tiny  \bf \mathrm{O8}^0}
  \phantom{AAAAAA}
  \mathbb{R}^{8,1}
   \; \xymatrix{\ar@{^{(}->}[r]&}
  \mathbb{R}^{8,1}
  \times
  \xymatrix{
    \ast
    \ar@(ul,ur)|-{\, \mathbb{Z}^{\mathrm{rot}}_k}
  }
  \times
  \xymatrix{
    \mathbb{T}^{\mathbf{1}_{\mathrm{sgn}}}
    \ar@(ul,ur)|-{\, \mathbb{Z}_2^{\mathrm{HW}}\!\!\!\!\!}
  }
  .
\end{equation}

\noindent In summary, {\it the full singularity structure of
heterotic M-theory on ADE-orbifolds},
such as to admit
\begin{enumerate}[{\bf (i)}]
\vspace{-2mm}
\item  black M5-branes coinciding with MO5-planes
and
\vspace{-2mm}
\item  the $\mathrm{MO5}^0$-lift of $\mathrm{O4}^0$-planes
\end{enumerate}
\vspace{-2mm}
is as
shown in \hyperlink{Table7}{\it Table 7}.

\vspace{-4mm}
\begin{center}
\hypertarget{Table7}{}
\begin{tikzpicture}

  \draw (0,0) node
    {\footnotesize
      \begin{tabular}{c}
        M5-branes at
        \\
        MO9-planes intersecting
        \\
        ADE-singularities in
        \\
        M-theory on
        \\
        \raisebox{-20pt}{
        \fbox{
        $
         \xymatrix{
           S^{\mathbf{1}_{\mathrm{sgn}}}
           \ar@(ul,ur)|<<<<{\color{darkblue} \bf {}_{\mathbb{Z}_2^{\mathrm{HW}}} }
         }
         \!\!\times\!\!\!
         \xymatrix{
           \mathbb{T}^{ \mathbf{4}_{\mathbb{H}} }
           \ar@(ul,ur)|<<<<{\color{darkblue} \bf {}_{G^{\mathrm{ADE}}} }
         }
         \!\times\!
         \xymatrix{
           S^1
           \ar@(ul,ur)|<<<<{\color{darkblue} \bf {}_{\mathbb{Z}^{\mathrm{rot}}_k} }
         }
        $
        }
        }
        $\mathrlap{
          \mbox{
          \hspace{-.4cm}
          \raisebox{-10pt}{
          {\tiny
          \begin{tabular}{l}
            (\cite[3]{Sen97}, \\ \cite{FLO99}, \\ \cite{KSTY99} )
          \end{tabular}
        }
        }} }$
      \end{tabular}
    };

  \begin{scope}[shift={(0,-.1)}]

  \draw[->]
    (0,-1.5)
      to
      node
      {
        \small
        \colorbox{white}{\tiny  \color{darkblue}
          \begin{tabular}{c}  \bf
            reduction on
            \\
            $S^1_{\mathrm{ADE}}$
          \end{tabular}
        }
      }
    (0,-2.9);

  \draw[->]
    (1.5,-1.5)
      to
      node
      {
        \small
        \colorbox{white}{
          \tiny
          \color{darkblue} \bf
          \begin{tabular}{c}
            reduction on
            \\
            $S^1_{\mathrm{rot}}$
            \eqref{KKReductionOnShiftCircle}
          \end{tabular}
        }
        $\mathrlap{ \mbox{ \hspace{.5cm} \tiny (\cite{HoravaWitten96a}) } }$
      }
    (4,-2.9);

  \draw[->]
    (-1.5,-1.5)
      to
      node
      {
        \small
        \colorbox{white}{\tiny  \color{darkblue}
          \begin{tabular}{c}  \bf
            reduction on
            \\
            $S^1_{\mathrm{HW}}$
          \end{tabular}
        }
      }
    (-4,-2.9);

  \end{scope}

  \begin{scope}[shift={(0,.4)}]

  \draw (0,-5) node
    {\footnotesize
      \begin{tabular}{c}
        NS5-branes at
        \\
        O8-planes intersecting
        \\
        D6-branes in
        \\
        $\mathrm{I}'$-theory on
        \\
        \raisebox{-20pt}{
        \fbox{
        $
         \xymatrix{
           S^{\mathbf{1}_{\mathrm{sgn}}}
           \ar@(ul,ur)|<<<<{\color{darkblue} \bf {}_{\;\; \mathbb{Z}_2^{\mathrm{HW}}} }
         }
         \!\times \!
         \xymatrix{
           S^1
           \ar@(ul,ur)|<<<<{\color{darkblue} \bf  {}_{\mathbb{Z}_2^{\mathrm{rot}}} }
         }
        $
        }
        }
        $\mathrlap{\mbox{
          \raisebox{-10pt}{
          {\tiny (\cite{GKST01})}
          }
        }}$
      \end{tabular}
    };

 \draw (5,-5) node
    { \footnotesize
      \begin{tabular}{c}
        D4-branes at
        \\
        O8-planes intersecting
        \\
        ADE-singularities in
        \\
        $\mathrm{I}'$-theory on
        \\
        \raisebox{-20pt}{
        \fbox{
        $
         \xymatrix{
           S^{\mathbf{1}_{\mathrm{sgn}}}
           \ar@(ul,ur)|<<<<{\color{darkblue} \bf {}_{\;\;\mathbb{Z}_2^{\mathrm{HW}}} }
         }
         \!\!\times\!\!\!
         \xymatrix{
           \mathbb{T}^{ \mathbf{4}_{\mathbb{H}} }
           \ar@(ul,ur)|<<<<{\color{darkblue} \bf {}_{G^{\mathrm{ADE}}} }
         }
        $
        }
        }
        $\mathrlap{
          \mbox{
          \tiny
          \raisebox{-10pt}{
            \begin{tabular}{c}
              (\cite{BRG12},
              \\
              \cite[3.4.2]{HKKP15})
            \end{tabular}
          }
        }}$
      \end{tabular}
    };

 \draw (-5,-5) node
    {\footnotesize
      \begin{tabular}{c}
        NS5-branes
        \\
        at
        \\
        ADE-singularities in
        \\
        $\mathrm{HET}_{E}$-theory on
        \\
        \raisebox{-20pt}{
        \fbox{
        $
         \xymatrix{
           S^1
           \ar@(ul,ur)|<<<<{\color{darkblue} \bf {}_{\mathbb{Z}_2^{\mathrm{rot}}} }
         }
         \!\!\times\!\!\!
         \xymatrix{
           \mathbb{T}^{ \mathbf{4}_{\mathbb{H}} }
           \ar@(ul,ur)|<<<<{\color{darkblue} \bf {}_{G^{\mathrm{ADE}}} }
         }
        $
        }
        }
        $
        \mathrlap{
        \raisebox{-10pt}{
        \mbox{\tiny (\cite{Witten99})}}}
        $
      \end{tabular}
    };

    \end{scope}

\end{tikzpicture}
\end{center}
\vspace{-4mm}
\noindent {\bf  \footnotesize Table 7 --
Singularity structure of heterotic M-theory on ADE-orbifolds and its string theory duals}
{\footnotesize given by combining
the $\tfrac{1}{2}\mathrm{M5}$-structure of \hyperlink{FigureS}{\it Figure S}
with $\mathrm{IIA}^0$-structure \eqref{TheIIAZero},
hence admitting also $\mathrm{MO5}^0$-structure \eqref{TheMO5Zero}.}

\vspace{.4cm}

\noindent The following \hyperlink{FigureT}{\it Figure T} shows the corresponding subgroup lattice
with its
associated fixed/singular spaces:

\begin{center}
\hypertarget{FigureT}{}
\begin{tikzpicture}[scale=0.75]

  \begin{scope}

  \draw (0,5.4)
    node
    { \small
      \color{darkblue} \bf
      \begin{tabular}{c}
        $\mathrm{M}_{\mathrm{HET}}/{\mathrm{ADE}}$-orbifold subgroups
        \\
        $H \subset$
        \\
        $
          G^{\mathrm{ADE}}
            \times
          \mathbb{Z}_k^{\mathrm{rot}}
            \times
          \mathbb{Z}_2^{\mathrm{HW}}
        $
      \end{tabular}
    };
  \draw (0,4) node {$\overbrace{\phantom{--------------------}}$};

\begin{scope}[shift={(0,1)}]

  \draw (0+30+180:2.2) to (0+30+180+60:3.5);
  \draw (0+30+180+120:2.2) to (0+30+180+60+120:3.5);
  \draw (0+30+180+240:2.2) to (0+30+180+60+240:3.5);

  \draw (0+30+180+60:3.5) to (0+30+180+120:2.2);
  \draw (0+30+180+60+120:3.5) to (0+30+180+120+120:2.2);
  \draw (0+30+180+60+240:3.5) to (0+30+180+120+240:2.2);

  \draw (0,0) to (0+30+180+60:3.5);
  \draw (0,0) to (0+30+180+60+120:3.5);
  \draw (0,0) to (0+30+180+60+240:3.5);

  \draw (0+30+180:2.2)
    node
    {\colorbox{white}{
    $
      \mathbb{Z}_2^{\mathrm{refl}+\mathrm{rot}}
    $}};
  \draw (60+30+180:3.5)
    node
    {\colorbox{white}{$
      \underset{
        \mbox{
          \tiny
          \color{darkblue} \bf
          \eqref{KKReductionOnShiftCircle}
        }
      }{
        \mathbb{Z}_k^{\mathrm{rot}}
      }
    $}};
  \draw (120+30+180:2.2)
    node
    {\colorbox{white}{$
      \mathbb{Z}_2^{\mathrm{rot}+\mathrm{HW}}
    $}};
  \draw
    (180+30+180:3.5)
    node
    {\colorbox{white}{$
      \mathbb{Z}_2^{\mathrm{HW}}
    $}};
  \draw (240+30+180:2.2)
    node
    {\colorbox{white}{$
      \underset{
        \mbox{
          \tiny
          \color{darkblue} \bf
          \eqref{Z2ReflHW}
        }
      }{
        \mathbb{Z}_2^{\mathrm{refl}+\mathrm{HW}}
      }
    $}};
  \draw (300+30+180:3.5)
    node
    {\colorbox{white}{
      $
        \underset{
         \mbox{
           \tiny
           \color{darkblue} \bf
           \eqref{PointReflectionSubgroup}
         }
        }{
          \mathbb{Z}_2^{\mathrm{refl}}
        }
      $}};

  \draw (0,0)
    node
      {\colorbox{white}{$
        \mathbb{Z}_2^{\mathrm{refl}+\mathrm{rot}+\mathrm{HW}}
      $}};

  \end{scope}

  \end{scope}

  \begin{scope}[shift={(9.7,0)}]

  \draw (0,5.5)
    node
    {\small
      \color{darkblue} \bf
      \begin{tabular}{c}
        Fixed/singular subspaces {\footnotesize\eqref{FixedLoci}}
        \\
        $
         \Big(
         \xymatrix{
           \mathbb{T}^{ \mathbf{4}_{\mathbb{H}} }
           \ar@(ul,ur)|<<<<{\color{darkblue} \bf {}_{G^{\mathrm{ADE}}} }
         }
         \!\times\!
         \xymatrix{
           S^1
           \ar@(ul,ur)|<<<<{\color{darkblue} \bf {}_{\mathbb{Z}^{\mathrm{rot}}_k} }
         }
         \!\!
         \times
         \!\!\!
         \xymatrix{
           S^{\mathbf{1}_{\mathrm{sgn}}}
           \ar@(ul,ur)|<<<<{\color{darkblue} \bf {}_{\;\; \mathbb{Z}_2^{\mathrm{HW}}} }
         }
         \Big)^H
        $
      \end{tabular}
    };
  \draw (0,4) node {$\overbrace{\phantom{--------------------}}$};

\begin{scope}[shift={(0,1)}]

  \draw (0+30+180:2.2) to (0+30+180+60:3.5);
  \draw (0+30+180+120:2.2) to (0+30+180+60+120:3.5);
  \draw (0+30+180+240:2.2) to (0+30+180+60+240:3.5);

  \draw (0+30+180+60:3.5) to (0+30+180+120:2.2);
  \draw (0+30+180+60+120:3.5) to (0+30+180+120+120:2.2);
  \draw (0+30+180+60+240:3.5) to (0+30+180+120+240:2.2);

  \draw (0,0) to (0+30+180+60:3.5);
  \draw (0,0) to (0+30+180+60+120:3.5);
  \draw (0,0) to (0+30+180+60+240:3.5);

  \draw (0+30+180:2.2)
    node
    {\colorbox{white}{$
      \underset{
        \mbox{\bf
          \tiny
          \color{darkblue}
          \eqref{TheMK6Zero}
        }
      }{
        \mathrm{MK6}^0
      }
    $}};
  \draw (60+30+180:3.5)
    node
    {\colorbox{white}{$
      \underset{
        \mbox{\bf
          \tiny
          \color{darkblue}
          \eqref{TheIIAZero}
        }
      }{
        \mathrm{IIA}^0
      }
    $}};
  \draw (120+30+180:2.2)
    node
    {\colorbox{white}{$
      \underset{
        \mbox{\bf
          \tiny
          \color{darkblue}
          \eqref{TheMO9Zero}
        }
      }{
        \mathrm{MO9}^0
      }
    $}};
  \draw (180+30+180:3.5)
    node
    {\colorbox{white}{$
      \underset{\bf
        \tiny
        \color{darkblue}
        \eqref{TheMO9}
      }{
        \mathrm{MO9}
      }
    $}};
  \draw (240+30+180:2.2)
    node
    {\colorbox{white}{$
      \underset{
        \mbox{\bf
          \tiny
          \color{darkblue}
          \eqref{TheMO5}
        }
      }{
        \mathrm{MO5}
      }
    $}};
  \draw (300+30+180:3.5)
    node
    {\colorbox{white}{$
      \underset{
        \mbox{\bf
          \tiny
          \color{darkblue}
          \eqref{MK6Singularity}
        }
      }{
        \mathrm{MK6}
      }
    $}};

  \draw (0,0)
    node
    {\colorbox{white}{$
      \underset{
        \mbox{\bf
          \tiny
          \color{darkblue}
          \eqref{TheMO5Zero}
        }
      }{
        \mathrm{MO5}^0
      }
    $}};

\end{scope}
  \end{scope}

\end{tikzpicture}
\end{center}
\vspace{-5mm}
\noindent {\footnotesize \bf Figure T -- Subgroup lattice
and fixed/singular subspaces in
the singularity structure 
for heterotic M-theory on ADE-orbifolds
from \hyperlink{Table7}{\it Table 7}}.
{\footnotesize  On the left, groups associated to the middle of a sub-simplex
are diagonal subgroups inside the direct product of subgroups associated
to the vertices, as indicated by the superscripts.
On the right, all fixed loci with superscript $(-)^0$ are actually empty,
but appear as superficially non-empty (un-charged) singularities after M/IIA KK-reduction \eqref{KKReductionOnShiftCircle}, e.g. $\mathrm{O4}^0$ \eqref{TheO4Zero}, $\mathrm{O8}^0$ \eqref{TheO8Zero}, as
on the right of \hyperlink{FigureOP}{\it Figure OP}.  The numbered subscripts $(xx)$
indicate the corresponding expression in the text.}

\subsection{Equivariant Cohomotopy charge of M5 at $\mathrm{MO5}_{\mathrm{ADE}}$}
\label{EquivariantCohomotopyChargeOfM5AtMO5}

Applying the general mathematical results of \cref{EquivariantCohomotopyAndTadpoleCancellation}
to the $\mathrm{M}_{\mathrm{HET}}/\mathrm{ADE}$-singularities
from \cref{HeteroticMTheoryOnADEOrbifolds},
we finally show here (see \hyperlink{FigureV}{\it Figure V} )
that \hyperlink{HypothesisH}{\it Hypothesis H}
formalizes and validates the following widely accepted but
informal Folklore \ref{AnomalyCancellationOnMTheoreticOrientifolds},
concerning the nature of M-theory:

\vspace{.0cm}

\noindent
\begin{minipage}[l]{10.8cm}
\hypertarget{FigureU}{}
\begin{folklore}[M5/MO5 anomaly cancellation {\cite{DasguptaMukhi95}\cite[3.3]{Witten95b}
\cite[2.1]{Hori98}}]
 \label{AnomalyCancellationOnMTheoreticOrientifolds}
For M-theory on the toroidal orientifold
$\mathbb{R}^{5,1} \times \mathbb{T}^{ \mathbf{5}_{\mathrm{sgn}} }
\!\sslash\! \mathbb{Z}_2$
(\hyperlink{Table5}{\it Table 5}) with MO5-singularities \eqref{TheMO5},
consistency requires the situation
shown in \hyperlink{Table2}{\it Table 2}:
\begin{enumerate} [\bf (i)]
\vspace{-3mm}
\item a charge of $q_{{}_{\mathrm{MO5}}}/q_{{}_{\mathrm{M5}}} = -1/2$ is carried by each of the fixed/singular MO5-planes \eqref{TheMO5};
\vspace{-3mm}
\item the M5-brane charge is integral in natural units,
hence on the covering
$\mathbb{Z}_2$-space $\mathbb{T}^{\mathbf{5}_{\mathrm{sgn}}}$ the M5-branes appear in $\mathbb{Z}_2$-mirror pairs
around the MO5-planes, as in \hyperlink{FigureL}{\it Figure L}
and \hyperlink{FigureN}{\it Figure N};

\vspace{-3mm}
\item  the total charge of the $N_{{}_{\rm M5}}$ M5-branes
has to cancel that of the 32 O-planes \eqref{RepresentationTorusOfSignRep},
 $N_{{}_{\rm M5}}  q_{{}_{\rm M5}} + 32 q_{{}_{\rm MO5}} = 0$,
 as indicated in \hyperlink{FigureA}{\it Figure A}.
\end{enumerate}

\vspace{-4mm}
\end{folklore}
\noindent Via the similarly widely accepted  Folklore \ref{ReductionOfMO5ToO4},
the statement of Folklore \ref{AnomalyCancellationOnMTheoreticOrientifolds}
implies tadpole anomaly cancellation in string theory.
Notice that this is not so much a claim than
part of the defining criterion for M-theory:

\vspace{-2mm}
\begin{folklore}[Double dimensional reduction of M5/MO5 to D4/O4 {\cite[3]{Hori98}\cite[III.A]{Gimon98}\cite[3.1.1]{HananyKol00}}]
\label{ReductionOfMO5ToO4}
Under  M/IIA duality,
the situation of Folklore \ref{AnomalyCancellationOnMTheoreticOrientifolds}
becomes  the string-theoretic
tadpole cancellation condition from
\hyperlink{Table1}{\it Table 1}
for D4-branes and $\mathrm{O}4^-$-planes.
\end{folklore}

\vspace{-5mm}
\begin{folklore}[T-duality relating $\mathrm{O}$-planes, e.g. {\cite[p.317-318]{BLT13}}]
\label{TDualityRelatesOpPlanes}
By iterative T-duality, the situation of
Folklore \ref{ReductionOfMO5ToO4} implies
general tadpole cancellation for $\mathrm{D}p$-branes
and $\mathrm{O}p^-$-planes (\hyperlink{Table3}{\it Table 3}).
\end{folklore}
\end{minipage}
$\phantom{a}$
\fbox{
$
  \!\!\!
  \raisebox{156pt}{
  \xymatrix@R=14pt{
    \fbox{
      \hyperlink{HypothesisH}{\it Hypothesis H}
    }
    \ar@{=>}[dd]_-{
      \mbox{\bf
        \tiny
        \color{darkblue}
        \begin{tabular}{c}
          Equivariant Cohomotopy
          \\
          of $\mathrm{M}_{\mathrm{HET}}/\mathrm{ADE}$-orbifolds
        \end{tabular}
      }
    }^-{
      \mbox{\color{darkblue} \bf
        \tiny
        \begin{tabular}{l}
          Rigorous: Cor. \ref{EquivariantCohomotopyOfSemiComplementSpacetime},
          \ref{GlobalM5MO5CancellationImplied}
        \end{tabular}
      }
    }
    \\
    \\
    \fbox{
      \begin{tabular}{c}
        M5/MO5 anomaly cancellation
        \\
        {
          \tiny
          (Folklore \ref{AnomalyCancellationOnMTheoreticOrientifolds})
        }
      \end{tabular}
    }
    \ar@{<=>}[dd]_-{
      \mbox{
        \tiny
        \begin{tabular}{c}
          M/IIA duality
        \end{tabular}
      }
    }^-{
      \mbox{
        \tiny
        \begin{tabular}{c}
          Folklore \ref{ReductionOfMO5ToO4}
        \end{tabular}
      }
    }
    \\
    \\
    \fbox{
      \begin{tabular}{c}
        D4/O4 tadpole cancellation
      \end{tabular}
    }
    \ar@{<=>}[dd]_-{
      \mbox{
        \tiny
        \begin{tabular}{c}
          T-duality
        \end{tabular}
      }
    }^-{
      \mbox{
        \tiny
        \begin{tabular}{l}
          Folklore \ref{TDualityRelatesOpPlanes}
        \end{tabular}
      }
    }
    \\
    \\
    \fbox{
      \begin{tabular}{c}
        D$p$/O$p$-tadpole cancellation
      \end{tabular}
    }
    \\
    \mbox{
    \footnotesize
    \begin{minipage}[l]{5.7cm}
      \noindent  {\bf Figure U -- Structure of the argument.}
      We demonstrate
      that \hyperlink{HypothesisH}{\it Hypothesis H}
      on C-field charge quantization in Cohomotopy,
      applied to heterotic M-theory on toroidal ADE-oribolds,
      implies M5/MO5-anomaly cancellation in M-theory.
      This directly subsumes and implies the statement of
      tadpole cancellation for D$4$/O$4$ branes in string theory.
    \end{minipage}
    }
  }
  }
  \!\!\!
$
}

\vspace{-2mm}
\begin{center}
\hypertarget{FigureV}{}
\begin{tikzpicture}[scale=0.8]

  \begin{scope}[shift={(0,-.7)}]

  \draw (1.5,6.7) node {$\overbrace{\phantom{------------------------}}$};
  \draw (11,6.7) node {$\overbrace{\phantom{---------------}}$};

  \draw (1.5,7.6)
    node
    {\tiny \color{darkblue} \bf
      \begin{tabular}{c}
        semi-complement \eqref{SemiComplement}
        \\
        of $\mathrm{MO5} \subset \tfrac{1}{2}\mathrm{M5}$-singularities
        (\hyperlink{FigureS}{\it Figure S})
        \\
        in heterotic M-theory on ADE-orbifolds
        (\hyperlink{Table7}{\it Table 7})
      \end{tabular}
    };
  \draw (11,7.6)
    node
    {
      \tiny
      \color{darkblue} \bf
      \begin{tabular}{c}
        C-field charge quantization
        \\
        in ADE-equivariant Cohomotopy \eqref{EquivariantCohomotopySet}
      \end{tabular}
    };

  \draw[->]
    (4,8.5)
    to
    node[above]
      {
        \tiny
        \color{darkblue} \bf
        \begin{tabular}{c}
          M-theory C-field
          \\
          charge-quantized by \hyperlink{Hyothesis}{\it Hypothesis H}
          \\
          as a cocycle in equivariant Cohomotopy
        \end{tabular}
      }
    (9,8.4);

  \end{scope}

  \begin{scope}[shift={(0, 0)}]
  \clip (0,-1) rectangle (2.8,5.7);

  \draw[dotted, thick] (0,-3) to (0,6);

  \draw[step=3, dotted, thick] (-3,-3) grid (6,6);

  \draw[draw=black, fill=white] (0,0) circle (.1);
  \draw[draw=black, fill=white] (0,3) circle (.1);

  \draw[draw=black, fill=white] (3,3) circle (.1);
  \draw[draw=black, fill=white] (3,0) circle (.1);

  \draw[draw=black, fill=white] (3,6) circle (.1);
  \draw[draw=black, fill=white] (0,6) circle (.1);

  \draw[draw=black, fill=white] (3,-3) circle (.1);
  \draw[draw=black, fill=white] (0,-3) circle (.1);

  \draw[dashed] (0,3+.7) to (3,3+.7);
  \draw[dashed] (0,3-.7) to (3,3-.7);

  \draw[dashed] (0,+.7) to (3,+.7);
  \draw[dashed] (0,-.7) to (3,-.7);

  \end{scope}

  \draw (0,-.9) node {\tiny $x_1 = 0$};
  \draw (3,-.9) node {\tiny $x_1 > 0$};
  \draw (-3.7,0) node {\tiny $x_2 = 0$};
  \draw (-3.7,3) node {\tiny $x_2 = \tfrac{1}{2}$};
  \node at (11,2) {\colorbox{white}{$\phantom{a}$}};

  \draw (11,2) circle (2);
  \node (infinity) at (11+2,2)
    {\colorbox{white}{$\infty$}};
  \node (zero) at (11-2,2) {$-\,\tiny \mathrlap{0} $};
  \begin{scope}[shift={(9,2)}]
   \fill[darkblue] (2,0) ++(40+180:2) node (minusepsilon)
     {\begin{turn}{-45} $)$  \end{turn}};
   \fill[darkblue] (2,0) ++(180-40:2) node (epsilon)
     {\begin{turn}{45} $)$ \end{turn}};
   \fill[darkblue] (2.3,0.25) ++(40+180:2)
     node (label+epsilon)
     { \tiny $-\epsilon$ };
   \fill[darkblue] (2.3,-0.25) ++(-40-180:2)
     node (label-epsilon)
     { \tiny $+\epsilon$ };
   \draw[<->, dashed, gray]
     (label+epsilon)
     to
     node {\tiny $\mathbb{Z}_2$}
     (label-epsilon);
  \end{scope}
  \node (torus) at (1.5,7.5)
    {\raisebox{42pt}{$
      \big(
      \mathbb{R}^{\mathbf{1}_{\mathrm{sgn}}}
      /
      \mathbb{Z}_2^{\mathrm{HW}}
      \big)
      \times
      \xymatrix{
        \mathbb{T}^{ \mathbf{4}_{\mathrm{sgn}}  }
        \ar@(ul,ur)|{\,
          \mathbb{Z}^{\mathrm{refl}}_2 \!\!\!\!\!
        }
      }
    $}};
  \node (sphere) at (11,7.2)
    {\raisebox{42pt}{$
      \xymatrix{
        S^{\mathbf{4}_{\mathbb{H}}}
        \ar@(ul,ur)|-{\, \mathbb{Z}_2 }
      }
    $}
    };

  \draw[<->, dashed]
    (1.5,3+1.5)
    to
    node[very near end]
      {
        \tiny
        \color{darkblue} \bf
        \begin{tabular}{c}
        \;\;  residual
          \\
        \;\;  $\mathbb{Z}_2^{\mathrm{refl}}$-action
        \end{tabular}
      }
    (1.5,3-1.5);

  \draw[draw=darkblue, fill=darkblue]
    (0.01,3+.35-.05)
    rectangle
    (2.8,3+.35+.05);

  \draw[draw=darkblue, fill=darkblue]
    (0.01,3-.35-.05)
    rectangle
    (2.8,3-.35+.05);

  \draw[draw=lightgray, fill=lightgray]
    (0.09,-.05)
    rectangle
    (2.8,+.05);

  \draw[thin]
    (0.09,-.05)
    to
    (2.8,-.05);
  \draw[thin]
    (0.09,+.05)
    to
    (2.8,+.05);

  \draw[draw=lightgray, fill=lightgray]
    (0.09,3-.05)
    rectangle
    (2.8,3+.05);
  \draw[thin]
    (0.09,3-.05)
    to
    (2.8,3-.05);
  \draw[thin]
    (0.09,3+.05)
    to
    (2.8,3+.05);

  \draw[|->, olive]
    (2.1,3+.35)
    to[bend right=8]
    (zero);

  \draw[|->, olive]
    (2.1,3-.35)
    to[bend right=8]
    (zero);

  \draw[|->, olive]
    (2.1,3)
    to[bend right=8]
    node
      {
        \colorbox{white}{
          \tiny
          \color{darkblue} \bf
          codimension 1 submanifolds
        }
      }
    (zero);

  \draw[|->, olive]
    (2.1,0)
    to[bend right=8]
    node
      {
        \colorbox{white}{
          \tiny
          \color{darkblue} \bf
          codimension 1 submanifold
        }
      }
    (zero);

  \draw[|->, olive]
    (1.2,5)
    to[bend left=26]
    node {  }
    (infinity);

  \draw[|->, olive]
    (1.8,3+.7)
    to[bend left=26]
    node
      {
        \colorbox{white}
        {
          \tiny
          \color{darkblue} \bf
          cocycle vanishes far away from fixed lines
        }
      }
    (infinity);

  \begin{scope}[shift={(0,-3)}]
  \node (MO9) at (-.7,4.7) {\tiny \color{darkblue} \bf $\mathrm{MO9}$};
  \draw[->, gray] (MO9) to (-.1,4.3);

  \begin{scope}[shift={(.1,3.35)}]

  \node (MK6) at (-.7,3.9) {\tiny \color{darkblue} \bf $\mathrm{MK6}$};
  \draw[->, gray] (MK6) to (.3,3.1);

  \end{scope}

  \node (MO5) at (-1,3) {\tiny \color{darkblue} \bf $\mathrm{MO5}$};
  \draw[->, gray] (MO5) to (-.1,3);

  \node (M5) at (-.7,6.6) {\tiny \color{darkblue} \bf $\mathllap{\tfrac{1}{2}}\mathrm{M5}$};
  \draw[->, gray] (M5) to ++(.6,-.2);

  \node (halfM5) at (-1,6)
    {
      \tiny
      \color{darkblue} \bf
      $\mathllap{-\tfrac{1}{2}\mathrm{M5} = \;} \mathrm{MO5}$};
  \draw[->, gray] (halfM5) to ++(.9,0);

  \node (mirrorM5)
    at (-.7,5.4)
    {
      \tiny
      \color{darkblue} \bf
      $\mathllap{\mbox{mirror}\;\tfrac{1}{2}}\mathrm{M5}$
    };
  \draw[->, gray] (mirrorM5) to ++(.6,+.2);

  \end{scope}

  \begin{scope}[shift={(0,+.35)}]
  \clip (0,3+.2) rectangle (1,3-.2);
  \draw[draw=green, fill=green]
    (0,3) circle (.1);
  \end{scope}

  \begin{scope}[shift={(0,-.35)}]
  \clip (0,3+.2) rectangle (1,3-.2);
  \draw[draw=green, fill=green]
    (0,3) circle (.1);
  \end{scope}

\end{tikzpicture}
\end{center}

\vspace{-.4cm}

\noindent {\bf \footnotesize Figure V --
Equivariant Cohomotopy of ADE-orbifolds
in heterotic M-theory}
{\footnotesize
with singularity structure as in \hyperlink{FigureS}{\it Figure S}.
The resulting charge classification (Cor. \ref{EquivariantCohomotopyOfSemiComplementSpacetime}) implies, via the
unstable PT isomorphism (\cref{PTTheorem}),
the
$\tfrac{1}{2}\mathrm{M5} = \mathrm{MO9} \cap \mathrm{MK6}$-brane configurations \eqref{TheHalfM5} similarly shown in
\cite[Fig. 1]{FLO99}\cite[p. 7]{KSTY99}\cite[Fig. 1]{FLO00a}\cite[Fig. 2]{FLO00b}\cite[Fig. 1]{FLO00c}\cite[p. 4, 68, 71]{GKST01}.
This is as in \hyperlink{FigureL}{\it Figure L}
but with points (M5s) extended to half-line (MK6s), see Remark \ref{TheRoleOfMK6EndingOnM5} and \hyperlink{Table8}{\it Table 8}.
}

\medskip
\noindent {\bf C-Field flux quantization at pure MO5-Singularities.}
To put the discussion below in perspective,
it is instructive to first recall the
success and the shortcoming of the existing argument \cite[2]{Hori98}
for M5/MO5-brane charge quantization around a \emph{pure} MO5-singularity \eqref{TheMO5} (see the left column of\hyperlink{Table8}{\it Table 8}):
Following the classical argument of \cite{Dirac31},
we consider removing the locus of the would-be M5-brane
from spacetime and then computing the appropriate cohomology
of the remaining complement. For the
\emph{pure} MO5-singularity \eqref{TheMO5} the
complement spacetime is, up to homotopy equivalence,
the 4-dimensional real projective space:

\vspace{-.2cm}

\begin{equation}
  \label{M5ThreadedThroughRP4}
  \underset{
    \mathclap{
    \mbox{\bf
      \tiny
      \color{darkblue}
      \begin{tabular}{c}
        complement spacetime
        \\
        around pure MO5-singularity
      \end{tabular}
    }
    }
  }{
    X^{11}_{{}_{\mathrm{MO5}}}
  }
  \;\;\;\;\;\;\;\;=\;
  \underset{
    \mathclap{
    \mbox{\bf
      \tiny
      \color{darkblue}
      \begin{tabular}{c}
        full Euclidean orientifold \eqref{EuclideanGSpace}
        \\
        with pure MO5-singularity \eqref{TheMO5}
      \end{tabular}
    }
    }
  }{
    \big(
      \mathbb{R}^{5,1}
      \times
      \xymatrix{
        \mathbb{R}^{\mathbf{5}_{\mathrm{sgn}}}
      }
      \overset{
        \mathclap{
         \mbox{\bf
            \tiny
            \color{darkblue}
            \begin{tabular}{c}
              pure \eqref{Z2ReflHW}
              \\
              orbifold quotient
            \end{tabular}
          }
         }
      }{
        \!\!
        \!\sslash\! 
        \mathbb{Z}_2^{\mathrm{refl}+\mathrm{HW}}
      }
    \big)
  }
  \setminus
  \underset{
    \mathclap{
    \mbox{\bf
      \tiny
      \color{darkblue}
      \begin{tabular}{c}
        $\mathbb{Z}_2^{\mathrm{refl}+\mathrm{HM}}$-fixed
        \\
        subspace \eqref{FixedLoci}
      \end{tabular}
    }
    }
  }{
    \{ \mathbb{R}^{5,1} \times \{0\} \}
  }
  \;\simeq_{{}_{\mathrm{homotopy}}}\;
  \underset{
    \mathclap{
    \mbox{\bf
      \tiny
      \color{darkblue}
      \begin{tabular}{c}
        unit 4-sphere
        \\
        around MO5
      \end{tabular}
    }
    }
  }{
    S\big( \mathbb{R}^{\mathbf{5}_{\mathrm{sgn}}} \big)
  }
  /
  \underset{
    \mathclap{
    \mbox{\bf
      \tiny
      \color{darkblue}
      \begin{tabular}{c}
        pure \eqref{Z2ReflHW}
        \\
        MO5-quotient
      \end{tabular}
    }
    }
  }{
    \mathbb{Z}_2^{\mathrm{refl}+\mathrm{HW}}
  }
  \;\simeq\;
  \underset{
    \mathclap{
    \mbox{\bf
      \tiny
      \color{darkblue}
      \begin{tabular}{c}
        real projective
        \\
        4-space
      \end{tabular}
    }
    }
  }{
    \mathbb{R}P^4
  }
  \,.
\end{equation}
Since this ambient spacetime \eqref{M5ThreadedThroughRP4}
is a smooth but curved (i.e. non-parallelizable) manifold, the flavor of Cohomotopy theory
that measures its M-brane charge, according to \hyperlink{HypothesisH}{\it Hypothesis H}, is,
according to \hyperlink{Table4}{\it Table 4}, the $J$-twisted Cohomotopy theory of \cite[3]{FSS19b}.
This implies, by \cite[Prop. 4.12]{FSS19b}, that rationalized brane charge (bottom of \eqref{DifferentialEquivariantCohomotopyPullback})
is measured by the integral of a differential 4-form $G_4 \in \Omega^4\big( X^{11}\big) $
(the C-field 4-flux density) which satisfies the half-integral shifted flux quantization condition
\begin{equation}
  \label{HalfIntegralFluxQuantization}
  [G_4] + \big[ \tfrac{1}{4}p_1 \big]
  \;\in\;
  H^4\big( X^{11}, \mathbb{Z}\big)
    \to
  H^4\big( X^{11}, \mathbb{R}\big)
\end{equation}
as is expected from the M-theory folklore (recalled in \cite[2.2]{FSS19b}).
Applying this to the complement $X^{11}_{\mathrm{MO5}}$ \eqref{M5ThreadedThroughRP4}
around a pure MO5-plane implies, as pointed out in \cite[2.1]{Hori98}, that there
must be an \emph{odd integer} of  brane charge in the pure MO5-spacetime

\vspace{-3mm}
\begin{equation}
  \label{TheOddChargeAroundAPureMO5}
  \underset{
    \mathclap{
    \mbox{\bf
      \tiny
      \color{darkblue}
      \begin{tabular}{c}
        {\phantom{A}}
        \\
        $J$-twisted Cohomotopy
        (\cite[3.1]{FSS19b})
        \\
        of pure MO5-complement \eqref{M5ThreadedThroughRP4}
      \end{tabular}
    }
    }
  }{
  \pi^{{}^{
    \overset{
      \mathclap{
      \mbox{\bf
        \tiny
        \color{darkblue}
        twist
      }
      }
    }{
      T \mathbb{R}P^4
    }
  }}
  \!\!
  \big(
    X^{11}_{{{}_\mathrm{MO5}}}
  \big)_{\mathbb{R}}
  }
  \;\;\;\;\;\;=\;
  \underset{
    \raisebox{13pt}{
    \mbox{\bf
      \tiny
      \color{darkblue}
      \begin{tabular}{c}
        due to
        half-integral $G_4$-flux quantization \eqref{HalfIntegralFluxQuantization}
        \\
        implied by twisted Cohomotopy \cite[Prop. 4.12]{FSS19b}
      \end{tabular}
    }
    }
  }{
  \Bigg\{
    2
    \int_{\mathbb{R}P^4} G_4
    \;=\;
    \overset{
      \mathclap{
      \mbox{\bf
        \tiny
        \color{darkblue}
        \begin{tabular}{c}
          odd integer
          \\
          net charge
          \\
          {\phantom{-}}
        \end{tabular}
      }
      }
    }{
      1 - 2N
    }
    \;\vert\;
    N \in \mathbb{N}
  \Bigg\}
  }
  \,.
\end{equation}

\noindent {\bf The need to resolve further microscopic details.}
If one could identify in \eqref{TheOddChargeAroundAPureMO5}
 the offset of $1 \,\mathrm{mod}\, 2$
in \eqref{TheOddChargeAroundAPureMO5} with the charge carried by the
pure MO5-plane \eqref{TheMO5}, and the remaining even charge $2 N$
with that of $N$ M5-branes in its vicinity
\begin{equation}
  \label{MissingBraneChargeIdentification}
  1 - N \cdot 2
  \;\overset{?}{=}\;
  Q_{\mathrm{MO5}}
  -
  N_{\mathrm{brane}} \cdot Q_{\mathrm{M5}}
\end{equation}
this would be the
local/twisted M5/MO5-anomaly cancellation condition of
\hyperlink{Table2}{\it Table 2}.
Without such further information, the charge quantization
\eqref{TheOddChargeAroundAPureMO5} around \emph{pure} MO5-planes \eqref{TheMO5}
is only \emph{consistent with} the local/twisted M5/MO5-anomaly cancellation from \hyperlink{Table2}{\it Table 2},
as noticed in \cite[bottom of p. 5]{Hori98}.

\medskip

But with the results of \cref{EquivariantCohomotopyAndTadpoleCancellation}
and in view of \cref{HeteroticMTheoryOnADEOrbifolds}, we may now complete this old argument (see the right column of \hyperlink{Table8}{\it Table 8}):

\medskip
\noindent {\bf Equivariant Cohomotopy implies local/twisted M5/MO5-anomaly cancellation
at $\tfrac{1}{2}\mathrm{M5}$-singularities.} We know from  \cref{LocalTadpoleCancellation} that the
identification \eqref{MissingBraneChargeIdentification}
missing from the result \eqref{TheOddChargeAroundAPureMO5}
for twisted Cohomotopy on smooth but curved spacestimes
\emph{is} implied by the result of Prop. \ref{TheoremLocalTadpoleCancellation}
for equivariant Cohomotopy of singular but flat spacetimes.
Moreover, we have argued in \cref{HeteroticMTheoryOnADEOrbifolds}
that having black M5-branes actually coinciding with MO5-planes
requires/implies that the pure MO5-planes are but the diagonally
fixed  sub-loci (shown in \hyperlink{Table6}{\it Table 6}) inside
the richer $\tfrac{1}{2}\mathrm{M5} = \mathrm{MK6} \cap \mathrm{MO9}$-singularities
\eqref{TheHalfM5} of heterotic M-theory on ADE-orbifolds (\hyperlink{FigureS}{\it Figure S}).
Hence for a rigorous M5/MO5-anomaly cancellation
result not just consistent with (as in \eqref{TheOddChargeAroundAPureMO5}), but actually
\emph{implying} Folkore \ref{AnomalyCancellationOnMTheoreticOrientifolds},
we need to compute the M-brane charge at MO5-singularities
inside $\tfrac{1}{2}\mathrm{M5}$-singularities \eqref{TheHalfM5}.
Concretely, this means with \hyperlink{HypothesisH}{\it Hypothesis H}
that M5/MO5-charge at a single MO5-singularity
is measured by the equivariant Cohomotopy
of the following $\tfrac{1}{2}\mathrm{M5}$-refinement
of the naive MO5-complement spacetime \eqref{M5ThreadedThroughRP4}:
\vspace{0mm}
\begin{eqnarray}
  \label{SemiComplement}
  \underset{
    \mathclap{
    \mbox{\bf
      \tiny
      \color{darkblue}
      \begin{tabular}{c}
        semi-complement spacetime
        \\
        around $\mathrm{MO5}$
        in $\tfrac{1}{2}\mathrm{M5}$-singularity
      \end{tabular}
    }
    }
  }{
    X^{11}_{{}_{\frac{1}{2}\mathrm{M5}}}
  }
  &\;\coloneqq\; &
  \underset{
    \mathclap{
    \mbox{\bf
      \tiny
      \color{darkblue}
      \begin{tabular}{c}
        full Euclidean orientifold \eqref{EuclideanGSpace}
        \\
        with $\tfrac{1}{2}\mathrm{M5}$-singularity \eqref{TheHalfM5}
      \end{tabular}
    }
    }
  }{
  \big(
    \mathbb{R}^{5,1}
    \!\!\times\!\!
    \xymatrix{
      \mathbb{R}^{\mathbf{1}_{\mathrm{sgn}}}
    }
    \!\!\times\!\!
    \xymatrix{
      \mathbb{R}^{\mathbf{4}_{\mathbb{H}}}
    }
    \!\!\!\sslash\!
    \mathbb{Z}_2^{\mathrm{HW}}
    \!\!\times\!
    \mathbb{Z}_2^{\mathrm{refl}}
  \big)
  }
  \setminus
  \big(
  \underset{
    \mathclap{
    \mbox{\bf
      \tiny
      \color{darkblue}
      \begin{tabular}{c}
        $\mathbb{Z}_2^{\mathrm{HW}}$-fixed subspace \eqref{FixedLoci}
        \\
        with residual
        $\mathbb{Z}_{2}^{\mathrm{het}}$-action \eqref{WeylGroup}
      \end{tabular}
    }
    }
  }{
    \mathbb{R}^{5,1}
    \!\!\times\!\!
    \{0\}
    \!\times\!\!
    \xymatrix{
      \mathbb{R}^{\mathbf{4}_{\mathbb{H}}}
    }
  }
  \!\!\sslash\!
  \mathbb{Z}_2^{\mathrm{refl}}
  \big)
  \nonumber
  \\
  &  \; \underset{\mathrm{homotopy}}{\simeq}\; &
  \underset{
    \simeq \;\ast
  }{
  \underbrace{
    \underset{
      \mathclap{
      \mbox{\bf
        \tiny
        \color{darkblue}
        \begin{tabular}{c}
          unit 0-sphere
          \\
          around $\mathrm{MO9}$
        \end{tabular}
      }
      \;\;
      }
    }{
      S(\mathbb{R}^{\mathbf{1}_{\mathrm{sgn}}})
    }
    /
    \underset{
      \mathclap{
      \;\;
      \mbox{\bf
        \tiny
        \color{darkblue}
        \begin{tabular}{c}
          HW-quotient
          \\
          \eqref{TheMO9}
        \end{tabular}
      }
      }
    }{
      \mathbb{Z}_2^{\mathrm{HW}}
    }
  }}
  \;
  \times
  \!\!\!\!
  \underset{
    \mbox{\bf
      \tiny
      \color{darkblue}
      \begin{tabular}{c}
        ADE-singularity
        \\
        (\hyperlink{Table5}{\it Table 5})
      \end{tabular}
    }
  }{
  \xymatrix{
    \mathbb{R}^{\mathbf{4}_{\mathbf{H}}}
  }
  \!\!\sslash\!
  \mathbb{Z}_2^{\mathrm{refl}}
  }.
\end{eqnarray}
As shown in the second line, this is homotopy-equivalent to a residual ADE-singularity (\hyperlink{Table5}{\it Table 5}).
Therefore, the discussion from \cref{LocalTadpoleCancellation} applies:

\begin{cor}[\bf Equivariant Cohomotopy implies local/twisted M5/MO5-anomaly cancellation]
\label{EquivariantCohomotopyOfSemiComplementSpacetime}
The super-differentiable \eqref{DifferentialEquivariantCohomotopyPullback}
equivariant Cohomotopy charge 
of the vicinity (Def. \ref{CohomotopyOfVicinityOfSingularity})
of the semi-complement spacetime of a single charged $\mathrm{MO5}$-singularity \eqref{SemiComplement}
$$
  \pi^{\mathbf{4}_{\mathbb{H}}}
  \Big(
    \Big(
      X^{11}_{{{}_{\frac{1}{2}\mathrm{MO5}}}}
    \Big)^{\mathrm{cpt}}
  \Big)_-
  \;=\;
  \Big\{
  \underset{
    \mathclap{
    \mbox{\bf
      \tiny
      \color{darkblue}
      \begin{tabular}{c}
      \\
        MO5-plane
        \\
        charge
      \end{tabular}
    }
    }
  }{
    1 \cdot \mathbf{1}_{\mathrm{triv}}
  }
  -
  \underset{
    \mbox{\bf
      \tiny
      \color{darkblue}
      \begin{tabular}{c}
      \\
        M5-brane
        \\
        charge
      \end{tabular}
    }
  }{
    N_{\mathrm{M5}} \cdot \mathbf{2}_{\mathrm{reg}}
  }
  \;\Big\vert\;
  N_{\mathrm{M5}} \in \mathbb{Z}
  \Big\}
$$
as in Folklore \ref{AnomalyCancellationOnMTheoreticOrientifolds},
\hyperlink{Table2}{\it Table 2},
regarding the local/twisted form of M5/MO5-anomaly cancellation.
\end{cor}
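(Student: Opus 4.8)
The plan is to reduce the claim to the local classification already established in Prop.~\ref{TheoremLocalTadpoleCancellation} and Example~\ref{SuperDifferentiableEquivariantCohomotopyOfADEOrbifolds}, by exploiting the equivariant homotopy equivalence recorded in the second line of \eqref{SemiComplement}. Following the paper's convention that the Cohomotopy of an orbifold $X \!\sslash\! G$ is the $G$-equivariant Cohomotopy of the covering $G$-space $X$, I first regard the semi-complement spacetime $X^{11}_{\frac{1}{2}\mathrm{M5}}$ as presented by its covering $\mathbb{Z}_2^{\mathrm{refl}}$-space. The worldvolume factor $\mathbb{R}^{5,1}$ carries the trivial $\mathbb{Z}_2^{\mathrm{refl}}$-action and is $\mathbb{Z}_2^{\mathrm{refl}}$-equivariantly contractible, while removing the $\mathrm{MO9}$-locus from the $\mathbf{1}_{\mathrm{sgn}}$-direction leaves the unit $0$-sphere $S(\mathbb{R}^{\mathbf{1}_{\mathrm{sgn}}})$, whose $\mathbb{Z}_2^{\mathrm{HW}}$-quotient is a single point carrying trivial residual $\mathbb{Z}_2^{\mathrm{refl}}$-action. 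Since $\mathbb{Z}_2^{\mathrm{refl}}$ acts only on the $\mathbf{4}_{\mathbb{H}}$-factor, all of these contractions are $\mathbb{Z}_2^{\mathrm{refl}}$-equivariant, exhibiting $X^{11}_{\frac{1}{2}\mathrm{M5}}$ as equivariantly homotopy equivalent to the vicinity of a single $\mathbb{Z}_2^{\mathrm{refl}}$-$\mathrm{ADE}$-singularity $\mathbb{R}^{\mathbf{4}_{\mathbb{H}}}$ (Def.~\ref{CohomotopyOfVicinityOfSingularity}). Note that $\mathbf{4}_{\mathbb{H}}$ restricted along \eqref{PointReflectionSubgroup} to $\mathbb{Z}_2^{\mathrm{refl}}$ is the $4$-dimensional sign representation, so that $\mathbf{4}_{\mathbb{H}}$ is a compatible $\mathrm{RO}$-degree in the sense of Example~\ref{ExamplesOfCompatibleRODegree}.

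Next I would transport the computation across this equivalence: after one-point compactification the resulting $\mathbb{Z}_2^{\mathrm{refl}}$-equivariant Cohomotopy set is that of the representation sphere $S^{\mathbf{4}_{\mathbb{H}}}$, i.e. the Cohomotopy of the vicinity of the $G^{\mathrm{ADE}} = \mathbb{Z}_2^{\mathrm{refl}}$-singularity. Applying Prop.~\ref{TheoremLocalTadpoleCancellation} with $G = \mathbb{Z}_2$ identifies this set, via $\beta \circ \Sigma^\infty$, with the virtual representations of the form $N_{\mathrm{Opla}} \cdot \mathbf{1}_{\mathrm{triv}} - N_{\mathrm{M5}} \cdot \mathbf{2}_{\mathrm{reg}}$, where $\mathbf{k}_{\mathrm{reg}} = \mathbf{2}_{\mathrm{reg}}$ because $|\mathbb{Z}_2| = 2$, with $N_{\mathrm{Opla}} \in \{0,1\}$ and $N_{\mathrm{M5}} \in \mathbb{N}$. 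Finally, imposing super-differentiability with the choice $-1 \in \{0,1\}$ at the lower Elmendorf stage (the subscript $(-)_-$), as set up in \eqref{ElmendorfStagesOfEquivariantM2M5Cocycle} and recorded locally in \eqref{SuperDifferentiableLocalCohomotopyCharge}, pins $N_{\mathrm{Opla}} = 1$ — precisely the statement that the single $\mathrm{MO5}$-plane carries the unit charge $1 \cdot \mathbf{1}_{\mathrm{triv}}$ — and lets $N_{\mathrm{M5}}$ range over all of $\mathbb{Z}$. This produces exactly the claimed set.

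The step I expect to demand the most care is the equivariant homotopy reduction in the first paragraph. One must verify that the contractions of the worldvolume and the $\mathrm{HW}$-direction can be carried out compatibly with the residual $\mathbb{Z}_2^{\mathrm{refl}}$-action at \emph{every} Elmendorf stage \eqref{SystemOfMapsOnHFixedSubspaces}, and compatibly with the ``vanishing at infinity'' boundary condition \eqref{VanishingAtInfinity} that survives one-point compactification, so that the induced map on $\pi^{\mathbf{4}_{\mathbb{H}}}_{\mathbb{Z}_2^{\mathrm{refl}}}$ is a genuine bijection rather than merely a map of sets. Once this equivariant homotopy invariance is secured, the remainder is a direct invocation of the already-proven local classification, together with the bookkeeping that the $\mathbb{Z}_2^{\mathrm{HW}}$-quotient of the removed $\mathrm{MO9}$-locus is exactly what supplies the single fixed point at lowest Elmendorf stage where the $\mathrm{O}$-plane charge choice in $\{0,1\}$ resides.
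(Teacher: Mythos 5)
Your proposal is correct and follows essentially the same route as the paper: the paper's own proof likewise invokes $G$-homotopy invariance to reduce the semi-complement \eqref{SemiComplement} to the vicinity of a single $\mathbb{Z}_2$-singularity and then cites Prop.~\ref{TheoremLocalTadpoleCancellation} together with \eqref{SuperDifferentiableLocalCohomotopyCharge} from Example~\ref{SuperDifferentiableEquivariantCohomotopyOfADEOrbifolds} for $G=\mathbb{Z}_2$, $k=\vert W_G(1)\vert = 2$. Your additional care about Elmendorf-stage compatibility and the point at infinity is a reasonable elaboration of what the paper compresses into the phrase ``$G$-homotopy invariance.''
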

\begin{proof}
  By $G$-homotopy invariance of $G$-equivariant
  homotopy theory, this follows as the special case of
  Prop. \ref{TheoremLocalTadpoleCancellation}
  with
  \eqref{SuperDifferentiableLocalCohomotopyCharge} in Example \ref{SuperDifferentiableEquivariantCohomotopyOfADEOrbifolds},
  for $G = \mathbb{Z}_2$, hence with
  $k = \left\vert W_G(1)\right\vert = 2$.
\end{proof}

\begin{remark}[Super-exceptional geometry of $\mathrm{MO5}$ semi-complement]
  While here we consider only topological orientifold structure,
  the full super-exceptional geometry corresponding
  to \eqref{SemiComplement} is introduced in
  \cite[4]{FSS19d}; shown there to induce the
  M5-brane Lagrangian on any super-exceptional embedding
  of the $\tfrac{1}{2}\mathrm{M5}$-locus.
\end{remark}

\noindent {\bf Equivariant Cohomotopy implies local/untwisted M5/MO5-anomaly cancellation at $\tfrac{1}{2}\mathrm{M5}$-singularities.}
It is immediate to consider the globalization
of this situation to the semi-complement
around one $\mathrm{MO9}$ in heterotic M-theory
compactified on the toroidal $\mathbb{Z}_2^{\mathrm{refl}}$-orbifold
$\mathbb{T}^{\mathbf{5}_{\mathrm{sgn}}} \sslash \mathbb{Z}_2^{\mathrm{refl}+\mathrm{HW}} $ with MO5-singularities:

\vspace{-.2cm}

\begin{equation}
  \label{SemiComplementOfToroidalMNO5Compactification}
  \underset{
    \mbox{\bf
      \tiny
      \color{darkblue}
      \begin{tabular}{c}
        semi-complement spacetime
        \\
        around $\mathrm{MO5}s$ in $\mathrm{M}_{\mathrm{HET}}/\mathbb{Z}_2^{\mathrm{refl}}$
      \end{tabular}
    }
  }{
    X^{11}_{\mathrm{M}_{\mathrm{HET}}/\mathbb{Z}^{\mathrm{refl}}_2}
  }
  \;\coloneqq\;
  \mathbb{R}^{5,1}
  \,\times
  \,\,
  \overset{ \simeq \, \ast }{
  \overbrace{
    \underset{
      \mathclap{
      \mbox{\bf
        \tiny
        \color{darkblue}
        \begin{tabular}{c}
          unit 0-sphere
          \\
          around $\mathrm{MO9}$
        \end{tabular}
      }
      \;\;
      }
    }{
      S(\mathbb{R}^{\mathbf{1}_{\mathrm{sgn}}})
    }
    /
    \underset{
      \mathclap{
      \;\;
      \mbox{\bf
        \tiny
        \color{darkblue}
        \begin{tabular}{c}
          HW-quotient
          \\
          \eqref{TheMO9}
        \end{tabular}
      }
      }
    }{
      \mathbb{Z}_2^{\mathrm{HW}}
    }
  }}
  \;\;\;\;\;
  \times
  \!\!\!\!\!
  \underset{
    \mbox{\bf
      \tiny
      \color{darkblue}
      \begin{tabular}{c}
        toroidal reflection-orbifold \eqref{PointReflectionSubgroup}
        \\
        (\hyperlink{Table5}{\it Table 5})
      \end{tabular}
    }
  }{
  \xymatrix{
    \mathbb{T}^{\mathbf{4}_{\mathbf{H}}}
  }
  \!\!
  \!\sslash\! \mathbb{Z}_2^{\mathrm{refl}}\;\;.
  }
 \end{equation}
To this toroidal ADE-orbifold the discussion in
\cref{GlobalTadpoleCancellation} applies as follows.
\begin{cor}[\bf Equivariant Cohomotopy implies global/untwisted M5/MO5-anomaly cancellation]
\label{GlobalM5MO5CancellationImplied}
The super-differentiable
\eqref{DifferentialEquivariantCohomotopyPullback}
equivariant Cohomotopy charge \eqref{EquivariantCohomotopySet} of
the semi-complement spacetime \eqref{SemiComplementOfToroidalMNO5Compactification} of
heterotic M-theory on a toroidal MO5-orientifold (\cref{HeteroticMTheoryOnADEOrbifolds}) with charged MO5-planes in compatible RO-degree (Example \ref{ExamplesOfCompatibleRODegree})
and admitting equivariant super-differential refinement \eqref{KernelOfTheGlobalElmendorfStageProjection}
is
$$
  \pi^{\mathbf{4}_{\mathbb{H}}}
  \Big(
    \big(
      X^{11}_{\mathrm{M}_{\mathrm{HET}}/\mathbb{Z}_2^{\mathrm{refl}}}
    \big)_{+}
  \Big)_{{}_{\mathrm{Sdiffble}_-}}
  \;=\;
  \Big\{
  \underset{
    \mathclap{
    \mbox{\bf
      \tiny
      \color{darkblue}
      \begin{tabular}{c}
      \\
        MO5-plane
        \\
        charge
      \end{tabular}
    }
    }
  }{
    16 \cdot \mathbf{1}_{\mathrm{triv}}
  }
  -
  \underset{
    \mbox{\bf
      \tiny
      \color{darkblue}
      \begin{tabular}{c}
      \\
        M5-brane
        \\
        charge
      \end{tabular}
    }
  }{
    8 \cdot \mathbf{2}_{\mathrm{reg}}
  }
  \big\}
$$

\vspace{-2mm}
\noindent as expected from Folklore \ref{AnomalyCancellationOnMTheoreticOrientifolds},
\hyperlink{Table2}{\it Table 2},
regarding the global/untwisted form of M5/MO5-anomaly cancellation
(recalling that the semi-complement \eqref{SemiComplementOfToroidalMNO5Compactification}
is that around \emph{one} of the two MO9-planes).
\end{cor}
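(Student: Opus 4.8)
The plan is to present Corollary \ref{GlobalM5MO5CancellationImplied} as the global counterpart of Corollary \ref{EquivariantCohomotopyOfSemiComplementSpacetime}, obtaining it as the specialization of the Kummer-surface computation in Example \ref{SuperDifferentiableEquivariantCohomotopyOfADEOrbifolds} pulled back along a $G$-homotopy equivalence. First I would invoke $G$-homotopy invariance of $\mathbb{Z}_2^{\mathrm{refl}}$-equivariant homotopy theory to reduce the semi-complement spacetime \eqref{SemiComplementOfToroidalMNO5Compactification} to its singular factor: the Minkowski factor $\mathbb{R}^{5,1}$ is equivariantly contractible, and the $\mathrm{MO9}$-transverse factor $S(\mathbb{R}^{\mathbf{1}_{\mathrm{sgn}}})/\mathbb{Z}_2^{\mathrm{HW}}$ is contractible (it is the indicated quotient of the $0$-sphere in \eqref{SemiComplementOfToroidalMNO5Compactification}). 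Hence $\big(X^{11}_{\mathrm{M}_{\mathrm{HET}}/\mathbb{Z}_2^{\mathrm{refl}}}\big)_+$ is $\mathbb{Z}_2^{\mathrm{refl}}$-homotopy equivalent to $\big(\mathbb{T}^{\mathbf{4}_{\mathbb{H}}}\big)_+$ carrying the reflection action, i.e. to the Kummer surface orbifold of Example \ref{KummerSurface}, with $G = \mathbb{Z}_2^{\mathrm{refl}}$ throughout.

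Next I would identify the charge set with the super-differentiable equivariant Cohomotopy \eqref{SuperDifferentiableEquivariantCohomotopyOfKummerSurface} of that orbifold and spell out why it collapses to a single virtual representation. The underlying plain class is governed by the unstable equivariant Hopf degree theorem for tori (Theorem \ref{UnstableEquivariantHopfDegreeTheoremForTori}); together with the pushforward isomorphism of Prop. \ref{PushforwardOfVicinityOfSingularityToRepresentationTorus} and the ADE-classification of Prop. \ref{TheoremLocalTadpoleCancellation}, it forces every admissible class into the form $16\cdot\mathbf{1}_{\mathrm{triv}} - N_{\mathrm{brane}}\cdot\mathbf{2}_{\mathrm{reg}}$. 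The coefficient $16$ is the cardinality of the $\mathbb{Z}_2^{\mathrm{refl}}$-fixed point set $\{[0],[\tfrac{1}{2}]\}^4 \subset \mathbb{T}^{\mathbf{4}_{\mathbb{H}}}$ recorded in Example \ref{ExampleRT} and Example \ref{KummerSurface}, each fixed point contributing one unit of $\mathrm{MO5}$-plane charge $1\cdot\mathbf{1}_{\mathrm{triv}}$ by the distinguished binary choice at the low Elmendorf stage \eqref{ElmendorfStagesOfEquivariantM2M5Cocycle} that defines the $(-)_-$ refinement.

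Finally I would impose the global super-differential constraint. By the analysis around \eqref{KernelOfTheGlobalElmendorfStageProjection}, requiring the class to admit a super-differential lift \eqref{DifferentialEquivariantCohomotopyPullback} forces its Hopf degree at the global Elmendorf stage $H=1$ to vanish; since that Hopf degree is the net charge, hence the virtual dimension of $16\cdot\mathbf{1}_{\mathrm{triv}} - N_{\mathrm{brane}}\cdot\mathbf{2}_{\mathrm{reg}}$, the condition reads $16 - 2N_{\mathrm{brane}} = 0$ and pins down $N_{\mathrm{brane}} = 8$. This yields the asserted singleton $\{16\cdot\mathbf{1}_{\mathrm{triv}} - 8\cdot\mathbf{2}_{\mathrm{reg}}\}$, matching Folklore \ref{AnomalyCancellationOnMTheoreticOrientifolds} and \hyperlink{Table2}{\it Table 2}. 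The one genuinely delicate point, and hence where I would be most careful, is the passage from the super-differential refinement to the numerical constraint: one must verify that the vanishing of the M2/M5 flux super-form at the global stage (its bosonic part vanishes and the fermionic part is topologically invisible) forces the underlying integral Hopf degree to be exactly zero rather than merely torsion, which is precisely what compatibility of the RO-degree $\mathbf{4}_{\mathbb{H}}$ (Example \ref{ExamplesOfCompatibleRODegree}) guarantees via the non-torsion conclusion of the ordinary Hopf degree theorem \eqref{HopfDegreeTheorem}.
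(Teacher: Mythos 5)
Your proposal is correct and follows essentially the same route as the paper: the paper's own proof is a one-line appeal to $G$-homotopy invariance reducing to the Kummer-surface computation \eqref{SuperDifferentiableEquivariantCohomotopyOfKummerSurface} of Example \ref{SuperDifferentiableEquivariantCohomotopyOfADEOrbifolds}, which you reproduce and then unpack (via Theorem \ref{UnstableEquivariantHopfDegreeTheoremForTori}, Prop. \ref{PushforwardOfVicinityOfSingularityToRepresentationTorus}, Prop. \ref{TheoremLocalTadpoleCancellation}, and the constraint \eqref{KernelOfTheGlobalElmendorfStageProjection}) in exactly the way the paper's surrounding sections do. Your closing remark on why the global Hopf degree must vanish integrally rather than merely be torsion is a fair articulation of the one point the paper leaves implicit.
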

\begin{proof}
  By $G$-homotopy invariance of equivariant Cohomotopy,
  this follows from  the statement \eqref{SuperDifferentiableEquivariantCohomotopyOfKummerSurface}
  in Example \ref{SuperDifferentiableEquivariantCohomotopyOfADEOrbifolds}.
\end{proof}

\medskip
\noindent More generally we have the following:

\medskip
\noindent {\bf M5/MO5-anomaly cancellation in heterotic M-theory on general ADE-orbifolds.}
The statements and proofs of Corollary \ref{EquivariantCohomotopyOfSemiComplementSpacetime}
and Cor. \ref{GlobalM5MO5CancellationImplied}
directly generalize to heterotic M-theory on general
$G^{\mathrm{ADE}}$-singularities $\mathbb{R}^{\mathbf{4}_{\mathbb{H}}}$ \cref{HeteroticMTheoryOnADEOrbifolds}, because the underlying results in
\cref{EquivariantCohomotopyAndTadpoleCancellation} apply in this generality.
Hence \hyperlink{HypothesisH}{\it Hypothesis H} implies that
on the semi-complement spacetime of an $\mathrm{MO9}$ intersecting a
toroidal ADE-orbifold

\vspace{-.2cm}

\begin{equation}
  \label{SemiComplementOfToroidalM5ADECompactification}
  \underset{
    \mbox{\bf
      \tiny
      \color{darkblue}
      \begin{tabular}{c}
        semi-complement spacetime
        \\
        around $\tfrac{1}{2}\mathrm{M5}_{\mathrm{ADE}}$ in
        $\mathrm{M}_{\mathrm{HET}}/\mathrm{ADE}$
        \\
        (\hyperlink{FigureS}{\it Figure S})
      \end{tabular}
    }
  }{
    X^{11}_{{}_{
      \mathrm{M}_{\mathrm{HET}}
       /
       G^{\mathrm{ADE}}
    }}
  }
  \;\coloneqq\;
  \mathbb{R}^{5,1}
  \;\times
  \;\;
  \overset{ \simeq \, \ast }{
  \overbrace{
    \underset{
      \mathclap{
      \mbox{\bf
        \tiny
        \color{darkblue}
        \begin{tabular}{c}
          \\
          unit 0-sphere
          \\
          around $\mathrm{MO9}$
        \end{tabular}
      }
      \;\;
      }
    }{
      S(\mathbb{R}^{\mathbf{1}_{\mathrm{sgn}}})
    }
    /
    \underset{
      \mathclap{
      \;\;
      \mbox{\bf
        \tiny
        \color{darkblue}
        \begin{tabular}{c}
          \\
          HW-quotient
          \\
          \eqref{TheMO9}
        \end{tabular}
      }
      }
    }{
      \mathbb{Z}_2^{\mathrm{HW}}
    }
  }
  }
  \;\;\;\;\;\;
  \times
    \underset{
    \mbox{\bf
      \tiny
      \color{darkblue}
      \begin{tabular}{c}
      \\
        toroidal ADE-orbifold
        \\
        (\hyperlink{Table5}{\it Table 5})
      \end{tabular}
    }
  }{
  \xymatrix{
    \mathbb{T}^{\mathbf{4}_{\mathbf{H}}}
  }
  \!\!
  \!\sslash\! G^{\mathrm{ADE}}
  }
  \end{equation}
the M5/MO5 charge, measured in equivariant Cohomotopy, is
$$
  Q_{\mathrm{tot}}
  \;=\;
  16 \cdot \mathbf{1}_{\mathrm{triv}}
  -
  N_{\mathrm{M5}} \cdot \mathbf{k}_{\mathrm{reg}}
  \phantom{AAA}
  \left\vert Q_{\mathrm{tot}}\right\vert = 0
  \,,
$$
for $k = \left\vert G^{\mathrm{ADE}} \right\vert$ the order
of the global quotient group.
Under double dimensional reduction to type IIA string theory
according to \hyperlink{Table7}{\it Table 7},
this implies the tadpole cancellation conditions for
D4-branes in ADE-orientifolds, from \hyperlink{Table1}{\it Table 1}.

\newpage

\medskip

{\small
\hypertarget{Table8}{}
\hspace{-.9cm}
\begin{tabular}{|c|c|c|}
\hline
\multicolumn{3}{ |c| }{
  \begin{tabular}{c}
  \\
  \bf Spacetimes on which to measure flux sourced by M5/MO5-charge
  \\
  $\phantom{-}$
  \end{tabular}
}
\\
\hline
\multirow{2}{*}{
  {\bf Definition}
}
&
$
\mathclap{\phantom{ {\vert \atop \vert} \atop {\vert \atop \vert}}}
X_{\mathrm{MO5}}
\simeq_{{}_{\mathrm{htpy}}} S(\mathbb{R}^{\mathbf{1}_{\mathrm{sgn}} + \mathbf{4}_{\mathrm{sgn}}})/\mathbb{Z}_2^{\mathrm{het}+\mathrm{HW}} $
&
$
  X_{\sfrac{1}{2}\mathrm{M5}}
  \simeq_{{}_{\mathrm{htpy}}}
  S(\mathbb{R}^{\mathbf{1}_{\mathrm{sgn}}})/\mathbb{Z}_2^{\mathrm{HW}}
  \times
  \mathbb{T}^{\mathbf{4}_{\mathbb{H}}} \sslash \mathbb{Z}_2^{\mathrm{refl}}
$
\\
&
\eqref{M5ThreadedThroughRP4}
&
\eqref{SemiComplement}
\\
\hline
\hline
\raisebox{2.5cm}{
  {\bf Illustration}
}
&
\begin{tikzpicture}[scale=.8]

	\begin{scope}
	    \clip(-3,-3) -- (-3,0) -- (-2,0) arc (180:360:2cm and 0.6cm) -- (3,0) -- (3,-3) -- (-3,-3);
		\shade[ball color=lightblue!60!white, opacity=0.60] (0,0) circle (2cm);
	\end{scope}

    \shade[shading=radial, inner color=lightblue!20!white, outer color=lightblue!60!white, opacity=0.60] (2,0) arc (0:360:2cm and 0.6cm);

    \draw (2,0) arc (0:360:2cm and 0.6cm);

    \draw (-2,0) arc (180:360:2cm and 2cm);

   \draw[->, dashed] (0,0) to (30:1.06);
   \draw[->, dashed] (0,0) to (30+180:1.06);

   \draw[->, dashed] (0,0) to (180-30:1.06);
   \draw[->, dashed] (0,0) to (180-30+180:1.06);

\end{tikzpicture}

&
\begin{tikzpicture}[scale=0.8]

  \draw (0,3.7) node {\phantom{$---$}};

  \begin{scope}[shift={(1.5,1)}]

  \draw[dotted, thick] (0,0) circle (1.5);

  \begin{scope}
    \clip (1.5-.02,0-.1) rectangle (1.5+.1, 0+.1);
    \draw[fill=white] (1.5,0) circle (.1);
  \end{scope}

  \draw[draw=lightgray, fill=lightgray]
    (1.5+0.1,0-.05)
    rectangle
    (4,0+.05);
  \draw[]
    (1.5+0.1,0-.05)
    to
    (4,0-.05);
  \draw[]
    (1.5+0.1,0+.05)
    to
    (4,0+.05);

  \draw[draw=lightgray, fill=lightgray]
    (-1.5-0.1,0-.05)
    rectangle
    (-4,0+.05);
  \draw[]
    (-1.5-0.1,0-.05)
    to
    (-4,0-.05);
  \draw[]
    (-1.5-0.1,0+.05)
    to
    (-4,0+.05);

  \draw
     (4.6,0)
     node
     {\tiny $x_2 = \tfrac{1}{2}$};

  \draw
     (-4.6,0)
     node
     {\tiny $x_2 = 0$};

  \draw
     (0,-1.5)
     node
     {\colorbox{white}{\tiny $x_1 = 0$}};

  \draw
     (-70:2.6)
     node
     {\colorbox{white}{\tiny $x_1 > 0$}};

  \draw[dashed]
    (40:1.5)
    to
    (40:4);
  \draw[dashed]
    (-40:1.5)
    to
    (-40:4);

  \draw[very thick, darkblue]
    (18-1:1.5)
    to
    (18-.4:4);
  \draw[very thick, darkblue]
    (18-.5:1.5)
    to
    (18-.2:4);
  \draw[very thick, darkblue]
    (18-0:1.5)
    to
    (18-0:4);
  \draw[very thick, darkblue]
    (18+.5:1.5)
    to
    (18+.2:4);
  \draw[very thick, darkblue]
    (18+1:1.5)
    to
    (18+.4:4);

  \draw[very thick, darkblue]
    (-18-1:1.5)
    to
    (-18-.4:4);
  \draw[very thick, darkblue]
    (-18-.5:1.5)
    to
    (-18-.2:4);
  \draw[very thick, darkblue]
    (-18-0:1.5)
    to
    (-18-0:4);
  \draw[very thick, darkblue]
    (-18+.5:1.5)
    to
    (-18+.2:4);
  \draw[very thick, darkblue]
    (-18+1:1.5)
    to
    (-18+.4:4);

  \begin{scope}
    \clip (-1.5-.1,0-.1) rectangle (-1.5+0.02, 0+.1);
    \draw[fill=white] (-1.5,0) circle (.1);
  \end{scope}

  \draw
    (-1.5+.7,0)
    node
    {\tiny \color{darkblue} \bf MO5};

  \draw[->, gray]
    (-1.5+.4,0)
    to
    ++(-.3,0);

  \begin{scope}[shift={(5.1,1.1)}]

  \draw
    (-2.1,.5)
    node
    {\tiny \color{darkblue} \bf MK6};

  \draw[->, gray]
    (-2.1,.3)
    to
    ++(0,-.3);

  \end{scope}

  \draw
    (180-60:.9)
    node
    {\tiny \color{darkblue} \bf MO9 };

  \draw[->, gray]
    (180-60:1.1)
    to
    (180-60:1.4);

  \draw
    node
    (halfM5)
    at
    (0:.75)
    {
      \tiny
      \color{darkblue} \bf
      $
        \mathrm{MO5}
      $
    };

  \draw[->, gray]
    (halfM5)
    to
    ++(0:.7);

  \draw
    node
    (M5)
    at
    (36:.8)
    {
      \tiny
      \color{darkblue} \bf
      $
        \mathllap{\tfrac{1}{2}}
        \mathrm{M5}
      $
    };

  \draw[->, gray]
    (M5)
    to
    ++(.65,0);

  \draw
    node
    (mirrorM5)
    at
    (-36:.8)
    {
      \tiny
      \color{darkblue} \bf
      $
        \mathllap{\mbox{mirror}\;\tfrac{1}{2}}
        \mathrm{M5}
      $
    };

  \draw[->, gray]
    (mirrorM5)
    to
    ++(.65,0);

  \draw[dashed]
    (180-40:1.5)
    to
    (180-40:4);
  \draw[dashed]
    (180+40:1.5)
    to
    (180+40:4);

  \draw[<->, dashed, darkblue]
    (180-29:3)
    to[bend right=24]
    node[very near end]
      {
        $
        \mathllap{
        \mbox{\bf
        \tiny
        \color{darkblue}
        \begin{tabular}{c}
          residual
          \\
          $\mathbb{Z}_2^{\mathrm{refl}}$-action
        \end{tabular}
        }
        }
        $
      }
    (180+29:3);

  \end{scope}

  \begin{scope}[shift={(1.5,1)}]

  \begin{scope}[rotate=(+18)]
    \clip (1.5-.02,.3) rectangle (1.5+.3,-.3);
    \draw[draw=green, fill=green] (1.5,0) circle (.1);
  \end{scope}

  \begin{scope}[rotate=(-18)]
    \clip (1.5-.02,.3) rectangle (1.5+.3,-.3);
    \draw[draw=green, fill=green] (1.5,0) circle (.1);
  \end{scope}

  \end{scope}

\end{tikzpicture}
\\
\hline
$\phantom{ {\vert \atop \vert} \atop {\vert \atop \vert} }$
{\bf Geometry}
&
smooth but curved
&
singular but flat
\\
\hline
\hline
\multicolumn{3}{|c|}{
  \begin{tabular}{c}
    \\
    \bf Cohomological charge quantization
    \\
    by \hyperlink{HypothesisH}{\it Hypothesis H}
    \\
    $\phantom{-}$
  \end{tabular}
}
\\
\hline
\begin{tabular}{c}
 \\
 {\bf Cohomology theory}
 \\
 (by \hyperlink{Table4}{\it Table 4})
 \\
 $\phantom{-}$
\end{tabular}
&
\begin{tabular}{c}
  $J$-twisted Cohomotopy $\pi^{T X}\big(X\big)$
  \\
  \cite{FSS19b}\cite{FSS19c}
\end{tabular}
&
\begin{tabular}{c}
  equivariant Cohomotopy $\pi_{\mathbb{Z}_2}^{V}\big( \mathbb{T}^V \big)$
  \\
  \cref{EquivariantCohomotopyAndTadpoleCancellation}
\end{tabular}
\\
\hline
\raisebox{53pt}{
\begin{tabular}{c}
  {\bf Illustration}
  \\
  (Remark \ref{TheRoleOfMK6EndingOnM5})
\end{tabular}
}
&
\raisebox{19pt}{
\begin{tikzpicture}[scale=.9]

  \draw[dashed] (0,0) circle (1.4);

  \draw[draw=green, fill=green] (0,0) circle (.15);

  \draw (0.45,0)
    node
    {
      \tiny
      \color{darkblue} \bf
      $\mathrm{M5}$
    };

  \draw (125:1.6)
    node
    {
      \small
      \color{darkblue} \bf
      $S^4$
    };

\end{tikzpicture}
}
&
\hspace{-1.7cm}
\begin{tikzpicture}[scale=.9]

  \begin{scope}
\
  \clip (-3,2) rectangle (0,-2);

  \draw[dashed] (0,0) circle (1.4);

  \draw[draw=darkblue, fill=darkblue]
    (0,-.05) rectangle (-2.5,+.05);
  \draw[draw=darkblue, fill=darkblue]
    (-2.5-.05,-.05) rectangle (-2.5-.10,+.05);
  \draw[draw=darkblue, fill=darkblue]
    (-2.5-.15,-.05) rectangle (-2.5-.20,+.05);
  \draw[draw=darkblue, fill=darkblue]
    (-2.5-.25,-.05) rectangle (-2.5-.30,+.05);

  \draw[draw=green, fill=green] (0,0) circle (.15);

  \draw (-2.15,-.25)
    node
    {
      \tiny
      \color{darkblue} \bf
      $\mathrm{MK6}$
    };

  \draw (125:1.6)
    node
    {
      \small
      \color{darkblue} \bf
      $S^4$
    };

  \end{scope}

  \draw (0.45,0)
    node
    {
      \tiny
      \color{darkblue} \bf
      $\tfrac{1}{2}\mathrm{M5}$
    };

 \draw (0,-1.8) to (0,1.8);

 \draw (0,-2)
   node
   {
     \tiny
     \color{darkblue} \bf
     $\mathrm{MO9}$
   };

\end{tikzpicture}
\\
\hline
$\mathclap{\phantom{ {\vert \atop \vert} \atop {\vert \atop \vert} }}$
\begin{tabular}{c}
  \bf Charge classification
\end{tabular}
&
\begin{tabular}{c}
  $c_{\mathrm{tot}} = 1 - N \cdot 2$
  \\
  \eqref{TheOddChargeAroundAPureMO5}
\end{tabular}
&
\begin{tabular}{c|c}
  $
    c_{\mathrm{tot}}
    =
    N_{\mathrm{MO5}}
     \cdot
    \mathbf{1}_{\mathrm{triv}}
    -
    N_{\mathrm{M5}} \cdot \mathbf{2}_{\mathrm{reg}}
  $
  &
  $
    \begin{array}{rcl}
      & \left\vert Q_{\mathrm{tot}}\right\vert & \!\!\!= 0
      \\
      \Leftrightarrow &  N_{\mathrm{M5}} & \!\!\!= 8
    \end{array}
  $
  \\
  (Cor. \ref{EquivariantCohomotopyOfSemiComplementSpacetime})
  &
  (Cor. \ref{GlobalM5MO5CancellationImplied})
\end{tabular}
\\
\hline
\end{tabular}

\vspace{.2cm}

\noindent {\bf \footnotesize Table 8 -- Two ways of measuring M5/MO5-charge.} {\footnotesize On the left is the traditional approach
not resolving the singularities.
On the right (which shows the same situation as in \hyperlink{FigureV}{\it Figure V} but with the periodic identification indicated more explicitly) the fine-grained
microscopic picture seen by C-field charge quantization in
equivariant Cohomotopy.}
}

\medskip
\medskip

With these result in hand, we highlight that not only did
equivariant Cohomotopy inform us about M-theory, but
M-theory also shed light on a subtle point regarding the
interpretation of equivariant Cohomotopy:

\begin{remark}[\bf Equivariant Cohomotopy and MK6 ending on M5]
  \label{TheRoleOfMK6EndingOnM5}
 {\bf (i)}  The heuristic way to see that ordinary Cohomotopy $\pi^4$ from
  \eqref{PlainCohomotopySet} canonically measures charges of
  5-branes inside 11-dimensional spacetime is that the `\emph{classifying space}'
  $S^4$ of $\pi^4$ gets essentially identified with the (any) \emph{spacetime}
  4-sphere \emph{around} a 5-brane in an 11-dimensional ambient space
  (see \cite[(6)]{ADE} for the heuristic picture, and
  \cite[4.5]{FSS19b} for the full mathematical detail).

\vspace{-1mm}
\item {\bf (ii)}  But as we pass from plain to equivariant Cohomotopy,
  this picture

  \vspace{-.5cm}

  $$
    \mbox{
      \it
      Brane charge sourced in the center of $S^4$
      \hspace{2.2cm}
\raisebox{-32pt}{
\begin{tikzpicture}[scale=.9]

  \draw[dashed] (0,0) circle (1.4);

  \draw[draw=green, fill=green] (0,0) circle (.15);

  \draw (0.45,0)
    node
    {
      \tiny
      \color{darkblue} \bf
      $\mathrm{M5}$
    };

  \draw (125:1.6)
    node
    {
      \small
      \color{darkblue} \bf
      $S^4$
    };

\end{tikzpicture}
}
    }
  $$
  may superficially appear to be in tension with
  the picture provided by the Pontrjagin-Thom theorem
  as in \hyperlink{FigureD}{\it Figure D}
  and \hyperlink{FigureL}{\it Figure L}, where
  instead

  \vspace{-.5cm}

  $$
    \mbox{
      \it
      Brane charge is sourced at the 0-pole of $S^{\mathbf{4}}$
      \hspace{.7cm}
\raisebox{-32pt}{
\begin{tikzpicture}[scale=.9]

  \begin{scope}

  \clip (-2.8,-1) rectangle (0,1);

  \draw[draw=darkblue, fill=darkblue] (0,-.05) rectangle (-2.5,+.05);
  \draw[draw=darkblue, fill=darkblue] (-2.5-.05,-.05) rectangle (-2.5-.10,+.05);
  \draw[draw=darkblue, fill=darkblue] (-2.5-.15,-.05) rectangle (-2.5-.20,+.05);
  \draw[draw=darkblue, fill=darkblue] (-2.5-.25,-.05) rectangle (-2.5-.30,+.05);

  \draw[draw=green, fill=green] (0,0) circle (.15);

  \draw (-2.15,-.25)
    node
    {
      \tiny
      \color{darkblue} \bf
      $\mathrm{MK6}$
    };

  \end{scope}

  \draw (125:1.6)
    node
    {
      \small
      \color{darkblue} \bf
      $S^4$
    };

  \draw (0.45,0)
    node
    {
      \tiny
      \color{darkblue} \bf
      $\tfrac{1}{2}\mathrm{M5}$
    };

  \draw[dashed] (0,0) circle (1.4);

\end{tikzpicture}
}
    }
  $$
 However, in the orbi-geometry
  of heterotic M-theory on ADE-singularities \cref{HeteroticMTheoryOnADEOrbifolds}
  indeed \emph{both pictures apply simultaneously},
  witnessing different but closely related brane species
  (see \hyperlink{Table8}{\it Table 8}):

\vspace{-1mm}
  \item {\bf (iii)} The black $\tfrac{1}{2}\mathrm{M5}$-brane locus
  (\hyperlink{FigureS}{\it Figure S})  is the terminal point of an MK6-singularity
  which extends radially away from the M5. Hence, given any radial 4-sphere with
  the $\tfrac{1}{2}\mathrm{M5}$ at its center, the MK6 will pierce this 4-sphere at one point.
  Since the $\tfrac{1}{2}\mathrm{M5}$ and the MK6 are \emph{necessarily} related this way,
  the 5-brane charge inside the $S^4$ may equivalently be measured by 6-brane charge
  piercing through $S^4$. This is exactly what the Pontrjagin-Thom theorem
  says happens in Corollary \ref{EquivariantCohomotopyOfSemiComplementSpacetime},
  as
  shown in \hyperlink{FigureV}{\it Figure V} and on the
  right of \hyperlink{Table8}{\it Table 8}.
\end{remark}

\vspace{.5cm}
\noindent {\large \bf Acknowledgements.}

H. S.  acknowledges that this work was  performed in part at Aspen Center for Physics,
which is supported by National Science  Foundation grant PHY-1607611.
This work was partially supported by a grant from the Simons Foundation.
We thank Matt Kukla for a hint on TikZ typesetting.

\medskip

\vspace{1cm}
\noindent Hisham Sati, {\it Mathematics, Division of Science, New York University Abu Dhabi, UAE.}

 \medskip
\noindent Urs Schreiber,  {\it Mathematics, Division of Science, New York University Abu Dhabi, UAE;
 on leave from Czech Academy of Science, Prague.}

\end{document}